\newtheorem{theorem}{Theorem}[section]
\newtheorem{definition}[theorem]{Definition}
\newtheorem{example}[theorem]{Example}
\newtheorem{proposition}[theorem]{Proposition}
\newtheorem{lemma}[theorem]{Lemma}
\newcommand{\OR}{\Gamma} %
\newcommand{\PosRAbold}{\textrm{PosRA}\xspace}
\newcommand{\PosRA}{{\textrm{PosRA}}\xspace}
\newcommand{\Plex}{{\textrm{PosRA}}$_{\lex}$\xspace}
\newcommand{\Pnoprod}{{\textrm{PosRA}}$_{\mathrm{no}\times}$\xspace}
\newcommand{\Pgen}{{\textrm{PosRA}}$_{\gen}$\xspace}
\newcommand{\PosRAagg}{{\textrm{PosRA}}$^{\mathrm{acc}}$\xspace}
\newcommand{\PosRAagggby}{{\textrm{PosRA}}$^{\mathrm{accGBy}}$\xspace}
\newcommand{\Plexacc}{{\textrm{PosRA}}$_{\lex}^{\mathrm{acc}}$\xspace}
\newcommand{\Pnoprodacc}{{\textrm{PosRA}}$^{\mathrm{acc}}_{\mathrm{no}\times}$\xspace}
\newcommand{\posRAagg}{\PosRAagg}
\newcommand{\PosRAacc}{\PosRAagg}
\newcommand{\PosRAaccgby}{\PosRAagggby}
\newcommand{\Nat}{\mbox{$\mathbb{N}$}}
\newcommand{\ptw}{\dir}
\newcommand{\dir}{{\texttt{DIR}}\xspace}
\newcommand{\gen}{\ptw}
\newcommand{\lex}{{\texttt{LEX}}\xspace}
\newcommand{\cat}{{\texttt{CAT}}\xspace}
\newcommand{\possbold}{\texttt{POSS}\xspace}
\newcommand{\poss}{{\texttt{POSS}}\xspace}
\newcommand{\cert}{{\texttt{CERT}}\xspace}
\newcommand{\cupgen}{\cup}
\newcommand{\cupcat}{\cup_\cat}
\newcommand{\deft}[1]{\emph{#1}}
\newcommand*{\defeq}{\mathrel{\rlap{%
  \raisebox{0.3ex}{$\m@th\cdot$}}%
  \raisebox{-0.3ex}{$\m@th\cdot$}}%
  =}
\renewcommand{\leq}{\leqslant}
\renewcommand{\geq}{\geqslant}
\renewcommand{\phi}{\varphi}
\renewcommand{\epsilon}{\varepsilon}
\newcommand\restr[2]{{%
  \kern-\nulldelimiterspace %
  #1 %
  _{|#2} %
  }}
\newcommand{\singleton}[1]{[#1]}
\newcommand{\ordern}[1]{[{\leq}#1]}
\newcommand{\card}[1]{\left|#1\right|}
\newcommand{\ID}{\mathit{ID}}
\newcommand{\id}{\mathit{id}}
\newcommand{\calD}{\mathcal{D}}
\newcommand{\calF}{\mathcal{F}}
\newcommand{\calI}{\mathcal{I}}
\newcommand{\calM}{\mathcal{M}}
\newcommand{\calS}{\mathcal{S}}
\newcommand{\arity}[1]{\mathop{\mathrm{a}}(#1)}
\newcommand{\axiom}[1]{\textsf{#1}}
\newcommand{\dupelim}{\mathrm{dupElim}}
\newcommand{\pw}{\mathit{pw}}
\newcommand{\langlem}{}
\newcommand{\ranglem}{}
\newcommand{\accum}{\mathrm{accum}}
\newcommand{\accumgby}{\mathrm{accumGroupBy}}
\newcommand{\concat}{\mathrm{concat}}
\newcommand{\pr}{\mathrm{pr}}
\renewcommand{\a}{\mathsf{a}}
\renewcommand{\b}{\mathsf{b}}
\newcommand{\n}{\mathsf{n}}
\newcommand{\e}{\mathsf{e}}
\newcommand{\f}{\mathsf{f}}
\newcommand{\ii}{\mathsf{i}}
\renewcommand{\l}{\mathsf{l}}
\renewcommand{\r}{\mathsf{r}}
\newcommand{\s}{\mathsf{s}}
\newcommand{\NN}{\mathbb{N}}
\newcommand{\ZZ}{\mathbb{Z}}
\newcommand{\tikzm}[1]{\tikz[overlay,remember picture]{\node (#1) {};}}
\newcommand{\tikzmd}[2]{\tikz[overlay,remember picture,thick]{\draw[->] (#1) -- (#2);}}
\newcommand{\bag}{\mathrm{bag}}
\newcommand{\BALG}{\text{BALG}}
\newcommand{\BQL}{\mathcal{BQL}}
\newcommand{\NRLaggr}{\mathcal{NRL}^{\mathrm{aggr}}}
\newcommand{\PomAlg}{\mathcal{P}\text{om-}\mathcal{A}\text{lg}}
\newcommand{\PomAlgEps}{\PomAlg_{\epsilon_n}}
\newcommand{\tmem}[2]{#1.#2}
\title{Computing Possible and Certain Answers\\over Order-Incomplete Data}
\date{}
\author{
\begin{tabular}[t]{c}
Antoine Amarilli \\
  {\normalfont LTCI, Télécom ParisTech, Université Paris-Saclay;}\\
  {\normalfont Paris, France} \\
  {\normalfont \texttt{antoine.amarilli@telecom-paristech.fr}} \\[0.5em]
Mouhamadou Lamine Ba \\
  {\normalfont Universit\'e Alioune Diop de Bambey;}\\
  {\normalfont Bambey, Senegal} \\
  {\normalfont \texttt{mouhamadoulamine.ba@uadb.edu.sn}} \\[0.5em]
Daniel Deutch \\
  {\normalfont Blavatnik School of Computer Science, Tel Aviv University;}\\
  {\normalfont Tel Aviv, Israel} \\
  {\normalfont \texttt{danielde@post.tau.ac.il}} \\[0.5em]
Pierre Senellart \\
{\normalfont DI ENS, ENS, CNRS, PSL University; Paris, France} \\
{\normalfont \& Inria; Paris, France} \\
  {\normalfont \texttt{pierre@senellart.com}} \\[0.5em]
\end{tabular}
}
\begin{document}

\maketitle

\begin{abstract}
  This paper studies the complexity of query evaluation for databases
whose relations are partially ordered; the problem commonly arises when combining or transforming ordered data from multiple sources.
We focus on queries in a useful fragment of~SQL, namely positive relational algebra with aggregates,
whose bag semantics we extend to the partially ordered setting.
Our semantics leads to the study of two main computational problems:
the possibility and certainty of query answers. We show that these
problems are respectively NP-complete and coNP-complete, but identify tractable cases depending on the query operators or input partial
orders. We further introduce a duplicate elimination operator and study its effect on the complexity results.

\end{abstract}

\section{Introduction}
\label{sec:introduction}
Many applications need to combine and transform ordered data from
multiple sources. Examples include sequences of readings from multiple sensors, or
log entries from different applications or machines, that need to be
combined to form a complete picture of events; rankings of
restaurants and hotels based on various criteria (relevance,
preference,
or customer ratings); and concurrent edits of shared documents,
where the order of contributions made by different users needs to be
merged. Even if the order of items from each individual source is
usually known, the order of items across sources is often {\em
	uncertain}.
For instance, even
when sensor readings or log entries are provided with time\-stamps,
these may be ill-synchronized across sensors or machines; rankings of hotels
and restaurants may be biased by different preferences of different
users; concurrent contributions to documents may be ordered in
multiple reasonable ways. We say that the resulting information is
\emph{order-incomplete}.

This paper studies query evaluation over order-incomplete data in a
relational setting \cite{AHV-1995}. We focus on the running example of restaurants
and hotels from a travel website, ranked according to a proprietary
function.
An example query would ask for the ordered list of restaurant--hotel
pairs
such that the restaurant and hotel are in the same district,
or such that the restaurant features a particular cuisine, and 
may further apply order-dependent operators to the result,
e.g., limiting the output to the top-$k$ such pairs, or aggregating a relevance
score.
To evaluate such queries, the initial order on the hotels and
restaurants must be \emph{preserved} through transformations.
Furthermore, as we do not know how the proprietary order is defined,
the result of transformations may become \emph{uncertain}; hence, we
need to represent all \emph{possible} results that can be obtained
depending on the underlying order.

Our approach is to handle this uncertainty through the classical notions
of {\em possible and certain answers}.
We say that there is a 
\emph{certain answer} to the query when there is only one possible order on
query results, or only one accumulation result,
which is obtained no matter the order on the input and in intermediate
results. In this case, it is useful to compute the certain answer,
so that the user can then browse through
the ordered query results (as is typically done when there is no
uncertainty, using constructs such as SQL's \texttt{ORDER BY}). 
Certain answers can arise even in non-trivial cases 
where the combination of input data admits 
many possible
orders: consider user queries that select only a small interesting subset of the
data (for which the ordering happens to be certain), or a short summary
obtained through accumulation over large data. In many other cases, the
different orders on input data or the uncertainty caused by the query may
lead to several \emph{possible answers}. In this case, it is still of
interest (and non-trivial) to verify whether an answer is
possible, e.g., to check whether a given ranking of hotel--restaurant pairs is consistent with a combination of other rankings (the latter done through a query). Thus, we study the problems of deciding whether a given answer is
\emph{certain}, and whether it is \emph{possible}.

Our main contributions may be summarized as follows.

\paragraph*{Model and Problem Definition (Sections \ref{sec:model},
\ref{sec:posscertdef})}

Our work focuses on {\em bag semantics}, where a tuple may appear multiple
times. Note that in the context of (partially) ordered relations, this means
that multiple copies of the same tuple may appear in different ``positions" in
the order. For example, if we integrate multiple rankings of restaurants, then
the same restaurant appears multiple times in different positions. We capture
this model by a notion of {\em po-relations} (partially ordered) relations. A
po-relation is essentially a relation accompanied with a partial order over its
tuples; a technical subtlety is that each tuple is associated with an identifier
(presumably internal and automatically generated), so that we have a way
of referring to each tuple occurrence in the partial order (see further discussion in
the problem definition below).

We then introduce a query language for partially ordered
data. Our language design is guided by the goal of supporting SQL evaluation in
presence of such data, and as such we focus on defining a semantics for an
important fragment of SQL~-- namely positive relational algebra with aggregates.
The semantics is ``faithful" to SQL in the sense that, if we ignore
order, then we get
the standard SQL semantics. The notion corresponding to aggregation in our context is LISP-like accumulation,
whose semantics we extend to account for {\em partial} orders.

We view partially ordered relations as a concise representation of a set of
possible worlds, namely, the linear extensions of the partial orders over
the {\em underlying tuples of the relation}. For example, a linear
extension is a ranked list of restaurant cuisines, where a cuisine may appear
multiple times in the list. Note that in each such linear extension, the tuples
appear \emph{without} their respective internal identifiers which, as mentioned
earlier, were only present in the po-relation as a technical tool.

Our definitions lead to a possible worlds semantics for query evaluation, and to two
natural problems:
whether a candidate answer -- i.e., a ranked list of tuples (restaurants, cuisines, etc.)~-- is \emph{possible}, i.e., is obtained for some possible world,
and whether it is \emph{certain}, i.e., is obtained for every possible world. Here again, note that in a candidate answer a tuple may appear multiple times, and naturally it appears without identifiers (which, as mentioned above, are internal and are unknown to the user). We formally define these two problems for our settings, and then embark on a study of their complexity.

\paragraph*{Complexity in the General Case (Section \ref{sec:posscert})} We first study the possibility and certainty problems without any restrictions on the input database.
As usual in data management, given that queries are typically much smaller than databases,
we study the {\em data complexity} of the problems, i.e., the complexity when the query is fixed.
For our general definition of po-relations, we show that deciding whether an answer is possible is NP-complete,
even without accumulation, and even for some very simple queries and
input relations. In a particular case where we assume no
duplicates -- i.e., where tuples are uniquely identified, which means we
are essentially back to the set semantics -- possibility of an answer is
in PTIME without accumulation, but is again NP-complete with accumulation.   
As for certainty, the problem can be decided in polynomial time in the case with no
accumulation, but it is coNP-complete for queries with accumulation
(even if we assume no duplicates in the input). Faced by the general intractability of
the possibility and certainty problems, in the rest of the paper we search for
restricted cases for which tractability holds.

\paragraph*{Tractable Cases for Possibility Without Accumulation (Section~\ref{sec:fpt2})} Even though possibility is NP-hard even without accumulation, we identify realistic cases where it is in fact tractable.
In particular, we show that if the input relations are totally ordered
then possibility is in PTIME for queries using a subset of our
operators (all except the \emph{direct product}).
Assuming more severe restrictions on the query language,
we further show tractability when some of the relations are (almost)
ordered and the rest are (almost) unordered, as 
formalized via a newly introduced notion of \emph{ia-width}.

\paragraph*{Tractable Cases with Accumulation (Section~\ref{sec:fpt})} With accumulation, the certainty problem becomes intractable as well.
Yet we show that if accumulation is captured by a finite cancellative monoid (in particular, if it is performed in a finite group), then certainty can again be decided in polynomial time. Further, we revisit the tractability results for possibility from Section~\ref{sec:fpt2} and show that they extend to queries with accumulation under certain restrictions on the accumulation function.  

\paragraph*{Language Extensions (Section \ref{sec:extensions})} We then study two extensions to our language, which are the counterparts of common SQL operators.
The first is \emph{group-by}, which allows us to group tuples for accumulation (as is done for aggregation in SQL with \texttt{GROUP BY}); we revisit our complexity results in its presence.     
The second is \emph{duplicate elimination}: keeping a single representative of identical tuples, as in SQL with \texttt{SELECT DISTINCT}.
In presence of order, it is challenging to design a semantics for this operator,
and we discuss both semantic and complexity issues that arise from different possible definitions. 

\bigskip

We compare our model and results to related work in Section~\ref{sec:compare}, and conclude in Section~\ref{sec:conclusion}. 

This article is an extended version of the conference
paper~\cite{amarilli2017possible}. In contrast
with the conference paper~\cite{amarilli2017possible}, all proofs are included here. We also
discovered a bug in the proof of Theorem~22
of that paper~\cite{amarilli2017possible}, that also impacts Theorems~19 and~30
of~\cite{amarilli2017possible}. Consequently, these results are omitted
in the present paper.

\section{Data Model and Query Language}\label{sec:model}
We denote by $\NN$ the set of nonnegative natural numbers and by $\NN_{>0}$ the
set of positive natural numbers, i.e., $\NN_{>0} \colonequals \NN \setminus
\{0\}$.
We fix a countable set of values $\calD$
that includes $\Nat$ and infinitely many values not in $\Nat$.
A \emph{tuple~$t$ over~$\calD$} of \emph{arity}~$\arity{t}$ is an
element of~$\calD^{\arity{t}}$, denoted $\langle v_1, \dots,
v_{\smash{\arity{t}}}\rangle$: for $1 \leq i \leq \arity{t}$, we write
$\tmem{t}{i}$
to refer to~$v_i$. The simplest notion of ordered relations are then
\emph{list relations}~\cite{colby1994query,colby1994concepts}: a list relation
of arity~$n \in \Nat$
is an ordered list of tuples over~$\calD$ of arity~$n$ (where
the same tuple may appear multiple times).
List relations impose a single order over tuples, but when one
combines (e.g., unions) them, there may be multiple plausible ways
to order the results. 

We thus introduce \emph{partially
ordered relations} (\emph{po-relations}). A po-relation  $\OR =
(\ID, T, <)$ of arity~$n \in \Nat$ consists of a finite set of
\emph{identifiers} $\ID$ (chosen from some infinite set
closed under the Cartesian product, e.g., we can use tuples of natural numbers), a \emph{strict partial order} $<$ on~$\ID$, and a (generally
non-injective) mapping $T$ from $\ID$ to~$\calD^n$.
The \emph{domain} of~$\OR$ is the subset of values of~$\calD$ that occur in the
image of~$T$.
The actual identifiers in~$\ID$ do not matter, but
we need them to
refer to occurrences of the
same tuple value. Hence, we always
consider po-relations \emph{up to isomorphism}, where $(\ID, T, <)$ and $(\ID',
T', {<'})$ are \emph{isomorphic} iff there is a bijection $\phi: \ID \to \ID'$
such that $T'(\phi(\id)) = T(\id)$ for all $\id \in \ID$, and $\phi(\id_1) {<'}
\phi(\id_2)$ iff $\id_1 < \id_2$ for all $\id_1, \id_2 \in \ID$.

A special case of po-relations
are \emph{unordered po-relations} (or \emph{bag relations}), where $<$ is empty:
we denote them $(\ID, T)$.
Another special case is that of \emph{totally ordered po-relations}, where~$<$ is a
total order.

The point of po-relations is to represent \emph{sets} of list relations.
Formally, a \emph{linear extension} $<'$ of~$<$ is a total order on~$\ID$ such
that ${<} \subseteq {<'}$, i.e., for each $x<y$ we have $x <' y$. The 
{\em possible worlds} $\pw(\OR)$ of $\OR$ are then defined as follows:
for each linear extension ${<'}$ of~$<$, writing $\ID$ as $\id_1 <' \cdots <'
\id_{\card{\ID}}$,
the list relation $(T(\id_1), \ldots, T(\id_{\card{ID}}))$ is
in~$\pw(\OR)$. As $T$ is generally not injective, two different linear extensions may yield the
same list relation. Note that each such linear extension ``strips away" the
identifiers and includes only the tuples.
For instance, if $\OR$ is unordered, then $\pw(\OR)$ consists of all
permutations of the tuples of~$\OR$; and if~$\OR$ is totally ordered then
$\pw(\OR)$ contains exactly one possible world.

Po-relations can thus model uncertainty over the \emph{order} of
tuples.
However, note that they cannot model uncertainty on tuple \emph{values}. Specifically, let
us define the \emph{underlying bag relation} of a po-relation $\OR = (\ID, T,
<)$ as $(\ID, T)$. Unlike order, this underlying bag relation is always
certain.

We extend some classical notions from partial
    order theory to po-relations.

Letting $\OR=(\ID,T,<)$  be a po-relation, an
  \deft{order ideal} of~$\OR$ is a subset $S \subseteq \ID$ such that, for all
  $x, y \in \ID$, if $x < y$ and $y \in S$ then $x \in S$.
An \deft{antichain}~\cite{schroder2003ordered} of~$\OR$ is a set
$A\subseteq\ID$ of pairwise
incomparable tuple identifiers.
The
\deft{width} of $\OR$ is the size of its largest
antichain, and the
\deft{width} of a po-database is the maximal width of its po-relations.
In particular, totally ordered po-relations have width~$1$, and
unordered po-relations have a width equal to their number of
tuples; the width of a po-relation can be computed in
polynomial time~\cite{fulkerson1955note}.

A \deft{chain partition}
of~$\OR$ is a partition 
$\ID = \Lambda_1 \sqcup \cdots \sqcup \Lambda_n$
such that the restriction of~$<$ to each $\Lambda_i$ is a total
order: we call each $\Lambda_i$ a \emph{chain}.
Note that $<$ may include comparability relations across chains,
i.e., relating elements in $\Lambda_i$ to elements in $\Lambda_j$ for $i \neq j$.
The \deft{width} of the chain partition is~$n$. By Dilworth's
theorem~\cite{dilworth1950decomposition, fulkerson1955note},
the width~$w$ of~$\OR$ is the smallest possible width of a chain partition
of~$\OR$; furthermore, given $\OR$, we can compute in polynomial time both its
width~$w$ and a
chain partition of~$\OR$ of width~$w$.

\subsection{\PosRA: Queries Without Accumulation}
\label{sec:posra}

We now define a bag semantics for
{\em positive relational algebra} operators, to manipulate po-relations with queries. The positive relational
algebra,
written \PosRA, is a standard query language for relational
data~\cite{AHV-1995}.
We will extend \PosRA with
{\em accumulation} in Section~\ref{sec:posraacc},
and add further operations in Section~\ref{sec:extensions}.
Each \PosRA operator applies to po-relations and computes a new po-relation; we
present them in turn.

\smallskip

The \emph{selection} operator restricts
the relation to a subset of its tuples, and the order is
the restriction of the input order.
The \emph{tuple predicates}
allowed in selections are Boolean combinations of
equalities and inequalities, which involve constant values in~$\calD$ and tuple
attributes written as~$.i$ for $i \in \NN_{>0}$. For instance, the selection
$\sigma_{.1 \neq \text{``a''} \wedge .2 \neq .3}$ selects tuples whose first
attribute is equal to the constant~``a'' and whose second attribute is different
from their third attribute.

\begin{description}
  \item[\axiom{selection}:] For any po-relation
  $\OR = (\ID, T,<)$
  and tuple predicate $\psi$,
  we define the selection $\sigma_{\psi}(\OR) \defeq (\ID', T_{|\ID'},
    <_{|\ID'})$,
  where $\ID' \defeq \{\id \in \ID \mid \psi(T(\id))\text{~holds}\}$.
\end{description}
\smallskip
The \emph{projection} operator changes tuple values in the usual way, but
keeps the original tuple ordering in the result, and retains all copies of duplicate
tuples (following our \emph{bag semantics}).
\begin{description}
  \item[\axiom{projection}:] For a po-relation
    $\OR = (\ID,T,<)$ and attributes $A_1, \ldots, A_n$,
    we define the projection $\Pi_{A_1, \ldots, A_n}(\OR) \defeq
    (\ID,T', {<})$, where $T'$ maps each $\id \in \ID$ to
    $\Pi_{A_1, \ldots, A_n}(T(\id)) \colonequals \langle T(\id).A_1, \ldots,
    T(\id).A_n\rangle$.
\end{description}
\smallskip
As for \emph{union}, we impose the minimal order constraints that are
compatible with those of the inputs.
We use the
{\em parallel composition}~\cite{posets} of two partial orders $<$
and $<'$ on disjoint sets $\ID$ and $\ID'$, i.e., the partial order $<''
\defeq ({<} \cup 
{<'})$
on $\ID \cup \ID'$. Note that $<''$ is the same order as $<$ on~$\ID$ and as
$<'$ on~$\ID'$, and that all elements from~$\ID$ are incomparable to all
elements from~$\ID'$.
\begin{description}
  \item[\axiom{union}:] Let $\OR = (\ID,T,<)$ and $\OR' = (\ID',T',<')$ be two po-relations of
    the same arity. We assume that the identifiers of~$\OR'$ have been renamed
    if necessary to ensure that $\ID$ and~$\ID'$ are disjoint.
    We then define $\OR \cup \OR'
    \defeq (\ID \cup \ID', T'', ({<} \cup {<'}))$, where $T''$
    maps
    $\id \in \ID$ to $T(\id)$ and
    $\id' \in \ID'$ to $T'(\id')$.
\end{description}
\smallskip
The union result $\OR \cup \OR'$ does not depend on how we renamed~$\OR'$,
i.e., it is unique up to isomorphism.
Our definition also
implies that $\OR \cup \OR$ is different from~$\OR$, as per bag semantics.
In particular, when $\OR$ and $\OR'$ have only one possible world,
$\OR \cup \OR'$ usually does not.

We next introduce two possible product operators. First,
as in~\cite{stanley1986enumerative},
the
\emph{direct product} ${<_{\ptw}} \defeq
({<} \times_\ptw {<'})$ of two partial orders $<$ and $<'$ on
sets $\ID$ and $\ID'$ is defined by $(\id_{1}, \id_{1}') <_{\ptw} (\id_{2},
\id_{2}')$ 
iff $\id_{1} < \id_{2}$ and $\id_{1}' <' \id_{2}'$
for each $(\id_{1},  \id_{1}'), (\id_{2}, \id_{2}') \in \ID \times \ID'$.
We define the \emph{direct product} operator over po-relations
accordingly: two identifiers in the product are comparable only
if \emph{both components} of both identifiers compare in the same
way.
\begin{description}
  \item[\axiom{direct product}:] For any po-relations $\OR = (\ID, T,<)$ and
    $\OR' = (\ID', T',{<'})$,
    remembering that the set of possible identifiers is closed under product,
    we let $\OR \times_\gen \OR' \defeq
    (\ID \times \ID', T'', \allowbreak {< \times_{\ptw} <'})$,
    where $T''$ maps each $(\id, \id') \in \ID \times \ID'$ to the
    \emph{concatenation} $\langle T(\id), T'(\id') \rangle$.
\end{description}
\smallskip
Again, the direct product result often has multiple possible worlds even when
inputs do not.

The second product operator uses the \emph{lexicographic product} (or \emph{ordinal
product}~\cite{stanley1986enumerative})
${<_\lex} \defeq ({<} \times_\lex {<'})$
of two partial orders $<$ and $<'$,
defined by $(\id_{1}, \id_{1}') <_{\lex} (\id_{2}, \id_{2}')$
iff
    either $\id_{1} < \id_{2}$, or $\id_{1} = \id_{2}$ and $\id_{1}' <'
    \id_{2}'$,
for all $(\id_{1},
\id_{1}'), (\id_{2}, \id_{2}') \in \ID \times \ID'$.
\begin{description}
  \item[\axiom{lexicographic product}:]
    For any po-relations $\OR = (\ID, T,<)$ and $\OR' = (\ID', \allowbreak T', {<'})$,
    we define
    $\OR \times_\lex \OR'$ as $(\ID
    \times \ID', T'', < \times_\lex <')$ with $T''$ defined like for the direct
    product.
\end{description}
\smallskip
Last, we define the \emph{constant expressions} that we allow.
\begin{description}
  \item[\axiom{constant expressions}:] 
    \begin{minipage}[t]{.757\linewidth}\begin{itemize}
  	\item for any tuple~$t$, the singleton po-relation $\singleton{t}$
  	has only one tuple with value~$t$;
  	\item for any $n \in \Nat$, the po-relation $\ordern{n}$ has arity~$1$
  	and has
  	$\pw(\ordern{n}) = \{(1, \ldots, n)\}$.
  \end{itemize}
\end{minipage}
\end{description}

We have now defined a semantics on po-relations for each \PosRA
operator. We define a \emph{\PosRA query} in the expected way, as a query built from these operators and from relation names. Calling
\emph{schema} a set $\calS$ of relation names and arities, with an
attribute name for each position of each relation, we define a
\emph{po-database} $D$ as having a po-relation
of the correct arity
for each relation name $R$ in~$\calS$. For a po-database $D$ and a \PosRA
query $Q$, we denote by $\card{Q}$ the number of symbols of~$Q$, and we 
denote by $Q(D)$ the po-relation obtained by evaluating
$Q$ over $D$.

\begin{figure}
\noindent\begin{minipage}[b]{.73\linewidth}
{\renewcommand{\tabcolsep}{1.5pt}
{\small\noindent\begin{tabular}{l@{~~}l@{~~}l}
  $\!\!$\begin{tabular}[b]{l@{~}c}
\toprule
\textit{restname} & \textit{district} \\
\midrule
Gagnaire & 8\quad\raisebox{0.9em}{\tikzm{froma}} \\
TourArgent & 5\quad\raisebox{-0.3em}{\tikzm{toa}}\tikzmd{froma}{toa} \\
\bottomrule\\[-.8em]
\multicolumn{2}{c}{{(a) $\mathit{Restaurant}$ table}}\\
\end{tabular}
&
\begin{tabular}[b]{l@{~}c}
\toprule
\textit{hotelname} & \textit{district}   \\
\midrule
Mercure & 5\phantom{2}\quad\raisebox{0.9em}{\tikzm{fromb}} \\
Balzac & 8\phantom{2}\quad\raisebox{0.9em}{\tikzm{dummy}} \\
Mercure & 12\quad\raisebox{-0.3em}{\tikzm{tob}}\tikzmd{fromb}{tob} \\
\bottomrule\\[-.8em]
\multicolumn{2}{c}{{(b) $\mathit{Hotel}$ table}}
\end{tabular}
&
\begin{tabular}[b]{l@{~}c}
\toprule
\textit{hotelname} & \textit{district}   \\
\midrule
Balzac & 8\phantom{2}\quad\raisebox{0.9em}{\tikzm{fromc}} \\
Mercure & 5\phantom{2}\quad\raisebox{0.9em}{\tikzm{dummyd}} \\
Mercure & 12\quad\raisebox{-0.3em}{\tikzm{toc}}\tikzmd{fromc}{toc} \\
\bottomrule\\[-.8em]
\multicolumn{2}{c}{{(c) $\mathit{Hotel}_2$ table}}
\end{tabular}
\end{tabular}}
  } \vspace{-.6em}\caption{Running example: Paris restaurants and hotels} \label{fig:examplerels}
\end{minipage}
\hfill\begin{minipage}[b]{.26\linewidth}
  \footnotesize
\noindent\begin{tikzpicture}[xscale=0.8,yscale=.94]
  \node (GM) at (0,-2) {$\langle\textup{G},8,\textup{M},5\rangle$};
  \node (TAM) at (1,-1) {$\langle\textup{TA},5,\textup{M},5\rangle$};
  \node (GB) at (-1,-1) {$\langle\textup{G},8,\textup{B},8\rangle$};
  \node (TAB) at (0,0) {$\langle\textup{TA},5,\textup{B},8\rangle$};
  \draw[->] (GM) -- (GB);
  \draw[->] (GB) -- (TAB);
  \draw[->] (GM) -- (TAM);
  \draw[->] (TAM) -- (TAB);
\end{tikzpicture}\null
\caption{Example~\ref{exa:simplegen}}
  \label{fig:example}
\end{minipage}
\end{figure}

\begin{example}
\label{exa:simplegen}
  The po-database $D$ in
Figure~\ref{fig:examplerels} contains information
  about restaurants and hotels in Paris: each po-relation has a total order
  (from top to bottom) according to customer
ratings from a given travel website. For brevity, we do not represent
  identifiers in po-relations, and we also deviate slightly from our formalism
  by adopting the named perspective in examples, i.e., giving names to
  attributes.

  Let $Q \defeq \mathit{Restaurant} \times_\gen
(\sigma_{\mathit{district}\neq
\text{``12''}}(\textit{Hotel}))$.
  Its result $Q(D)$ has two possible worlds, where we abbreviate hotel and
  restaurant names:
  \begin{itemize}
    \item $(\langle \textup{G}, 8, \textup{M}, 5\rangle , \langle \textup{G}, 8,
\textup{B}, 8\rangle , \langle \textup{TA}, 5, \textup{M}, 5\rangle
, \langle
\textup{TA}, 5, \textup{B},8\rangle )$;
\item $(\langle \textup{G}, 8, \textup{M}, 5\rangle , \langle \textup{TA},
5, \textup{M}, 5\rangle ,\langle \textup{G}, 8, \textup{B}, 8\rangle
, \langle \textup{TA},5, \textup{B},8\rangle )$.
  \end{itemize}
  In a sense, these \emph{list relations} of hotel--restaurant pairs are
  \emph{consistent} with the order in~$D$: we
do not know how to order two pairs, except when both the hotel
and restaurant compare in the same way.
  The \emph{po-relation} $Q(D)$ is represented in
  Figure~\ref{fig:example} as a Hasse diagram, again writing tuple values instead of tuple
  identifiers for brevity: note that, following the usual convention for Hasse
  diagrams in partial order theory, the order in Figure~\ref{fig:example} is
  drawn in the  
  reverse direction of that of Figure~\ref{fig:examplerels}, i.e., from bottom to top.

Consider now the query $Q'\defeq\Pi
  (\sigma_{\mathit{Restaurant}.\mathit{district} =
\mathit{Hotel}.\mathit{district}} (Q))$, where $\Pi$
  projects out $\mathit{Hotel}.\mathit{district}$. The possible worlds of~$Q'(D)$
are $(\langle \textup{G},\textup{B},8\rangle,\langle
\textup{TA},\textup{M},5\rangle)$ and $(\langle
\textup{TA},\textup{M},5\rangle, \langle
\textup{G},\textup{B},8\rangle)$, intuitively reflecting two
different opinions on the order of restaurant--hotel pairs
in the same district.
  Defining $Q''$ similarly to $Q'$ but replacing $\times_\gen$ by
  $\times_\lex$ in~$Q$, we have $\pw(Q''(D)) =
  (\langle\textup{G},\textup{B},8\rangle,
  \langle\textup{TA},\textup{M},5\rangle)$.
\end{example}

It is easy to show that we can efficiently evaluate \PosRA queries on
po-relations, which we will use throughout the sequel.

\begin{proposition}\label{prp:repsys}
  For any fixed \PosRA\ query $Q$, given a po-database $D$, we
  can construct the
  \mbox{po-relation}
  $Q(D)$ in time $O\big(\card{D}^{\card{Q}}\big)$, i.e., in polynomial time in
  the size of~$D$.
\end{proposition}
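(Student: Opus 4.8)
The plan is to prove this by structural induction on the query~$Q$, showing at each step that the output po-relation can be constructed in polynomial time and, crucially, has size polynomially bounded in~$\card{D}$. The size bound matters because products can blow up the number of identifiers multiplicatively, so I must track it carefully: for a fixed query of ``product-depth''~$d$, the number of identifiers in any intermediate po-relation is at most $\card{D}^{O(d)}$, and since $Q$ is fixed this is polynomial. I would make this precise by defining, for each subexpression~$Q'$ of~$Q$, a bound on $\card{\ID_{Q'(D)}}$ in terms of the sizes of the input relations, and observe that the bound is a fixed polynomial whose degree depends only on~$Q$ (matching the statement's parenthetical remark).

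For the inductive step I would treat each operator in turn, checking both that it is computable in polynomial time from the (already-constructed) po-relations for the subqueries and that it respects the size bound. The leaves are the relation names (read directly from~$D$) and the constant expressions $\singleton{t}$ and $\ordern{n}$, all of constant or linear size. For \axiom{selection} and \axiom{projection}, the output has at most as many identifiers as the input, and we simply filter/rewrite the tuple map and restrict the order, which is clearly polynomial; evaluating a fixed tuple predicate~$\psi$ on each tuple is constant-time per identifier. For \axiom{union} we rename to make identifier sets disjoint, take the disjoint union of the maps, and form the parallel composition ${<}\parallel{<'}$, all linear in the combined size. The two product operators are where both the running time and the size bound are tightest: we materialize the identifier set $\ID\times\ID'$ (of size $\card{\ID}\cdot\card{\ID'}$), set $T''$ to the concatenation, and compute either ${<}\times_{\ptw}{<'}$ or ${<}\times_{\lex}{<'}$ by checking the defining condition for each of the $O((\card{\ID}\cdot\card{\ID'})^2)$ candidate pairs; each check is a constant number of comparisons in the input orders.

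The main obstacle I anticipate is not any single operator but the bookkeeping that makes the ``degree depends on~$Q$'' claim rigorous, namely controlling the multiplicative size growth through nested products. Concretely, I would argue that if a subquery~$Q'$ has at most $p$~product operators on any root-to-leaf path in its syntax tree, then $\card{\ID_{Q'(D)}} \leq \card{D}^{p+1}$ (or a similar clean bound), so that after computing each operator the intermediate relation is already polynomial-sized and the per-operator cost, being polynomial in the sizes of its (polynomial-sized) inputs, remains polynomial overall. Since the number of operators in~$Q$ is a fixed constant, composing finitely many polynomial-time steps yields a polynomial-time procedure, with the overall degree bounded in terms of the product-depth of~$Q$. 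A minor additional point to address is that we work up to isomorphism, so I would note that the identifiers we freshly generate (e.g.\ products of identifiers, renamed copies for union) can be represented compactly, keeping each identifier of size polynomial in~$\card{D}$ and thus not secretly inflating the running time.
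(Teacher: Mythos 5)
Your proposal is correct and follows essentially the same route as the paper's own proof: a structural induction on~$Q$, handling relation names and constant expressions as base cases and showing each operator is computable in time polynomial in the sizes of its subqueries' (inductively constructed) results. The paper's version is terser—it simply states the per-operator costs and leaves the size bookkeeping implicit—whereas you make explicit the multiplicative blow-up through products and the resulting degree bound, which is a welcome but not substantively different elaboration.
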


\begin{proof}
  We show the claim by a simple induction on the query~$Q$, noting that
  $\card{Q}$ is at least $k+1$, where $k$ is the number of operators in~$Q$.

\begin{itemize}
  \item If $Q$ is a relation name~$R$, then $Q(D)$ is obtained 
in linear time.

\item If $Q$ is a constant expression, then $Q(D)$ is obtained in constant
  time.

\item If $Q=\sigma_\psi(Q')$ or $Q=\Pi_{k_1\dots k_p}(Q')$, then $Q(D)$ is
  obtained in time linear in $|Q'(D)|$, and we conclude by the induction
    hypothesis.

  \item If $Q=Q_1\cup Q_2$ or $Q=Q_1 \times_\lex Q_2$ or
    $Q=Q_1\times_\gen Q_2$, then $Q(D)$ is obtained in time linear in
    $|Q_1(D)|\times|Q_2(D)|$, and we conclude again by the induction hypothesis.
    \qedhere
\end{itemize}
\end{proof}

Note that Proposition~\ref{prp:repsys} computes the result of a query as a
po-relation~$\OR$. However, we cannot efficiently compute the complete set
$\pw(\OR)$ of possible worlds of~$\OR$, even if all relations of the input
po-database are totally ordered. For instance, consider the query $Q
\colonequals R \cup S$, and a po-database $D$ interpreting $R$ and $S$ as totally
ordered relations with disjoint domains and with $n$ tuples each. It is easy to
see that the query result $Q(D)$ has $2n \choose n$ possible worlds, which is
exponential in~$D$. This intractability is the reason why will we study the
possibility and certainty problems in the sequel.

\subsection{Incomparability of \PosRA\ Operators}

Before extending our query language with accumulation, we address the natural
question of whether any of our operators is subsumed by the others. We show that
this is not the case.

\begin{theorem}\label{thm:incomparable}
  No \PosRA\ operator can be expressed through a combination of the others.
\end{theorem}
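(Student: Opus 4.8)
The plan is to prove Theorem~\ref{thm:incomparable} by exhibiting, for each operator, a \emph{semantic invariant} that is preserved by all the \emph{other} operators but violated by the operator in question. Concretely, for each operator $\op$ I would (1) identify a property $\calP_\op$ of po-relations (or of the map from input po-relations to output po-relations) that is closed under the remaining operators and under composition, and (2) give a concrete input witnessing that $\op$ produces a result falling outside the class definable from $\calP_\op$-preserving queries. Since we argue about \emph{expressibility} and not mere syntactic occurrence, the invariants must be phrased semantically, i.e.\ in terms of $\pw(\cdot)$ or of the underlying bag relation, so that they are insensitive to how a query happens to be written.

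The individual invariants I would use are as follows. For \textbf{selection}, the natural invariant is that every other operator is \emph{non-decreasing in the number of tuples} on suitable inputs (projection, union, and both products never delete identifiers, and constants have fixed size), whereas selection can strictly shrink a relation; feeding in a single-tuple input whose value fails a predicate, no composition of the others can return the empty relation from a nonempty one. For \textbf{projection}, I would use that it is the only operator that \emph{changes the arity downward} while the domain of values can only grow or be reshuffled by the others; more robustly, projection can collapse two distinct tuple values into equal ones, and I would track the multiset of distinct values: the other operators either preserve arity (selection, union) or only increase it (products), and constants have fixed output, so the arity-lowering behaviour of projection is not reproducible. For \textbf{union}, the key semantic feature is that it is the only operator that increases the \emph{cardinality of the underlying bag} beyond what a single input relation supplies at fixed arity, while also being able to merge two \emph{independent} orders into one incomparable (parallel) structure; I would exhibit inputs where the number of tuples in the answer strictly exceeds $\max$ over the inputs without any arity change, which selection/projection cannot do and which the products achieve only by multiplying arities.

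The genuinely delicate cases are the two products, because they are closely related and each must be separated both from the other and from the non-product operators. For the \textbf{direct product} versus everything else, the invariant I would isolate is the ability to create genuine order \emph{uncertainty out of totally ordered inputs} in a way that $\times_\lex$ cannot: on two totally ordered single-attribute inputs, $\times_\gen$ yields a po-relation with many possible worlds (an ``antichain-rich'' Hasse diagram), whereas $\times_\lex$ on totally ordered inputs always yields a \emph{total} order, i.e.\ a single possible world, and so do selection, projection, and the constants on totally ordered data; only union can introduce incomparabilities, so I must combine the multiplicativity-of-arity count with the number of possible worlds to rule union out as well. Symmetrically, for the \textbf{lexicographic product}, the distinguishing invariant is that $\times_\lex$ can force a \emph{strict linear refinement across attribute blocks} that none of the others produces from unordered inputs: taking a totally ordered relation times an unordered one under $\times_\lex$ yields a specific forced order (the left attribute dominates), whereas $\times_\gen$ would leave the corresponding pairs incomparable, and union/selection/projection cannot manufacture a cross-block order at all. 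I expect the main obstacle to be exactly this last separation: proving that \emph{no} combination of $\times_\gen$ with the other operators can simulate the block-dominance order of a single $\times_\lex$, since $\times_\gen$ can be iterated and interleaved with projection and selection in nontrivial ways; the cleanest route is to argue by a monotone invariant on the comparability relation of $\pw(\cdot)$ showing that $\times_\gen$-built queries always leave a designated pair of tuples incomparable in some possible world, whereas the target $\times_\lex$ output orders that pair in every possible world.
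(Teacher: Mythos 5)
Your overall strategy---one semantic invariant per operator, preserved by the remaining operators and violated by the operator in question---is exactly the strategy of the paper's proof, and three of your cases are essentially sound. For selection, your cardinality argument works once phrased precisely: for a selection-free query the output cardinality is a fixed function of the input cardinalities (projection preserves it, union adds, products multiply, constants are constant), so no such query can be empty on one single-tuple input and nonempty on another; the paper instead uses a value-based argument (no other operator can output the constant $a$ without also outputting $0$), but both routes are valid. Projection as the only arity-decreasing operator is the paper's argument verbatim. For $\times_\lex$, your plan---show that queries built without $\times_\lex$ always leave a designated pair of tuples incomparable in some possible world---is precisely the paper's route: it defines \emph{$v$-impartial} po-relations (tuples disagreeing on whether an attribute equals $v$ must be incomparable) and proves by induction that impartiality is preserved by selection, projection, union and $\times_\gen$ but not by $\times_\lex$ (Lemma~\ref{lem:incomp}); the preservation induction is the real content there and you would still have to carry it out.

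However, two of your invariants are genuinely broken. For \textbf{union}, you claim that no combination of the others can increase the tuple count beyond the maximum input count ``at fixed arity'', but this property is not closed under composition: $\Pi_1(R \times_\gen R)$ has the same arity as $R$ and squares its cardinality, so the invariant fails and the separation does not follow. The paper avoids this with a value-based invariant: over the empty database, every attribute of a union-free query result draws its values from a single constant expression (products only juxtapose columns, projection only copies or reorders them), so no union-free query can produce a relation containing both $0$ and $1$ in the same attribute, while $[\langle 0\rangle]\cup[\langle 1\rangle]$ does. For \textbf{$\times_\gen$}, your invariant (``order uncertainty cannot be created from totally ordered inputs'') is violated by union, as you concede, and your proposed repair---combining arity counts with the number of possible worlds---is not developed and does not obviously work, since a union of two totally ordered relations already has exponentially many possible worlds, just like a $\times_\gen$ grid. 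The paper's fix is the quantitative version of your ``antichain-rich'' remark: poset \emph{width}. Any query $Q$ without $\times_\gen$ maps po-databases of width $k\geq 2$ to po-relations of width at most $k^{\card{Q}+1}$ (union adds widths, $\times_\lex$ multiplies them, selection and projection do not increase them; Lemma~\ref{lem:lexwidth}), whereas $R\times_\gen R$ evaluated on $\ordern{n}$ has width $n$, unbounded in the input size. Finally, the paper also counts the constant expressions $[t]$ and $\ordern{n}$ among the operators and proves their non-redundancy; your proposal omits them entirely.
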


We prove Theorem~\ref{thm:incomparable} in the rest of this subsection.
We consider each operator in turn,
and show that it cannot be expressed through a combination of the others. We first
consider constant expressions and show differences in expressiveness even
when setting the input po-database to be empty.
\begin{itemize}
  \item For $\singleton{t}$, consider the query $\singleton{\langle0\rangle}$.
    The value $0$ is not in the database,
    and cannot be produced by the $\ordern{n}$ constant
    expression, and so
    this query has no equivalent that does not use the $[t]$ constant
    expression.
  \item For $\ordern{n}$, observe that $\ordern{2}$ is a po-relation with
    a non-empty order, while any query involving the other operators
    will have empty order (none of
    our unary and binary operators turns unordered po-relations into an ordered
    one, and the $\singleton{t}$ constant expression produces an
    unordered po-relation).
\end{itemize}

Moving on to unary and binary operators, all operators but products are easily shown to be
non-expressible.

\begin{description}
  \item[selection.]
    For any constant $a$ not
    in~$\mathbb{N}$, consider the po-database $D_a$ consisting of a
    single unordered po-relation with name~$R$ formed of
    two unary tuples $\langle 0\rangle$ and
    $\langle a\rangle$. Let
    $Q=\sigma_{.1\neq\text{``0''}}(R)$. Then, $Q(D_a)$ is the po-relation
    consisting only of the tuple $\langle a\rangle$. No \PosRA{} query
    without selection has the same semantics, as no other operator than
    selection can create a po-relation containing the constant~$a$ for
    any input~$D_a$, unless it also contains the constant~$0$.
    
  \item[projection.] $\Pi$ is the only operator that can decrease the arity of an
    input po-relation.

  \item[union.] $[\langle 0\rangle]\cup[\langle 1\rangle]$ (over the empty
    po-database) cannot be simulated by any combination of operators, as
    can be simply shown by induction: no other operator will produce a
    po-relation which has the two elements $0$ and
    $1$ in the same attribute.
\end{description}

Observe that product operators are the only ones that can
increase arity, so taken together they are non-redundant with the other
operators. 
Hence, it only remains to show that each of $\times_\gen$ and
$\times_\lex$ is not redundant.
To do this, let us call \Plex the fragment of \PosRA that disallows the
$\times_{\dir}$ operator, but allows all other operators (including
$\times_{\lex}$).
We also define \Pgen that disallows $\times_{\lex}$ but
not $\times_{\dir}$.

We will first show that the $\times_\gen$ product is not redundant, which we will do using the notion of width.
Specifically, consider the query $Q_\gen=R\times_\gen R$ and an input
po-database $D_n$ where $R$ is mapped to $\ordern{n}$ (an input relation
of width $1$) for an arbitrary $R_n$.
It is then clear that the po-relation $Q(D_n)$ has width $n$.
We will show that this query cannot be captured in \Plex, because 
\Plex 
queries can only make width increase in a way that depends
on the \emph{width} of the input po-relations, but not on their \emph{size}.

\begin{lemma}\label{lem:lexwidth}
  Let $k \geq 2$ and $Q$ be a \Plex query.
  For any po-database~$D$ of width~$\leq k$, the po-relation $Q(D)$ has width $\leq
  k^{\card{Q}+1}$.
\end{lemma}

\begin{proof}
  We first show by induction on the \Plex query $Q$ that the width of the query
  output can be bounded as a function of~$k$. For the base case, 
  the input po-relations have width~$\leq k$, and all constant po-relations (singletons and constant chains) have width~$1$.
  Let us show the induction step.

  \begin{itemize}
    \item Given two po-relations $\OR_1$ and $\OR_2$ with width respectively
      $k_1$ and $k_2$, their
      union $\OR \colonequals \OR_1 \cupgen \OR_2$ clearly has width at most $k_1 + k_2$.
      Indeed, any antichain in~$\OR$ must be
      the union of an antichain of $\OR_1$ and of an antichain of $\OR_2$.
    \item Given a po-relation $\OR_1$ with width $k_1$, applying a projection or
      selection to~$\OR_1$ cannot increase the width.
    \item Given two po-relations 
      $\OR_1 = (\ID_1, T_1, <_1)$
      and
      $\OR_2 = (\ID_2, T_2, <_2)$
      with width respectively $k_1$ and $k_2$,
      their
      product $\OR \defeq \OR_1 \times_{\lex} \OR_2$ has width at most $k_1 \cdot k_2$.
      To show this,
      write 
      $\OR = (\ID, T, <)$, 
      consider any set $A \subseteq \ID$ of cardinality $> k_1
      \cdot k_2$, and let us argue that $A$ is not an antichain. By the definition
      of~$\times_\lex$, we can see each identifier of~$A$ as an element of~$\ID_1
      \times \ID_2$.
      Now, one of the following must hold.

      \begin{enumerate}
        \item Letting $S_1$ be the set of identifiers $u \in \ID_1$ for which we
          have \mbox{$(u, v) \in A$} for some 
          $v \in \ID_2$, it is the
          case that 
          $\card{S_1} > k_1$.
        \item There exists $u$ such that, letting $S_2(u) \defeq \{v \mid (u, v) \in
          A\}$, we have $\card{S_2(u)} > k_2$.
      \end{enumerate}

      Informally, when putting $> k_1 \cdot k_2$ values in buckets (the value of
      their first component), either $> k_1$ different buckets are used, or
      there is a bucket containing $> k_2$ elements.

      In the first case, as $S_1 \subseteq \ID_1$, as $\card{S_1} > k_1$,
      and as $\OR_1$ has width~$k_1$, we know that $S_1$ cannot be an antichain,
      so it must
      contain two comparable elements $u <_1 u'$. Hence, considering any
      $v, v' \in \ID_2$ 
      such that $w = (u, v)$ and $w' = (u', v')$ are in~$A$,
      we have by the definition of $\times_\lex$ that $w < w'$, so that
      $A$ is not an antichain.
      In the second case, as $S_2(u) \subseteq \ID_2$, as $\card{S_2(u)} > k_2$,
      and as $\OR_2$ has width~$k_2$, we know that $S_2(u)$ cannot be an
      antichain,
      so it must contain two comparable elements $v <_2 v'$. 
      Hence, considering
      $w = (u, v)$ and $w' = (u, v')$ which are in~$A$, we have
      $w < w'$, and again $A$ is not an antichain.
      Hence, no set of cardinality $> k_1 \cdot k_2$
      of~$\OR$ is an
      antichain, so $\OR$ has width $\leq k_1 \cdot k_2$ as claimed.
    \end{itemize}

  Second, we explain why the bound on the width of the query output can be
  chosen as in the lemma statement. Specifically,
  letting $o$ be the number of product operators in~$Q$ plus the number of
  union operators, we show that we can bound the width of~$Q(D)$ by~$k^{o + 1}$.
  Indeed, the output of queries
  without product or union operators have width at most $k$
  (because $k \geq 1$). Further, as projections and selections do not change the
  width, the only operators to consider are product and union. For the union
  operator, if $Q_1$ has
  $o_1$ such operators and $Q_2$ has $o_2$ such operators,
  bounding inductively the width
  of $Q_1(D)$ by $k^{o_1 + 1}$ and $Q_2(D)$ by~$k^{o_2 + 1}$, for $Q
  \colonequals Q_1 \cup
  Q_2$, the number of union and product operators is $o_1 + o_2 + 1$,
  and the new bound is $k^{o_1
  + 1} + k^{o_2 + 1}$, which is $\leq k^{o_1 + 1 + o_2 + 1}$ because $k \geq 2$,
  i.e., it is $\leq k^{(o_1 + o_2 + 1) + 1}$. For the~$\times_\lex$ operator, we proceed in the
  same way and directly obtain the $k^{(o_1 + o_2 + 1) + 1}$ bound. Hence, we
  can indeed bound the width of~$Q(D)$ by $k^{\card{Q}+1}$ as given in the statement,
  which concludes the proof.
\end{proof}

We have shown Lemma~\ref{lem:lexwidth}:
\Plex queries can only make the width increase as a
function of the query and of the width of the input po-relations. Hence, the
query $Q_\gen$ cannot be captured in \Plex, and the $\times_\gen$ product is not
redundant.

Conversely, let us show that the $\times_\lex$ product is not redundant. To do
this, we introduce the
\emph{concatenation} of po-relations.

\begin{definition}
  \label{def:concat}
  The \emph{concatenation}
  $\OR \cupcat \OR'$ of two po-relations $\OR$ and $\OR'$
is the series composition of their two partial
orders. Note that $\pw(\OR \cupcat \OR') = \{L \cupcat L' \mid L \in \pw(\OR),
L' \in \pw(\OR')\}$, where $L \cupcat L'$ is the concatenation of two list
  relations in
  the usual sense.
\end{definition}

  We show that concatenation can be captured in \Plex.
  \begin{lemma}
    \label{lem:lexconcat}
    For any arity $n \in \mathbb{N}$ and
    distinguished relation names $R$ and $R'$,
    there is a query $Q_n$ without $\times_\gen$ operator
    such that, for any two
    po-relations $\OR$ and $\OR'$ of arity~$n$, letting $D$ be the
    database mapping $R$ to $\OR$ and $R'$ to $\OR'$, the query result $Q_n(D)$ is
    $\OR \cupcat \OR'$.
  \end{lemma}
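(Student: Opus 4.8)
The plan is to exhibit one explicit query $Q_n$ built only from $\times_\lex$, $\cup$, selection, projection, and the constant expressions $\singleton{\langle 1\rangle}$, $\singleton{\langle 2\rangle}$, and $\ordern{2}$, and then to check that it computes the series composition. The guiding observation is that the only device available for \emph{introducing} order between tuples that are otherwise incomparable is the lexicographic product against an ordered relation: union alone yields parallel composition, and selection can only remove tuples, not reorder them. Multiplying $\ordern{2}$ on the left makes its value ($1$ or $2$) the \emph{major} sort key, leaving the second factor to supply the minor order. My candidate is
\[
Q_n \defeq \Pi_{3,\dots,n+2}\Bigl(\sigma_{.1 = .2}\bigl(\ordern{2} \times_\lex \bigl((\singleton{\langle 1\rangle} \times_\lex R) \cup (\singleton{\langle 2\rangle} \times_\lex R')\bigr)\bigr)\Bigr),
\]
which plainly contains no $\times_\gen$ and depends on $\OR,\OR'$ only through the relation names, as required.

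The intuition behind $Q_n$ is as follows. Prepending tag $1$ to $R$ and tag $2$ to $R'$ through $\singleton{\langle 1\rangle}\times_\lex R$ and $\singleton{\langle 2\rangle}\times_\lex R'$ produces faithful copies of $\OR$ and $\OR'$: the singleton factor adds one attribute but, having a single identifier, leaves the order untouched. Their union $S$ is the parallel composition of $\OR$ and $\OR'$, with the tags marking the two parts. Then $\ordern{2}\times_\lex S$ creates two copies of $S$, one carrying $\ordern{2}$-value $1$ and one carrying $2$, with every $1$-valued pair below every $2$-valued pair and a copy of $S$'s order inside each. Selecting $.1 = .2$ retains exactly the $\OR$-tuples in the first copy (values $\langle 1,1,\dots\rangle$) and the $\OR'$-tuples in the second copy (values $\langle 2,2,\dots\rangle$), discarding the two mismatched blocks, and projecting out the two tag attributes restores arity $n$.

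To finish I would verify the order on the surviving identifiers directly from the definition of $\times_\lex$. Within the first block the comparisons reduce through the singleton and $\ordern{2}$ factors to $\OR$'s order, and within the second block to $\OR'$'s order; across blocks, since $1 < 2$ in $\ordern{2}$ and $\times_\lex$ gives the left factor priority, every first-block identifier lies below every second-block identifier and no comparison runs the other way. Since selection restricts the order and projection preserves both identifiers and order, the outcome is exactly the series composition $\OR \cupcat \OR'$ up to isomorphism, which is what Definition~\ref{def:concat} demands; bag semantics is respected because projection retains all identifiers, so repeated tuple values are never collapsed.

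The step needing the most care, and the main obstacle, is ruling out spurious cross-order: the left factor $\ordern{2}$ generates a copy of \emph{all} of $S$ for each of its two values, so a priori one might fear four blocks or unwanted comparisons mixing $\OR$- and $\OR'$-tuples. The tag-and-select device is precisely what collapses this back to two clean blocks. Accordingly, the real difficulty is conceptual rather than computational: recognizing that a lexicographic product against an ordered relation, not union, is the mechanism that can turn an attribute value into order. Once $Q_n$ is written down, the verification is a routine unfolding of the $\times_\lex$, selection, and projection definitions.
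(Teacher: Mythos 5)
Your proposal is correct and uses exactly the query from the paper's proof: the paper constructs the same $Q_n$ (tagging $R$ and $R'$ with singletons, taking the lexicographic product with $\ordern{2}$, selecting $.1 = .2$, and projecting out the two tag attributes) and then states that the property is "easily verified." Your spelled-out verification of the block structure and cross-block order is precisely the routine check the paper leaves implicit.
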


  \begin{proof}
    For any $n \in \mathbb{N}$ and names $R$ and $R'$,
    consider the following query:
    \[Q_n(R, R') \defeq
\Pi_{3\dots n+2} \left(\sigma_{.1= .2} \left(\ordern{2}
\times_\lex ((\singleton{1}\times_\lex R) \cupgen
   (\singleton{2}\times_\lex R'))\right)\right)\]
   It is easily verified that $Q_n$ satisfies the claimed property.
  \end{proof}

  By contrast, we show that concatenation cannot be captured in \Pgen.

\begin{lemma}\label{lem:noconcat}
  For any arity $n \in \NN_{>0}$ and distinguished relation names $R$ and~$R'$,
  there is no \Pgen query $Q_n$ such that, for any
  po-relations $\OR$ and $\OR'$ of arity~$n$, letting $D$
  be the
  po-database that maps $R$ to $\OR$ and $R'$ to $\OR'$, the query result $Q_n(D)$
  is $\OR \cupcat \OR'$.
\end{lemma}

  To prove Lemma~\ref{lem:noconcat}, we first introduce the following concept.

  \begin{definition}
    Let $v \in \calD$. We call a
    po-relation $\OR = (\ID, T, <)$ \emph{$v$-impartial} if, for any
    two identifiers $\id_1$ and $\id_2$ and $1 \leq i \leq \arity{\OR}$ such that exactly
    one of $T(\id_1).i$, $T(\id_2).i$ is $v$, the following holds: $\id_1$ and
    $\id_2$ are \emph{incomparable}, namely, neither $\id_1 < \id_2$ nor $\id_2
    < \id_1$ hold.
  \end{definition}

\begin{lemma}
  \label{lem:incomp}
  Let $v \in \calD \backslash \mathbb{N}$ be a value. For any \Pgen query
  $Q$,
  for any
  po-database $D$ of $v$-impartial
  po-relations, the po-relation $Q(D)$ is
  $v$-impartial.
\end{lemma}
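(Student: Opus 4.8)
The plan is to prove Lemma~\ref{lem:incomp} by structural induction on the \PosRA query $Q$, checking at each step that $v$-impartiality is preserved by every operator except $\times_\lex$ (which the statement excludes). Throughout I unfold the definition: a mismatch at position $i$ means exactly one of $T(\id_1).i$, $T(\id_2).i$ equals $v$, and the obligation is to deduce that $\id_1$ and $\id_2$ are incomparable.

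For the base cases, a relation name $R$ gives $Q(D) = D(R)$, which is $v$-impartial by hypothesis on~$D$. The singleton $\singleton{t}$ has a single identifier and empty order, so the premise is vacuous. The crucial base case is $\ordern{n}$: although it carries a total (hence nonempty) order, all its values lie in $\mathbb{N}$, and since $v \notin \mathbb{N}$ no position ever equals~$v$, so no mismatch can occur and $\ordern{n}$ is vacuously $v$-impartial. This is exactly where the hypothesis $v \in \calD \setminus \mathbb{N}$ is used: it neutralises the one constant expression that can introduce order.

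For the inductive cases I argue operator by operator. For selection, both order and values of any surviving pair are inherited by restriction, so a mismatch and the resulting incomparability transfer verbatim from the input. For projection, each output position $j$ comes from some input attribute~$A_j$ while the order is unchanged, so a mismatch at output position~$j$ is a mismatch at input attribute~$A_j$, and the induction hypothesis applies. For union, the order is the parallel composition: a pair drawn from the same input is governed by that input's order and values (induction hypothesis applies), while a pair straddling the two inputs is incomparable by definition of parallel composition, so the conclusion holds regardless of values.

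The key case, and the one explaining why the lemma fails for $\times_\lex$, is the direct product $\OR \times_\gen \OR'$. Here $(\id_1, \id_1')$ and $(\id_2, \id_2')$ are comparable only when both first components and both second components compare strictly in the same direction — a conjunctive condition. If a mismatch occurs at position $i \leq \arity{\OR}$, it is a mismatch between $\id_1$ and $\id_2$ in~$\OR$, so by induction $\id_1, \id_2$ are incomparable in~$\OR$; but any comparability of the pairs would force a strict comparison between $\id_1$ and $\id_2$, a contradiction, so the pairs are incomparable. The case $i > \arity{\OR}$ is symmetric, using $v$-impartiality of~$\OR'$ on position $i - \arity{\OR}$. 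I expect every case to be routine once definitions are unfolded; the only genuine content is this direct-product step, whose conjunctive comparability is precisely what $\times_\lex$ lacks (there the pairs could be comparable through their first components alone, even when the mismatch sits in the second block), consistent with $\times_\lex$ being excluded. The only delicate bookkeeping point overall is to remember that the $\ordern{n}$ base case leans on $v \notin \mathbb{N}$ to render its nonempty order harmless.
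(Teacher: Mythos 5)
Your proof is correct and follows essentially the same route as the paper's: structural induction on $Q$, with the $\ordern{n}$ base case discharged by $v \notin \mathbb{N}$, union handled via parallel composition, and the direct product handled by observing that comparability in $\times_\gen$ forces strict comparability of the first (resp.\ second) components, contradicting the induction hypothesis. The only difference is stylistic — you argue the union and product cases contrapositively where the paper argues by contradiction — and your closing remark on why the argument breaks for $\times_\lex$ is a nice sanity check, though not required.
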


\begin{proof}
  Let $D$ be a po-database of
  $v$-impartial po-relations.
  We show by induction on the query $Q$ that $v$-impartiality is preserved.
  The base cases are the following.

  \begin{itemize}
    \item For the base relations, the claim is vacuous by our hypothesis on~$D$.
    \item For the singleton constant expressions, the claim is trivial as
      they contain less than two tuples.
    \item For the $\ordern{i}$ constant expressions, the claim is
      immediate as $v \notin \mathbb{N}$.
  \end{itemize}
  We now prove the induction step.

  \begin{itemize}
    \item For selection, the claim is shown by noticing that, for any
      $v$-impartial
      po-relation $\OR$,
      letting $\OR'$ be the image of $\OR$ by any selection, $\OR'$
      is itself $v$-impartial. Indeed, considering two identifiers $\id_1$ and
      $\id_2$ in
      $\OR'$ and $1 \leq i \leq \arity{\OR}$ satisfying the condition, as $\OR$ is
      $v$-impartial, $\id_1$ and $\id_2$ are incomparable in $\OR$, so they are also
      incomparable in $\OR'$.
    \item For projection, the claim is also immediate as the property to prove
      is maintained when reordering, copying or deleting attributes. Indeed,
      considering again two identifiers $\id_1'$ and $\id_2'$ of $\OR'$ and $1
      \leq i' \leq \arity{\OR'}$, the respective preimages $\id_1$ and $\id_2$ in~$\OR$ of $\id_1'$ and $\id_2'$
      satisfy the same condition for some different $1 \leq i \leq \arity{\OR}$ which is the
      attribute in~$\OR$ that was projected to give attribute~$i'$ in~$\OR'$,
      so we again use the impartiality of the original
      po-relation to conclude.
    \item For union, letting $\OR'' \defeq \OR \cupgen
      \OR'$, and writing $\OR'' = (\ID'', T'', {<''})$, assume by contradiction the existence of two identifiers $\id_1,
      \id_2 \in
      \OR''$ and $1 \leq i \leq \arity{\OR''}$ such that exactly one of
      $T''(\id_1).i$ and
      $T''(\id_2).i$ is $v$ but (without loss of generality) $\id_1 < \id_2$ in
      $\OR''$.
      It is easily seen that, as $\id_1$ and
      $\id_2$
      are not incomparable, they must come from the same relation; but then, as
      that relation was $v$-impartial, we have a contradiction.
    \item For $\times_\gen$, consider
      $\OR'' \defeq \OR\times_\gen \OR'$ where $\OR$ and $\OR'$ are
      $v$-impartial, and write $\OR'' = (\ID'', T'', <'')$ as above.
      Assume that there are two identifiers $\id_1''$ and $\id_2''$ of~$\ID''$
      and $1 \leq i \leq \arity{\OR''}$ that violate the $v$-impartiality of~$\OR''$.
      Let $(\id_1, \id_1'), (\id_2, \id_2') \in \ID \times \ID'$ be the
      pairs of identifiers used to create $\id_1''$ and $\id_2''$.
  We distinguish on whether $1 \leq i \leq \arity{\OR}$ or $\arity{\OR} < i \leq
  \arity{\OR} + \arity{\OR'}$. In the first case, we deduce that exactly one of
      $T(\id_1).i$ and $T(\id_2).i$ is $v$, so that in particular $\id_1 \neq
      \id_2$.
      Thus, by the definition of the order in $\times_\gen$, it is easily seen that,
  because $\id''_1$ and $\id''_2$ are comparable in~$\OR''$,
  $\id_1$ and $\id_2$ must compare in the same way in~$\OR$, contradicting the
  $v$-impartiality of~$\OR$.
  The second case is symmetric.\qedhere
  \end{itemize}

\end{proof}

We now conclude with the proof of Lemma~\ref{lem:noconcat}.

\begin{proof}
  Let us assume by way of contradiction that there is $n \in \NN_{>0}$ and a
  \Pgen query~$Q_n$ that captures $\cupcat$.
  Let $v \neq v'$ be two distinct
  values in $\calD \backslash
  \mathbb{N}$, and consider the singleton
  po-relation $\OR$ containing one identifier of value~$t$ and $\OR'$ containing
  one identifier of value $t'$, where
  $t$ (resp.\ $t'$) are tuples of arity $n$ containing
  $n$ times the value $v$ (resp.\ $v'$). Consider the
  po-database $D$ mapping
  $R$ to $\OR$ and $R'$ to $\OR'$.
  Write $\OR'' \defeq Q_n(D)$.
  By our assumption, as $\OR'' = (\ID'', T'', {<''})$ is $\OR \cupcat \OR'$,
  it must contain an
  identifier $\id \in \ID''$ such that $T''(\id) = t$ and an identifier $\id'
  \in \ID''$ such that $T''(\id') = t'$.
  Now, as $\OR$ and $\OR'$ are (vacuously) $v$-impartial,
  Lemma~\ref{lem:incomp} implies that $\OR''$ is $v$-impartial.
  Hence, as $n > 0$, taking $i = 1$, as $t \neq t'$ and exactly one of $t.1$ and
  $t'.1$ is $v$, the identifiers $\id$ and $\id'$ are incomparable in~$<''$,
  so there is a possible world of $\OR''$ where $\id'$ precedes $\id$.
  This contradicts the fact that, as we should have $\OR'' = \OR \cupcat \OR'$,
  the po-relation $\OR''$ should have exactly one possible world, namely, 
  $(t, t')$.
\end{proof}

This establishes that the $\times_\lex$ operator cannot be expressed using the
others, and shows that none of our operators is redundant, which concludes the
proof of Theorem~\ref{thm:incomparable}.

\subsection{\PosRAacc: Queries With Accumulation}
\label{sec:posraacc}

We now enrich \PosRA with order-aware {\em
accumulation} as the outermost operation, inspired by
\emph{right accumulation} and \emph{iteration} in list programming,
and \emph{aggregation} in
relational databases.
Recall that a \emph{monoid} $(\calM, \oplus, \epsilon)$
consists of a set $\calM$ (not necessarily finite), an associative operation
$\oplus: \calM \times \calM \to \calM$, and an element $\epsilon \in \calM$
which is neutral for $\oplus$, i.e., for all $m \in \calM$, we have $\epsilon
\oplus m = m \oplus \epsilon = m$. We will use a monoid as the structure in
which we perform accumulation. We can now define accumulation on a given list relation. 

\begin{definition}
  \label{def:aggregationb}
  For $k \in \mathbb{N}$,
  let
  $h: \calD^k \times
  \NN_{>0} \to \calM$ be a function
  called an \deft{\mbox{arity-$k$} accumulation map}, which maps pairs
  consisting of an $k$-tuple and
  a position to a value in the monoid~$\calM$.
  We call $\accum_{h, \oplus}$ an \deft{arity-$k$ accumulation operator};
  its result $\accum_{h, \oplus}(L)$
  on an arity-$k$
  list relation $L = (t_1, \ldots, t_n)$ is 
  $h(t_1, 1) \oplus \cdots \oplus h(t_n,
  n)$, and it is $\epsilon$
  if $L$ is empty.
  For complexity purposes, we \emph{always} require accumulation operators to be
  \emph{PTIME-evaluable}, i.e., we can evaluate the accumulation map and the
  monoid operator in time polynomial in their inputs, and we can compute $\accum_{h,
  \oplus}(L)$ in polynomial time on any input list relation~$L$.
\end{definition}

Intuitively, the accumulation operator
maps each occurrence of a tuple in the list  with $h$ to~$\calM$, where accumulation is performed
with~$\oplus$. (Remember that the input $L$ to the accumulation is a list
relation, so each tuple occurrence has a specific position.)
The map~$h$ may use its second argument to take into
account the absolute position of tuples in~$L$. In what
follows, we
omit the arity of accumulation
when clear from context.

We will often look at special cases for accumulation, especially when deriving
complexity results. Here are the restrictions that we will consider.

\begin{definition}
  \label{def:accrestr}
  We say that an accumulation operator is \emph{position-invariant} if its
  accumulation map ignores the
  second input, so that effectively its only input is the tuple itself

  We say that an accumulation operator is \emph{finite} if its monoid $(\calM,
  \oplus, \epsilon)$ is finite.

  For any monoid $(\calM, \oplus, \epsilon)$,
  we call $a \in \calM$ \deft{cancellable} if, for all~$b, c \in \calM$, we
  have that $a \oplus b = a \oplus c$ implies $b = c$, and $b \oplus a = c
  \oplus a$ implies $b = c$.
  We call $\calM$ a \deft{cancellative monoid}~\cite{howie1995fundamentals} if all
  its elements are cancellable.
  We say that an accumulation operator is
  \emph{cancellative} if its monoid is.
\end{definition}
  
  Note that, in particular, a group is always cancellative,
  but there are some cancellative monoids which are not groups, e.g., the monoid
  of concatenation. 

We can now define the
language \posRAagg that contains all queries of the form
$Q=\accum_{h,\oplus}(Q')$, where $\accum_{h,\oplus}$ is an accumulation
operator and $Q'$ is a \PosRA\ query. 
The \emph{possible results} of $Q$ on a
po-database~$D$, denoted $Q(D)$, is the set of results obtained by applying
accumulation to each possible world of $Q'(D)$, namely:

\begin{definition}
  \label{def:aggreg-po}
  For a po-relation $\OR$,
  we define
  \(\accum_{h, \oplus}(\OR)
  \defeq\{\accum_{h, \oplus}(L) \mid L \in
  \pw(\OR)\}\).
\end{definition}

Of course, accumulation has exactly one result whenever the accumulation operator $\accum_{h,
\oplus}$ does not depend on the order of input tuples:
this covers, e.g., the standard sum, min, max, etc.
Hence,
we focus
on accumulation operators which \emph{depend on the order of tuples}, e.g.,
the monoid $\calM$ of strings with
$\oplus$ being the concatenation operation. In this case, there may be more than one accumulation result.

\begin{example}
  \label{exa:aggreg}
  As a first example, let
  $\mathit{Ratings}(\mathit{user},\mathit{restaurant},\mathit{rating})$ be an
  \emph{unordered} po-relation describing
  the numerical ratings given by users to restaurants, where each user rated each restaurant at most
  once. Let $\mathit{Relevance}(\mathit{user})$ be a po-relation
  giving a partially-known ordering of users to indicate the relevance of
  their reviews. We wish to compute a \emph{total rating} for each restaurant
  which is given by the
  sum of its reviews weighted by a
  PTIME-computable weight
  function~$w$. Specifically, $w(i)$ gives a nonnegative weight to the rating of the $i$-th most relevant user.
  Consider 
  \(
  Q_1 \defeq
\accum_{h_1,+}(\sigma_\psi(\mathit{Relevance}\times_{\lex}\mathit{Ratings}))\)
  where we set \(h_1(t,n) \defeq t.\mathit{rating} \times w(n)\), and where
  $\psi$ is the tuple predicate:
  \(
    \mathit{restaurant}=\text{``Gagnaire''}\land\mathit{Ratings}.\mathit{user}=
    \mathit{Relevance}.\mathit{user}
\).
  The query $Q_1$ gives the total rating of \textquote{Gagnaire}, and each
possible world of $\mathit{Relevance}$ may lead to a different accumulation
  result. This accumulation operator is cancellative, but it is neither
  position-invariant nor finite.

As a second example, consider an unordered po-relation
$\mathit{HotelCity}(\mathit{hotel}, \mathit{city})$ indicating in which
city each hotel is located, and consider a po-relation
$\mathit{City}(\mathit{city})$ which is (partially) ranked by a
criterion such as interest level, proximity, etc. Now consider the
query
  \(
Q_2 \defeq \accum_{h_2,\mathrm{concat}}(
  \Pi_{\mathit{hotel}}(
Q_2'))\), with \(Q_2' \defeq \sigma_{\mathit{City}.\mathit{city} =
\mathit{HotelCity}.\mathit{city}}(
\mathit{City}\times_{\lex}\mathit{HotelCity})\) and
$h_2(t,n)\defeq t$. Here, the operator ``$\mathrm{concat}\!$'' denotes standard string concatenation.
$Q_2$ concatenates the hotel names according to the preference order on the city
where they are located, allowing any possible order between hotels of the same city and
between hotels in incomparable cities. This accumulation operator is
  cancellative and position-invariant, but it is not finite.
\end{example}

\section{Possibility and Certainty}\label{sec:posscertdef}
Evaluating a \PosRA or \PosRAacc query $Q$ on a po-database $D$ yields a {\em set of possible
results}: for \PosRAacc, it yields an explicit set of accumulation results, and
for \PosRA, it yields a po-relation that represents a set of possible worlds (list
relations).
The uncertainty on the result may come from uncertainty on the order of the
input relations (i.e., if they are po-relations with multiple possible worlds),
but it may also be caused by the query, e.g., the union of two non-empty totally
ordered relations is not totally ordered.
In some cases, however, there is only one
possible result to the query, i.e., a \emph{certain} answer.
In other cases, we may wish
to examine multiple \emph{possible} answers. We thus define the corresponding problems.

\begin{definition}[Possibility and Certainty]
  \label{def:posscert}
  Let $Q$ be a \PosRA query, $D$ be a po-database, and $L$ a list
  relation. The \emph{possibility problem} (\poss) asks 
  if $L \in \pw(Q(D))$, i.e.,
  if $L$ is a possible result of~$Q$ on~$D$.
  The \emph{certainty
  problem} (\cert) asks if $\pw(Q(D)) = \{L\}$, 
  i.e., if $L$ is the only possible result of~$Q$ on~$D$.

  Likewise,
  if $Q$ is a \PosRAacc query with an
 accumulation monoid $\calM$, for a result $v
 \in \calM$, the \poss problem asks whether $v \in Q(D)$, and \cert asks
 whether $Q(D) = \{v\}$.
\end{definition}

For \PosRAacc, our definition follows the usual notion of
possible and certain answers in data integration
\cite{DBLP:conf/pods/Lenzerini02} and incomplete information \cite{Libkin06}.
For \PosRA,
we ask for possibility or certainty of an \emph{entire} output list
relation of tuples {\em without identifiers}: indeed, as we explained above, the
identifiers are only internally generated and thus expected to be unknown to the
user. These problems correspond to 
\emph{instance possibility and certainty}~\cite{KO06}. 
We now justify that these notions are useful and discuss more ``local'' alternatives.

First, as we exemplify below,
the output of a query may be certain even for a complex
query and uncertain input. 
It is important to identify such cases and present the user with the certain
answer in full,
like order-by query results in current DBMSs.
Our \cert problem is useful for this task, because we can use it to decide if a
certain output exists: and if it is the case, then we can compute the certain
output in polynomial time, by choosing an arbitrary
linear extension and computing the corresponding possible world.
However, \cert is a challenging problem to solve, because of duplicate values 
(see the ``Technical difficulties'' paragraph below).

\begin{example}
	\label{ex:certexample}
  Consider the po-database $D$ of Figure~\ref{fig:examplerels} with relations
  $\mathit{Restaurant}$ and $\mathit{Hotel}_2$.
  To find recommended pairs of hotels and restaurants in the same
  district, we can write
  $Q \defeq
  \sigma_{\mathit{Restaurant}.\mathit{district}=\mathit{Hotel}_2.\mathit{district}}
  (\mathit{Restaurant} \times_{\dir} \mathit{Hotel}_2)$. Evaluating $Q(D)$ yields 
  the list relation $(\langle G,8,B,8\rangle,
  \langle \mathit{TA},5,M,5\rangle)$ as a unique possible world: it is a \emph{certain} result.

  We may also obtain a certain result in cases when the input relations are
  larger. Imagine for example that
  we join hotels and restaurants to find pairs of a hotel and a restaurant
  located in that hotel. The result can be certain if the relative ranking of the
  hotels and of their restaurants agree.
\end{example}

If there is no certain answer, 
we can instead try to decide whether some list relations 
are a possible answer. This can be useful, e.g., to check if a
list relation (obtained from another source) is consistent with a query result. For
example, we may wish to check if a website's 
ranking of 
hotel--restaurant pairs is
\emph{consistent} with the preferences expressed in its rankings for hotels and
restaurants, to detect when a pair is ranked higher than its components would
warrant: this can be done by checking if the ranking on the pairs is a possible
result of the query that unifies the hotel ranking and restaurant ranking.

When there is no overall certain answer, or when we want to check the
possibility of some aggregate property of the relation, we can use a \PosRAacc
query.
In particular, in addition to the applications of Example~\ref{exa:aggreg},
accumulation allows us to encode alternative notions of \poss and \cert for {\em
\PosRA} queries, and to express them as \poss and \cert for \PosRAacc. For example,
instead of possibility or certainty for a full relation, we can express
possibility or certainty of the \emph{position}\footnote{Remember that the {\em existence} of a tuple is not
order-dependent, so it is trivial to check in our setting.}
of particular tuples of interest.

One particular application of accumulation is to model
  {\emph{position-based selection}} queries. Consider for instance a \emph{top-$k$}
operator, defined on list relations, which retrieves a list relation of the
  first~$k$~tuples. Let us extend the top-$k$ operator to po-relations in the
  expected way: the set of top-$k$ results on a po-relation~$\OR$ is the set of
  top-$k$ results on the list relations of $\pw(\OR)$. We can
implement {\em top-$k$} as $\accum_{h_3,\concat}$
with $h_3(t, n)$ being $(t)$ for $n \leq k$ and $\epsilon$ otherwise,
and with $\text{concat}$ being list concatenation. We can similarly compute \emph{select-at-$k$}, i.e., return the
tuple at position $k$, via $\accum_{h_4,\concat}$
with $h_4(t, n)$ being $(t)$ for $n=k$ and $\epsilon$ otherwise. Both these
accumulation operators are cancellative because they use the concatenation
monoid, and they are finite if we assume that the domain of the output is fixed (e.g., ratings
in $\{1, \ldots, 10\}$), and if we also assume for top-$k$ that $k$ is fixed.

  Accumulation can also be used for a {\emph{tuple-level comparison}}.
  To check whether 
the first occurrence of a tuple $t_1$ precedes any occurrence of
$t_2$, we define $h_5$ for all $n\in\NN$ by $h_5(t_1,n) \defeq \top$, $h_5(t_2,n) \defeq \bot$ and
$h_5(t,n)\defeq\epsilon$ for $t \neq t_1,t_2$, and a monoid operator
$\oplus$ that returns its first argument:
  assuming that $t_1$ and $t_2$ are both present, the result is~$\top$ if
  the first occurrence of $t_1$ precedes any occurrence of $t_2$, and it
  is~$\bot$ otherwise. This accumulation operator is finite and
  position-invariant, but not cancellative.

We study the complexity of these variants in Section \ref{sec:fpt}. We now
give examples of their use.

\begin{example}
  \label{exa:accumul}
  Let $Q\defeq\Pi_{\mathit{district}}(
  \sigma_{\mathit{Restaurant}.\mathit{district}=\mathit{Hotel}.\mathit{district}}
  (\mathit{Restaurant} \times_{\dir} \mathit{Hotel}))$, that computes ordered
  recommendations of districts including both hotels and restaurants.
  The user can use accumulation to compute the best district to stay in with
  $Q'=\text{top-}1(Q)$.
  When $Q'$ has a certain answer,
  there is a dominating
  hotel--restaurant pair in this district which answers the user's need.
  If there is no certain answer, \poss allows the user to 
  determine 
  the \emph{possible} top-$1$ districts.
	
  We can also use \poss and \cert for \PosRAacc queries to restrict attention to
  \emph{tuples} of interest. If the user hesitates between districts $5$ and $6$, they
  can apply tuple-level comparison  to see
  whether the best pair of district~$5$ may be better (or is always better) than
  that of~$6$.
\end{example}

\subparagraph*{Technical difficulties.} The main challenge to solve \poss and
\cert for a \PosRA query $Q$ on an input po-database $D$ is that the tuple values of
the desired result~$L$ may
occur multiple times in the po-relation $Q(D)$, making it hard to match $L$ and
$Q(D)$.
In other words, even though we can compute the po-relation $Q(D)$ in
polynomial time (by
Proposition~\ref{prp:repsys}) and present it to the user, they still cannot
easily determine the possible and certain answers out of the po-relation.

\begin{figure}
\centering\begin{tikzpicture}[yscale=.8,xscale=1.5,every node/.style={align=center,outer sep=0,inner
  sep=2pt}]
  \node (n13) at (0, 0) {fr\\[-.45em]{\footnotesize a}};
  \node (n20) at (1, 0) {it\\[-.45em]{\footnotesize b}};
  \node (n37) at (0, 1) {fr\\[-.45em]{\footnotesize c}};
  \node (n42) at (1, 1) {it\\[-.45em]{\footnotesize d}};
  \node (n100) at (0, 2) {jp\\[-.35em]{\footnotesize e}};
  \node (n102) at (1, 2) {jp\\[-.35em]{\footnotesize f}};
  \draw[->] (n13) -- (n37);
  \draw[->] (n20) -- (n37);
  \draw[->] (n37) -- (n100);
  \draw[->] (n42) -- (n100);
  \draw[->] (n42) -- (n102);
\end{tikzpicture}
\caption{Po-relation in Example~\ref{exa:notposet}}
  \label{fig:notposet}
\end{figure}

\begin{example}
  \label{exa:notposet}
  Consider a po-relation $\OR = (\ID, T, {<})$ with
  $\ID = \{\id_{\mathrm{a}}, \allowbreak\id_{\mathrm{b}},
  \allowbreak\id_{\mathrm{c}},
  \allowbreak\id_{\mathrm{d}}, \allowbreak\id_{\mathrm{e}},
  \allowbreak\id_{\mathrm{f}}\}$, with
  $T(\id_{\mathrm{a}}) \defeq \langle\text{Gagnaire}, \text{fr}\rangle$,
  $T(\id_{\mathrm{b}}) \defeq
  \langle\text{Italia}, \text{it}\rangle$,
  $T(\id_{\mathrm{c}}) \defeq \langle\text{TourArgent}, \text{fr}\rangle$,
  $T(\id_{\mathrm{d}}) \defeq
  \langle\text{Verdi}, \text{it}\rangle$, $T(\id_{\mathrm{e}}) \defeq
  \langle\text{Tsukizi}, \text{jp}\rangle$,
  $T(\id_{\mathrm{f}}) \defeq \langle\text{Sola}, \allowbreak \text{jp}\rangle$,
  and with $\id_{\mathrm{a}} < \id_{\mathrm{c}}$, $\id_{\mathrm{b}} <
  \id_{\mathrm{c}}$, $\id_{\mathrm{c}} <
  \id_{\mathrm{e}}$, $\id_{\mathrm{d}} < \id_{\mathrm{e}}$, and
  $\id_{\mathrm{d}} < \id_{\mathrm{f}}$.
  Intuitively, $\OR$ describes a preference relation over restaurants,
  with their name and the type of their cuisine.
  Consider the \PosRA query $Q \defeq \Pi(\OR)$ that projects~$\OR$ on type; we
  illustrate the result (with the original identifiers) in Figure~\ref{fig:notposet}.
  Let $L$ be the list
  relation $(\text{it}, \text{fr}, \text{jp}, \text{it}, \text{fr},
  \text{jp})$, and consider
  \poss for $Q$, $\OR$, and $L$.
 
  We have that $L \in \pw(Q(\OR))$, as shown by the linear extension
  $\id_{\mathrm{d}} <' \id_{\mathrm{a}} <' \id_{\mathrm{f}} <' \id_{\mathrm{b}} <'
  \id_{\mathrm{c}} <' \id_{\mathrm{e}}$
  of~$<$. However, this is hard to see, because 
  each of \text{fr}, \text{it}, \text{jp} appears more than once in the
  candidate list as well as in the po-relation; there are thus multiple
  ways to ``map'' the elements of the candidate list to those of the po-relation, and only some of these mappings lead to the existence of a corresponding linear extension.
  It is also challenging to check if $L$ is a
  certain answer: here, it is not, as there are other possible answers,
  such as $(\text{it},
  \text{fr}, \text{fr}, \text{it}, \text{jp}, \text{jp})$.
\end{example}

In the following sections we study the computational complexity of the \poss and
\cert problems, for multiple fragments of our language.

\section{General Complexity Results}\label{sec:posscert}

We have defined the \PosRA and \PosRAacc query languages, and defined and
motivated the problems \poss
and \cert. We now start the study of their complexity, which is the main technical
contribution of our paper. We will always study their \emph{data complexity}\footnote{In \emph{combined complexity}, with $Q$ part of
  the input, \poss and \cert are easily seen to be NP-hard even without
  order, by reducing from
  the evaluation of Boolean conjunctive queries (which is NP-hard in
  combined
complexity \cite{AHV-1995}).},
where the query $Q$ is fixed: in particular, for \PosRAacc, the accumulation map and
monoid, which we assumed to be PTIME-evaluable, is fixed as part of the query,
though it is allowed to be infinite. The input to \poss and \cert for the fixed
query $Q$ is the
po-database $D$ and the candidate result (a list relation for \PosRA, an
accumulation result for \PosRAacc). We summarize the complexity results of
Sections~\ref{sec:posscert}--\ref{sec:fpt} in Table~\ref{tab:complexity}.
\begin{table*}
  \footnotesize
  \caption{Summary of complexity results for possibility and
  certainty}
  \label{tab:complexity}
  {
    {
      \renewcommand{\tabcolsep}{2pt}
  \begin{tabularx}{\linewidth}{Xllll@{~~}l}
    \toprule
    &
    {\bfseries Query} &
    {\bfseries Restr.\ on accum.} &
    {\bfseries Input po-relations} &
    \multicolumn{2}{l}{\bfseries Complexity} \\
    \midrule
    \poss &
    \PosRA/\posRAagg &
    ---
      & arbitrary
      & NP-c. & (Thm.~\ref{thm:posscomp1}) \\
\cert & \posRAagg & --- & arbitrary
      & coNP-c. & (Thm.~\ref{thm:certcomp}) \\
    \cert & \PosRA & --- &
arbitrary
      & PTIME & (Thm.~\ref{thm:certaintyptimec}) \\
    \midrule
    \poss & \Plex & ---
      & width $\leq k$
      & PTIME & (Thm.~\ref{thm:aggregw}) \\
    \poss & \Pgen & ---
      & totally ordered
      & NP-c. & (Thm.~\ref{thm:posscompextend1}) \\
    \poss & \Pnoprod & ---
      & ia-width or width $\leq k$
      & PTIME & (Thm.~\ref{thm:aggregnoprod}) \\
    \poss & \Plex/\Pgen & ---
      & 1 total.\ ord., 1 unord.
      & NP-c. & (Thm.~\ref{thm:posscompextended}) \\
    \midrule
\cert & \posRAagg & cancellative &
arbitrary
      & PTIME & (Thm.~\ref{thm:certaintyptimec}) \\
    \poss & \PosRAacc & finite and pos.-invar.
    & totally ordered
      & NP-c. & (Thm.~\ref{thm:possfrihypoposs}) \\
    \cert & \PosRAacc & finite and pos.-invar.
    & totally ordered
      & coNP-c. & (Thm.~\ref{thm:possfrihypocert}) \\
    both & \Plexacc & finite
      & width $\leq k$
    & PTIME & (Thm.~\ref{thm:aggregwa}) \\
    both & \Pnoprodacc & finite and pos.-invar.
      & ia-width or width $\leq k$
      & PTIME & (Thm.~\ref{thm:aggregnoproda}) \\
    \poss & \Pnoprodacc & pos.-invar.
      & unordered
      & NP-c. & (Thm.~\ref{thm:possgri}) \\
    \bottomrule
  \end{tabularx}
}
}
\end{table*}

In this section, we state our main complexity results and prove the
corresponding upper
bounds. Lower bounds will be implied by more precise results that will be
established in Sections~\ref{sec:fpt2} and~\ref{sec:fpt}.

We start with \poss, which we show
to be NP-complete.

\begin{theorem}\label{thm:posscomp1} The \poss\ problem is in NP for any fixed \PosRA
  or \PosRAagg query. Further, there exists a \PosRA query and a \PosRAagg query for which the \poss problem is
  NP-complete.
\end{theorem}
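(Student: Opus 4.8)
The plan is to prove two separate things: membership in NP for all \PosRA and \PosRAagg queries, and NP-hardness for at least one query of each kind.

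For membership in NP, the key observation is Proposition~\ref{prp:repsys}: given the fixed query $Q$ and the input po-database $D$, we can compute the po-relation $\OR := Q'(D)$ in polynomial time, where $Q'$ is the \PosRA part (for a plain \PosRA query, $Q' = Q$ and there is no accumulation). The candidate linear extension itself serves as the NP witness. Concretely, for \PosRA I would guess a bijection from the positions of the candidate list relation $L$ to the identifiers of~$\OR$ that is value-preserving, i.e.\ maps the $j$-th entry of $L$ to some identifier $\id$ with $T(\id)$ equal to that entry; this guess has polynomial size. I would then verify in polynomial time that the induced total order on $\ID$ (position $j$ before position $j'$ iff the matched identifiers are so ordered) is a genuine linear extension of~$<$, which amounts to checking that no pair $\id < \id'$ in~$\OR$ is mapped in the reversed direction. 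This witnesses $L \in \pw(\OR)$. For a \PosRAagg query $Q = \accum_{h,\oplus}(Q')$, I guess instead a linear extension of $\OR = Q'(D)$ directly (a polynomial-size total order refining~$<$), and then evaluate $\accum_{h,\oplus}$ on the corresponding list relation in polynomial time, using the assumed PTIME-evaluability of the accumulation operator, checking that the result equals the target value~$v$. Both procedures run in NP.

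For NP-hardness, I would reduce from a classical NP-complete problem; the natural target is whether a given total order is achievable as a linear extension after union/product merging, so I would exhibit a single fixed query on which \poss encodes an NP-complete problem. A convenient source is the hardness of detecting a given sequence among the interleavings created by \axiom{union}: as noted after Proposition~\ref{prp:repsys}, even $R \cup S$ on totally ordered inputs already produces exponentially many possible worlds via interleaving, and the difficulty of matching a candidate list against a po-relation with \emph{duplicate} tuple values (highlighted in Example~\ref{exa:notposet}) is exactly what makes \poss hard. I would set up a fixed \PosRA query, likely built from a union (and possibly projection to create duplicates), and encode an instance of a problem such as a restricted scheduling/sequencing or a shuffle-recognition problem, so that a valid linear extension realizing the candidate $L$ exists iff the combinatorial instance is a yes-instance. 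The same reduction, or a light variant composing an accumulation on top, then yields a \PosRAagg query whose \poss is NP-hard as well.

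The main obstacle is the hardness direction: membership in NP is essentially bookkeeping on top of Proposition~\ref{prp:repsys}, but designing the fixed query and the gadget encoding so that the interleaving/duplicate-matching freedom corresponds precisely to the choices in the source NP-complete problem requires care. In particular, I must ensure the reduction keeps the query \emph{fixed} (only the po-database and candidate list depend on the instance), so all the combinatorial structure has to be pushed into the data rather than the query. The duplicate values illustrated in Example~\ref{exa:notposet} are the crucial lever, since they create the many-to-one matching ambiguity that lets a single candidate list encode exponentially many routing choices; getting the gadget to force exactly the intended correspondence, with no spurious linear extensions, is where the real work lies.
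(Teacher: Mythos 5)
Your membership argument is correct and is essentially the paper's proof: compute the po-relation $\OR := Q'(D)$ in PTIME via Proposition~\ref{prp:repsys}, guess a linear extension (your value-preserving bijection from positions of~$L$ to identifiers is an equivalent formulation), verify compatibility with the partial order in PTIME, and in the \PosRAagg case evaluate the PTIME-evaluable accumulation operator on the guessed extension and compare with the target value.

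The hardness half, however, has a genuine gap: it is a plan rather than a reduction, and the specific route you hint at would fail. You propose a fixed query ``built from a union'' so that hardness comes from the interleavings that \axiom{union} creates. But a fixed query mentions only a fixed number of relation names, so it can interleave only a bounded number of totally ordered inputs --- and the shuffle problem for a \emph{constant} number of strings is solvable in PTIME by dynamic programming over tuples of positions. The paper makes exactly this point when explaining why Theorem~\ref{thm:posscompextend1} does \emph{not} follow from the shuffle literature. Exponentially many possible worlds and duplicate values do not by themselves give NP-hardness: $R \cup S$ on totally ordered inputs has both, yet \poss for it is tractable. The resolution --- which your remark that the structure ``has to be pushed into the data'' gestures at but never draws --- is that the unboundedly many incomparable chains must live inside a \emph{single input po-relation} (a parallel composition of one chain per item of the instance), with the query being simply the identity $Q := R$; the query operators play no role in the hardness. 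One then still needs a concrete NP-hard source problem and a correctness proof: the paper reduces from UNARY-3-PARTITION, coding each integer $n_i$ as a chain $\s\,\n^{n_i}\,\e$ inside $\OR$ and taking the candidate list $L = (\s^3\,\n^B\,\e^3)^m$; the nontrivial direction (a linear extension realizing $L$ induces a valid 3-partition) needs an invariant-based argument about which chains can be partially consumed, which is precisely the work you defer as ``where the real work lies.'' Your idea for \PosRAagg --- composing an accumulation on top of the same reduction --- is the right one; the paper implements it as Lemma~\ref{lem:addacc}, accumulating in the monoid of list relations under concatenation so that accumulation results are in bijection with possible worlds, but this too has to be stated and checked (PTIME-evaluability, preservation of \poss instances), not just asserted as ``a light variant.''
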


\begin{proof}
	To show that \poss is in NP, evaluate the query without
	accumulation in PTIME using Proposition~\ref{prp:repsys}, yielding a
	po-relation~$\OR$.
	Now, guess a total order of~$\OR$, checking in PTIME
	that it is compatible with the comparability relations of~$\OR$.
	If there is no accumulation function, then check that it achieves the
	candidate result. Otherwise, evaluate the accumulation (in PTIME as the
	accumulation operator is PTIME-evaluable),
	and check that the correct result is obtained. This shows that \poss is
        in NP for \PosRA and \PosRAacc queries. 
        The NP-hardness will follow from stronger results that will be shown
        later: Theorem~\ref{thm:posscompextend1} for \PosRA and 
        Theorem~\ref{thm:possfrihypoposs} for \PosRAagg.
\end{proof}

A different route to prove the NP-hardness of \poss is to use existing
work~\cite{warmuth1984complexity} about the complexity of the so-called
\emph{shuffle problem}: given a string $w$ and a tuple of strings $s_1, \ldots,
s_n$ on the fixed alphabet $A=\{a, b\}$, decide whether there is an interleaving
of $s_1, \ldots, s_n$ which is equal to~$w$. It is easy to see that there is a
reduction from the shuffle problem to the \poss problem, by representing each
string $s_i$ as a totally ordered relation $L_i$ of tuples labeled~$a$ and~$b$
that code the string, letting $\OR$ be the po-relation defined as the union of
the $L_i$, and checking if the totally ordered relation that codes~$w$ is a
possible world of the identity \PosRA query on the po-relation~$\OR$. Hence, as
the shuffle problem is NP-hard~\cite{warmuth1984complexity}, we deduce that
\poss is NP-hard. However, this approach will not suffice to derive the
stronger NP-hardness results which we prove in the sequel.

We now show that \cert is coNP-complete for \PosRAacc.

\begin{theorem}\label{thm:certcomp}
	The \cert problem is in coNP for any fixed \PosRAagg query, and there is a
        \PosRAagg query for which it is coNP-complete.
\end{theorem}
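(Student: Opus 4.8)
The plan is to prove Theorem~\ref{thm:certcomp} in two parts: membership in coNP and coNP-hardness.

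\textbf{Membership in coNP.} The idea is that the complement of \cert is in NP. Given a \PosRAacc query $Q = \accum_{h,\oplus}(Q')$, a po-database $D$, and a candidate accumulation result $v \in \calM$, the certainty $Q(D) = \{v\}$ fails iff either (a) $v$ is not even achieved by $Q(D)$, or, more usefully, (b) there is some possible world of $Q'(D)$ whose accumulation differs from $v$. To put the complement in NP, first evaluate the \PosRA part $Q'$ on $D$ in PTIME using Proposition~\ref{prp:repsys}, obtaining a po-relation $\OR = Q'(D)$. Then guess a linear extension of $\OR$ (checking in PTIME that it is compatible with the order), compute the corresponding possible world $L$, evaluate $\accum_{h,\oplus}(L)$ in PTIME (using that the accumulation operator is PTIME-evaluable), and accept iff this result is \emph{different} from $v$. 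This is a valid NP procedure witnessing non-certainty: $Q(D) \neq \{v\}$ iff some linear extension yields an accumulation result $\neq v$. Hence \cert is in coNP. (One subtlety: we must also ensure $v$ is actually achieved, i.e.\ $\pw(Q'(D))$ is nonempty; but if $\OR$ has no tuples the unique result is $\epsilon$, and otherwise there is always at least one linear extension, so the ``$\{v\}$'' condition is equivalent to ``every linear extension gives $v$''.)

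\textbf{coNP-hardness.} The plan is to exhibit a single fixed \PosRAacc query for which \cert is coNP-hard, by reducing from the complement of the \poss-hardness construction already established in Theorem~\ref{thm:posscomp1}. The cleanest route is to reuse the tuple-level accumulation machinery from Example~\ref{ex:variants}: design an accumulation map and monoid so that the accumulation result is a single Boolean-like value that is ``$\top$'' for \emph{every} possible world iff no possible world matches a forbidden pattern. Concretely, I would reduce from UNARY-3-PARTITION (as in Theorem~\ref{thm:posscomp1}) but target its complement: build the same po-relation $\OR$ coding the integers, and craft an accumulation operator whose result over a possible world $L$ records whether $L$ equals the target list $L' = \s^3\n^B\e^3$ repeated $m$ times. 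Then $Q(D)$ is certainly the ``non-matching'' value iff no linear extension realizes the target list, i.e.\ iff the UNARY-3-PARTITION instance is \emph{negative}. Since the target-matching check can be encoded by a position-aware accumulation into a suitable finite monoid recording the running comparison of the enumerated values against the fixed pattern, and since such an operator is PTIME-evaluable, this gives the desired fixed \PosRAacc query. As UNARY-3-PARTITION is NP-hard, its complement is coNP-hard, so \cert is coNP-hard.

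\textbf{Main obstacle.} The hard part will be constructing the accumulation map and monoid in the hardness reduction so that (i) it detects \emph{exactly} the target list relation as a position-based pattern, (ii) it collapses all non-target outcomes into a single monoid value so that certainty of that value corresponds cleanly to ``no linear extension matches the target,'' and (iii) the monoid and map remain fixed and PTIME-evaluable independently of the instance size. The subtlety is that the target length $3m+B$ grows with the instance, so the monoid cannot simply store the entire expected pattern as a constant; instead the accumulation must compare tuple values against a pattern computed from the position index $n$ via the accumulation map's second argument, which is exactly what the position-dependence of $h$ in Definition~\ref{def:aggregationb} allows. Getting this comparison to work with a finite monoid while keeping the target pattern recoverable from positions alone is the delicate point; an alternative, if a purely finite monoid proves awkward, is to use the infinite cancellative monoid of list relations from Lemma~\ref{lem:addacc} and reduce directly, at the cost of not obtaining finiteness of the monoid, which is acceptable since the theorem statement places no finiteness restriction on the accumulation.
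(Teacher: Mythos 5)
Your coNP membership argument is correct and is essentially the paper's own: evaluate the \PosRA part in PTIME via Proposition~\ref{prp:repsys}, guess a linear extension, and accept iff its accumulation result differs from the candidate $v$ (your side remark that $\pw(Q'(D))$ is never empty, so ``$Q(D)=\{v\}$'' reduces to ``every linear extension yields $v$'', is a valid way to dispose of the subtlety).

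The hardness half, however, has a genuine gap, located exactly at what you call the main obstacle. In data complexity the accumulation operator is fixed with the query, but your target pattern $(\s^3\n^B\e^3)^m$ depends on the instance parameters $B$ and $m$. The position argument of $h$ cannot rescue this: $h(t,n)$ is a fixed function of the tuple and the absolute position, with no access to $B$, so no fixed $h$ into a finite monoid can decide whether position $n$ should carry $\s$, $\n$ or $\e$ in a pattern of period $B+6$. (Note also that relaxing the check to ``some pattern of the form $(\s^3\n^*\e^3)^m$'' is useless: such a world always exists, by grouping the chains into arbitrary triples.) Your fallback is provably a dead end: the monoid of Lemma~\ref{lem:addacc} is cancellative, and by Theorem~\ref{thm:certaintyptimec} \cert is in PTIME for accumulation in any cancellative monoid --- indeed this is precisely how the paper proves Theorem~\ref{thm:certptime}. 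So reducing ``directly'' through that monoid cannot yield coNP-hardness.

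The paper's proof (deferred to Theorem~\ref{thm:possfrihypocert}, built on Theorem~\ref{thm:possfrihypoposs} and Lemma~\ref{lem:balhard}) circumvents the instance-dependence by supplying the target as \emph{data} rather than wiring it into the accumulation: the candidate word is given as an extra totally ordered input relation over a mirror alphabet $\{\s_-,\n_-,\e_-\}$, and a fixed query interleaves it with the uncertain result over $\{\s_+,\n_+,\e_+\}$, bracketed by markers $\l$ and $\r$ using $\cupcat$ and union. ``Some possible world realizes the target'' then becomes ``some possible world lies in the fixed regular language $\l(\s_-\s_+\,|\,\n_-\n_+\,|\,\e_-\e_+)^*\r$'', which a fixed, finite, position-invariant accumulation (the transition monoid of the automaton) can check. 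Crucially, the $\l\cdots\r$ bracketing forces every possible world to accumulate to one of only two values $f_0,f_1$; this two-value collapse is your requirement (ii), and it is what makes NP-hardness of \poss transfer to coNP-hardness of \cert. Without an interleaving device of this kind, your reduction from the complement of UNARY-3-PARTITION cannot be instantiated.
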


\begin{proof}
  The co-NP upper bound is proved using precisely the same reasoning applied to
  the NP upper bound for \poss, except that we now guess an order that achieves
  a result {\em different} from the candidate result. The hardness result for
  \cert and \PosRAacc is presented (in a slightly stronger form) as
  Theorem~\ref{thm:possfrihypocert} in the sequel.
\end{proof}

For \PosRA queries, we will show
that \cert is in PTIME. This will follow from a stronger result that we
will prove in the sequel
  (Theorem~\ref{thm:certaintyptimec}): \cert is
    in PTIME for \PosRAacc queries that perform
      accumulation in a cancellative monoid.

\paragraph{Practical implications} We now discuss some implications of
the results highlighted in Table~\ref{tab:complexity} on the
implementation of the algebra on top of, say, a SQL database engine.
First, recall Proposition~\ref{prp:repsys}: computing the result of a
query, as a po-relation, is in PTIME in the size of the input
database, and can thus reasonably be implemented. Second, thanks to
Theorem~\ref{thm:certaintyptimec}, since \cert is in PTIME for \PosRA, it
should also be possible to implement certainty tests efficiently.
However, Theorem~\ref{thm:posscomp1} shows that possibility tests are
prohibitive to implement in all generality.

However, in a practical context, input relations are usually not arbitrary
po-relations: it makes sense to assume in many scenarios that input
relations are either totally ordered (say, because they are ordered by
their primary key, or by an explicit \texttt{ORDER BY} construct) or
unordered (because no specific ordering has been chosen). In this case,
we have two ways to ensure that the possibility problem is tractable: either we
only allow totally ordered po-relations as input and then
Theorem~\ref{thm:aggregw} (in Section~\ref{sec:fpt2})
shows that possibility is tractable if the only product
operator allowed is~$\times_\lex$;
or we allow both totally ordered and unordered po-relations as input, but
then only queries with no product are tractable for possibility tests
(which, arguably, considerably limits the expressive power).

When moving to \PosRAacc, the picture is similar, but we need additional
properties of the accumulation function to ensure that the possibility
and certainty problems
are tractable (depending on the cases, it should be cancellative, finite, or
position-invariant).

\medskip

We next identify further tractable cases. In the following section, we study \PosRA
queries: we focus on \poss, as we know that \cert is always in PTIME for such
queries. In Section~\ref{sec:fpt}, we turn to \PosRAacc.

\section{Tractable Cases for \possbold on \PosRAbold Queries}\label{sec:fpt2}
We have stated a general NP-hardness result for \poss with \PosRA queries.
We next show that tractability may be achieved if we both restrict the allowed
operators and bound some order-theoretic parameters of the input po-database,
e.g., its width. Recall that \Plex (respectively, \Pgen) denotes the
fragment of \PosRA that disallows $\times_{\dir}$ (respectively, $\times_{\lex}$).

\subsection{(Almost) Totally Ordered Inputs}
We start by the natural case
where we assume that the width of all input po-relations is bounded by a
constant. This assumption is a common practical case: it covers the case where
all input po-relations are \emph{totally ordered}, i.e., their order relation is a total
order, so they actually represent a list relation.
This applies
to situations where we integrate data from multiple sources that are
certain (totally ordered), and where uncertainty only arises
because of the integration query.
The assumption also covers the case of po-relations that are totally ordered
except for a few ``tied" data items at each level.
Recall that the query result can still have
exponentially many possible worlds under this assumption, e.g., when taking the
union of two totally ordered relations.
In a sense, the $\times_\dir$ operator is the one introducing the most
uncertainty and ``unorderedness'' in the result, so we
consider the fragment \Plex of \PosRA queries without
$\times_{\dir}$, and show the following result.

\begin{theorem}\label{thm:aggregw}
	For any fixed $k \in \NN$ and fixed \Plex query $Q$,
	the \poss problem for~$Q$ is in PTIME when
	all po-relations of the input
	po-database have width $\leq k$.
\end{theorem}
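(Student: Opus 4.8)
The plan is to reduce the problem to deciding possibility on a single po-relation of bounded width, and then to solve that special case by a dynamic program over the order ideals of the poset. First I would invoke Proposition~\ref{prp:repsys} to compute, in PTIME, the po-relation $\OR \defeq Q(D)$. Since $Q$ is a \Plex query, it contains no $\times_\gen$ operator, so by Lemma~\ref{lem:lexwidth} the width of $\OR$ is bounded by some value $w$ depending only on $k$ and the (fixed) query $Q$; thus $w$ is a constant. Writing $\OR = (\ID, T, <)$ and $L = (\ell_1, \ldots, \ell_m)$, we have $L \in \pw(Q(D))$ iff $L \in \pw(\OR)$, so it suffices to show that \poss for a po-relation of width $\leq w$ is in PTIME.

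The key structural observation is that prefixes of linear extensions correspond to \emph{order ideals} (downward-closed subsets) of $\OR$, and that there are only polynomially many of these when the width is bounded. Indeed, every order ideal $I \subseteq \ID$ is uniquely determined by its antichain of maximal elements, and conversely the downward closure of any antichain is an order ideal having exactly that antichain as its set of maximal elements; this gives a bijection between order ideals and antichains. As $\OR$ has width $\leq w$, every antichain has at most $w$ elements, so the number of order ideals is at most $\sum_{i=0}^{w} \binom{n}{i} = O(n^w)$, where $n \defeq \card{\ID}$, which is polynomial for fixed $w$. All order ideals can thus be enumerated in PTIME by iterating over subsets of $\ID$ of size $\leq w$, keeping the antichains, and taking downward closures.

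I would then set up the following dynamic program. For an order ideal $I$, define $\mathrm{reach}(I) \defeq \true$ iff the elements of $I$ admit an enumeration $x_1, \ldots, x_{\card{I}}$ such that every prefix $\{x_1, \ldots, x_j\}$ is an order ideal and $T(x_j) = \ell_j$ for all $j$; intuitively, $I$ can be produced as the first $\card{I}$ tuples of some linear extension realizing the corresponding prefix of $L$. The base case is $\mathrm{reach}(\emptyset) = \true$. For $I \neq \emptyset$, since the last element placed must be maximal in $I$, the recurrence is that $\mathrm{reach}(I)$ holds iff there is a maximal element $x$ of $I$ with $T(x) = \ell_{\card{I}}$ and $\mathrm{reach}(I \setminus \{x\}) = \true$ (note that $I \setminus \{x\}$ is again an order ideal). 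Filling the table in increasing order of ideal size, each of the $O(n^w)$ ideals is processed by examining its at most $w$ maximal elements and performing one table lookup per element, so the whole computation runs in PTIME for fixed $w$. Finally $L \in \pw(\OR)$ iff $m = n$ and $\mathrm{reach}(\ID) = \true$, which establishes the claim.

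The main obstacle is the dynamic program together with the polynomial bound on the number of order ideals: this is what defeats the naive approach of enumerating linear extensions (of which there may be exponentially many) and sidesteps the difficulty, highlighted in Example~\ref{exa:notposet}, of matching repeated values between $L$ and $\OR$. The duplicate-value issue is resolved transparently by the recurrence, which branches over \emph{all} maximal elements carrying the required value $\ell_{\card{I}}$ rather than committing to a single matching. The enabling step is Lemma~\ref{lem:lexwidth}, which guarantees that excluding $\times_\gen$ keeps the output width bounded; by contrast this argument breaks for \Pgen queries, consistent with the NP-hardness of that case stated in Theorem~\ref{thm:posscompextend1}.
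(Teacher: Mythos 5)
Your proof is correct, and it shares the paper's overall strategy: evaluate $\OR \defeq Q(D)$ in PTIME via Proposition~\ref{prp:repsys}, bound the width of~$\OR$ by a constant using Lemma~\ref{lem:lexwidth}, and then decide $L \in \pw(\OR)$ by a dynamic program over the order ideals of~$\OR$, exploiting the fact that the last element of any prefix of a linear extension is a maximal element of the corresponding ideal. Where you genuinely diverge is in how the ideals are enumerated and indexed. The paper (Lemma~\ref{lem:aggregwinst}) invokes Dilworth's theorem in its algorithmic form (Theorem~\ref{thm:dilworth}) to compute a chain partition $\Lambda_1 \sqcup \cdots \sqcup \Lambda_{k'}$, and represents each ideal by a ``sane'' vector $(m_1, \ldots, m_{k'})$ of positions along the chains, of which there are at most $\card{\OR}^{k'}$. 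You instead use the bijection between ideals and their antichains of maximal elements, bounding the number of ideals by $\sum_{i=0}^{w} \binom{n}{i} = O(n^w)$ and enumerating them as downward closures of antichains of size at most~$w$. Your route is more elementary: it needs no chain-decomposition algorithm at all, only the width bound itself, and your recurrence (branch over all maximal elements of~$I$ carrying the label $\ell_{\card{I}}$) is isomorphic to the paper's, including in how it transparently handles duplicate values. What the paper's chain-partition encoding buys is a fixed coordinate system that it reuses downstream: the vector states extend mechanically to the accumulation variant (Lemma~\ref{lem:aggregwinstacc}, Theorem~\ref{thm:aggregwa}) and to the mixed width/ia-width algorithm (Lemma~\ref{lem:aggregwuawb}), where each state must additionally carry accumulation results or progress in a second structure. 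Your antichain encoding could serve there as well, but the vector representation makes those extensions more immediate.
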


To show this result, letting $D$ be the input po-database, we can use
Proposition~\ref{prp:repsys} to evaluate $\OR \colonequals Q(D)$ in PTIME.
Recall that we have previously shown Lemma~\ref{lem:lexwidth} on \Plex, so we
know that the width of the po-relation $\OR$ is constant: it only depends on~$k$
and~$Q$, but not on~$D$. Hence,
to show Theorem~\ref{thm:aggregw}, it suffices to show the following.

\begin{lemma}\label{lem:aggregwinst}
  For any constant $k \in \mathbb{N}$,
  we can determine in PTIME,
  for any po-relation $\OR$ with width $\leq k$
  and list relation $L$,
  whether $L \in \pw(\OR)$.
\end{lemma}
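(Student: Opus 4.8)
The plan is to exploit the bounded width through a chain decomposition together with a dynamic program whose states are the downward-closed subsets of $\OR$, compactly encoded. Write $\OR=(\ID,T,<)$ and $L=(l_1,\dots,l_m)$. I would first reject immediately unless $m=\card{\ID}$, since every linear extension of $\OR$ has length $\card{\ID}$. The key structural tool is that, since $\OR$ has width $\leq k$, by Dilworth's theorem it can be covered by at most $k$ chains $C_1,\dots,C_k$, and such a decomposition is computable in PTIME (e.g., via the bipartite matching argument behind \cite{fulkerson1955note}). I order each chain increasingly as $C_i=(c_i^1<c_i^2<\cdots)$.

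The central observation is how to index the states. A linear extension is built by repeatedly appending a \emph{minimal} element of the not-yet-placed part of $\OR$, so the set $S$ of already-placed identifiers is always downward-closed. Because each $C_i$ is a chain and $S$ is downward-closed, $S\cap C_i$ is a prefix of $C_i$; hence $S$ is completely described by the vector $(a_1,\dots,a_k)$ of prefix lengths, and the number of such vectors is at most $(m+1)^k$, which is polynomial for fixed $k$. Moreover, any element newly appended must be the next unplaced element $c_i^{a_i+1}$ of its own chain, since its chain-predecessor lies below it and must already be in $S$. Thus each step increments exactly one coordinate of the vector.

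For the dynamic program, I would call a state $(a_1,\dots,a_k)$ \emph{good} if some linear extension of $\OR$ restricted to the corresponding set $S$ produces the value sequence $l_1,\dots,l_j$ with $j=\sum_i a_i$. I initialize $(0,\dots,0)$ as good. From a good state, I may advance chain $i$ (placing $c_i^{a_i+1}$) provided two conditions hold: (i) the value matches, $T(c_i^{a_i+1})=l_{j+1}$; and (ii) all predecessors of $c_i^{a_i+1}$ already lie in $S$. For condition (ii), I would precompute, for every element $x$ and every chain $C_{i'}$, the length $p_{i',x}$ of the prefix $\{y\in C_{i'}\mid y<x\}$ (this set is indeed a prefix, since $<$ restricted to a chain is total and transitive); condition (ii) then reduces to checking $a_{i'}\geq p_{i',x}$ for all $i'$, which costs $O(k)$. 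Propagating reachability over all states and answering \enquote{yes} iff the full state $(\card{C_1},\dots,\card{C_k})$ is good gives total running time $O(k\cdot m^k)$ plus polynomial precomputation, hence PTIME for fixed $k$.

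The main obstacle is not computational but the correctness argument for the DP: I must argue carefully the equivalence between linear extensions of $\OR$ realizing $L$ and reachability of the final state. This splits into showing that (a) every downward-closed set corresponds to a unique chain-prefix vector, and (b) appending a minimal element corresponds exactly to incrementing one coordinate subject to the predecessor constraint, so that good states correspond precisely to prefixes of $L$ realizable by partial linear extensions. The duplicate-value difficulty emphasized in Example~\ref{exa:notposet} is handled automatically here, because the DP branches over \emph{all} chains whose next element carries the required value rather than committing to a single match; and it is precisely the bounded width that keeps the state space polynomial, since the number of downward-closed sets could otherwise be exponential.
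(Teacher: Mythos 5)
Your proposal is correct and follows essentially the same route as the paper's proof: compute a Dilworth chain decomposition in PTIME, identify order ideals (downward-closed sets) with vectors of chain-prefix lengths, and run a dynamic program over these polynomially many vectors, advancing one chain at a time subject to the value-matching and predecessor constraints. The only cosmetic difference is that you enforce downward-closedness via per-transition predecessor checks (with precomputed $p_{i',x}$) whereas the paper filters states by a global ``saneness'' test, but these are equivalent implementations of the same algorithm.
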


Let us prove this lemma and conclude the proof of Theorem~\ref{thm:aggregw}.

\begin{proof}
     Let $\OR=(\ID,T,{<})$ be the po-relation of width $k' \leq k$, and let
     $P=(\ID,{<})$ be its underlying poset. We use 
     Dilworth's theorem~\cite{dilworth1950decomposition, fulkerson1955note}
     to compute in PTIME a chain partition $\ID = \Lambda_1 \sqcup \cdots \sqcup
     \Lambda_{k'}$ of~$P$.
     For $1 \leq i \leq k'$,
     we write $n_i \defeq \card{\Lambda_i}$, 
     we write $\Lambda_i[j]$ for $1 \leq j \leq n_i$ to denote the $j$-th
     element of~$\Lambda_i$, 
     and for $0 \leq j \leq n_i$, 
     we write
     $\Lambda_i^{\leq
     j}$ to denote the first $j$ elements of the chain $\Lambda_i$, formally,
     $\Lambda_i^{\leq j} \colonequals \{\Lambda_i[j'] \mid 1 \leq j' \leq j\}$.
     In particular, $\Lambda_i^{\leq 0} = \emptyset$ and $\Lambda_i^{n_i}
     = \Lambda_i$.
 
     We now consider all vectors~$\mathbf{m}$ of the form $(m_1, \ldots, m_{k'})$, with
     $0 \leq m_i \leq n_i$ for each $1 \leq i \leq k'$.
     There are polynomially many such vectors, more specifically at most
     $\card{\OR}^{k}$ of them (recall that $k$ is a constant).
     To each such vector $\mathbf{m}$ we
     associate the subset $s(\mathbf{m})$ of $P$ consisting of
     $\bigsqcup_{i=1}^{k'} \Lambda_i^{\leq m_i}$.
 
     We call such a vector $\mathbf{m}$ \deft{sane} if $s(\mathbf{m})$ is an
     order ideal. Note that this is not always the case: while
     $s(\mathbf{m})$ is always an order ideal of the subposet
     of the comparability relations within the chains, it may not be an order
     ideal of~$P$ overall because of
     the additional comparability relations across the chains.
     For each vector
     $\mathbf{m}$, we can check in PTIME whether it is sane: simply materialize 
     $s(\mathbf{m})$ and check that it is an ideal by considering each of the
     $\leq \card{P}^2$ comparability relations.
 
     By definition, for each sane vector $\mathbf{m}$, we know that $s(\mathbf{m})$
     is an ideal. We now observe that the converse is also true: for every
     ideal $S$ of $P$, there is a sane vector $\mathbf{m}$ such that
     $s(\mathbf{m}) = S$. To see why, consider any ideal $S$, and determine for
     each $1 \leq i \leq k'$ the last element of the chain $\Lambda_i$ which is
     in~$S$: let
     $m_i \colonequals 1 \leq i \leq n_i$ be the position of this element
     in~$\Lambda_i$, where we set $m_i \colonequals 0$ if $S$ contains no
     element of~$\Lambda_i$. We know that~$S$
     does not include any
     element of $\Lambda_i$ at a position later than~$m_i$, and because $\Lambda_i$ is a chain it must
     include all elements before~$m_i$; in other words, we have $S \cap \Lambda_i = \Lambda_i^{\leq m_i}$.
     As $(\Lambda_i)_{1\leq i\leq k'}$ is a chain partition of~$P$, this uniquely determines
     $S$. Thus we have indeed $S = s(\mathbf{m})$, and the fact that
     $s(\mathbf{m})$ is sane is witnessed by~$S$.

     We now use a dynamic algorithm to compute, for each sane vector
     $\mathbf{m}$, a Boolean denoted $t(\mathbf{m})$ which is true
     iff there is a topological sort of~$s(\mathbf{m})$ whose
     label is the prefix of the candidate possible world $L$ having length
     $\card{s(\mathbf{m})} = \sum_{i=1}^{k'} m_i$.
     We extend the function $t$ to
     arbitrary vectors by setting $t(\mathbf{m}) \colonequals 0$ whenever
     $\mathbf{m}$ is not sane.
     Specifically, the base case is
     that $t(0, \ldots, 0) \defeq \text{true}$, because the empty ideal
     trivially achieves the empty prefix. 
     To define the induction case, let us denote by 
     $e_i$ for $1 \leq i \leq k'$ the vector 
     consisting of $n-1$ zeros and a $1$ at position $i$. 
     Now, for each sane vector~$\mathbf{m}$,
     we have:
    \[
      t(\mathbf{m}) \defeq \bigvee_{\substack{1 \leq i \leq k'\\m_i > 0}} \left(
      \Bigg(T(\Lambda_i[m_i]) = L\Bigg[\sum_{i' = 1}^{k'} m_{i'}\Bigg]\Bigg)
      \land t(\mathbf{m} - e_i)
      \right)
    \]
    where $L$ is the candidate possible world and where ``$-$''
    denotes the component-wise difference on vectors. It is clear that
    $t(\mathbf{m})$ is correct by induction: the key argument is that, for any
    sane vector $\mathbf{m}$, any linear extension of~$s(\mathbf{m})$ 
    must finish by enumerating one of the maximal elements of~$s(\mathbf{m})$, that
    is, $\Lambda_i[m_i]$ for some $1 \leq i \leq
    k'$ such that $m_i > 0$: and then the linear extension
    achieves the prefix of~$L$ of length $\card{s(\mathbf{m})}$ iff the
    following two conditions are true: (i.)
    the label by~$T$ of the last element in the linear extension
    must be the label of element of~$L$ at
    position~$\card{s(\mathbf{m})}$; and (ii.) $\mathbf{m} - e_i$ must be a sane
    vector such that the restriction of the linear extension to $s(\mathbf{m}-
    e_i)$ achieves the prefix of~$L$ of length~$\card{s(\mathbf{m}-e_i)}$ which
    by induction was computed as~$t(\mathbf{m}-e_i)$.

    It is now clear that we can compute all~$t(\mathbf{m})$ in PTIME by a dynamic
    algorithm: we enumerate the vectors (of which there are polynomially
	many) in lexicographical order, and computing their image by $t$ in PTIME
	according to the equation above, from the base case $t(0, \ldots, 0) =
	\epsilon$ and from the previously computed values of $t$, recalling that
        $t(\mathbf{m'}) \colonequals 0$ whenever $\mathbf{m'}$ is not sane.
    Now, $t(n_1, \ldots, n_{k'})$ is true iff $\OR$ has a linear
    extension achieving $L$, so we have indeed solved the \poss problem
    for~$\OR$ and~$L$ in PTIME, concluding the proof.
\end{proof}
 
We have now shown Theorem~\ref{thm:aggregw} and established tractability
for~\poss with \Plex queries on po-databases of bounded width.
We will show in Theorem~\ref{thm:aggregwa} that this proof technique further
extends to queries with accumulation, under some assumptions over the
accumulation function.

We next show that our tractability result only holds for \Plex. 
Indeed, if we allow $\times_{\dir}$, then \poss is hard on totally ordered
po-relations, even if we disallow $\times_{\lex}$. This result implies the
general NP-hardness result on \poss that we stated earlier
(Theorem~\ref{thm:posscomp1}) for queries without accumulation.

\begin{theorem}\label{thm:posscompextend1}
  There is a \Pgen query for which the \poss problem is NP-complete
  even when input po-databases consist only of totally ordered
  po-relations.
\end{theorem}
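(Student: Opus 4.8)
The plan is to reduce from UNARY-3-PARTITION, reusing wholesale the combinatorial core of the proof of Theorem~\ref{thm:posscomp1}. Recall that there the NP-hardness came \emph{entirely} from the shape of the input po-relation, namely a disjoint union of $3m$ chains, the $i$-th chain spelling $\s \n^{n_i} \e$, while the query was the identity and the target list relation was $L \defeq (\s^3 \n^B \e^3)^m$. Here the input relations are required to be totally ordered, so I cannot feed this disjoint union of chains to the query directly. Instead, I would \emph{manufacture} exactly this po-relation as the output $Q(D)$ of a fixed \Pgen query applied to two totally ordered input relations, so that correctness and the equivalence with UNARY-3-PARTITION follow verbatim. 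Membership in NP is already granted by Theorem~\ref{thm:posscomp1}.

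The key device is that the direct product, combined with a block-level reversal, splits a single chain into independent chains. I associate with an instance $(E,B)$ the set of cells $(i,j)$ for $1 \le i \le 3m$ and $1 \le j \le n_i+2$, with label $\ell_{ij}$ equal to $\s$ if $j=1$, to $\e$ if $j=n_i+2$, and to $\n$ otherwise, and I give each cell a distinct identifier value $\mathrm{id}_{ij} \in \calD \setminus \{\s,\n,\e\}$. I then build two totally ordered relations. The first, $R_1$, has arity~$2$ with one tuple $\langle \ell_{ij}, \mathrm{id}_{ij}\rangle$ per cell, ordered block by block with $i$ increasing and, within each block, by position $j$ increasing. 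The second, $R_2$, has arity~$1$ with one tuple $\langle \mathrm{id}_{ij}\rangle$ per cell, ordered in \emph{reverse} block order ($i$ decreasing) but still by \emph{increasing} position $j$ within each block. Both are built in PTIME, since $N \defeq \sum_i (n_i+2)$ is polynomial in the unary input. I take the fixed \Pgen query $Q \defeq \Pi_1(\sigma_{.2 = .3}(R_1 \times_\gen R_2))$, which uses no $\times_\lex$.

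Next I would check that $Q(D)$ is isomorphic to the disjoint union of chains used in Theorem~\ref{thm:posscomp1}. The selection $\sigma_{.2=.3}$ keeps exactly the diagonal pairs (one per cell, since the identifiers are globally distinct), and the projection discards the identifiers and keeps the label. The only real verification concerns the surviving order, which by definition of $\times_\gen$ declares two cells comparable iff the $R_1$-order and the $R_2$-order agree. Within a block both orders increase with $j$, so the cells of block~$i$ form precisely the chain $\s \n^{n_i} \e$. Across two distinct blocks $i_1 < i_2$, the $R_1$-order places block $i_1$ before block $i_2$ whereas the block-reversed $R_2$-order places block $i_2$ before block $i_1$; the two orders disagree, so every cross-block pair is incomparable. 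Hence $Q(D)$ is exactly $3m$ disjoint chains, and deciding \poss for the target $L \defeq (\s^3 \n^B \e^3)^m$ is equivalent to UNARY-3-PARTITION by the exact argument of Theorem~\ref{thm:posscomp1}.

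The hard part is conceptual rather than computational: finding how to synthesise the disjoint-chains po-relation from totally ordered inputs using only $\times_\gen$, $\sigma$, and $\Pi$. Once one sees that taking the direct product of a chain with its block-reversed copy and selecting the diagonal preserves within-block order while destroying all cross-block comparabilities, the rest is routine. The only point needing care is to confirm that no \emph{spurious} comparabilities survive (which is why the reversal must act on blocks but \emph{not} within blocks: reversing within blocks as well would collapse each chain into an antichain and break the reduction), after which the NP-hardness is inherited directly from Theorem~\ref{thm:posscomp1}, and combined with the NP upper bound it yields NP-completeness.
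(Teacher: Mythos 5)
Your proof is correct, and it takes a genuinely different route from the paper's. Both arguments reduce from UNARY-3-PARTITION, and both ultimately play the matching game of Theorem~\ref{thm:posscomp1} on the same po-relation (the disjoint union of $3m$ chains spelling $\s\n^{n_i}\e$) against the same candidate $(\s^3\n^B\e^3)^m$; the difference lies in how that po-relation is produced by a \emph{fixed} query over totally ordered inputs. The paper keeps the data over a small alphabet ($S \defeq \ordern{3m-1}$ and $S'$ the concatenation of all blocks), so that $\Pi_2(S\times_\gen S')$ is a full grid; the disjoint chains then emerge only dynamically, by padding the candidate world into $L_1\, L'\, L_2$ and proving, via a fairly intricate invariant argument, that any matching of the prefix $L_1$ and suffix $L_2$ leaves exactly one unconstrained copy of each block. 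You instead synthesize the disjoint chains directly as the query output: by placing globally distinct identifier values in the data and giving $R_2$ the block-reversed order, the diagonal selection $\sigma_{.2=.3}$ of $R_1 \times_\gen R_2$ computes the intersection of two linear orders, i.e., a two-dimensional realizer of the disjoint union of chains --- essentially the idea behind the paper's Lemma~\ref{lem:dimprodoneway}, but with the instance-dependence moved into the data so that the query stays fixed. This makes your reduction much shorter, and its correctness is nearly immediate; the two points you flag as needing care (global distinctness of identifiers, reversing block order but not positions within blocks) are exactly the right ones, and both check out, and your diagonal construction is even insensitive to whether the componentwise comparisons in $\times_\gen$ are read strictly or weakly, since distinct diagonal elements differ in both components. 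What the paper's heavier grid construction buys is downstream reuse: Theorem~\ref{thm:posscompextended} is obtained by simply making $S$ unordered in the same construction, an adaptation your reduction does not support (making either $R_1$ or $R_2$ unordered destroys the within-block comparabilities and leaves an antichain), and the grid is also what Lemma~\ref{lem:balhard} later builds on --- though since your query's output also has arity~$1$ with labels in $\{\s,\n,\e\}$, it could serve there as well.
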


\begin{proof}
  We reduce from the NP-hard UNARY-3-PARTITION
  problem~\cite{garey-johnson}: given $3m$ integers $E = (n_1, \ldots, n_{3m})$
  written in unary (not necessarily distinct) and a number $B$, decide if the integers can be partitioned
  in triples such that the sum of each triple is $B$. We reduce an instance
  $\mathcal{I} = (E, B)$ of UNARY-3-PARTITION to a \poss instance in PTIME.
  We fix $\calD \defeq \NN \sqcup \{\s, \n, \e\}$, with $\s$, $\n$ and $\e$
  standing for \emph{start}, \emph{inner},
  and \emph{end}.

  Let $D$ be the po-database which interprets the relation name $S$ by the 
  totally ordered po-relation $\ordern{3m-1}$,
  and the relation name $S'$ by the totally ordered po-relation 
  constructed from the instance~$\calI$ as
  follows: for $1 \leq i \leq 3m$, consider the concatenation of one tuple
  $\id^i_1$ with value $\s$, $n_i$~tuples $\id^i_j$ (with $2 \leq j \leq n_i+1$)
  with value $\n$, and one tuple $\id^i_{n_i+2}$ with value $\e$, and define
  the interpretation of~$S'$ 
  by concatenating the $3m$ sequences of length $n_i+2$.
  Consider the query $Q \defeq \Pi_2(S \times_{\gen} S')$, where $\Pi_2$
  projects to the attribute of the relation $S'$. See
  Figure~\ref{fig:gridpic} for an illustration with $E =
  (1, 1, 2)$ and $B = 4$.

\begin{figure}
  \centering
  \begin{tikzpicture}[scale=.8]
    \node (S) at (-1.5, -1) {$S$};
    \node[draw] (S0) at (-.5, 0) {$0$};
    \node[draw] (S1) at (-.5, -1) {$1$};
    \node[draw] (S2) at (-.5, -2) {$2$};
    \draw[->] (S0) -- (S1);
    \draw[->] (S1) -- (S2);

    \node (S') at (11, 1.5) {$S'$};

    \node[draw,fill=red] (R1s) at (1, 1.5) {$\s$};
    \node[draw,fill=green] (R1n) at (2, 1.5) {$\n$};
    \node[draw,fill=cyan] (R1e) at (3, 1.5) {$\e$};
    \draw[->] (R1s) -- (R1n);
    \draw[->] (R1n) -- (R1e);

    \node[draw,fill=red] (R2s) at (4, 1.5) {$\s$};
    \node[draw,fill=green] (R2n) at (5, 1.5) {$\n$};
    \node[draw,fill=cyan] (R2e) at (6, 1.5) {$\e$};
    \draw[->] (R2s) -- (R2n);
    \draw[->] (R2n) -- (R2e);

    \node[draw,fill=red] (R3s) at (7, 1.5) {$\s$};
    \node[draw,fill=green] (R3n1) at (8, 1.5) {$\n$};
    \node[draw,fill=green] (R3n2) at (9, 1.5) {$\n$};
    \node[draw,fill=cyan] (R3e) at (10, 1.5) {$\e$};
    \draw[->] (R3s) -- (R3n1);
    \draw[->] (R3n1) -- (R3n2);
    \draw[->] (R3n2) -- (R3e);

    \draw[->] (R1e) -- (R2s);
    \draw[->] (R2e) -- (R3s);

    \draw[dashed] (0.7, 1.2) rectangle (3.3, 1.8);
    \draw[dashed] (3.7, 1.2) rectangle (6.3, 1.8);
    \draw[dashed] (6.7, 1.2) rectangle (10.3, 1.8);

    \node[draw,fill=red] (R1sb) at (1, 0) {$\s$};
    \node[draw,fill=green] (R1nb) at (2, 0) {$\n$};
    \node[draw,fill=cyan] (R1eb) at (3, 0) {$\e$};
    \draw[->] (R1sb) -- (R1nb);
    \draw[->] (R1nb) -- (R1eb);

    \node[draw,fill=red] (R2sb) at (4, 0) {$\s$};
    \node[draw,fill=green] (R2nb) at (5, 0) {$\n$};
    \node[draw,fill=cyan] (R2eb) at (6, 0) {$\e$};
    \draw[->] (R2sb) -- (R2nb);
    \draw[->] (R2nb) -- (R2eb);

    \node[draw,fill=red] (R3sb) at (7, 0) {$\s$};
    \node[draw,fill=green] (R3n1b) at (8, 0) {$\n$};
    \node[draw,fill=green] (R3n2b) at (9, 0) {$\n$};
    \node[draw,fill=cyan] (R3eb) at (10, 0) {$\e$};
    \draw[->] (R3sb) -- (R3n1b);
    \draw[->] (R3n1b) -- (R3n2b);
    \draw[->] (R3n2b) -- (R3eb);

    \draw[->] (R1eb) -- (R2sb);
    \draw[->] (R2eb) -- (R3sb);

    \draw[dashed] (0.7, -1.7) rectangle (3.3, -2.3);
    \draw[dashed] (3.7, -0.7) rectangle (6.3, -1.3);
    \draw[dashed] (6.7, .3) rectangle (10.3, -.3);

    \node[draw,fill=red] (R1sc) at (1, -1) {$\s$};
    \node[draw,fill=green] (R1nc) at (2, -1) {$\n$};
    \node[draw,fill=cyan] (R1ec) at (3, -1) {$\e$};
    \draw[->] (R1sc) -- (R1nc);
    \draw[->] (R1nc) -- (R1ec);

    \node[draw,fill=red] (R2sc) at (4, -1) {$\s$};
    \node[draw,fill=green] (R2nc) at (5, -1) {$\n$};
    \node[draw,fill=cyan] (R2ec) at (6, -1) {$\e$};
    \draw[->] (R2sc) -- (R2nc);
    \draw[->] (R2nc) -- (R2ec);

    \node[draw,fill=red] (R3sc) at (7, -1) {$\s$};
    \node[draw,fill=green] (R3n1c) at (8, -1) {$\n$};
    \node[draw,fill=green] (R3n2c) at (9, -1) {$\n$};
    \node[draw,fill=cyan] (R3ec) at (10, -1) {$\e$};
    \draw[->] (R3sc) -- (R3n1c);
    \draw[->] (R3n1c) -- (R3n2c);
    \draw[->] (R3n2c) -- (R3ec);

    \draw[->] (R1ec) -- (R2sc);
    \draw[->] (R2ec) -- (R3sc);

    \node[draw,fill=red] (R1sd) at (1, -2) {$\s$};
    \node[draw,fill=green] (R1nd) at (2, -2) {$\n$};
    \node[draw,fill=cyan] (R1ed) at (3, -2) {$\e$};
    \draw[->] (R1sd) -- (R1nd);
    \draw[->] (R1nd) -- (R1ed);

    \node[draw,fill=red] (R2sd) at (4, -2) {$\s$};
    \node[draw,fill=green] (R2nd) at (5, -2) {$\n$};
    \node[draw,fill=cyan] (R2ed) at (6, -2) {$\e$};
    \draw[->] (R2sd) -- (R2nd);
    \draw[->] (R2nd) -- (R2ed);

    \node[draw,fill=red] (R3sd) at (7, -2) {$\s$};
    \node[draw,fill=green] (R3n1d) at (8, -2) {$\n$};
    \node[draw,fill=green] (R3n2d) at (9, -2) {$\n$};
    \node[draw,fill=cyan] (R3ed) at (10, -2) {$\e$};
    \draw[->] (R3sd) -- (R3n1d);
    \draw[->] (R3n1d) -- (R3n2d);
    \draw[->] (R3n2d) -- (R3ed);

    \draw[->] (R1ed) -- (R2sd);
    \draw[->] (R2ed) -- (R3sd);

    \draw[->] (R1sb) -- (R1sc);
    \draw[->] (R1nb) -- (R1nc);
    \draw[->] (R1eb) -- (R1ec);
    \draw[->] (R2sb) -- (R2sc);
    \draw[->] (R2nb) -- (R2nc);
    \draw[->] (R2eb) -- (R2ec);
    \draw[->] (R3sb) -- (R3sc);
    \draw[->] (R3n1b) -- (R3n1c);
    \draw[->] (R3n2b) -- (R3n2c);
    \draw[->] (R3eb) -- (R3ec);

    \draw[->] (R1sc) -- (R1sd);
    \draw[->] (R1nc) -- (R1nd);
    \draw[->] (R1ec) -- (R1ed);
    \draw[->] (R2sc) -- (R2sd);
    \draw[->] (R2nc) -- (R2nd);
    \draw[->] (R2ec) -- (R2ed);
    \draw[->] (R3sc) -- (R3sd);
    \draw[->] (R3n1c) -- (R3n1d);
    \draw[->] (R3n2c) -- (R3n2d);
    \draw[->] (R3ec) -- (R3ed);
    \node (Q) at (12, -1) {$\Pi_2(S \times_{\dir} S')$};

    \node (L') at (11, -3.5) {$L_0$};

    \node[draw,fill=red] (R1sl) at (1, -3.5) {$\s$};
    \node[draw,fill=red] (R1nl) at (2, -3.5) {$\s$};
    \node[draw,fill=red] (R1el) at (3, -3.5) {$\s$};
    \draw[->] (R1sl) -- (R1nl);
    \draw[->] (R1nl) -- (R1el);

    \node[draw,fill=green] (R2sl) at (4, -3.5) {$\n$};
    \node[draw,fill=green] (R2nl) at (5, -3.5) {$\n$};
    \node[draw,fill=green] (R2el) at (6, -3.5) {$\n$};
    \draw[->] (R2sl) -- (R2nl);
    \draw[->] (R2nl) -- (R2el);

    \node[draw,fill=green] (R3sl) at (7, -3.5) {$\n$};
    \node[draw,fill=cyan] (R3n1l) at (8, -3.5) {$\e$};
    \node[draw,fill=cyan] (R3n2l) at (9, -3.5) {$\e$};
    \node[draw,fill=cyan] (R3el) at (10, -3.5) {$\e$};
    \draw[->] (R3sl) -- (R3n1l);
    \draw[->] (R3n1l) -- (R3n2l);
    \draw[->] (R3n2l) -- (R3el);

    \draw[->] (R1el) -- (R2sl);
    \draw[->] (R2el) -- (R3sl);
  \end{tikzpicture}
  \caption{Example for the proof of Theorem~\ref{thm:posscompextend1}.}
  \label{fig:gridpic}
\end{figure}

  We define the candidate possible world $L$ as the list relation 
  $L\coloneqq L_1 L'
  L_2$, with $L_1$, $L'$, and $L_2$ defined 
  as follows.

  \begin{itemize}
    \item $L_1$ is a list relation defined as the concatenation, for $1
  \leq i \leq 3m$, of $3m-i$ copies of the following 
  sublist: one tuple with value~$\s$, $n_i$
  tuples with value~$\n$, and one tuple with value~$\e$.
    \item $L_2$ is a list relation defined like~$L_1$, except that
      $3m-i$ is replaced by $i-1$.
    \item $L_0$ is the list relation consisting of three tuples with value $\s$, $B$ tuples with value
      $\n$, three tuples with value~$\e$. See Figure~\ref{fig:gridpic} for an
      illustration of~$L_0$.
    \item $L'$ is the list relation defined as the concatenation of
      $m$ copies of~$L_0$.
  \end{itemize}

  We now consider the \poss instance that asks whether
  $L$ is a possible world of the query $Q$ on the po-database~$D$.
  We claim that this \poss instance is positive
  iff the original UNARY-3-PARTITION instance $\calI$ is positive. As the
  reduction process described above is
  clearly PTIME, the only thing left to prove Theorem~\ref{thm:posscompextend1}
  is to show this claim, which we now do.
  
  Denote by $\OR'$ the po-relation obtained by evaluating~$Q(D)$,
  and note that all tuples of~$\OR'$ have value in $\{\s, \n, \e\}$.
  For $0 \leq k \leq \card{L_1}$, we write $L_1^{\leq k}$ for the prefix of $L_1$ of length $k$.
  We say that $L_1^{\leq k}$ is a \emph{whole prefix} if either $k
  = 0$ (that is, the empty prefix) or the $k$-th symbol of $L_1$ has value $\e$.
  We say that a linear extension $L''$ of $\OR'$ \emph{realizes} $L_1^{\leq k}$ if
  the sequence of its $k$-th first values is $L_1^{\leq k}$, and that it
  realizes $L_1$ if it realizes $L_1^{\leq \card{L_1}}$. When $L''$ realizes
  $L_1^{\leq k}$, we call the \emph{matched} elements the elements of~$\OR'$ that
  occur in the first $k$ positions of~$L''$, and say that the other elements are
  \emph{unmatched}. For $1 \leq i \leq 3m$, we call the \emph{$i$-th row} of~$\OR'$ the elements whose
  first component before projection was $i-1$: note that, for each~$i$, the
  po-relation $\OR'$
  imposes a total order on the $i$-th row. We define the \emph{row-$i$ matched
  elements} to refer to the elements on row-$i$ that are matched, and define
  analogously the \emph{row-$i$ unmatched elements}.

  We first observe that for any linear extension $L''$ realizing $L_1^{\leq k}$,
  for all $i$, writing the $i$-th row as $\id'_1 < \ldots <
  \id'_{\card{S'}}$, the unmatched elements must be all of the form $\id'_j$ for
  $k_i < j \leq \card{S'}$ for some $0 \leq k_i \leq \card{S'}$, i.e., they must be a prefix of the total order of the $i$-th row. Indeed, if they did not form a
  prefix, then some order constraint of~$\OR'$ would have been violated when
  enumerating $L''$. Further, by cardinality we clearly have $\sum_{i = 1}^{3m} k_i=k$. 

  Second, when a linear extension $L''$ of~$\OR'$ realizes $L_1^{\leq k}$, we say
  that we are in a \emph{whole situation} for~$k$ if for all $i$,
  either 
  the first row-$i$ unmatched
  element $\id'_{k_i+1}$ has value~$\s$
  or there are no row-$i$ unmatched elements (and we write $k_i \colonequals
  \card{S'}$).
  When we are in
  a whole situation for~$k$, the condition on~$k_i$
  means by definition that we must have $k_i = \sum_{j=1}^{l_i} (n_j+2)$ for
  some~$1 \leq l_i \leq 3m$; in this case, letting $S_i$ be the multiset of the $n_j$ for $1 \leq j \leq
  l_i$, we call $S_i$
  the 
  bag of \emph{row-$i$ consumed integers} at~$k$. The \emph{row-$i$ remaining
  integers} at~$k$
  are $E \setminus S_i$, where we see~$E$ as a multiset and define the
  difference operator on multisets
  by subtracting the multiplicities in~$S_i$ to the multiplicities
  in~$E$.

  We now prove the following claim: for any linear extension of~$\OR'$ realizing~$L_1$,
  we are in a whole situation for~$\card{L_1}$,
  and the multiset union $\biguplus_{1 \leq i
  \leq 3m} S_i$ of the row-$i$ consumed integers at~$k$
  is equal to the multiset obtained by repeating $3m-i$ times the integer $n_i$ of~$E$
  for all $1\leq i\leq 3m$.

  We prove the first part of the claim by showing it for all whole prefixes
  $L_1^{\leq k}$, by induction on $k$. It is certainly the case for $L_1^{\leq
  0}$ (the empty prefix). Now, assuming that it holds for prefixes of length up to
  $l$, to realize a whole prefix $L^{\leq l'}$ with $l' > l$, we must first realize a
  strictly shorter whole prefix $L^{\leq l''}$ with $l'' \leq l$ (take it to be of maximal
  length), so by induction hypothesis we are in a whole situation for~$l''$ when
  realizing $L^{\leq l''}$. Now to realize the whole prefix $L^{\leq l'}$ having
  realized the whole prefix $L^{\leq l''}$, by construction of~$L_1$, the
  sequence~$L''$ of additional values to realize is $\s$, a certain number of~$\n$'s, and
  $\e$. It is now clear that this must bring us from a whole situation to
  a whole situation: since there is only one $\s$ in $L''$, there is only one
  row such that an $\s$ value becomes matched; now, to match the additional
  $\n$'s and $\e$, only the elements of this particular row can be used, as any
  first unmatched element (if any) of all other rows is $\s$, and we must use
  the sequence of $\n$-labeled elements followed by the $\e$-labeled element of
  the row. Hence the first part of the claim is proved.

  To prove the second part of the claim, observe that whenever we go from a
  whole prefix to a whole prefix by additionally matching $\s$, $n_j$ times
  $\n$, and $\e$, then we add to $S_i$ the integer~$n_j$. So the claim holds by
  construction of $L_1$.

  A similar argument shows that for any linear extension $L''$ of $\OR'$ whose first
  $\card{L_1}$ tuples achieve $L_1$ and whose last $\card{L_2}$ tuples achieve $L_2$,
  for each $1 \leq i \leq 3m$,
  extending the definition of 
  the row-$i$ unmatched elements to refer to the elements that are matched
  neither to~$L_1$ nor to~$L_2$, these elements must form
  a contiguous sequence $\id'_j$ with $k_i < j <
  m_i$ for some $0 \leq k_i < m_i \leq \card{S'}+1$: here $k_i$ refers to the last
  element of row~$i$ matched to~$L_1$ (or~$0$ if none are), and $m_i$ to the first element of row~$i$
  matched to~$L_2$ (or $\card{S'}+1$ if none are).
  In addition, if we have $k_i<m_i-1$, then $\id'_{k_i}$ has value $\e$
  and $\id'_{m_i}$ has value $\s$, and the unmatched values (whose definition is
  extended in an
  analogous fashion) are a multiset corresponding exactly to the elements
  $n_1,\dots,n_{3m}$: indeed, each integer $n_i$ of~$E$ is matched $3m-i$ times
  within~$L_1$ and $i-1$ times in~$L_2$, so $3m-i+i-1 = 3m-1$ times overall,
  whereas it occurs $3m$ times in the grid.
  So the unmatched elements when having read $L_1$ (at
  the beginning) and $L_2$ (at the end) are formed of $3m$
  sequences, of length $n_i + 2$ for $1 \leq i \leq 3m$, of the
  form $\s$, $n_i$ times $\n$, and $\e$: each of the $3m$ sequences is totally
  ordered (as it occurs as consecutive elements in some row), and there is a
  certain order relation across the sequences depending on the rows where they
  are: the comparability relations exist across sequences that are on the same
  row, or that are in different rows but where comparability holds by definition
  of $\times_\gen$.

  Observe now that there is a way to achieve $L_1$ and~$L_2$ while ensuring that
  there are no order constraints across the sequences of unmatched elements,
  i.e., the only order constraints within the unmatched elements are those given
  by the total order on each sequence. To do so,
  we achieve $L_1$ by picking the following,
  in that order: for $1\leq j\leq 3m$, for $1\leq i\leq 3m-j$, pick the first
  $n_j+2$ unmatched tuples of row $i$. Similarly, to achieve $L_2$ at the end,
  we can pick the following, in \emph{reverse} order: for $3m\geq j\geq 1$, for
  $3m\geq i\geq 3m-j+1$, the last $n_j + 2$ unmatched tuples of row $i$. When we
  pick elements this way, the unmatched elements are $3m$ lists
  (one for each row, with that of row $i$ being $\s$, $n_i$ times $\n$
  and $\e$, for all $i$) and there are no order relations across sequences.
  We let $\OR$ be the sub-po-relation of $\OR'$ that consists of exactly these unmatched
  elements: it is illustrated in Figure~\ref{fig:gridpic} as the elements of the
  grid that are in the dashed rectangles. Formally, $\OR$ is the parallel
  composition of $3m$ totally ordered po-relations which we will call $\OR_i$
  for $1 \leq i \leq 3m$: the elements of $\OR_i$ consist of an element labeled~$\s$ followed by $n_i$ elements
  labeled~$\n$ and one element labeled~$\e$. 

  We now claim that for any list relation $L''$, the concatenation $L_1 L'' L_2$ is a
  possible world of~$\OR'$ if and only if $L''$ is a possible world of $\OR$.
  The ``if'' direction was proved with the construction above, and the ``only if''
  holds because $\OR$ is the \emph{least constrained}
  possible po-relation for the unmatched sequences: recall that the only
  comparability relations that it contains are those on the sequences of
  unmatched elements, which are known to be total orders.
  Hence, to prove our original claim, it only remains to show that the
  UNARY-3-PARTITION instance $\calI$ is positive iff $L'$ is a possible world
  of~$\OR$, which we now do.

  For the forward direction, we show that, if $\calI$ is a positive
  instance of \mbox{UNARY-3-PARTITION}, then there is a linear extension $<'$ of~$<$
  which witnesses that $L' \in \pw(\OR)$. Indeed, consider a
  3-partition $\mathbf{p} = (p^i_1, p^i_2, p^i_3)$ for $1 \leq i \leq m$, with $n_{p^i_1} +
  n_{p^i_2} + n_{p^i_3} = B$ for all $1 \leq i \leq m$, and each integer of $\{1, \ldots, 3m\}$ occurring exactly
  once in $\mathbf{p}$. We can realize $L'$ from~$\mathbf{p}$ by picking
  successively the following for $1
  \leq i \leq m$ to realize~$L_0$: the three $\s$-labeled elements of the
  po-relations~$\OR_{p^i_q}$ for $1 \leq q \leq 3$, then the $\n$-labeled
  elements of these same po-relations (this is $B$ tuples in total, because
  $\mathbf{p}$ is a solution to~$\calI$), and last the three $\e$-labeled
  elements of these po-relations.
  
  For the backward direction, we show that, if there is a linear extension $<'$ of~$<$ which
  witnesses that $L' \in \pw(\OR)$, then we can build a 3-partition
  $\mathbf{p} = (p^i_1, p^i_2, p^i_3)$ for $1 \leq i \leq m$ which satisfies the
  conditions above. To see why, we first observe that, for each $1 \leq i \leq
  m$, considering the $i$-th occurrence of the sublist $L_0$ in~$L'$, there
  must be three distinct values $p^i_1, p^i_2, p^i_3$, such that the elements
  which occur in~$<'$ at the positions of the
  value~$\n$ in this occurrence of~$L_0$
  are precisely the $\n$-labeled elements of the po-relations $\OR_{p^i_1}$, 
  $\OR_{p^i_2}$, and $\OR_{p^i_3}$. Indeed, we show this
  claim for increasing values of~$i$, from $i = 1$ to $i = m$. Just before
  we consider each occurrence of~$L_0$, and just after we have considered
  it, we will ensure the invariant that, for all $1 \leq i \leq
  3m$, either all elements of $\OR_i$ have been enumerated or none have: this
  invariant is clearly true initially because nothing is enumerated yet. 
  Now, considering the $i$-th
  occurrence of~$L_0$ for some $1 \leq i \leq m$, we define $p^i_1,
  p^i_2, p^i_3$, such that the elements $\s^3$ in this occurrence of~$L_0$
  are mapped to the $\s$-labeled elements of~$\OR_{p^i_1}$, $\OR_{p^i_2}$, and
  $\OR_{p^i_3}$: they must indeed
  be mapped to such elements because they are the only ones with
  value~$\s$. 
  Now, the $\n$-labeled elements of these three po-relations can all be
  enumerated (indeed, we have just enumerated the $\s$-labeled elements that
  precede them), and
  they are the only elements with value~$\n$ that can be enumerated,
  thanks to the invariant: the others either have already been enumerated
  or have a predecessor with value~$\s$ that has not been enumerated yet.
  Further, all
  elements of this form must be enumerated, because this is the only
  possible way for us to finish matching~$L_0$ and enumerate three elements
  with value~$\e$, namely, those of the three po-relations $\OR_{p^i_1}$,
  $\OR_{p^i_2}$, and $\OR_{p^i_3}$: this uses the invariant again to justify
  that they are the only elements with value~$\e$ that can be enumerated at this
  stage. We are now done with the $i$-th occurrence of~$L_0$, and clearly the
  invariant is satisfied on the result, because the elements that we have
  enumerated while matching this occurrence of~$L_0$ are all the elements of $\OR_{p^i_1}$,
  $\OR_{p^i_2}$, and $\OR_{p^i_3}$.
  
  Now that we have defined
  the 3-partition $\mathbf{p}$, it is clear by definition of a linear extension
  that each position $1 \leq i \leq 3m$, i.e., each number occurrence
  in~$E$, must occur exactly once in~$\mathbf{p}$. Further, as $<'$
  achieves~$L_0$, by considering each occurrence of~$L_0$, we know that, for $1
  \leq i \leq m$, we have $p^i_1 + p^i_2 + p^i_3 = B$. Hence, $\mathbf{p}$
  witnesses that $\calI$ is a positive instance to the UNARY-3-PARTITION problem.
  
  Hence, it is indeed the case that
  $\calI$ is a positive UNARY-3-PARTITION instance iff $L' \in \pw(\OR)$, which 
  is the case iff $L_1 L' L_2$ is a possible world of~$\OR'$, i.e., iff $L$ is a
  possible world of~$Q(D)$. This establishes the correctness of the reduction
  for \PosRA, showing that the \poss problem for \PosRA queries is NP-hard.
\end{proof}

\subsection{Disallowing Both Products}
We have shown the tractability of \poss without the $\times_\gen$
operator, when the input po-relations are assumed to have bounded width.
We now study the fragment \Pnoprod without both kinds of product,
and show that this \poss is tractable for this fragment even for
more general input po-relations. Specifically, we will allow input po-relations
that are almost totally ordered, i.e., have bounded \emph{width}; and we will also
allow input po-relations that are almost unordered, which we measure using a new
order-theoretic notion of \emph{ia-width}. The idea of ia-width is to decompose
the relation in classes of indistinguishable sets of incomparable elements.

\begin{definition}
  \label{def:iawidth}
  Given a poset $P = (\ID, <)$, a subset $A \subseteq \ID$ is an
  \emph{antichain} if there are no
  $x, y \in A$ such that $x < y$. It is an \emph{indistinguishable set} (or an
  \emph{interval}~\cite{fraisse1984intervalle}) if, for all
  $x, y \in A$ and $z \in \ID \backslash A$,
  we have $x < z$ iff $y < z$, and $z < x$ iff $z < y$.
  It is an \emph{indistinguishable antichain} if it is both an antichain and an
  indistinguishable set.

  An \emph{indistinguishable antichain partition} (ia-partition)
  of~$P$ is a
  partition of~$\ID$ into
  indistinguishable antichains.
  The 
  \emph{cardinality} of the partition is the number of antichains.
  The \emph{ia-width} of~$P$ is the
  cardinality of its smallest ia-partition.
  The \emph{ia-width} of a po-relation is that of its underlying poset,
  and the \emph{ia-width} of a po-database is the maximal ia-width of its po-relations.
\end{definition}

Hence, any po-relation~$\OR$ has ia-width at most~$\card{\OR}$, with the trivial
ia-partition consisting of singleton indistinguishable antichains, and unordered
po-relations have an ia-width of 1.
Po-relations may have low ia-width in practice if order is completely
unknown except for a 
few comparability pairs given by users, or when they consist of objects from a
constant number of types that are ordered based only on some order on the types.

We can now state our tractability result when disallowing both kinds of
products, and allowing both bounded-width and bounded-ia-width relations. For
instance, this result allows us to 
combine sources whose order is fully unknown or irrelevant, with
sources that are completely ordered (or almost totally ordered).

\begin{theorem}\label{thm:aggregnoprod}
  For any fixed $k \in \mathbb{N}$
  and fixed \Pnoprod query $Q$,
  the \poss problem for~$Q$ is in PTIME
  when each po-relation of the input po-database
  has either ia-width
  $\leq k$ or width~$\leq k$.
\end{theorem}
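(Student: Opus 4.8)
The plan is to reduce the problem to a structural analysis of the output po-relation and then solve \poss on it by dynamic programming. First I would use Proposition~\ref{prp:repsys} to compute the po-relation $\OR \defeq Q(D)$ in PTIME, so that it remains to decide, given $\OR$ and the candidate list relation $L$, whether $L \in \pw(\OR)$. The key structural observation is that, because $Q$ is a \Pnoprod query (no product operators), the underlying poset of $\OR$ never ``mixes'' tuples coming from different input relations in a comparable way: I claim it is a \emph{parallel composition} $P_1 \parallel \cdots \parallel P_r$ of at most $r \le \card{Q}$ pieces, where each $P_j$ is an induced subposet of (the poset of) some input relation or of a constant po-relation, and hence has width $\le k$ or ia-width $\le k$ (constants having width~$1$).

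I would establish this decomposition by induction on $Q$, in the spirit of Lemma~\ref{lem:lexwidth}, using the closure of the two width parameters under the allowed operators. For the base cases, a relation name contributes a single piece (of width or ia-width $\le k$ by hypothesis) and a constant expression contributes a single piece of width~$1$. For the inductive step: selection restricts to a subset, so each piece $P_j$ becomes an induced subposet, and one checks that restriction preserves antichain width and also preserves ia-width (restricting an indistinguishable antichain to a subset keeps it an antichain and an interval); projection leaves the underlying poset---and thus the whole decomposition---unchanged, since width and ia-width depend only on $(\ID,<)$ (Definitions~\ref{def:width} and~\ref{def:iawidth}); and union is by definition parallel composition, so it simply concatenates the two sub-decompositions, keeping the total number of pieces bounded by $\card{Q}$. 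Since products are disallowed, these are the only cases. (Note that when the input relations are all of bounded width this already follows from Theorem~\ref{thm:aggregw}, since every \Pnoprod\ query is also a \Plex\ query; the decomposition is what lets us additionally accommodate bounded-ia-width relations and arbitrary mixtures of the two.)

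With the decomposition in hand, $L \in \pw(\OR)$ holds iff $L$ can be split into an interleaving of subsequences $L_1, \ldots, L_r$ such that each $L_j$ is a possible world of~$P_j$; since the pieces are combined by parallel composition, the interleaving across pieces is unconstrained. I would therefore run a dynamic program that scans $L$ from left to right, maintaining a global state consisting of one ``progress descriptor'' per piece and, at each position, deciding which piece consumes the current tuple. For a width-$\le k$ piece I would reuse Lemma~\ref{lem:aggregwinst}: compute a chain partition of width $\le k$ by Theorem~\ref{thm:dilworth} and let the progress descriptor be the order-ideal vector $(m_1,\ldots,m_{k})$; here the value matched along a chain is forced, so value checking is count-determined. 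For an ia-width-$\le k$ piece I would fix an ia-partition into $\le k$ indistinguishable antichains $A_1,\ldots,A_{k}$; these antichains carry a quotient order (comparable antichains are fully ordered, incomparable ones interleave freely), and the progress descriptor is the vector of counts $(c_1,\ldots,c_{k})$ of how many tuples of each antichain have been consumed, restricted to count vectors forming a valid (quotient) order ideal. As both $r$ and $k$ are constant, there are only polynomially many global states, so the scan runs in PTIME provided each transition is computable in PTIME.

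The main obstacle is precisely the value matching inside the free-order antichains of the ia-pieces: unlike a chain, an antichain imposes no order on its tuples, so the count $c_i$ alone does not determine which tuple values have already been consumed from $A_i$, and two dynamic-programming paths reaching the same count vector may have used different sub-multisets of $A_i$'s values. I expect this to be the crux of the proof. I would resolve it by exploiting that equal-valued tuples of an antichain are fully interchangeable: the only genuine constraint on the positions of~$L$ assigned to $A_i$ is that their multiset of values equals the value-multiset of $A_i$, together with the quotient ordering relative to the other antichains. This lets me decouple the order structure (handled by the count-vector dynamic program above) from the value assignment, and verify value feasibility by a bipartite-matching / max-flow computation between positions of~$L$ and the value-typed slots of the antichains; the fact that the number of antichains and chains is a constant keeps this polynomial. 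Carrying this out and proving its correctness---that a value-feasible matching compatible with some order-ideal trajectory exists iff $L \in \pw(\OR)$---completes the argument and yields the claimed PTIME bound.
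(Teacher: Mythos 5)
Your setup matches the paper's own proof: decomposing a \Pnoprod{} query into a parallel composition of projections of selections of input and constant relations is exactly Lemma~\ref{lem:rewritenoprod}, the preservation of width and ia-width under these operators is Lemma~\ref{lem:lexwidth} and Lemma~\ref{lem:lexiawidthnoprod}, and the chain-partition dynamic program for the bounded-width parts is Lemma~\ref{lem:aggregwinst} (the paper merges all width pieces into one union and all ia-width pieces into another, rather than keeping $r$ pieces, but that difference is immaterial). You also correctly locate the crux: a count vector per indistinguishable antichain does not determine which sub-multiset of values has been consumed from it, so the naive DP state is unsound. However, your proposed resolution --- decoupling order structure (count-vector DP) from value feasibility (a bipartite matching / max-flow between positions of $L$ and value-typed slots) --- does not close this gap. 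The two aspects are entangled: a position of $L$ may be routed to class $A_i$ only while $A_i$ is \emph{open}, i.e., after every class below it in the quotient order is exhausted; but when a class becomes exhausted depends on which positions (hence which values) were routed to it, which is precisely what the matching is supposed to decide. The ``windows'' defining your bipartite graph are thus not data but unknowns of the same search problem, and a flow formulation does not capture the joint constraint ``every position assigned to $A_i$ precedes every position assigned to $A_j$'' together with per-class multiset equality. To make the DP sound without a further idea you would have to track per-class consumed value multisets (exponential --- a class with $n$ pairwise-distinct values has $2^n$ sub-multisets) or enumerate trajectories.

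The missing idea, which is how the paper resolves exactly this difficulty (Lemmas~\ref{lem:aggreguawinst} and~\ref{lem:aggregwuawb}), is to enumerate the constantly many \emph{finishing orders} $\pi$ of the ia-classes and prove, by a swap argument exploiting indistinguishability, two facts: (i) \emph{greedy completeness} --- if $L$ can be realized at all with finishing order $\pi$, then it can be realized by the greedy strategy that, on reading value $t$, consumes an unused $t$-valued element from the open class earliest in $\pi$; and (ii) \emph{confluence} --- the greedy state after consuming a given bag of values is independent of the order in which that bag was consumed. Together these make the state of the ia-part a \emph{function} of the pair (prefix length of $L$, order ideal of the bounded-width part): that pair determines the bag of values routed to the ia-part, and greedy confluence determines the resulting class states. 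This determinism is exactly what eliminates the value-multiset ambiguity you flagged and lets the DP merge states soundly; without it, or an equivalent canonicalization, your argument does not go through as sketched.
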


To prove this result, we start by making a simple observation.
  
\begin{lemma}
  \label{lem:rewritenoprod}
  Any \Pnoprod query $Q$ can be equivalently rewritten as a union of
  projections of selections of a constant number of input relations and constant
  relations.
\end{lemma}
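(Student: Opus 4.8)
The plan is to prove the lemma by a standard normal-form argument, namely induction on the structure of the \Pnoprod query~$Q$, showing that $Q$ is always equivalent to a union $\bigcup_j \Pi_{A^j_1,\dots,A^j_{p_j}}(\sigma_{\psi_j}(R_j))$ in which each $R_j$ is either an input relation name or a constant relation ($\singleton{t}$ or $\ordern{n}$), each $\sigma_{\psi_j}$ is a single selection, and each projection is applied outermost. Since $Q$ contains neither $\times_\gen$ nor $\times_\lex$, the only cases to treat are the base relations, the constant expressions, selection, projection, and union. For the base cases, a relation name or a constant relation is already of the form $\Pi_{\mathrm{id}}(\sigma_{\true}(\cdot))$, with the identity projection and the trivially true predicate, giving a single term.

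For the inductive step I would rely on four rewriting identities, each verified at the level of \emph{po-relations} (preserving both tuple values and the partial order, up to isomorphism), not merely at the level of bags of values: (i)~selection distributes over union, $\sigma_\psi(\OR_1 \cup \OR_2) = \sigma_\psi(\OR_1) \cup \sigma_\psi(\OR_2)$; (ii)~projection distributes over union likewise; (iii)~two nested selections merge, $\sigma_{\psi_1}(\sigma_{\psi_2}(\OR)) = \sigma_{\psi_1 \wedge \psi_2}(\OR)$, and two nested projections merge by composing their attribute maps; and (iv)~a selection pushes inside a projection, $\sigma_\psi(\Pi_{A_1,\dots,A_p}(\OR)) = \Pi_{A_1,\dots,A_p}(\sigma_{\psi'}(\OR))$, where $\psi'$ is obtained from~$\psi$ by replacing each attribute reference~$.i$ by~$.A_i$.

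Granting these identities, the induction is routine. If $Q = Q_1 \cup Q_2$, the two normal forms simply concatenate. If $Q = \sigma_\psi(Q')$, I distribute $\sigma_\psi$ over the union of the normal form of~$Q'$ by~(i), push it inside each outer projection by~(iv), and merge it with the inner selection by~(iii), recovering a union of terms of the desired shape. If $Q = \Pi_{A_1,\dots,A_p}(Q')$, I distribute the projection over the union by~(ii) and compose it with each inner projection by~(iii). In every case the number of terms is controlled: union adds term counts while selection and projection leave the count unchanged, so the total is a constant depending only on the fixed query~$Q$, which yields the ``constant number'' claim.

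The only step requiring genuine care, and hence the main obstacle, is identity~(iv): the pullback of a selection predicate through a projection. The key observation is that projection in our semantics only reorders, duplicates, and deletes attributes while leaving the identifiers and the order untouched, and it never creates new values; hence any Boolean combination of equalities and inequalities over the output attributes can be faithfully re-expressed over the corresponding input attributes via the substitution $.i \mapsto .A_i$, so that $\psi$ holds on $\Pi_{A_1,\dots,A_p}(T(\id))$ exactly when $\psi'$ holds on $T(\id)$. Since both sides retain the same surviving identifiers equipped with the same restricted order, the two po-relations coincide. Once this is checked, together with the elementary verifications that (i)--(iii) respect the parallel-composition semantics of union and the order restriction performed by selection and projection, the lemma follows.
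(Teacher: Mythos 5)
Your proposal is correct and follows essentially the same route as the paper's proof, which simply asserts that selection commutes with union, selection commutes with projection, and projection commutes with union, and concludes that the rewriting can be performed. Your write-up merely fills in the details the paper leaves implicit—in particular the predicate substitution $.i \mapsto .A_i$ needed to push a selection through a projection, and the merging of nested selections and projections—all of which are valid under the paper's po-relation semantics.
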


\begin{proof}
  For the semantics that we have defined for
  operators, it it easy to show that selection commutes with union, selection
  commutes with projection, and projection commutes with union. Hence, we can
  perform the desired rewriting.
\end{proof}

We can thus rewrite the input query using this lemma. The idea is that we will
evaluate the query in PTIME using Proposition~\ref{prp:repsys}, argue that
the width bounds are preserved using Lemma~\ref{lem:lexwidth}, and compute a
chain partition of the relations using Dilworth's theorem. Let us first show
an analogue of Lemma~\ref{lem:lexwidth}
for the new notion of ia-width.

\begin{lemma}\label{lem:lexiawidthnoprod}
  Let $k \geq 2$ and $Q$ be a \Pnoprod query. 
  For any po-database~$D$ of ia-width~$\leq k$,
  the po-relation $Q(D)$ has ia-width $\leq \max(k,q) \times \card{Q}$, where 
$q$ denotes the largest value such that
  $\ordern{q}$ appears in~$Q$. 
\end{lemma}

\begin{proof}
  We first show by induction on~$Q$ that the ia-width of the query output can be
  bounded by a function of~$k$.
  We show the base cases.

  \begin{itemize}
    \item The input relations have ia-width at most~$k$.
    \item The constant relations have ia-width $\leq q$ with the trivial
      ia-partition consisting of singleton classes.
  \end{itemize}
We then show the induction step.

  \begin{itemize}
    \item Projection clearly does not change ia-width.
    \item Selection may only decrease the ia-width. Indeed, consider an
      ia-partition of the input po-relation, apply the selection to each class,
      and remove the classes that became empty. The number of classes has not
      increased, and it is clear that the result is still an ia-partition of the
      output po-relation.
    \item The union of two relations
  with ia-width $k_1$ and $k_2$ has ia-width at most $k_1 + k_2$. Indeed, we can
      obtain an ia-partition for the union as the union of ia-partitions for the
      input relations.
  \end{itemize}

  Second, we see that the bound $\max(k,q) \times \card{Q}$ on the ia-width
  of~$Q(D)$ is
  clearly correct, because the base cases have ia-width $\leq \max(k, q)$ and
  the worst operators are unions, which amount to summing the ia-width bounds on
  all inputs, of which there are $\leq \card{Q}$. So we have shown the desired
  bound.
\end{proof}

We next show that, like chain partitions for bounded-width po-relations, we can
efficiently compute an ia-partition for a bounded-ia-width po-relation.

\begin{proposition}\label{prp:ptime-ia-partition}
  The ia-width of any poset and a corresponding ia-partition can be computed
  in PTIME.
\end{proposition}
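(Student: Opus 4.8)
The plan is to characterize the minimum ia-partition canonically by an equivalence relation on the elements, and then observe that this relation is computable in PTIME. First I would define, for a poset $P = (\ID, <)$, a binary relation $\approx$ on $\ID$ by setting $x \approx y$ iff either $x = y$, or $x$ and $y$ are incomparable and for every $z \in \ID \setminus \{x,y\}$ we have $x < z \iff y < z$ and $z < x \iff z < y$. Intuitively, $x \approx y$ asserts that the pair $\{x,y\}$ is an indistinguishable antichain. The crux of the argument is to show that $\approx$ is an equivalence relation; reflexivity and symmetry are immediate from the definition, so the only work is transitivity.

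For transitivity, I would assume $x \approx y$ and $y \approx z$ with $x, y, z$ distinct. I would first derive that $x$ and $z$ are incomparable: instantiating the indistinguishability of $x \approx y$ with the witness $z$ gives $x < z \iff y < z$, and since $y \approx z$ forces $y$ and $z$ incomparable we get $x \not< z$, and symmetrically $z \not< x$. Then for the interval condition of $x \approx z$ I would check an arbitrary $w \in \ID \setminus \{x,z\}$: if $w \neq y$ the two hypotheses chain to give $x < w \iff y < w \iff z < w$ (and likewise in the reverse direction); if $w = y$, then $x < y$, $z < y$, $y < x$, and $y < z$ are all false by the incomparabilities just established, so the required equivalences hold vacuously. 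This establishes $x \approx z$.

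Next I would show that the $\approx$-classes are exactly a coarsest ia-partition. On the one hand, each class $C$ is an indistinguishable antichain: it is an antichain because pairwise-$\approx$ elements are incomparable, and it is an indistinguishable set because for $x, y \in C$ and $z \notin C$ (so $z \neq x,y$) the defining condition of $x \approx y$ yields exactly the interval condition. On the other hand, any indistinguishable antichain $A$ lies within a single class: for $x, y \in A$ and any $z \neq x, y$, either $z \in A$, in which case the antichain property makes all the relevant relations false, or $z \notin A$, in which case the indistinguishable-set property of $A$ supplies the equivalences; hence $x \approx y$. Consequently every ia-partition refines the partition into $\approx$-classes, so the latter has the fewest classes and realizes the ia-width.

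Finally, computability in PTIME is routine: the relation $\approx$ can be decided on each of the $O(\card{\ID}^2)$ pairs by scanning all witnesses $z$ in time $O(\card{\ID})$, and the classes are then recovered by transitive closure or union--find. I expect the only genuine subtlety, and hence the main obstacle, to be getting the transitivity argument right, in particular the treatment of the witness $w = y$; once $\approx$ is known to be an equivalence relation, the identification of its classes with the coarsest ia-partition follows directly from unfolding the definitions.
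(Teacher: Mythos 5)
Your proof is correct, but it proceeds differently from the paper's. You define a canonical equivalence relation $\approx$ (two elements are related iff they form a two-element indistinguishable antichain, or are equal), prove transitivity directly --- including the key witness case $w=y$, which you handle correctly via the pairwise incomparabilities --- and then show that the $\approx$-classes form an ia-partition that every other ia-partition refines. The paper instead runs a greedy algorithm: start from the singleton partition, repeatedly merge two classes whenever their union is still an indistinguishable antichain, and prove optimality by an exchange argument resting on two lemmas, namely that subsets of indistinguishable antichains are indistinguishable antichains, and that the union of two intersecting indistinguishable antichains is again one. The two arguments are close cousins (your transitivity step is essentially the paper's union lemma specialized to two-element antichains sharing a point), but your route buys something extra: it exhibits the minimum ia-partition as a canonical object --- the coarsest ia-partition, refined by all others, hence unique --- rather than merely as the output of a greedy procedure, and it gives a transparent $O(\card{\ID}^3)$ bound (test each pair against all witnesses, then read off the classes). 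The paper's version is more operational but needs the exchange argument to rule out a smaller partition, whereas in your version minimality falls out of the refinement property with no algorithmic reasoning at all.
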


To show this result, we need two preliminary observations about
indistinguishable antichains.

\begin{lemma}
  \label{lem:iasubset}
  For any poset $(\ID, <)$ and indistinguishable antichain $A$, any $A'
  \subseteq A$ is an indistinguishable antichain.
\end{lemma}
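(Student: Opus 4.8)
The plan is to verify directly that $A'$ satisfies both defining conditions of an indistinguishable antichain, treating the antichain condition and the indistinguishable-set (interval) condition in turn. Since the statement is a pure inheritance property for subsets, the proof should be short, and the only thing to watch is that the interval condition is quantified over the \emph{correct} complement.

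First, the antichain condition is immediate: since $A$ contains no two comparable elements and $A' \subseteq A$, the subset $A'$ a fortiori contains no two comparable elements.

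The interval condition requires slightly more care, because its quantification is over $z \in \ID \setminus A'$, and this set is strictly larger than $\ID \setminus A$ whenever $A' \subsetneq A$. I would therefore fix arbitrary $x, y \in A'$ and $z \in \ID \setminus A'$, and split on whether $z$ lies outside $A$ or inside $A \setminus A'$. If $z \in \ID \setminus A$, then since $x, y \in A' \subseteq A$, the indistinguishability of $A$ directly yields $x < z \iff y < z$ and $z < x \iff z < y$. If instead $z \in A \setminus A'$, then $x$, $y$, and $z$ all lie in the antichain $A$ and are hence pairwise incomparable; so each of $x < z$, $y < z$, $z < x$, $z < y$ is false, and both equivalences hold vacuously.

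The only point that needs attention is this case split on the location of $z$: the newly externalized elements $A \setminus A'$ are precisely those not covered by the interval condition for $A$, but they are handled for free by the antichain property of $A$. I do not anticipate any real obstacle beyond being careful to check indistinguishability for $A'$ against $\ID \setminus A'$ rather than $\ID \setminus A$.
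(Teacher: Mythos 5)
Your proof is correct and follows exactly the paper's argument: the antichain property passes to subsets trivially, and the interval condition is checked by the same case split on $z \in \ID \setminus A$ (use indistinguishability of $A$) versus $z \in A \setminus A'$ (use the antichain property of $A$ to make all comparisons false). Nothing is missing.
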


\begin{proof}
  Clearly $A'$ is an antichain because $A$ is. We show that it is an
  indistinguishable set. Let $x, y \in A'$ and $z \in \ID \backslash A'$, and show
  that $x < z$ implies $y < z$ (the other three implications are symmetric). If
  $z \in \ID \backslash A$, then we conclude because $A$ is an indistinguishable set.
  If $z \in A \backslash A'$, then we conclude because, as $A$ is an antichain, $z$
  is incomparable both to $x$ and to $y$.
\end{proof}

\begin{lemma}
  \label{lem:indistinguishablea}
  For any poset $(\ID, <)$ and indistinguishable antichains $A_1, A_2 \subseteq
  \ID$ such that $A_1 \cap A_2 \neq \emptyset$, the union $A_1 \cup A_2$ is an
  indistinguishable antichain.
\end{lemma}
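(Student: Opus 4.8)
The plan is to fix a witness $w \in A_1 \cap A_2$ and use it as a \emph{pivot} that transfers comparability relations between $A_1$ and $A_2$. The two things to verify are that $A_1 \cup A_2$ is an antichain and that it is an indistinguishable set; in both cases the only interesting situation is when one element lies in $A_1$ and the other in $A_2$, since the cases where both lie in the same $A_i$ follow directly from the corresponding hypothesis on that $A_i$.

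For the antichain property, I would take $x \in A_1$ and $y \in A_2$ (the general case reduces to this up to symmetry and up to the same-set cases) and suppose for contradiction that they are comparable, say $x < y$. We may assume $y \notin A_1$, as otherwise $x, y$ both lie in $A_1$ and are incomparable immediately. Then I apply the indistinguishable-set property of $A_1$ to $x, w \in A_1$ and $y \in \ID \setminus A_1$: from $x < y$ I deduce $w < y$. But $w, y \in A_2$, contradicting that $A_2$ is an antichain. The case $y < x$ is symmetric, using $y < x \Leftrightarrow y < w$ and the fact that $A_2$ is an antichain. Hence $x$ and $y$ are incomparable.

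For the indistinguishable-set property, I would take $x, y \in A_1 \cup A_2$ and $z \in \ID \setminus (A_1 \cup A_2)$, so that $z \notin A_1$ and $z \notin A_2$. When $x, y$ both lie in $A_1$ (or both in $A_2$) the claim is immediate from indistinguishability of that set. Otherwise, say $x \in A_1$ and $y \in A_2$, and I chain through the pivot: indistinguishability of $A_1$ applied to $x, w \in A_1$ and $z$ gives $x < z \Leftrightarrow w < z$, while indistinguishability of $A_2$ applied to $w, y \in A_2$ and $z$ gives $w < z \Leftrightarrow y < z$; composing yields $x < z \Leftrightarrow y < z$, and the identical argument with the inequalities reversed yields $z < x \Leftrightarrow z < y$. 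This shows $A_1 \cup A_2$ is an indistinguishable set, and together with the antichain property it is an indistinguishable antichain.

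I do not anticipate a serious obstacle: the one point requiring care is to ensure that, whenever the indistinguishable-set definition is invoked, the ``outside'' element genuinely lies in the complement of the set to which the definition is applied (that $y \notin A_1$ in the antichain step, and that $z$ lies outside \emph{both} $A_1$ and $A_2$ in the interval step). The non-empty intersection hypothesis is exactly what supplies the pivot $w$; without it the statement fails, so this is where the hypothesis is used essentially.
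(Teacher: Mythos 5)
Your proof is correct and follows essentially the same route as the paper's: both arguments fix a pivot $w \in A_1 \cap A_2$, use indistinguishability of $A_1$ and $A_2$ in sequence to transfer comparabilities through $w$ for the indistinguishable-set property, and derive the antichain property by pushing a hypothetical comparability $x < y$ through the pivot to contradict that $A_2$ is an antichain. The side conditions you flag (the "outside" element genuinely lying outside the set whose indistinguishability is invoked) are handled the same way in the paper, so there is no gap.
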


\begin{proof}
  We first show that $A_1 \cup A_2$ is an indistinguishable set. 
  Let $x, y \in A_1 \cup A_2$ and
  $z \in \ID \backslash (A_1 \cup A_2)$, assume that $x < z$ and show that $y <
  z$ (again the other three implications are symmetric).
  As $A_1$ and $A_2$ are indistinguishable sets, this is immediate unless $x
  \in A_1 \backslash A_2$ and $y \in A_2 \backslash A_1$, or vice-versa. We
  assume the first case as the second one is symmetric. Consider
  $w \in A_1 \cap A_2$. As $x < z$, we know that $w < z$ because $A_1$ is an
  indistinguishable set, so that $y < z$ because $A_2$ is an indistinguishable
  set, which proves the desired implication.

  Second, we show that $A_1 \cup A_2$ is an antichain.
  Proceed by contradiction, and let $x, y \in
  A_1 \cup A_2$ such that $x < y$. As $A_1$ and $A_2$ are antichains, we must
  have $x \in A_1 \backslash A_2$ and $y \in A_2 \backslash A_1$, or vice-versa.
  Assume the first case, the second case is symmetric. As $A_1$ is an
  indistinguishable set, letting $w \in A_1 \cap A_2$,
  as $x < y$ and $x \in A_1$, we have $w < y$. But $w \in A_2$ and
  $y \in A_2$, which is impossible because $A_2$ is an antichain. We have
  reached a contradiction, so we cannot have $x < y$. Hence, $A_1 \cup A_2$ is
  an antichain, which concludes the proof.
\end{proof}

We can now show Proposition~\ref{prp:ptime-ia-partition}.

\begin{proof}
  Start with the trivial partition in singletons (which is an
  ia-partition), and for every pair of items, see if their
  current classes can be merged (i.e., merge them, check in PTIME if it
  is an antichain, and if it is an indistinguishable set, and undo the merge if
  it is not). Repeat the process
  while it is possible to merge classes (i.e., at most linearly many times).
  This greedy process concludes in PTIME and yields an ia-partition
  $\mathbf{A}$. Let $n$ be its cardinality.

  Now assume that there is an ia-partition $\mathbf{A'}$ of cardinality $m < n$.
  There has to be a class $A'$ of~$\mathbf{A'}$
  which intersects two different classes $A_1 \neq A_2$ of
  the greedy ia-partition $\mathbf{A}$, otherwise $\mathbf{A'}$ would be a refinement
  of $\mathbf{A}$ so we would have $m \geq n$.
  Now, by Lemma~\ref{lem:indistinguishablea},
  $A \cup A_1$ and $A \cup A_2$, and hence
  $A \cup A_1 \cup A_2$, are
  indistinguishable antichains.
  By Lemma~\ref{lem:iasubset},
  this implies that $A_1 \cup A_2$ is an indistinguishable
  antichain. Now, when constructing the greedy ia-partition $\mathbf{A}$,
  the algorithm has
  considered one element of $A_1$ and one element of $A_2$, attempted to merge
  the classes $A_1$ and $A_2$,
  and, since it has not merged them in~$\mathbf{A}$, the union $A_1 \cup A_2$
  cannot be an indistinguishable antichain. We have reached a contradiction, so
  we cannot have $m < n$, which concludes the proof.
\end{proof}

We have shown the preservation of ia-width bounds through selection, projection,
and union (Lemma~\ref{lem:lexiawidthnoprod}), and shown how to compute an
ia-partition in PTIME (Proposition~\ref{prp:ptime-ia-partition}). Let us now return
to the proof of Theorem~\ref{thm:aggregnoprod}. We use
Lemma~\ref{lem:rewritenoprod} to rewrite the query to a union of projection of
selections. We evaluate the selections and projections in PTIME by
Proposition~\ref{prp:repsys}. As union is clearly associative and commutative,
we evaluate the union of relations of width $\leq k$, yielding $\OR$, and the union of those
of ia-width $\leq k$, yielding~$\OR'$. The first result $\OR$ has bounded width thanks to
Lemma~\ref{lem:lexwidth}, and we can compute a chain partition of it in PTIME using
Dilworth's theorem. The second result has bounded ia-width thanks to
Lemma~\ref{lem:lexiawidthnoprod}, and we can compute an ia-partition of it in
PTIME using Proposition~\ref{prp:ptime-ia-partition}. Hence, to show 
Theorem~\ref{thm:aggregnoprod}, it suffices to show the following strengthening
of Lemma~\ref{lem:aggregwinst}.

\begin{lemma}
  \label{lem:aggregwuawb}
  For any constant $k \in \mathbb{N}$,
  we can determine in PTIME,
  for any input po-relation $\OR$ with width $\leq k$,
  input po-relation $\OR'$ with ia-width $\leq k$,
  and list relation $L$,
  whether $L \in \pw(\OR \cupgen \OR')$.
\end{lemma}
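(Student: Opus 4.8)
The plan is to exploit that the union $\OR \cupgen \OR'$ is a parallel composition, so that $\OR$-tuples and $\OR'$-tuples are pairwise incomparable. Consequently a linear extension of $\OR \cupgen \OR'$ is exactly an interleaving of a linear extension of~$\OR$ with one of~$\OR'$, and $L \in \pw(\OR \cupgen \OR')$ holds iff the positions of~$L$ can be two-coloured so that the subsequence coloured by~$\OR$ lies in $\pw(\OR)$ and the subsequence coloured by~$\OR'$ lies in $\pw(\OR')$. I would decide this by a single dynamic-programming sweep over~$L$ from left to right, whose state is a pair $(\mathbf{m}, \mathbf{c})$: the first component $\mathbf{m}$ records progress inside~$\OR$ and the second component $\mathbf{c}$ records progress inside~$\OR'$.

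For the $\OR$-component I would reuse the machinery of Lemma~\ref{lem:aggregwinst}: compute a chain partition of width $\le k$ using Theorem~\ref{thm:dilworth} and let $\mathbf{m}=(m_1,\dots,m_k)$ be a vector of chain prefixes, retaining only the \emph{sane} vectors, i.e.\ those whose associated set is an order ideal of~$\OR$. For the $\OR'$-component I would first compute, using Proposition~\ref{prp:ptime-ia-partition}, an ia-partition of~$\OR'$ into $n \le k$ indistinguishable antichains $A_1,\dots,A_n$; since each class is an interval, the comparabilities between any two classes are uniform, so the classes carry a well-defined quotient poset, and the order ideals of~$\OR'$ are exactly the count vectors $\mathbf{c}=(c_1,\dots,c_n)$ with $0 \le c_i \le \card{A_i}$ that are downward closed for this quotient poset (a class may be incremented only once every class below it is full). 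Both vectors range over polynomially many values because $k$ is constant, so the product state space is polynomial. A transition consumes the current position~$p$ of~$L$, of value~$\ell$, either by advancing one chain of~$\OR$ whose next element has value~$\ell$ (keeping $\mathbf{m}$ sane) or by advancing one currently-available, non-full class of~$\OR'$; the full state is reachable after consuming all of~$L$ iff $L \in \pw(\OR \cupgen \OR')$. Correctness of the $\OR$-side is inherited verbatim from Lemma~\ref{lem:aggregwinst}, and the ideal characterisation above governs the $\OR'$-side ordering.

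The main obstacle is matching tuple \emph{values} on the $\OR'$-side. A chain forces the value of its next element, so the $\OR$-side is value-deterministic; but an indistinguishable antichain is defined purely order-theoretically and may contain several distinct values while its internal order is free, so the count $c_i$ alone does not record which values of~$A_i$ remain available (e.g.\ a class $\{v,w\}$ must never be allowed to produce $(v,v)$). Tracking per-value counts inside each class would blow the state up exponentially, so I would instead decouple the value check. Observe that, because every chain consumes a fixed value, the $\OR$-side always consumes exactly the multiset of values of~$\OR$; hence a single global precheck that $\mathrm{multiset}(L)=\mathrm{multiset}(\OR)\uplus\mathrm{multiset}(\OR')$ settles the $\OR$-versus-$\OR'$ split and leaves only the task of distributing the $\OR'$-coloured positions among $A_1,\dots,A_n$ so that each $A_i$ receives exactly its value multiset while respecting the quotient-poset windows enforced by~$\mathbf{c}$. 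This residual distribution problem is order-free within each class and involves only $n \le k$ classes, so I would solve it by a polynomial bipartite-matching / max-flow argument layered on the reachable DP states (positions versus class slots, with edges encoding value-compatibility and the order-feasible window certified by the sane count vectors), using integrality of the flow to recover an explicit valid routing. Showing that this matching faithfully captures value-feasibility \emph{simultaneously} with the ordering certified by the DP is the delicate point, and is where I expect the bulk of the work to lie.
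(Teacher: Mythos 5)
Your overall skeleton (chain partition plus sane vectors for $\OR$, ia-partition plus a quotient poset for $\OR'$, a left-to-right dynamic program over $L$ that colours each position $\OR$ or $\OR'$) matches the paper's, and you correctly isolate the crux: a per-class \emph{count} does not record which \emph{values} of an indistinguishable antichain remain available, so a value-oblivious state on the $\OR'$ side is unsound. But the patch you propose does not close this gap. The multiset precheck only fixes global totals; it does not ``settle the split'' in any useful sense, because when a value occurs in both $\OR$ and $\OR'$ the position-by-position assignment is exactly what must be decided, jointly with the order constraints. The subsequent bipartite-matching/max-flow step is circular as described: to set up value-compatibility edges between positions and class slots you need each class's admissible window in $L$, but these windows are not given --- when class $A_j$ may start depends on when the classes below it in the quotient poset finish, which depends on which positions those classes received, which is the very assignment the matching is supposed to produce; moreover the windows must also be consistent with a \emph{single} $\OR$/$\OR'$ colouring, whereas ``reachable DP states'' aggregates over many incompatible paths. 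Constraints of the form ``all positions routed to $A_i$ precede all positions routed to $A_j$'' are sequencing constraints, not matching constraints, and nothing in the sketch shows they can be encoded in a flow network of polynomial size.

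The paper resolves the crux with a different device that you are missing: \emph{finishing orders}. It enumerates the (constantly many, since the ia-width is $\leq k$) orders $\pi$ in which the classes of $\OR'$ can be exhausted, and for each $\pi$ uses a \emph{greedy} rule --- a value routed to $\OR'$ is always taken from the open, non-exhausted class that comes earliest in $\pi$ among those holding an unused element of that value. A swap/exchange argument, which uses indistinguishability of elements within a class, shows this greedy rule is complete: if $L$ is realizable with finishing order $\pi$ at all, it is realizable greedily. The payoff is precisely what your count vector lacks: under the greedy rule, the $\OR'$-side configuration after consuming any bag of values is \emph{uniquely determined} (up to swapping equal-valued elements of a class), independently of the consumption order, so it can be stored alongside the chain-position vector of $\OR$ as part of the DP state and updated deterministically at each transition. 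Without this (or some substitute argument of comparable strength), your proof does not go through.
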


\begin{proof}
  We first show the result when assuming that $\OR$ is empty, and will later return to the
  general case.
  Let $\mathbf{A} = (A_1, \ldots, A_k)$ be an ia-partition of width $k$ of $\OR'
  = (\ID, T, <)$, which can be computed in PTIME
  by Proposition~\ref{prp:ptime-ia-partition}. We assume that the
  length of the candidate possible world $L$ is $\card{\ID}$,
  as we can trivially reject otherwise.

  For any linear extension $<'$ of~$\OR'$, we define the
  \emph{finishing order} of $<'$ as
  the permutation $\pi$ of $\{1, \ldots, k\}$ obtained by
  considering, for each class $A_i$ of~$\mathbf{A}$, the largest position $1
  \leq n_i \leq \card{\ID}$
  in~$<'$ to which an element of $A_i$ is mapped, and sorting the
  class indexes in ascending order according to this largest position. We say we can realize $L$ with
  finishing order $\pi$ if there is a linear extension of~$\OR'$ that realizes
  $L$ and whose finishing order
  is~$\pi$. Hence, it suffices to check, for every possible permutation $\pi$ of
  $\{1, \ldots, k\}$,
  whether $L$ can be realized from $\OR'$ with finishing order $\pi$: this does not
  make the complexity worse because the number of finishing orders depends only on
  $k$ and not on $\OR'$, so it is constant. (Note that the order relations across
  classes may imply that some finishing orders are impossible to realize
  altogether.)

  We now claim that to determine whether $L$ can be realized with finishing
  order~$\pi$, the following greedy algorithm works. Read $L$ linearly. At any
  point, maintain the set of elements of $\OR'$ that have already been
  enumerated
  (distinguish the \emph{used} and \emph{unused} elements; initially all
  elements are unused), and distinguish the classes of~$\mathbf{A}$ in three
  kinds: the \emph{exhausted
  classes}, where all elements are used; the \emph{open classes},
  the ones where some elements are unused and all ancestor elements outside of
  the class are used;
  and the \emph{blocked
  classes}, where some ancestor element outside of the class is not used.
  Initially, the
  open classes are those which are roots in the poset obtained from the
  underlying poset of~$\OR'$ by taking the quotient by the equivalence relation induced by
  $\mathbf{A}$; and the other classes are blocked.

  When reading a value $t$ from $L$, consider all open classes. If none of these
  classes have an unused element with value $t$, reject, i.e., conclude that we cannot realize $L$
  as a possible world of $\OR'$ with finishing order $\pi$. Otherwise, take the
  open class that comes first in the finishing order, and use
  an arbitrary suitable element from it. Update the class to be \emph{exhausted} if it
  is: in this case, check that the class was the next one in the finishing
  order~$\pi$ (and reject otherwise), and 
  update from \emph{blocked} to \emph{open} the classes that
  must be. Once $L$ has been completely read, accept: as $\card{L} =
  \card{\ID}$,
  all elements are now used.

  It is clear by construction that, if this greedy algorithm accepts, then there
  is a linear extension of~$\OR'$ that realizes $L$ with finishing order~$\pi$;
  indeed, when the algorithm succeeds, then it has clearly respected the finishing
  order $\pi$, and whenever an identifier $\id$ of~$\OR'$ is marked as
  \emph{used} by the algorithm, then $\id$ has
  the right value relative to the element of~$L$ that has just been read, and
  $\id$
  is in an open class so no order relations of~$\OR'$ are violated by enumerating
  $\id$ at this point of the linear extension. The interesting direction is
  the converse: show that if $L$
  can be realized by a linear extension $<'$ of~$\OR'$ with finishing order $\pi$,
  then the algorithm accepts when considering $\pi$.
  To do so, we must show that if there is such a linear extension,
  then there is such a linear extension where identifiers are enumerated as in
  the greedy algorithm, i.e., we always choose an identifier with the right
  value and in the open class with the smallest finishing time: we call this a
  \emph{minimal} identifier. (Note that we do not need to worry about which
  identifier is chosen: once we have decided on the value of the identifier and
  on its class, it does not matter which element we choose, because all
  elements in the class are unordered and have the same order relations to
  elements outside the class thanks to indistinguishability.)
  If we can prove this, then it justifies the existence of a linear extension
  that the greedy algorithm will construct, which we call a \emph{greedy linear
  extension}.

  Hence, let us see why it is always possible to enumerate minimal identifiers. 
  Consider a linear extension $<'$ and take the smallest position in~$L$ where
  $<'$ chooses an identifier $\id$ which is non-minimal. We know that $\id$ must
  still have the correct value, i.e., $T(\id)$ is determined, and by the definition
  of a linear extension, we know that $\id$ must be in an open class. Hence, we
  know that the class $A$ of~$\id$ is non-minimal, i.e., there is another open
  class $A'$ containing an unused element with value $T(\id)$, and $A'$ is
  before~$A$ in the finishing order~$\pi$. Let us take for $A'$ the first open
  class with such an unused element in the finishing order~$\pi$, and let $\id'$
  be a minimal element, i.e., an element of~$A'$ with $T(\id') = T(\id)$. Let us
  now construct a different linear extension $<''$ by swapping $\id$ and $\id'$,
  i.e., enumerating $\id'$ instead of~$\id$, and enumerating $\id$ in~$<''$ at
  the point where $<'$ enumerates $\id'$. It is clear that the sequence of
  values (images by~$T$) of the identifiers in~$<''$ is still the same as
  in~$<'$. Hence, if we can show that $<''$ additionally satisfies the order
  constraints of~$\OR'$, then we will have justified the existence of a linear
  extension that enumerates minimal identifiers until a later position; so,
  reapplying the rewriting argument, we will deduce the existence of a greedy
  linear extension. So it only remains to show that $<''$ satisfies the order
  constraints of~$\OR'$.

  Let us assume by way of contradiction that $<''$
  violates an order constraint of~$\OR'$. There are two possible kinds of
  violation. The first kind is if $<'$ enumerates an element~$\id''$
  between~$\id$ and~$\id'$ for which $\id < \id''$, so that having $\id'' <''
  \id$ in~$<''$ is a violation. The second kind is if $<'$ enumerates an
  element~$\id''$ between $\id$ and $\id'$ for which $\id'' < \id'$, so that
  having $\id'' <'' \id'$  in~$<''$ is a violation. The second kind of violation
  cannot happen because $\id'$ is in an open class when $<'$
  considers~$\id$, i.e., we have ensured that $\id'$ can be enumerated instead
  of~$\id$. Hence, we focus on violations of the first kind. Consider $\id''$
  such that $\id <' \id'' <' \id'$ and let us show that
  $\id \not<
  \id''$. Letting $A''$ be the class of~$\id''$, we assume that $A'' \neq A$, as
  otherwise there is nothing to show because the classes are antichains.
  Now, we know from~$<'$ that $\id' \not
  <' \id''$, and that the class~$A'$ of~$\id'$ is not exhausted when $<'$
  enumerates~$\id''$. As~$<'$ respects the finishing order~$\pi$, and $A'$ comes
  before~$A$ in~$\pi$, we know that $A$ is not exhausted either when~$<'$
  enumerates~$\id''$. Letting $\id_A$ be an element of~$A$ which is still unused
  when~$<'$ enumerates~$\id''$, we know that $\id_A \not< \id''$. So,
  as $\id'' \notin A$, by indistinguishability, we have $\id \not<
  \id''$. This is what we wanted to show, so $\id''$ cannot witness a
  violation of the first kind.  Hence $<''$  does not violate the order
  constraints of~$\OR'$, and repeating this rewriting argument shows that there
  is a greedy linear extension that the greedy algorithm will find,
  contradicting our assumption. This establishes our result in the case where we
  only have the bounded-ia-width po-relation $\OR'$.

  We now return to the general case where the bounded-width po-relation $\OR$ is not
  empty.
  In this case, we will again enumerate all possible finishing
  orders for the classes of $\OR'$, of which there are constantly many, and apply
  an algorithm for each finishing order $\pi$, with the algorithm succeeding iff it
  succeeds for some finishing order.

  We first observe that if there is a way to achieve $L$ as a possible world of
  $\OR \cup \OR'$ for a finishing order $\pi$, then there is one where the subsequence of
  the tuples that are matched to~$\OR'$ are matched following the greedy strategy as
  we presented before. This is simply because $L$ must then be an
  interleaving of a possible world of $\OR$ and a possible world of $\OR'$, and a
  match for the possible world of $\OR'$ can be found as a greedy match, by what
  was shown above. So it suffices to assume
  that the tuples matched to~$\OR'$ are matched following the greedy algorithm
  that we previously described.

  Second, we observe the following: for any prefix $L'$ of $L$ and order ideal
  $\OR''$ of $\OR$, if we realize $L'$ by matching exactly the tuples of
  $\OR''$ in $\OR$,
  and by matching the other tuples to $\OR'$ following the greedy algorithm, then the
  matched tuples in~$\OR'$ are entirely determined (up to replacing tuples in a
  class by other tuples with the same value). This is because, while there may
  be multiple ways to match parts of $L'$ to $\OR''$ in a way that leaves a
  different sequence of tuples to be matched to $\OR'$, all these ways make us
  match the same bag of tuples to $\OR'$; now the state of $\OR'$ after matching a bag
  of tuples following the greedy algorithm (for a fixed finishing order) is the
  same, no matter the order in which these tuples are matched, assuming that the
  match does not fail.

  This justifies that we can solve the problem with a dynamic algorithm again.
  The state contains the position $\mathbf{m}$ in each chain of $\OR$, and a position $i$
  in the candidate possible world. As in the proof of
  Lemma~\ref{lem:aggregwinst}, we filter the configurations so that they are sane
  with respect to the order constraints between the chains of $\OR$. For each
  state, we will store a Boolean value indicating whether the prefix of length
  $i$ of $L$ can be realized by $\OR \cupgen \OR'$ such that the tuples
  of~$\OR$ that
  are matched is the order ideal $s(\mathbf{m})$
  described by $\mathbf{m}$, and such that the
  other tuples of the prefix are matched to $\OR'$ following the greedy
  algorithm with
  finishing order $\pi$. By
  our second remark above, when the Boolean is true, the state of $\OR'$ is
  uniquely determined, and we also store it as part of the state (it is
  polynomial) so that we do not have to recompute it each time.

  From each state we can make progress by consuming the next tuple from the
  candidate possible world, increasing the length of the prefix, and reaching
  one of the following states: either match the tuple to a chain of $\OR$,
  in which case we make progress in one chain and the consumed tuples in
  $\OR'$
  remain the same; or make progress in $\OR'$, in which case
  we look at the previous state of $\OR'$ that was stored and consume a tuple from
  $\OR'$ following the greedy algorithm:
  more specifically, we find an unused tuple with
  the right label which is in the open class that appears first in the finishing
  order, 
  if the class is now exhausted we verify that it was supposed to be the next
  one according to the finishing order, and
  we update the open, exhausted and blocked status of the
  classes.

  Applying the dynamic algorithm allows us to conclude whether $L$ can be
  realized by matching all tuples of $\OR$, and matching tuples in $\OR'$ following
  the greedy algorithm with finishing order $\pi$ (and checking cardinality
  suffices to ensure that we have matched all tuples of $\OR'$). If the answer of
  the dynamic algorithm is YES, then it is clear that, following
  the path from the initial to the final state found by the dynamic algorithm,
  we can realize $L$. Conversely, if $L$ can be realized, then by our
  preliminary remark it can be realized in a way that matches tuples in
  $\OR'$
  following the greedy algorithm for some finishing order. Now, for that
  finishing order, the path of the dynamic algorithm that matches tuples
  to $\OR$
  or to $\OR'$ following that match will answer YES.
\end{proof}

Disallowing product is 
severe, but we can still integrate sources by taking the
\emph{union} of their tuples, selecting subsets, and modifying tuple
values with projection. In fact, allowing product makes
\poss intractable when allowing both unordered and totally ordered inputs.

\begin{theorem}\label{thm:posscompextended}
  There is a \Plex query and a \Pgen query for which the \poss problem is NP-complete
  even when the input po-database is restricted to consist only of one totally
  ordered and one unordered po-relation.
\end{theorem}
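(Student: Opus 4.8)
The plan is to prove both hardness claims by reduction from UNARY-3-PARTITION, reusing the gadgets of Theorems~\ref{thm:posscomp1} and~\ref{thm:posscompextend1}; membership in NP is already given by Theorem~\ref{thm:posscomp1}. In both cases the totally ordered relation will carry a unary encoding of the integers, laid out as the concatenation, for $1 \le i \le 3m$, of one $\s$, then $n_i$ copies of $\n$, then one $\e$, exactly as in those proofs, and the candidate world will again be built from the block $\s^3 \n^B \e^3$.

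For the \Plex query, I would put the index set $\{1, \dots, 3m\}$ into the unordered relation (one incomparable tuple per index) and the encoding above into the totally ordered relation~$T$, tagging each tuple of~$T$ with the index~$i$ of the integer it belongs to. Consider the fixed query $Q \defeq \Pi(\sigma_{.1 = .2}(U \times_{\lex} T))$, where $.1$ is the index coming from the unordered relation~$U$, $.2$ is the tag coming from~$T$, and the projection keeps only the $\{\s,\n,\e\}$-value. Since $U$ is unordered, $U \times_{\lex} T$ is, by definition of $\times_{\lex}$, the parallel composition of $\lvert U\rvert$ pairwise-incomparable copies of the chain~$T$; the selection then retains, in the copy indexed by~$i$, exactly the contiguous sub-chain of~$T$ tagged~$i$, namely $\s\,\n^{n_i}\,\e$. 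Hence $Q(D)$ is precisely the parallel composition of the $3m$ chains used in the proof of Theorem~\ref{thm:posscomp1}, and the correctness of the reduction, with candidate world $(\s^3\n^B\e^3)^m$, is inherited verbatim.

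For the \Pgen query I would instead recover the grid construction of Theorem~\ref{thm:posscompextend1}. The difficulty is that that proof uses two totally ordered relations, whereas now only one relation may be ordered, and the unordered relation cannot help: a direct product with an unordered (indeed, with any non-strictly-ordered) factor has empty order, so all comparabilities of the output must come from the single totally ordered relation. The plan is therefore to fold the two chains of Theorem~\ref{thm:posscompextend1} — the positional chain playing the role of $\ordern{3m-1}$ and the encoding chain $S'$ — into one totally ordered relation $T \defeq S \cupcat S'$, with a tag marking which block each tuple lies in, and to build the grid as $\sigma_{\psi}(T \times_{\gen} T)$, where $\psi$ keeps the pairs whose first component is tagged ``$S$'' and whose second component is tagged ``$S'$''. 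Because the comparabilities of $\times_{\gen}$ relate only pairs that strictly increase in both coordinates, this selected quadrant is order-isomorphic to $S \times_{\gen} S'$; after the projection of Theorem~\ref{thm:posscompextend1} we obtain exactly its po-relation $\OR'$, so its candidate world and its entire matching analysis (the staircase argument that leaves one parallel chain per integer) transfer unchanged. The unordered relation then only needs to be present to meet the stated input restriction; it can be attached harmlessly by a top-level union contributing a parallel block of tuples reflected as a fixed segment of the candidate world.

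The main obstacle I anticipate is the \Pgen case, and specifically the asymmetry between the two products: the clean ``unordered factor $\times_{\lex}$ chain equals parallel copies'' trick that drives the \Plex reduction has no $\times_{\gen}$ analogue, so the unordered relation is essentially inert for producing order and the whole hard instance must be squeezed out of a single chain via $T \times_{\gen} T$. The careful part is then to verify that the tagged selection over $T \times_{\gen} T$ is genuinely isomorphic, as a po-relation (values and order together), to the two-chain grid of Theorem~\ref{thm:posscompextend1}, and in particular that folding $S$ and $S'$ into one concatenated chain creates no spurious cross-block comparabilities that survive the selection; once this is checked, correctness is immediate from Theorem~\ref{thm:posscompextend1}.
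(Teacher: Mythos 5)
Your proposal is correct, but the \Pgen half rests on a misreading of the paper's direct product, and the detour it forces is unnecessary. You assert that ``a direct product with an unordered factor has empty order''; this is the strict--strict reading of the definition, and it is not the semantics the paper uses. As Example~\ref{exa:simplegen} and Figure~\ref{fig:example} show, $\langle \textup{G},8,\textup{M},5\rangle < \langle \textup{G},8,\textup{B},8\rangle$ holds in a $\times_{\gen}$-product even though the first components are \emph{equal}, and the proof of Theorem~\ref{thm:posscompextend1} explicitly relies on each row of the grid being totally ordered. Under this (reflexive, Stanley-style) product, the $\times_{\gen}$-product of an unordered relation with a chain is exactly a parallel composition of disjoint copies of the chain --- the same po-relation as with $\times_{\lex}$. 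This is precisely what the paper's proof of the present theorem exploits: it reuses the construction of Theorem~\ref{thm:posscompextend1} verbatim, merely taking $S$ to be unordered, which ``drops the vertical edges of the grid'' while the matching analysis (which only ever used within-row comparabilities) goes through unchanged; since $\times_{\gen}$ and $\times_{\lex}$ coincide when one factor is unordered, this single construction settles both the \Plex and the \Pgen claims.

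That said, neither of your constructions actually depends on the false claim, so both are sound. Your \Plex reduction is, if anything, cleaner than the paper's: joining the unordered index relation against a tagged encoding chain via $\sigma_{.1=.2}$ carves out directly the parallel composition of the chains $\s\,\n^{n_i}\,\e$, so you inherit the simpler analysis and candidate world $(\s^3\n^B\e^3)^m$ of Theorem~\ref{thm:posscomp1}, with no padding lists $L_1,L_2$. Your folding construction for \Pgen ($T \defeq S \cupcat S'$ with block tags, then a quadrant selection on $T \times_{\gen} T$) is also correct, and robustly so: the product order on the selected quadrant consults only the restrictions of the chain order of~$T$ to the two blocks, and the cross-block comparabilities of~$T$ never enter (they would require comparing an $S$-block element with an $S'$-block element \emph{within the same coordinate}), so the quadrant is order-isomorphic to $S\times_{\gen}S'$ under either reading of the product and Theorem~\ref{thm:posscompextend1}'s analysis transfers. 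The only real criticism is economy: had you applied your own \Plex construction with $\times_{\gen}$ in place of $\times_{\lex}$, it would have worked verbatim, which is essentially the paper's one-sentence proof.
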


\begin{proof}
  The proof is by adapting the proof of Theorem~\ref{thm:posscompextend1}.
  The argument is exactly the
  same, except that we take relation $S$ to be \emph{unordered} rather than
  totally ordered. Intuitively, in Figure~\ref{fig:gridpic}, this means that we
  drop the vertical edges in the grid. The proof adapts, because it only used
  the fact that $\id'_j < \id'_k$ for $j < k$ within a row-$i$; we never used
  the comparability relations across rows.
\end{proof}

\section{Tractable Cases for Accumulation Queries}\label{sec:fpt}
We next study \poss and
\cert \emph{in presence of accumulation}. Recall that in the general case, \poss
is NP-hard and \cert is coNP-hard, so we study tractable cases in this section.

\subsection{Cancellative Accumulation}
We first study the case where
accumulation is performed in a \emph{cancellative} monoid (recall
Definition~\ref{def:accrestr}).
This large class of accumulation functions includes the top-$k$
operator (defined above Example~\ref{exa:accumul}) and both operators in
Example~\ref{exa:aggreg}.
We design an efficient algorithm for certainty in this case.

\begin{theorem}\label{thm:certaintyptimec}
    \cert is in PTIME for any fixed \PosRAagg{} query
    that performs accumulation in a cancellative monoid.
\end{theorem}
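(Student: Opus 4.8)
The plan is to reduce the problem, via Proposition~\ref{prp:repsys}, to a purely order-theoretic question about a single po-relation, and then exploit cancellativity to replace the exponentially many linear extensions by polynomially many \emph{local} checks. Concretely, writing $Q = \accum_{h,\oplus}(Q')$, I would first evaluate $\OR \defeq Q'(D) = (\ID, T, <)$ in PTIME, so that \cert for $Q$ amounts to deciding whether every linear extension of $\OR$ accumulates to the same value and whether this value is~$v$. The candidate value~$v$ is handled at the very end, by computing the accumulation of one arbitrary linear extension (PTIME) and comparing it to~$v$; the crux is to decide whether the accumulation is \emph{constant} across all linear extensions.

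The key structural fact is that the graph on the linear extensions of a poset, whose edges are adjacent transpositions of two incomparable elements, is connected (a standard bubble-sort argument). Hence the accumulation is constant over all linear extensions iff it is preserved by every single adjacent transposition that occurs in some linear extension. Consider such a transposition: two incomparable elements $a, b$ occupying positions $i, i+1$, with a common prefix and a common suffix. Swapping them changes the accumulation from $\pi \oplus h(a,i) \oplus h(b,i+1) \oplus \sigma$ to $\pi \oplus h(b,i) \oplus h(a,i+1) \oplus \sigma$, where $\pi$ and $\sigma$ are the accumulations of the (identical) prefix and suffix. Here cancellativity (Definition~\ref{def:cancellative}) is essential: it makes the two values equal \emph{iff} $h(a,i)\oplus h(b,i+1) = h(b,i)\oplus h(a,i+1)$, \emph{independently} of $\pi$ and $\sigma$. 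Thus the effect of a swap depends only on the triple $(a,b,i)$ and not on the surrounding linear extension, collapsing exponentially many edges into at most $O(n^3)$ conditions to verify.

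It then remains to determine, for each ordered pair of incomparable elements $a, b$, the set of positions $i$ at which $a, b$ can appear consecutively in some linear extension, and to check the displayed equality for each. I would observe that $a$ at position $i$ and $b$ at position $i+1$ is realizable iff there is an order ideal $I$ of size $i-1$ that contains all strict predecessors of $a$ and of $b$ but avoids $a$, $b$, and all their successors. The forced set $F$ (strict predecessors of $a$ or $b$) is itself an order ideal, the complement $\ID \setminus X$ of the up-set $X = {\uparrow} a \cup {\uparrow} b$ is the largest admissible order ideal, and $F \subseteq \ID \setminus X$; since the sizes of order ideals lying between two nested order ideals form a contiguous range (build up by repeatedly adding a minimal element of the gap), the feasible positions $i$ form an interval $[\,|F|+1,\ |\ID \setminus X|+1\,]$ computable in PTIME. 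For each feasible $i$ I would test $h(a,i)\oplus h(b,i+1) = h(b,i)\oplus h(a,i+1)$, which is PTIME since $h$ and $\oplus$ are PTIME-computable and monoid elements admit PTIME equality tests (as already required to compare the final accumulation with~$v$). If any test fails, the accumulation is non-constant and we reject; otherwise it is constant and we accept iff its value equals~$v$.

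I expect the main obstacle to be exactly the role of cancellativity: for a general monoid the outcome of a swap can depend on the prefix and suffix (for instance through absorbing elements), so one could not localize the check to the triple $(a,b,i)$, and indeed without this hypothesis the problem is coNP-complete (Theorem~\ref{thm:certcomp}). A secondary point requiring care is the position-dependence of~$h$: because $h$ reads the absolute position, the same pair $a, b$ may demand a separate check at each feasible position $i$, which is why I enumerate the whole interval of positions rather than a single representative; establishing that this interval is contiguous, and hence PTIME-computable, via the order-ideal argument is the technical glue that keeps the whole procedure polynomial.
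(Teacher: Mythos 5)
Your proposal is correct and is essentially the paper's own proof: your family of local swap checks at feasible adjacent positions is exactly the paper's \emph{safe swaps} property over the \emph{possible ranks} interval (Definition~\ref{def:pr2}, checked in PTIME as in Lemma~\ref{lem:safeptime}), cancellativity is used in the same way to make each comparison independent of the surrounding prefix and suffix (forward direction of Lemma~\ref{lem:auxcancel}), and your appeal to connectivity of the linear-extension graph under adjacent transpositions is precisely what the paper re-proves inline via its longest-common-prefix bubbling argument (converse direction of Lemma~\ref{lem:auxcancel}). The only cosmetic differences are that you justify the feasible-position interval via nested order ideals where the paper counts ancestors/descendants and constructs the witnessing extensions directly (Lemma~\ref{lem:achieverank}), and that you cite the transposition-connectivity fact rather than re-deriving it.
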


To prove this result, we define a notion of \emph{possible ranks} for pairs of
incomparable elements, and define a \emph{safe swaps} property, intuitively
designed to ensure that we have only one possible world.

\begin{definition}
    \label{def:pr2}
    Let $P = (\ID, <)$ be a poset.
    For $x \in \ID$, we call
    $A_x
    \colonequals \{y \in \ID \mid y < x\}$ the \emph{ancestors} of~$x$
    and call $D_x \colonequals \{y \in \ID
    \mid x < y\}$ the \emph{descendants} of~$x$.

    Now, given two \emph{incomparable} elements $x$ and $y$ in $\ID$,
    we define
    the \deft{possible ranks} $\pr_P(x, y)$ as the interval $[a+1, \card{\ID}
    - d]$, where
    $a
    \colonequals \card{A_{x} \cup A_{y}}$ and $d \colonequals \card{D_{x}
    \cup D_{y}}$.

    Let $(\calM, \oplus, \epsilon)$ be a monoid and let $h :
    \calD \times \mathbb{N} \to \calM$ be an accumulation map.
    Let $\OR$ be a po-relation with underlying poset~$P$. We say that
    $\OR$ has the \emph{safe swaps} property with respect
    to~$\oplus$ and~$h$ if the following holds: for any pair $x \neq y$ of
    incomparable identifiers of $\OR$, for any pair $p, p+1$ 
    in $\pr_P(x, y)$, we have
    \[
      h(T(x), p) \oplus h(T(y), p+1) = h(T(y), p) \oplus h(T(x),
      p+1).
    \]
\end{definition}

We first show the following soundness result for possible ranks.

\begin{lemma}
  \label{lem:achieverank}
  For any poset $P = (\ID, <)$ and incomparable elements $x, y \in \ID$,
  for any $p \neq q \in \pr_P(x,
  y)$, we can compute in PTIME a linear extension~$<'$ of~$P$ in which element $x$ is
  enumerated at position $p$, and element $y$ is enumerated at position
  $q$.
\end{lemma}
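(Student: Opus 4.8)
The plan is to build the desired extension explicitly, by partitioning $\ID$ into blocks that can each be topologically sorted and then concatenated, with $x$ and $y$ placed in a suitable middle region. Assume without loss of generality that $p < q$; the case $p > q$ follows by symmetry, since $\pr_P(x,y) = \pr_P(y,x)$ (both $a$ and $d$ are symmetric in $x,y$), exchanging the roles of $x$ and $y$ so that we instead want $y$ before $x$. Write $A \defeq A_x \cup A_y$ and $D \defeq D_x \cup D_y$, so that $a = \card{A}$ and $d = \card{D}$, and let $F \defeq \ID \setminus (A \cup D \cup \{x,y\})$ be the \emph{free} elements.

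First I would establish the structural facts that make the construction valid, all of which follow from transitivity together with the incomparability of $x$ and $y$. Concretely: (i) the five sets $A$, $\{x\}$, $\{y\}$, $F$, $D$ form a partition of $\ID$, where the only nonobvious disjointness $A \cap D = \emptyset$ holds because, e.g., $z \in A_x \cap D_y$ would give $y < z < x$, contradicting incomparability; (ii) $A$ is an order ideal (downward closed) and $D$ is an order filter (upward closed), since an ancestor of an ancestor is an ancestor and dually; and (iii) every $f \in F$ is incomparable to both $x$ and $y$, because $f \notin A_x \cup D_x$ and $f \notin A_y \cup D_y$. From (ii) I would additionally note that no comparability can force a $D$-element before an $A$- or $F$-element, nor an $F$-element before an $A$-element (each such constraint would contradict ideal/filter closure), which is exactly what licenses the block concatenation below.

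Then I would assemble $<'$ as the concatenation of three blocks: a topological sort of the subposet induced on $A$ placed at the global positions $[1,a]$; then a \emph{middle} block occupying positions $[a+1, \card{\ID}-d]$; then a topological sort of the subposet induced on $D$ placed at positions $[\card{\ID}-d+1, \card{\ID}]$. A cardinality count shows the middle block has exactly $\card{\ID}-a-d = \card{F}+2$ slots, matching $F \cup \{x,y\}$, and since $p,q \in \pr_P(x,y) = [a+1,\card{\ID}-d]$ both target positions fall inside it. Inside the middle block I would place $x$ at position $p$ and $y$ at position $q$, then fill the remaining $\card{F}$ positions, in increasing order, with a topological sort $f_1, \ldots, f_{\card{F}}$ of $F$.

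Finally I would verify that $<'$ is a genuine linear extension of $P$, which is the only step needing care but which reduces entirely to facts (i)--(iii): comparabilities internal to $A$, $D$, or $F$ are respected by the respective topological sorts (the relative order of the $f_i$ is preserved when $x$ and $y$ are inserted into gaps); cross-block comparabilities are respected because $A$ precedes everything and $D$ follows everything while (ii) rules out any reversed constraint; no constraint relates $x$ or $y$ to a free element by (iii); and the single potential constraint between $x$ and $y$ is vacuous by incomparability, with $x$ landing before $y$ as $p < q$. The main obstacle is thus precisely this last bookkeeping---checking that the concatenation never reverses an order relation---rather than any quantitative estimate; since ancestor and descendant sets are computed by reachability, and topological sorts and assembly are standard, the whole procedure is clearly in PTIME.
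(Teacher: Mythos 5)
Your proposal is correct and follows essentially the same construction as the paper's proof: enumerate the ancestors $A_x \cup A_y$ first, the descendants $D_x \cup D_y$ last, and interleave $x$, $y$ at positions $p$, $q$ with a linear extension of the remaining (free) elements in the middle block. The only difference is presentational: you spell out the structural facts (partition, ideal/filter closure, incomparability of free elements with $x$ and $y$) and the cross-block verification that the paper leaves as ``it clearly has the required properties,'' which is a welcome level of rigor but not a different argument.
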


\begin{proof}
  We write $a
    \colonequals \card{A_{x} \cup A_{y}}$ and $d \colonequals \card{D_{x}
    \cup D_{y}}$.
  We will build the desired linear extension $<'$ by enumerating all elements
  of~$A_x \cup A_y$ in any order at the beginning, and enumerating all elements
  of~$D_x \cup D_y$ at
  the end: this can be done without enumerating either $x$
  or $y$ because $x$ and $y$ are incomparable.

  Let $p' \colonequals p - a$, and $q' \colonequals q - a$; it follows from the definition of
  $\pr_P(x, y)$ that $1 \leq p', q' \leq \card{\ID} - d - a$, and clearly $p'
  \neq q'$.

  Now, all elements that are not enumerated by~$<'$ are either $x$, $y$, or
  incomparable to both $x$ and $y$. Consider any linear extension $<''$ of these
  unenumerated elements except
  $x$ and $y$; it has length $\card{\ID} - d - a - 2$.
  Now, as $p' \neq q'$, if $p' < q'$, then we can enumerate $p' - 1$ of these
  elements, enumerate $x$, enumerate $q' - p' - 1$ of these elements,
  enumerate $y$, and enumerate the remaining elements, following~$<''$.
  We proceed similarly, reversing the roles of $x$ and $y$, if $q'
  < p'$. We have constructed $<'$ in PTIME and it clearly has the required
  properties.
\end{proof}

We can then show that the safe swaps criterion is tractable to verify.

\begin{lemma}
  \label{lem:safeptime}
  For any fixed (PTIME-evaluable) accumulation operator $\accum_{h, \oplus}$ we
  can determine in PTIME, given a po-relation $\OR$, whether $\OR$ has safe
  swaps with respect to~$\oplus$ and~$h$.
\end{lemma}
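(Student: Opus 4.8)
The plan is to verify the safe swaps property by an exhaustive but polynomial sweep over the triples that Definition~\ref{def:pr2} quantifies over. Write $\OR = (\ID, T, <)$ and let $P = (\ID, <)$ be its underlying poset. First I would precompute the transitive closure of $<$ in PTIME, which lets me read off comparability of any two identifiers, and also the ancestor and descendant sets by reachability. I would then enumerate all $O(\card{\ID}^2)$ unordered pairs $\{\id_1, \id_2\}$ and discard those that are comparable. For each remaining incomparable pair, computing $\pr_P(\id_1, \id_2)$ is immediate: from the reachability information I obtain $A_{\id_1}, A_{\id_2}$ and $D_{\id_1}, D_{\id_2}$, hence $a = \card{A_{\id_1} \cup A_{\id_2}}$ and $d = \card{D_{\id_1} \cup D_{\id_2}}$, and so the interval $[a+1, \card{\ID} - d]$.

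Next, for each incomparable pair I would iterate over the at most $\card{\ID}-1$ consecutive pairs $(p, p+1)$ inside $\pr_P(\id_1, \id_2)$ and test the single monoid identity
\[ h(T(\id_1), p) \oplus h(T(\id_2), p+1) = h(T(\id_2), p) \oplus h(T(\id_1), p+1) \]
demanded by the safe swaps definition. The po-relation has safe swaps iff all of these tests succeed, so the procedure accepts exactly when no counterexample triple is found. The total number of tests is $O(\card{\ID}^3)$; hence, provided each individual identity can be checked in PTIME, the entire algorithm runs in PTIME, giving the claim.

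The step I expect to be the main obstacle is justifying that one such identity is checkable in PTIME, since we are only told that the operator $\accum_{h,\oplus}$ is PTIME-evaluable on list relations, whereas the test requires the values $h(t,p)$ at an \emph{arbitrary} absolute position $p$ together with $\oplus$ and equality in $\calM$. I would argue that PTIME-evaluability of $\accum_{h,\oplus}$ is exactly the statement that its constituents — evaluating the map $h(\cdot,\cdot)$, applying $\oplus$, and comparing the canonical representations of monoid elements — are all PTIME on the representation used, which is precisely what is needed to fold $h(t_1,1)\oplus\cdots\oplus h(t_m,m)$ over a list; from this, computing the four relevant $h$-values, combining each side with a single $\oplus$, and comparing is clearly in PTIME. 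It is worth stressing that this lemma assumes \emph{nothing} about $\calM$ beyond PTIME-evaluability — in particular no cancellativity, unlike Theorem~\ref{thm:certaintyptimec}. This is why one cannot shortcut the test by evaluating $\accum$ on two length-$(p+1)$ lists that differ only by swapping the tuples at positions $p$ and $p+1$: that would only compare $C \oplus (\text{LHS})$ with $C \oplus (\text{RHS})$ for a common prefix value $C$, which is weaker than equality of the two sides unless $C$ is left-cancellable. The identity must therefore be evaluated directly on its two sides, which is the point at which the precise meaning of PTIME-evaluability has to be pinned down.
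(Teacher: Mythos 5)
Your algorithmic skeleton is exactly the paper's: enumerate the quadratically many pairs, test incomparability, read off $\pr_P(\id_1,\id_2)$ from ancestor/descendant sets, and check the swap identity for each of the linearly many consecutive pairs $(p,p+1)$ in the interval. The gap is at the step you yourself flagged as the main obstacle. The paper's definition of PTIME-evaluability is a property of the \emph{whole fold} only: given a list relation $L$, one can compute $\accum_{h,\oplus}(L)$ in PTIME. Your resolution asserts that this ``is exactly the statement'' that the constituents $h$, $\oplus$, and equality in $\calM$ are separately PTIME-computable, but that does not follow from the definition: from a black box computing $\accum_{h,\oplus}(L)$ you can obtain $h(t,1)$ (via singleton lists), but there is in general no way to extract $h(t,p)$ for $p\geq 2$ in isolation, nor to apply $\oplus$ to two arbitrary elements of $\calM$ --- you only ever see folded results. (The claimed equivalence also fails in the other direction: componentwise PTIME does not imply the fold is PTIME, since iterated $\oplus$ can blow up representation sizes.) So, under the paper's definition, your crucial test is not justified to be in PTIME; you have effectively strengthened the hypothesis rather than proved the step.

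What the paper does instead is precisely the two-swapped-lists evaluation that you explicitly ruled out, made sound by the choice of padding: it builds in PTIME a list relation of $p+1$ tuples whose first $p-1$ entries are labeled so as to accumulate to the neutral element, followed by $T(\id_1),T(\id_2)$ (and by $T(\id_2),T(\id_1)$ for the other side), and feeds each list to the black box. The common prefix then contributes exactly $\epsilon$, not an arbitrary value $C$, so the two evaluations return precisely $h(T(\id_1),p)\oplus h(T(\id_2),p+1)$ and $h(T(\id_2),p)\oplus h(T(\id_1),p+1)$, and comparing them requires no cancellativity whatsoever. Your objection to the ``shortcut'' is correct for an arbitrary prefix but evaporates with neutral padding; this is the missing idea. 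To repair your proof, replace the reinterpretation of PTIME-evaluability by this padding construction (noting, as the paper does only implicitly, that one needs tuple values that $h$ maps to $\epsilon$ to serve as padding).
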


\begin{proof}
  Consider each pair $(\id_1, \id_2)$ of elements of $\OR$ and check in PTIME whether they are incomparable. 
  If this is the case, compute in PTIME
  $\pr_\OR(\id_1, \id_2)$ and for each pair $p$, $p+1$ of consecutive
  integers, compute 
  $h(T(\id_1), p) \oplus h(T(\id_2), p+1)$ and $h(T(\id_2), p) \oplus
  h(T(\id_1), p+1)$ in PTIME (this uses PTIME-evaluability of the accumulation
  operator), and check whether they are equal.
\end{proof}

We last show the following lemma, from which we will easily be able to prove
Theorem~\ref{thm:certaintyptimec}.

\begin{lemma}
  \label{lem:auxcancel}
  For any (PTIME-evaluable) accumulation operator $\accum_{h,\oplus}$ on a
  cancellative monoid $(\calM, \oplus, \epsilon)$, for any po-relation~$\OR$, we
  have $\card{\accum_{h,\oplus}(\OR)} = 1$ iff $\OR$ has safe swaps with respect
  to $\oplus$ and $h$.
\end{lemma}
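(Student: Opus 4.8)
The plan is to prove the biconditional in both directions, with the "safe swaps $\Rightarrow$ single accumulation result" direction being the substantial one. For the easy direction, suppose $\OR$ does \emph{not} have safe swaps. Then there exist incomparable identifiers $\id_1 \neq \id_2$ and consecutive integers $p, p+1 \in \pr_P(\id_1, \id_2)$ with $h(T(\id_1), p) \oplus h(T(\id_2), p+1) \neq h(T(\id_2), p) \oplus h(T(\id_1), p+1)$. By Lemma~\ref{lem:achieverank}, I can build a linear extension $<'$ placing $\id_1$ at position $p$ and $\id_2$ at position $p+1$, and another extension $<''$ agreeing with $<'$ everywhere except swapping these two adjacent elements. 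Since all other positions and their labels are identical, the two accumulation results differ exactly by replacing the factor $h(T(\id_1), p) \oplus h(T(\id_2), p+1)$ with $h(T(\id_2), p) \oplus h(T(\id_1), p+1)$. Here I must use \emph{cancellativity}: the surrounding prefix and suffix products are common to both, so if the two middle factors differ, cancellation on both sides forces the two full accumulation values to differ. Hence $\card{\accum_{h,\oplus}(\OR)} \geq 2$.

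For the harder direction, assume $\OR$ has safe swaps, and I want to show all linear extensions yield the same accumulation value. The key structural fact is that any two linear extensions of a finite poset are connected by a sequence of \emph{adjacent transpositions}, each swapping two elements that are consecutive in the current order and incomparable in $P$ (this is standard for the "linear extension graph" / the connectedness of linear extensions under adjacent swaps). So it suffices to show that a single such adjacent swap preserves the accumulation value. Consider a linear extension in which incomparable $\id_1, \id_2$ occupy consecutive positions $p, p+1$; I first need to verify that $p, p+1 \in \pr_P(\id_1, \id_2)$, which holds because everything before position $p$ must lie in $A_{\id_1} \cup A_{\id_2}$-complement consistent with the extension, and more precisely the elements at positions $\leq p-1$ together with the two swapped elements are constrained so that $p$ lies in the possible-ranks interval $[a+1, \card{\ID}-d]$. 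Once this is established, the safe-swaps hypothesis gives $h(T(\id_1), p) \oplus h(T(\id_2), p+1) = h(T(\id_2), p) \oplus h(T(\id_1), p+1)$, and since the prefix and suffix factors are unchanged by the swap, associativity shows the total accumulation is unchanged.

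The main obstacle I anticipate is the bookkeeping in this last direction: proving that whenever an adjacent incomparable pair sits at consecutive positions $p,p+1$ in \emph{some} linear extension, those positions genuinely lie in $\pr_P(\id_1,\id_2)$. I expect this to follow by counting: all $a = \card{A_{\id_1} \cup A_{\id_2}}$ ancestors must precede both swapped elements, so position $p \geq a+1$, and all $d = \card{D_{\id_1} \cup D_{\id_2}}$ descendants must follow, so $p+1 \leq \card{\ID} - d$, placing both $p$ and $p+1$ inside $[a+1, \card{\ID}-d]$. The other ingredient I rely on — that adjacent transpositions of incomparable consecutive elements connect all linear extensions — I would invoke as a known combinatorial fact about posets; if a self-contained argument is wanted, it follows by induction on the number of inversions between two linear extensions, picking the last position where they disagree and bubbling the correct element into place through incomparable adjacent swaps. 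Combining both directions yields the claimed equivalence, and Theorem~\ref{thm:certaintyptimec} then follows by using Lemma~\ref{lem:safeptime} to test safe swaps in PTIME.
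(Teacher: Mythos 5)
Your proposal is correct, and both directions go through. The forward direction coincides with the paper's: place the offending pair at positions $p,p+1$ using Lemma~\ref{lem:achieverank}, swap them, and use cancellativity to conclude that the two full accumulation values differ --- this is indeed the only point where cancellativity is genuinely needed. For the converse, your route differs in its packaging. The paper argues by contradiction: it takes two possible worlds with distinct accumulation values whose common prefix is as long as possible, and then ``bubbles'' the first element of one suffix to the front of the other suffix through adjacent swaps with incomparable elements, each swap preserving the value by safe swaps; this contradicts maximality of the common prefix. You instead factor the argument through the standard combinatorial fact that adjacent transpositions of consecutive incomparable elements connect all linear extensions of a finite poset, and then only need to show that one such swap preserves the accumulation value. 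The underlying mechanism is thus the same (the paper's bubbling step is precisely the induction you sketch for connectedness), but your decomposition buys a direct proof with no contradiction and no appeal to cancellativity in this direction, and it isolates the purely order-theoretic content in a reusable lemma; the paper's version is self-contained and only needs to move a single element to the front rather than establish full connectedness. Your counting argument that consecutive positions $p, p+1$ of incomparable elements in a linear extension always lie in $\pr_P(\id_1,\id_2)$ is exactly the verification the paper also performs (there phrased as $j, j+1 \in \pr_{\OR'}(\id'_j, \id_1)$), and it is the one bookkeeping step that genuinely must be checked for the safe-swaps hypothesis to apply.
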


\begin{proof}
  For the forward direction, assume that $\OR$ does \emph{not} have the safe swaps
  property. Hence, there exist two incomparable identifiers $\id_1$ and $\id_2$ in
  $\OR$ and a pair of consecutive integers $p, p+1$ in $\pr_\OR(\id_1, \id_2)$ such
  that:
  \begin{equation}\label{eq:diseq}h(T(\id_1), p) \oplus h(T(\id_2), p+1) \neq h(T(\id_2), p) \oplus
  h(T(\id_1), p+1)\end{equation}
  We use Lemma~\ref{lem:achieverank}
  to compute
  two possible worlds $L$ and $L'$ of $\OR$, where $\id_1$
  and $\id_2$ occur respectively at positions $p$ and $p+1$ in~$L$, and at
  positions $p+1$ and $p$ respectively in~$L'$: from the proof of
  Lemma~\ref{lem:achieverank} it is clear that we can ensure that $L$ and $L'$
  are otherwise identical.
  As accumulation is associative, we know that $\accum_{h,\oplus}(\OR) = v
  \oplus h(T(\id_1), p) \oplus h(T(\id_2), p+1) \oplus v'$,
  where $v$ is the result of accumulation on the tuples in $L$ before $\id_1$,
  and $v'$ is the result of accumulation on the tuples in $L$ after $\id_2$.
  Likewise, $\accum_{h,\oplus}(\OR) = v \oplus h(T(\id_2), p) \oplus h(T(\id_1),
  p+1) \oplus v'$. We then use cancellativity of~$\calM$ to deduce that these
  two values are different thanks to Equation~\eqref{eq:diseq}. Hence,
  $L$ and $L'$ are possible worlds of~$\OR$ that yield different accumulation
  results, so we conclude that $\card{\accum_{h,\oplus}(\OR)} > 1$.

  \medskip

  For the backward direction, assume that $\OR$ has the safe swaps property.
  Assume by way of contradiction that there are two possible worlds $L_1, L_2
  \in \pw(\OR)$ such that 
  $w_1 \colonequals \accum_{h,\oplus}(L_1)$ and
  $w_2 \colonequals \accum_{h,\oplus}(L_2)$ are different.
  Take $L_1$ and $L_2$ to have the longest possible common prefix,
  i.e., the first position $i$ such that $L_1$ and $L_2$ enumerate a different
  identifier at position~$i$ is as large as possible.
  Let $0 \leq i_0 < \card{\OR}$ be the length of the
  common prefix. Let $\OR'$ be the result of removing from $\OR$ 
  the identifiers enumerated in the common prefix of $L_1$ and $L_2$,
  and let $L_1'$ and $L_2'$ be $L_1$ and $L_2$ without their common prefix.
  Let $\id_1 \neq \id_2$ be the first identifiers
  enumerated by $L_1'$ and
  $L_2'$; it is immediate that $\id_1$ and $\id_2$ are roots of the underlying
  poset of~$\OR'$, that is, no
  element of $\OR'$ is less than them. Further, it is clear
  that the result $w_1'$ of performing accumulation over $L_2'$ (but offsetting all ranks
  by $i_0$), and the result $w_2'$ of 
  performing accumulation over $L_1'$ (also offsetting all ranks by $i_0$),
  are different. Indeed, by the contrapositive
  of cancellativity, combining $w_1'$ and $w_2'$ with the accumulation result of the common
  prefix leads to the different accumulation results $w_1$ and $w_2$.

  Our goal is to
  construct a possible world $L_3' \in \pw(\OR')$ which starts by enumerating
  $\id_1$ but ensures that the result of accumulation on $L_3'$ (again
  offsetting all ranks by~$i_0$) is $w_2'$.
  If we can build such a possible world $L_3'$, then combining it
  with the common prefix will give a possible world $L_3$ of $\OR$ such that the
  result of accumulation on $L_3$ is $w_2 \neq w_1$, yet $L_1$ and $L_3$
  have a common prefix of length $>i_0$, contradicting minimality. Hence, it
  suffices to show how to construct such a possible world $L_3'$.

  As $\id_1$ is an identifier of~$\OR'$, there must be a position where $L_2'$
  enumerates~$\id_1$, and all identifiers before~$\id_1$ in~$L_2'$ cannot be
  descendants of~$\id_1$: as $\id_1$ is a root of~$\OR'$, these identifiers must
  be incomparable to~$\id_1$.
  Write the sequence of these identifiers in~$L_2'$ as
  $L_2'' = \id'_1, \ldots, \id'_m$, and let $L_2'''$ be the
  sequence following $\id_1$, so that $L_2'$ is the concatenation of $L_2''$,
  $\id_1$, and $L_2'''$. We now consider the following sequence of
  list relations, which are clearly possible worlds of $\OR'$, where we
  intuitively move~$\id_1$ to the beginning of the list via successive swaps:
  \begin{flalign*}
    \id'_1 \ldots \id'_{m-2} ~ \id'_{m-1} ~ \id'_m &~ \underline{\id_1} ~ L_2''',&\\
    \id'_1 \ldots \id'_{m-2} ~ \id'_{m-1} &~ \underline{\id_1} ~ \id'_m ~ L_2''',&\\
    \id'_1 \ldots \id'_{m-2} &~\underline{\id_1} ~ \id'_{m-1} ~ \id'_m ~ L_2''',&\\
    &~~\, \vdots &\\
    \id'_1 ~ \id'_{2} &~\underline{\id_1} ~ \id'_{3} \ldots \id'_{m-2} ~\id'_{m-1} ~\id'_m ~
    L_2''',&\\
    \id'_1 &~ \underline{\id_1} ~ \id'_{2} ~\id'_3 \ldots \id'_{m-2} ~\id'_{m-1} ~\id'_m ~
    L_2''',&\\
    &~ \underline{\id_1} ~ \id'_{1} ~\id'_2 ~\id'_3 \ldots \id'_{m-2} ~\id'_{m-1} ~\id'_m ~
    L_2'''.&\\
  \end{flalign*}

  We can see that any consecutive pair in this list achieves the same
  accumulation result. To do so, consider any pair of consecutive
  lists in this sequence, and observe that the two lists only differ at two successive
  identifiers, i.e., the first list contains $\id'_j \id_1$ and the second
  contains $\id_1 \id'_j$ for some $1 \leq j \leq m$. 
  Thus, it suffices to show that the accumulation result for $\id'_j \id_1$ and
  $\id_1 \id'_j$ is the same, and this is exactly what
  the safe swaps property for $\id_1$ and $\id'_j$ says, as it is easily checked that
  $j, j+1 \in \pr_{\OR'}(\id'_j, \id_1)$, so that $j+i_0, j+i_0+1 \in
  \pr_\OR(\id'_j, \id_1)$. 
  Now, the first list relation above is $L_2'$, and the last list relation above
  starts by~$\id_1$, so we have built our desired $L_3'$. This
  establishes the second direction of the proof and concludes.
\end{proof}

We are now ready to prove Theorem~\ref{thm:certaintyptimec}.

\begin{proof}
Given the instance $(D, v)$ of the \cert problem for the query $Q$ with
  accumulation operator $\accum_{h, \oplus}$, we use
  Proposition~\ref{prp:repsys} to build
  $\OR\colonequals Q(D)$ in PTIME. We then use Lemma~\ref{lem:safeptime}  to
test in PTIME whether $\OR$ has safe swaps with
  respect to~$\oplus$ and~$h$. If it does not, then, by
  Lemma~\ref{lem:auxcancel},
$v$ cannot be certain, so $(D, v)$ is not a positive instance of \cert. If
  it does, then, by Lemma~\ref{lem:auxcancel}, $Q(D)$ has only one
  possible world, so we can compute an arbitrary linear extension of~$\OR$,
  obtain one possible world $L \in \pw(\OR)$, check whether
  $\accum_{h,\oplus}(L) = v$, and decide \cert accordingly.
\end{proof}

We have shown Theorem~\ref{thm:certaintyptimec} on \PosRAacc queries. Note that
this result clearly implies that \cert is also tractable for \PosRA queries, as
we claimed in Section~\ref{sec:posscert}: indeed, we can translate any \PosRA
query to a \PosRAacc query that uses a dummy accumulation operator in the
concatenation monoid, and hence the \cert problem for \PosRA
queries reduces to the \cert problem for \PosRAacc queries in this fixed
cancellative monoid. The same
reasoning applied to Theorem~\ref{thm:posscompextend1} implies that the \poss
problem for \PosRAacc is NP-hard even on cancellative monoids, in contrast with
Theorem~\ref{thm:certaintyptimec}.

\subsection{Finite and Position-Invariant Accumulation}
We have shown that \cert (but not \poss) is tractable on cancellative
accumulation operators. It is then natural to wonder whether a similar result
holds when assuming that accumulation is finite and position-invariant (recall
Definition~\ref{def:accrestr}). We will now show that these restrictions do
\emph{not} suffice to make \poss and \cert tractable. However, we will show in
Section~\ref{sec:revisiting} that they can ensure tractability when we combine
them with assumptions on the input po-relations.

We start by showing that \poss is intractable.
  \begin{theorem}\label{thm:possfrihypoposs}
  There is a \PosRAagg{} query with a finite and position-invariant accumulation
    operator
  for which \poss is NP-hard 
        even assuming that the input po-database contains only totally ordered po-relations.
  \end{theorem}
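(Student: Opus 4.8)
The plan is to reduce from UNARY-3-PARTITION, reusing the machinery of Theorem~\ref{thm:posscompextend1} to produce, from totally ordered inputs and a fixed \PosRA query, a po-relation whose possible worlds are the relevant shuffles, and then to read off the answer with a \emph{finite, position-invariant} accumulation operator instead of by testing equality with a fixed target list relation (which is what the infinite concatenation monoid of Lemma~\ref{lem:addacc} would do). Concretely, I would fix a regular language $R$ over the output alphabet and take the accumulation monoid $\calM$ to be its syntactic monoid augmented with a sink; the accumulation map $h$ sends each tuple value to the $\calM$-image of its letter and \emph{ignores the position} (so it is position-invariant), and $\oplus$ is the monoid product. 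With this choice, the set of possible results of $Q \defeq \accum_{h,\oplus}(Q_0)$ on a po-database $D$ is exactly $\{\mu(w) \mid w \in \pw(Q_0(D))\}$, where $\mu$ is the induced morphism; hence the designated value $v \in \calM$ coding acceptance is a possible result iff $\pw(Q_0(D)) \cap R \neq \emptyset$. As $\calM$ is finite and fixed and $h,\oplus$ are clearly PTIME-evaluable, it remains to design $Q_0$, a totally ordered $D$, and a fixed $R$ so that $\pw(Q_0(D)) \cap R \neq \emptyset$ iff the instance is positive.

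The core combinatorial target is the one already used in Theorems~\ref{thm:posscomp1} and~\ref{thm:posscompextend1}: from a UNARY-3-PARTITION instance $(E,B)$ with $E=(n_1,\dots,n_{3m})$, one considers the parallel composition of $3m$ data chains, the $i$-th being $\s\,\n^{n_i}\,\e$; a valid $3$-partition then corresponds exactly to an interleaving grouped into $m$ blocks, each block opening three chains, draining exactly $B$ inner tuples, and closing the three chains. Theorem~\ref{thm:posscompextend1} already shows how to obtain these mutually independent chains from totally ordered relations with a fixed query, through the $S \times_\gen S'$ grid together with the auxiliary prefix and suffix $L_1,L_2$ that peel off the surplus row-copies and neutralize the cross-row constraints. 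I would reuse this verbatim, so that grid-possibility reduces to possibility on the clean parallel-chain residue.

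The one property the accumulation must verify but a finite monoid \emph{cannot} compute is that each block drains exactly $B$ inner tuples: a fixed finite monoid cannot count up to the input-dependent bound~$B$. I would resolve this \emph{structurally} by adding, via union (hence parallel composition), a single totally ordered reference chain $(\sigma\,\nu^{B}\,\eta)^{m}$ built in PTIME from the instance, and by choosing $R$ around the block pattern $(\sigma\,\s\s\s\,(\nu\,\n)^{*}\,\eta\,\e\e\e)^{*}$. The alternation $(\nu\,\n)^{*}$ forces, inside each block, the number of data tuples $\n$ to equal the number of reference tuples $\nu$; since the totally ordered reference supplies exactly $B$ copies of $\nu$ strictly between consecutive markers $\sigma,\eta$, each block is forced to drain exactly $B$ inner tuples, so each triple of chains sums to $B$. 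This emulates the forbidden counting with a position-invariant, finite-state device, and this is the conceptual heart of the argument.

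For soundness, a valid $3$-partition yields an interleaving in $R$: for each block, open the triple's three chains (three $\s$), pair each of their $B$ inner tuples with one reference $\nu$ in the alternation, emit $\eta$, and close the three chains (three $\e$). For completeness, any possible world in $R$ has exactly $m$ blocks (one per $\sigma$); by the balanced-bracket argument of Theorem~\ref{thm:posscomp1} each block closes precisely the three chains it opens and therefore drains all and only their inner tuples, while the reference bracketing pins the inner count of block $t$ to the $B$ reference $\nu$'s lying between $\sigma_t$ and $\eta_t$. The main obstacle is exactly this no-counting limitation of finite monoids, which the reference/alternation gadget removes. The secondary technical burden, which I expect to be routine given Theorem~\ref{thm:posscompextend1}, is reconciling the finite-state check with the grid's auxiliary $L_1,L_2$ material: since those portions have the regular \emph{shape} $(\s\,\n^{*}\,\e)^{*}$ and are distinguishable from block openings by the consecutive $\s\s$, they can be folded into a single fixed $R$ by placing them outside the reference-bracketed region and re-running the residue analysis of Theorem~\ref{thm:posscompextend1}.
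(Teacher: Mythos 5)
Your core idea is the same as the paper's: union in a totally ordered \emph{reference} chain and use a fixed regular language, evaluated in a finite monoid via a position-invariant map, to force a strict alternation between reference symbols and data symbols, thereby delegating the input-dependent counting (which a finite monoid cannot do) to the structure of the input. However, the step you dismiss as routine is precisely where your reduction breaks. In the proof of Theorem~\ref{thm:posscompextend1}, the prefix $L_1$ and suffix $L_2$ are \emph{exact} words whose block multiplicities ($3m-i$ and $i-1$ copies of the block of $n_i$) force every successful linear extension to leave, for the middle part, exactly one row-block per integer of~$E$; this is what makes the middle part equivalent to 3-PARTITION of the multiset~$E$. Once you replace this equality test by the shape-only check $(\s\n^*\e)^*$, the linear extension may consume a \emph{different} multiset of row-blocks in the junk regions, and the middle region may then open triples of row-blocks in which the same integer of~$E$ is reused from several different rows of the grid. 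Your construction then decides a relaxed problem --- whether $B$ can be written $m$ times as a sum of three values drawn from $E$ \emph{with repetitions} (subject to feasibility under the grid's staircase constraints) --- which is not 3-PARTITION.

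This produces actual false positives. Take $E = (1,1,2,8,6,6)$, $m = 2$, $B = 12$ (so $\sum E = mB$). This instance is negative: the triple containing $8$ would need two further elements of~$E$ summing to~$4$, and no such pair exists. Yet in your construction a word of~$R$ is achievable: number the types so that types $1,2$ have value $1$, type $3$ has value $2$, type $4$ has value $8$; let the junk prefix consume, as complete blocks, types $1,2,3$ of row $1$ and types $1,2$ of rows $2$--$5$; let the first middle block open the type-$4$ block of row $1$ and the type-$3$ blocks of rows $2,3$ (draining $8+2+2=12=B$ inner tuples in alternation with the reference $\nu$'s); let the second middle block open the type-$4$ block of row $2$ and the type-$3$ blocks of rows $4,5$; and let the junk suffix consume everything else. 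One checks that all within-row and cross-row constraints of the grid are respected, so your reduction answers yes on a negative instance. The paper avoids exactly this trap by making the reference chain encode the \emph{entire} candidate word $w = L_1 L' L_2$ of Theorem~\ref{thm:posscompextend1} over a disjoint copy of the alphabet, and taking the regular language to be the full letter-by-letter alternation $\l\,(\s_-\s_+ \mid \n_-\n_+ \mid \e_-\e_+)^*\,\r$ (Lemma~\ref{lem:balred}): the finite monoid then only checks alternation, equality with $w$ --- including the exact junk $L_1, L_2$ --- is enforced structurally, and the forcing analysis of Theorem~\ref{thm:posscompextend1} is inherited verbatim. A secondary point you would also need to repair: membership in a regular language corresponds in general to an accepting \emph{set} of syntactic-monoid elements, whereas \poss asks about one designated value; the paper handles this with the outer markers $\l, \r$ and Lemma~\ref{lem:transmon}, which guarantee that only two monoid values are achievable at all.
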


    To prove this result, we define the following finite domains:

\begin{itemize}
  \item $\calD_- \defeq \{\s_-, \n_-, \e_-\}$ (the element names used here
    intuitively correspond to the names used in
     the proof of Theorem~\ref{thm:posscompextend1});
  \item $\calD_+ \defeq \{\s_+, \n_+, \e_+\}$;
  \item $\calD_{\pm} \defeq \calD_- \sqcup \calD_+ \sqcup \{\l, \r\}$ (the additional
    elements stand for ``left'' and ``right'').
\end{itemize}
We define the following regular expression on $\calD_{\pm}^*$, and call
\deft{balanced} a word that satisfies it:
\[
  e \defeq \l \left(\s_- \s_+ | \n_- \n_+ | \e_- \e_+\right)^* \r
\]

We now define the following problem.

\begin{definition}
  The \deft{balanced checking problem} for a \PosRA query $Q$ asks, given a po-database $D$ of
  po-relations over $\calD_{\pm}$, whether there is $L \in \pw(Q(D))$ such that $L$
  is balanced, i.e., it has arity~$1$, its domain is~$\calD_{\pm}$, and $L$
  satisfies $e$ when seen as a 
  word over $\calD_{\pm}$.
\end{definition}

We also introduce the following regular
expression: $e' \defeq \l\, \calD_{\pm}^*\, \r$, which we will use later to guarantee that
there are only two possible worlds. We now show that the balanced checking
problem is intractable.

\begin{lemma}
  \label{lem:balhard}
  There exists a \PosRA query $Q_\b$ over po-databases with domain in~$\calD_{\pm}$
  such that the balanced checking problem for $Q_\b$ is
  NP-hard, even when all input po-relations are totally ordered.
  Further, $Q_\b$ is such that, for any input po-database~$D$, all possible worlds of~$Q_\b(D)$ satisfy $e'$.
\end{lemma}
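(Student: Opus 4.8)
The plan is to adapt the UNARY-3-PARTITION reduction of Theorem~\ref{thm:posscompextend1}, which already uses only totally ordered inputs, by pairing a $\calD_-$-valued copy of its construction (whose possible worlds are uncertain) with a fixed, totally ordered $\calD_+$-valued copy that rigidly spells the target word; balancedness will then force the uncertain copy to realize exactly that target. Let me write $w$ for the target list relation $L_1 L' L_2$ of the proof of Theorem~\ref{thm:posscompextend1}, and write $w^-$ and $w^+$ for the words obtained from $w$ by using the symbols of $\calD_-$, resp.\ $\calD_+$, for each of $\s, \n, \e$.

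Given a UNARY-3-PARTITION instance $\calI$, I reuse the totally ordered inputs $S$ and $S'$ of that proof, but over the alphabet $\calD_-$, so that (by Theorem~\ref{thm:posscompextend1}) the grid $\Pi_2(S \times_\gen S')$ has $w^-$ as a possible world iff $\calI$ is positive. I add one further totally ordered input relation $T$, interpreted as the totally ordered po-relation whose unique possible world spells $w^+$; note $\card{T} = \card{w} = \card{\Pi_2(S\times_\gen S')}$ by the cardinality equality established in Theorem~\ref{thm:posscompextend1}. The fixed query is
\[
  Q_\b \colonequals \singleton{\langle \l \rangle} \cupcat \big(\Pi_2(S \times_\gen S') \cupgen T\big) \cupcat \singleton{\langle \r \rangle},
\]
where $\cupcat$ is available in \PosRA by (repeated use of) Lemma~\ref{lem:lexconcat}. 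Since series composition places $\l$ strictly before and $\r$ strictly after every other tuple, and the middle never produces $\l$ or $\r$, every possible world of $Q_\b(D)$ has the form $\l \cdots \r$ with no other occurrence of $\l, \r$; this holds for every input $D$ (including the degenerate case $\l\r$), establishing the required $e'$ property.

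For correctness, I would note that a possible world of $Q_\b(D)$ is $\l$, followed by an arbitrary interleaving of a possible world of the grid (all symbols in $\calD_-$) with the unique possible world $w^+$ of $T$ (all symbols in $\calD_+$), followed by $\r$. A balanced word is, by definition of $e$, a strict alternation of a $\calD_-$ symbol and its matching $\calD_+$ symbol; since the grid and $T$ have equal cardinality, such a possible world exists iff the interleaving is perfectly alternating, the $\calD_+$ positions read off $w^+$ (forced by the total order on $T$), and the $\calD_-$ positions therefore spell the word with the same base types, i.e.\ exactly $w^-$. Hence $Q_\b(D)$ has a balanced possible world iff $w^- \in \pw(\Pi_2(S \times_\gen S'))$, which by Theorem~\ref{thm:posscompextend1} holds iff $\calI$ is positive. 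The reduction builds the (unary) inputs in PTIME, so balanced checking for $Q_\b$ is NP-hard, and by construction all inputs are totally ordered.

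The main obstacle is the backward direction of this equivalence: I must argue that balancedness rigidly pins down both halves of the interleaving. The alternation requirement of $e$ together with $\card{T} = \card{\Pi_2(S\times_\gen S')}$ forces a perfect alternation consuming all tuples; the total order on $T$ fixes the $\calD_+$ subsequence to be $w^+$; and since a base-type sequence determines its $\calD_-$ word uniquely, the $\calD_-$ subsequence is pinned to $w^-$, ruling out any spurious balanced world arising from a different grid possible world. The remaining length bookkeeping and PTIME-ness are inherited directly from Theorem~\ref{thm:posscompextend1}.
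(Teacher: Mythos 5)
Your proposal is correct and rests on exactly the same gadget as the paper's proof: wrap the middle in $\l \cdots \r$ via $\cupcat$, union the uncertain relation with a totally ordered copy of the target word written in the opposite polarity, and let balancedness (the forced alternation of matching $\calD_-$/$\calD_+$ pairs, plus equal cardinalities and the total order on the certain copy) pin the uncertain possible world to the target word.

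The only differences are cosmetic: you swap the roles of the polarities (your uncertain grid lives in $\calD_-$ and the certain reference word in $\calD_+$, while the paper does the opposite), and you reduce directly from UNARY-3-PARTITION by inlining the grid construction of Theorem~\ref{thm:posscompextend1}, whereas the paper reduces from \poss for that query treated as a black box via its generic Lemma~\ref{lem:balred} --- both are equally valid, and your slight overclaim that possible worlds contain ``no other occurrence of $\l,\r$'' for arbitrary inputs is harmless since $e'$ only requires the first symbol to be $\l$ and the last to be $\r$.
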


To prove this lemma, recall the definition of $\cupcat$
(Definition~\ref{def:concat}), and recall from Lemma~\ref{lem:lexconcat} that
$\cupcat$ can be expressed by a \PosRA query.
We construct the query $Q'_\b(R, T) \defeq
\singleton{\l}\cupcat((R \cupgen T)\cupcat \singleton{\r})$,
i.e., 
the union of $R$ and $T$, preceded by $\l$ and followed by~$\r$. 

For any word $w \in \calD_+^*$,
we denote by~$L^+_w$ the unary list
relation defined by mapping each
letter of $w$ to the corresponding letter in $\calD_+$, 
we define $L^-_w$ analogously for~$\calD_-$, and we 
write $\OR^-_w$ for the totally ordered po-relation with $\pw(\OR^-_w) =
\{L^-_w\}$. We now claim that the balanced checking problem for~$Q'_\b$ can be
rephrased in terms of the possibility problem.

\begin{lemma}
  \label{lem:balred}
  For any $w \in \calD_+^*$ and unary po-relation $\OR$ over $\calD_+$, we have
  $L^+_w \in
  \pw(\OR)$ iff the po-database $D$ mapping $R$ to~$\OR^-_w$ and~$T$ to~$\OR$ is a positive instance to the
  balanced checking problem for $Q'_\b$.
\end{lemma}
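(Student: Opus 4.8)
The plan is to first determine the shape of $\pw(Q'_\b(D))$ and then prove each direction by an interleaving argument. Since $Q'_\b(R,T) = \singleton{\l}\cupcat((R \cupgen T)\cupcat\singleton{\r})$ and $\cupcat$ denotes series composition (Definition~\ref{def:concat}), every possible world of $Q'_\b(D)$ has the form $\l \cdot M \cdot \r$, where $M \in \pw(\OR^-_w \cupgen \OR)$. As union is parallel composition and $\OR^-_w$ is totally ordered with unique possible world $L^-_w$, the words $M$ range exactly over the interleavings of the fixed sequence $L^-_w$ with some $L \in \pw(\OR)$. Hence $D$ is a positive instance of balanced checking for $Q'_\b$ iff some such interleaving, flanked by $\l$ and $\r$, matches the balanced expression $e$.

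Write $w = c_1 \cdots c_n$ with each $c_i \in \calD_+$, and let $\overline{c_i} \in \calD_-$ be the matching $\calD_-$-letter, so that $L^+_w = c_1 \cdots c_n$ and $L^-_w = \overline{c_1} \cdots \overline{c_n}$. For the forward direction I assume $L^+_w \in \pw(\OR)$ and take the \emph{alternating} interleaving of $L^-_w$ and $L^+_w$, namely $\overline{c_1}\, c_1\, \overline{c_2}\, c_2 \cdots \overline{c_n}\, c_n$. Each consecutive block $\overline{c_i}\, c_i$ is one of $\s_-\s_+$, $\n_-\n_+$, $\e_-\e_+$, so prefixing $\l$ and suffixing $\r$ produces a word matching $e$; by the characterization above this witnesses that $D$ is positive.

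For the converse, suppose $D$ is positive, so there is a balanced word $\l \cdot M \cdot \r$ with $M$ an interleaving of $L^-_w$ and some $L' \in \pw(\OR)$. Matching $e$ means $M = q_1 \cdots q_k$ where each block $q_j \in \{\s_-\s_+, \n_-\n_+, \e_-\e_+\}$ is a $\calD_-$-letter immediately followed by the equal-typed $\calD_+$-letter. Since $L^-_w$ supplies all and only the $\calD_-$-letters of $M$, and $L'$ supplies all and only the $\calD_+$-letters, the subsequence of $\calD_-$-letters of $M$, read in order, is exactly $L^-_w = \overline{c_1} \cdots \overline{c_n}$; this forces $k = n$ and $q_j = \overline{c_j}\, c_j$. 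Consequently the $\calD_+$-subsequence of $M$ is $c_1 \cdots c_n = L^+_w$, and as this subsequence equals $L'$, we conclude $L^+_w = L' \in \pw(\OR)$.

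The characterization of $\pw(Q'_\b(D))$ and the forward direction are routine given the series/parallel composition semantics; the delicate point is the converse alignment argument. There one must use that $\OR^-_w$ is totally ordered to guarantee that the $\calD_-$-letters of $M$ appear in the single fixed order $\overline{c_1} \cdots \overline{c_n}$, so that balancedness pairs each $\overline{c_j}$ with an equal-typed $\calD_+$-letter and thereby pins the $\calD_+$-subsequence, i.e.\ $L'$, down to $L^+_w$.
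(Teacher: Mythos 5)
Your proof is correct and takes essentially the same route as the paper's: the forward direction exhibits the alternating interleaving $\overline{c_1}\,c_1\cdots\overline{c_n}\,c_n$ flanked by $\l$ and $\r$, and the converse uses the fact that matching $e$ forces the interleaving to alternate, with the totally ordered $\calD_-$-letters pinning each block down to $\overline{c_j}\,c_j$ and hence the $\calD_+$-subsequence $L'$ down to $L^+_w$. Your converse merely spells out in more detail what the paper asserts in one sentence, so there is no substantive difference.
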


\begin{proof}
  For the forward direction, assume that $w$ is indeed a possible world~$L$
  of~$\OR$ and let us construct a balanced possible world~$L'$ of $Q'_\b(D)$.
  $L'$ starts with $\l$. Then, $L'$ alternatively enumerates
  one tuple from $\OR^-_w$ (in their total order) and one from $\OR$ (taken in the
  order of the linear extension that yields $L$).
  Finally, $L'$ ends with $\r$. It is clear that $L'$ is balanced.

  For the backward direction, observe that a
  balanced possible world of $Q'_\b(D)$ must start by~$\l$, finish by~$\r$, and, between the two,
  it must alternatively enumerate tuples from~$\OR^-_w$ in their total order and
  tuples from one of the possible worlds
  $L \in \pw(\OR)$: it is clear that $L$ then achieves $w$.
\end{proof}

We now use Lemma~\ref{lem:balred} to prove Lemma~\ref{lem:balhard}.

\begin{proof}
  By Theorem~\ref{thm:posscompextend1}, there is a query~$Q_0$ in
  \PosRA{} such
  that the \poss problem for~$Q_0$ is
  NP-hard, even for totally ordered input relations. What is more, by inspecting the construction in the proof of
  Theorem~\ref{thm:posscompextend1}, we can observe that the output arity
  of~$Q_0$ is~1, and that the input relations can be assumed to have
  domain~$\calD_+$: indeed,
  the input po-relation $S$
  defined as $\ordern{3m-1}$ uses labels that are irrelevant (they are projected
  away), and the input po-relation $S'$ uses only labels from $\{\s, \n, \e\}$,
  so we can rename them to $\{\s_+, \n_+, \e_+\}$.
  We now define the \PosRA query
  $Q_\b$: its input relations are those of~$Q_0$ plus a fresh relation name~$R$,
  and it maps any po-relation $\OR'$ for~$R$ and input po-database~$D$ for~$Q_0$
  to~$Q'_\b(\OR', Q_0(D))$. By definition of $Q'_\b$, our query $Q_\b$ clearly 
  satisfies the additional condition that all possible worlds satisfy
  $e'$.
  
  We reduce the \poss problem for $Q_0$ to the balanced checking problem for
  $Q_\b$ in PTIME. More specifically, we claim that $(D, L)$ is a positive instance to
  \poss for $Q_0$ iff $D'$ is a positive instance to the balanced checking
  problem for~$Q_\b$, where $D'$ is obtained from~$D$ by adding the totally ordered
  relation $\OR^-_w$ to interpret the fresh name~$R$, with $w$ the word
  on~$\calD_+$ achieved
  by~$L$. But this
  is exactly what Lemma~\ref{lem:balred} shows, for $\OR \colonequals Q_0(D)$. This concludes the reduction, so
  we have shown that the balanced checking problem for $Q_\b$ is NP-hard, even
  assuming that the input po-database (here, $D'$) contains only totally ordered
  po-relations.
\end{proof}

To prove our hardness result for \poss (Theorem~\ref{thm:possfrihypoposs}),
we will now reduce the balanced
checking problem to \poss, using an accumulation operator to do
the job. We will further ensure that there are at most two possible
results, which will be useful for \cert later.
To do this, we need to introduce some new concepts.

We define a deterministic complete finite automaton $A$ as follows, where all
omitted transitions go to a sink state $q_\bot$ not shown in the picture.
It is clear that~$A$ recognizes the
language of the regular expression~$e$.

\medskip
\begin{tikzpicture}[->]
  \node[initial,state] (qi) at (-4, 0) {$q_{\ii}$};
  \node[state] (q) at (0, 0) {$q$};
  \node[accepting,state] (qf) at (4, 0) {$q_{\f}$};
  \node[state] (qs) at (-3, 2) {$q_\s$};
  \node[state] (qn) at (0, 2) {$q_\n$};
  \node[state] (qe) at (3, 2) {$q_\e$};
  \path (qi) edge node [below] {$\l$} (q)
        (q)  edge node [below] {$\vphantom{\l}\r$} (qf)
             edge [bend left=15] node [below] {$\s_+$} (qs) 
             edge [bend left=15] node [left,yshift=10] {$\n_+$} (qn) 
             edge [bend left=15] node [above] {$\e_+$} (qe) 
        (qs) edge [bend left=15] node [above] {$\s_-$} (q) 
        (qn) edge [bend left=15] node [right,yshift=10] {$\n_-$} (q) 
        (qe) edge [bend left=15] node [below] {$\e_-$} (q) 
        ;
\end{tikzpicture}
\medskip

We let $S$ be the state space of~$A$, and use it to define the \deft{transition monoid} of~$A$, which is a finite
monoid (so we are indeed performing finite accumulation). 
Let $\calF_S$ be the finite set of
total functions from~$S$ to $S$, and consider the monoid defined on
$\calF_S$ with the identity function~$\mathrm{id}$ as the neutral element, and
with function composition $\circ$ as the (associative) binary operation. We
define inductively a mapping $h$ from $\calD_{\pm}^*$ to
$\calF_S$ as follows, which can be understood as a homomorphism from
the free monoid $\calD_{\pm}^*$ to the transition monoid of~$A$:

\begin{itemize}
  \item For $\epsilon$ the empty word, $h(\epsilon)$ is the identity
    function $\mathrm{id}$.
  \item For $a \in \calD_{\pm}$, $h(a)$ is the transition table for symbol
    $a$ for the automaton $A$, i.e., the function that maps each state $q \in S$
    to the one state $q'$ such that there is an $a$-labeled transition from $q$
    to $q'$; the fact that $A$ is deterministic and complete is what ensures
    that this is well-defined.
  \item For $w \in \calD_{\pm}^*$ and $w \neq \epsilon$, writing $w = a w'$ with
    $a \in \calD_{\pm}$, we define $h(w) \defeq h(w') \circ h(a)$.
\end{itemize}

It is easy to show inductively that, for any $w \in \calD_{\pm}^*$, and
for any $q \in S$, the state $(h(w))(q)$ is the one that we reach in $A$
when reading the word $w$ from the state~$q$. We will identify two special
elements of $\calF_S$:

\begin{itemize}
  \item $f_0$, the function mapping every state of $S$ to the sink state
    $q_{\bot}$;
  \item $f_1$, the function mapping the initial state $q_{\ii}$ to the final
    state $q_{\f}$, and mapping every other state in $S \backslash \{q_{\ii}\}$
    to $q_{\bot}$.
\end{itemize}

Recall the definition of the regular expression $e'$ earlier. We
claim the following property on the automaton $A$.

\begin{lemma}
  \label{lem:transmon}
  For any word $w \in \calD_{\pm}^*$ that matches $e'$, we have $h(w) =
  f_1$ if $w$ is balanced (i.e., satisfies $e$) and $h(w) = f_0$ otherwise.
\end{lemma}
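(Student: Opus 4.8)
The plan is to exploit the fact that $f_0$ and $f_1$ agree on every state except the initial one: both send each state other than $q_\i$ to the sink $q_\bot$, and they differ only in that $f_1(q_\i) = q_\f$ whereas $f_0(q_\i) = q_\bot$. Since $h(w)(q)$ is by construction the state reached by running $A$ on $w$ starting from $q$, I would prove the claim by analysing the run of $A$ on $w$ separately from each starting state. The work then splits into two independent pieces: first, showing that every non-initial starting state leads to $q_\bot$, irrespective of balancedness; and second, computing $h(w)(q_\i)$ and showing it is $q_\f$ exactly when $w$ is balanced and $q_\bot$ otherwise.

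For the non-initial states I would use only that $w$ matches $e'$, hence begins with the letter $\l$. Inspecting the transition table, $q_\i$ is the unique state whose $\l$-transition avoids the sink; from every other state (including $q_\bot$ itself) the $\l$-transition goes to $q_\bot$, which is absorbing. Thus for any starting state $q \neq q_\i$ the run enters $q_\bot$ already on the first letter and stays there, so $h(w)(q) = q_\bot$. This is precisely the common value of $f_0$ and $f_1$ on $S \setminus \{q_\i\}$, and it holds whether or not $w$ is balanced.

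It remains to compute $h(w)(q_\i)$, which is where balanced and unbalanced words diverge and which I expect to be the delicate step. The easy half is that if $w$ is balanced then it matches $e$, and since $A$ recognises $e$ with unique accepting state $q_\f$, the run from $q_\i$ ends in $q_\f$, giving $h(w)(q_\i) = q_\f$ as needed for $f_1$. The main obstacle is the converse: ruling out that an unbalanced $w$ matching $e'$ lands in some non-accepting, non-sink state rather than in $q_\bot$. I would settle this from the trap structure of the automaton. Writing $w = \l u \r$, the final letter read is $\r$, and $q$ is the only state with a non-sink $\r$-transition (namely to $q_\f$), while once the run reaches $q_\f$ every further letter falls into $q_\bot$. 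Hence $h(w)(q_\i) \in \{q_\f, q_\bot\}$, and it equals $q_\f$ only if the run sits in state $q$ just before the final $\r$; but in that case reading $\r$ yields $q_\f$, so $w$ is accepted and therefore balanced. Contrapositively, if $w$ is unbalanced the run cannot be in $q$ before the final $\r$, so reading $\r$ sends it to $q_\bot$ and $h(w)(q_\i) = q_\bot$. Combining the two pieces gives $h(w) = f_1$ when $w$ is balanced and $h(w) = f_0$ otherwise. Note that this argument never uses the exact pairing order between $q$ and $q_\s, q_\n, q_\e$, only the uniqueness of the useful $\l$- and $\r$-transitions, the trap behaviour of $q_\f$, and the fact that $A$ recognises $e$.
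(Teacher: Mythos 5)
Your proposal is correct and follows essentially the same route as the paper's proof: both use the leading $\l$ to send every non-initial starting state into the sink, the trailing $\r$ to confine $h(w)(q_{\i})$ to $\{q_{\f}, q_{\bot}\}$, and the fact that $A$ recognizes the language of $e$ to decide which of the two values occurs. Your slightly more explicit handling of the converse (tracking whether the run sits in $q$ before the final $\r$) is just an expansion of the paper's observation that all $\r$-transitions land in $\{q_{\bot}, q_{\f}\}$.
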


\begin{proof}
  By the definition of~$A$, for any state $q \neq q_{\ii}$, we have $(h(\l))(q) =
  q_{\bot}$,
  so that, as $q_{\bot}$ is a sink state, we have
  $(h(w))(q) = q_{\bot}$ for any $w$ that satisfies $e'$. Further, by
  definition of~$A$, for any state $q$, we have $(h(\r))(q) \in \{q_{\bot},
  q_{\f}\}$, so that, for any state $q$ and $w$ that satisfies $e'$, we have
  $(h(w))(q) \in \{q_{\bot}, q_{\f}\}$. This implies that, for any word $w$
  that satisfies $e'$, we have $h(w) \in \{f_0, f_1\}$.

  Now, as we know that $A$ recognizes the language of $e$, we have the desired
  property, because, for any $w$ satisfying $e'$, $h(w)(q_{\ii})$ is $q_{\f}$
  or not depending on whether $w$ satisfies $e$ or not, so $h(w)$ is $f_1$ or
  $f_0$ depending on whether $w$ satisfies $e$ or not.
\end{proof}

This ensures that we have only two possible accumulation results, and that they
accurately test whether the input word is balanced. We can now prove our
hardness result for \poss, Theorem~\ref{thm:possfrihypoposs}.

\begin{proof}
Consider the query $Q_\b$ whose existence is guaranteed by
Lemma~\ref{lem:balhard}, and remember that  all its possible worlds on any input
  po-database must satisfy~$e'$.
Construct now the query $Q_{\a} \defeq \accum_{h,\circ} (Q_\b)$, using the
  mapping~$h$ that we defined above, seen as a position-invariant accumulation map. We
conclude the proof by showing that \poss
is NP-hard for $Q_{\a}$, even when the input po-database consists only of totally ordered
po-relations. To see that this is the case, we reduce the balanced checking
  problem for $Q_{\b}$ to \poss for $Q_{\a}$ with the trivial reduction: we claim
  that for any po-database $D$, there is a balanced possible world in~$Q_{\b}(D)$ iff $f_1 \in Q_{\a}(D)$,
  which is proved by Lemma~\ref{lem:transmon}. Hence, $Q_{\b}(D)$ is balanced
  iff $(D, f_1)$ is a positive instance of \poss for~$Q_{\a}$. This concludes
  the reduction, and establishes our hardness result.
\end{proof}

We last show an analogue of Theorem~\ref{thm:possfrihypoposs} for \cert as well.

  \begin{theorem}\label{thm:possfrihypocert}
  There is a \PosRAagg{} query with a finite and position-invariant accumulation
    operator
  for which \cert is coNP-hard 
        even assuming that the input po-database contains only totally ordered po-relations.
  \end{theorem}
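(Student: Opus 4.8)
The plan is to \emph{reuse} the accumulation query $Q_{\a} \defeq \accum_{h,\circ}(Q_\b)$ built in the proof of Theorem~\ref{thm:possfrihypoposs}, whose accumulation operator is finite (it lives in the transition monoid $\calF_S$) and position-invariant. The crucial structural fact I will exploit is that, by Lemma~\ref{lem:balhard}, every possible world of $Q_\b(D)$ satisfies the regular expression $e'$, so by Lemma~\ref{lem:transmon} each such possible world accumulates to either $f_0$ or $f_1$. Hence for \emph{every} input po-database $D$ we have $Q_{\a}(D) \subseteq \{f_0, f_1\}$; moreover $Q_{\a}(D)$ is never empty, since a finite poset always admits at least one linear extension, so $Q_\b(D)$ always has at least one possible world.

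From this two-valued-and-nonempty behaviour I get the key equivalence for free: $Q_{\a}(D) = \{f_0\}$ if and only if $f_1 \notin Q_{\a}(D)$. Indeed, a nonempty subset of $\{f_0,f_1\}$ equals $\{f_0\}$ exactly when it omits $f_1$. Combining this with the correctness argument already established in Theorem~\ref{thm:possfrihypoposs}, namely that $f_1 \in Q_{\a}(D)$ iff $D$ is a positive instance of the balanced checking problem for $Q_\b$, I obtain that $(D, f_0)$ is a positive instance of \cert for $Q_{\a}$ if and only if $D$ is a \emph{negative} instance of balanced checking.

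This is precisely a reduction from the complement of balanced checking to \cert, via the trivial map $D \mapsto (D, f_0)$. Since Lemma~\ref{lem:balhard} states that balanced checking for $Q_\b$ is NP-hard even when all input po-relations are totally ordered, its complement is coNP-hard under the same restriction, and the reduction preserves total orderedness of the inputs. Therefore \cert for $Q_{\a}$ is coNP-hard even on totally ordered po-databases, with a finite and position-invariant accumulation operator, which is what the theorem asserts.

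The main point to get right, and the only real obstacle, is the justification that $Q_{\a}(D)$ is a nonempty subset of $\{f_0, f_1\}$ for \emph{all} inputs, not merely for the instances coming from the reduction; this is what licenses replacing ``$f_1 \notin Q_{\a}(D)$'' by ``$Q_{\a}(D) = \{f_0\}$''. It rests entirely on the extra guarantee, baked into $Q_\b$ by Lemma~\ref{lem:balhard}, that all possible worlds satisfy $e'$ (so that Lemma~\ref{lem:transmon} applies to each of them). Had $Q_\b$ allowed possible worlds outside $e'$, the accumulation could take a third value and the clean complementation of \poss into \cert would break down.
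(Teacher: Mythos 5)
Your proof is correct, and it rests on the same structural ingredients as the paper's own argument: the query $Q_{\a}$ from Theorem~\ref{thm:possfrihypoposs}, the guarantee of Lemma~\ref{lem:balhard} that every possible world of $Q_\b(D)$ satisfies $e'$, and Lemma~\ref{lem:transmon}, which together confine $Q_{\a}(D)$ to a nonempty subset of $\{f_0, f_1\}$. The packaging of the reduction differs, however. The paper reduces \poss for $Q_{\a}$ (already known to be NP-hard on totally ordered inputs) to the \emph{negation} of \cert for $Q_{\a}$: given an instance $(D,v)$, it computes the accumulation value $v'$ of an arbitrary linear extension, answers directly when $v=v'$, and otherwise queries \cert on $(D,v')$, using $\card{Q_{\a}(D)} \leq 2$ to argue that non-certainty of $v'$ forces possibility of $v$. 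You instead give a direct many-one reduction from the complement of the balanced checking problem to \cert with the \emph{fixed} candidate result $f_0$: certainty of $f_0$ holds iff $f_1$ is not a possible result, iff $D$ is a negative balanced-checking instance, and the map $D \mapsto (D,f_0)$ preserves totally ordered inputs. Your version is arguably cleaner --- no case split, no computed reference value, and no need to handle arbitrary candidate results --- while the paper's version isolates a slightly more general principle: for any query with at most two possible results, \poss reduces to the complement of \cert without having to identify the two concrete values in advance. Your closing observation that nonemptiness plus two-valuedness is the crux is exactly the point the paper's proof leans on as well; both arguments would indeed collapse if $Q_\b$ admitted possible worlds outside $e'$, since a third accumulation value would break the complementation between possibility of one value and certainty of the other.
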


\begin{proof}
  Consider the query $Q_{\a}$ from Theorem~\ref{thm:possfrihypoposs}.
  We show a
  PTIME reduction from the NP-hard problem of \poss for~$Q_{\a}$ (for totally ordered
  input po-databases) to the negation of
  the \cert problem for $Q_{\a}$ (for input po-databases of the same kind).

  Consider an instance of \poss for $Q_{\a}$ consisting of an input po-database $D$
  and candidate result $v \in \calM$.
  Recall that the
  query $Q_{\a}$ uses accumulation, so it is of the
  form $\accum_{h,\oplus}(Q')$.
  Evaluate $\OR\colonequals Q'(D)$ in
  PTIME by Proposition~\ref{prp:repsys}, and compute in PTIME an
  arbitrary possible world $L' \in \pw(\OR)$ by picking an arbitrary linear
  extension of~$\OR$.
  Let $v'=\accum_{h,\oplus}(L')$. If $v=v'$ then $(D, v)$ is a positive
  instance for \poss for $Q_{\a}$. Otherwise, we have $v\neq v'$. Now, solve
  the \cert problem for $Q_{\a}$ on the input $(D, v')$.
    If the answer is YES, then $(D, v)$ is a negative instance for \poss
    for $Q_{\a}$. Otherwise, there must exist a possible world $L''$
    in $\pw(\OR)$ with $v''=\accum_{h,\oplus}(L'')$ and $v''\neq
    v'$. However, $\card{\pw(Q_{\a}(D))} \leq 2$
    and thus, as $v \neq v'$ and $v'
      \neq v''$,
        we must have $v = v''$. So $(D, v)$ is a
        positive
          instance for \poss for $Q_{\a}$.
  This finishes the reduction and shows that \cert for $Q_{\a}$ is coNP-hard.
\end{proof}

\subsection{Revisiting Section~\ref{sec:fpt2}}
\label{sec:revisiting}

We now know that finiteness and position-invariance do not suffice to ensure the
tractability of \poss and \cert. In this section, we will show that they can
nevertheless be used to obtain tractability when combined with assumptions on
the input po-database, as we did in Section~\ref{sec:fpt2}. Specifically, in the
rest of this section, we will always assume that accumulation is finite,
and we will sometimes assume that it is
position-invariant.
We call \Plexacc and \Pnoprodacc
the
extension of \Plex and \Pnoprod
with accumulation.

We can first generalize our width-based tractability result on \Plex
(Theorem~\ref{thm:aggregw}) to \Plexacc queries with
\emph{finite} accumulation.
\begin{theorem}
  \label{thm:aggregwa}
  For any \Plexacc query with a \emph{finite}
  accumulation operator,
  \poss and \cert are in PTIME on po-databases of bounded width.
\end{theorem}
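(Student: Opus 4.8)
The plan is to reduce both \poss and \cert to computing the \emph{entire} set of possible accumulation results. Write the query as $\accum_{h,\oplus}(Q')$ where $Q'$ is the underlying \Plex query, fix the width bound $k$, and let $\OR \colonequals Q'(D)$. First I would evaluate $\OR$ in PTIME using Proposition~\ref{prp:repsys}, and invoke Lemma~\ref{lem:lexwidth} to guarantee that $\OR$ has width bounded by a constant $k'$ depending only on the (fixed) query and on $k$. By Theorem~\ref{thm:dilworth} I can then compute in PTIME a chain partition $\ID = \Lambda_1 \sqcup \cdots \sqcup \Lambda_{k'}$ of the underlying poset of $\OR$, exactly as in the proof of Lemma~\ref{lem:aggregwinst}. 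The goal is then to compute $\accum_{h,\oplus}(\OR)$ explicitly as a set; once this set is in hand, \poss (is the candidate result $v$ a member?) and \cert (is the set equal to $\{v\}$?) are decided trivially.

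To compute this set I would adapt the dynamic program of Lemma~\ref{lem:aggregwinst}, keeping its notion of \emph{sane} vectors $\mathbf{m} = (m_1,\dots,m_{k'})$ and the associated order ideals $s(\mathbf{m}) = \bigsqcup_i \Lambda_i^{\leq m_i}$, of which there are only polynomially many. Instead of the single Boolean $t(\mathbf{m})$, I would store for each sane $\mathbf{m}$ the set
\[
  \calA(\mathbf{m}) \colonequals \bigl\{ \accum_{h,\oplus}(L) \;\bigm|\; L \text{ a linear extension of } s(\mathbf{m})\bigr\} \subseteq \calM,
\]
where the accumulation uses the positions $1,\dots,\card{s(\mathbf{m})}$. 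The base case is $\calA((0,\dots,0)) = \{\epsilon\}$, and the recurrence mirrors the one for $t$: writing $\ell = \sum_{i'} m_{i'}$,
\[
  \calA(\mathbf{m}) = \bigcup_{\substack{1 \leq i \leq k',\ m_i > 0 \\ \mathbf{m}-e_i \text{ sane}}} \bigl\{\, w \oplus h\bigl(T(\Lambda_i[m_i]),\, \ell\bigr) \;\bigm|\; w \in \calA(\mathbf{m}-e_i)\,\bigr\}.
\]
Correctness follows from the same factorization argument as in Lemma~\ref{lem:aggregwinst}: every linear extension of $s(\mathbf{m})$ finishes at a maximal element $\Lambda_i[m_i]$, which sits at the determined absolute position $\ell$, and its prefix is a linear extension of the ideal $s(\mathbf{m}-e_i)$; conversely every such combination is realizable. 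In particular $\calA((n_1,\dots,n_{k'}))$, for the top vector describing $s = \ID$, equals $\accum_{h,\oplus}(\OR) = Q(D)$.

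The main obstacle is to keep the dynamic program polynomial, and this is precisely where \emph{finiteness} of the monoid is used: each stored value $\calA(\mathbf{m})$ is a subset of $\calM$, hence has size at most $\card{\calM} = O(1)$. (Without finiteness, as the concatenation monoid shows, the number of distinct accumulation results for a single ideal could be exponential, so this approach would break.) Note that position-dependence of $h$ causes no difficulty, since every linear extension of $s(\mathbf{m})$ places its elements at the same positions $1,\dots,\card{s(\mathbf{m})}$, so the position argument $\ell$ in the recurrence is always well-defined; each term $h(T(\Lambda_i[m_i]),\ell)$ is computed in PTIME by PTIME-evaluability of the accumulation operator, and $\oplus$ is evaluated in the finite monoid. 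Since there are polynomially many sane vectors and each $\calA(\mathbf{m})$ is computed from constantly many predecessor sets of constant size, the whole table is filled in PTIME, yielding $\accum_{h,\oplus}(\OR)$ and hence deciding both \poss and \cert in PTIME.
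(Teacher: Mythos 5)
Your overall strategy is exactly the paper's: evaluate $\OR \colonequals Q'(D)$ by Proposition~\ref{prp:repsys}, bound its width via Lemma~\ref{lem:lexwidth}, compute a chain partition by Theorem~\ref{thm:dilworth}, and upgrade the Boolean $t(\mathbf{m})$ of Lemma~\ref{lem:aggregwinst} to the \emph{set} of possible accumulation results of the ideal $s(\mathbf{m})$, which has constant size by finiteness of the monoid; this is precisely the paper's Lemma~\ref{lem:aggregwinstacc}, and your recurrence is the same equation, including the correct observation that position-dependence of $h$ is harmless because every linear extension of $s(\mathbf{m})$ occupies positions $1,\dots,\card{s(\mathbf{m})}$.

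The one step that does not go through as written is the claim that each term $h(T(\Lambda_i[m_i]),\ell)$ ``is computed in PTIME by PTIME-evaluability of the accumulation operator.'' PTIME-evaluability (Definition~\ref{def:aggregationb}) only grants black-box access to the composite map $L \mapsto \accum_{h,\oplus}(L)$ on list relations; it does not let you evaluate $h$ on a single pair. For $\ell = 1$ you can read $h(t,1)$ off $\accum_{h,\oplus}((t))$, but for $\ell > 1$ the value $h(t,\ell)$ need not be recoverable from any composite evaluations in a non-cancellative finite monoid: for instance, if every $h(t',1)$ is a left zero of $\calM$, then $h(t,2)$ influences no composite value at all, so no algorithm querying $\accum_{h,\oplus}$ can determine it. The paper sidesteps this by storing, alongside each accumulation value $v$ in each DP cell, a \emph{witnessing} linear-extension prefix $L_v$ achieving it; to extend the cell by $\Lambda_i[m_i]$, it appends $T(\Lambda_i[m_i])$ to $L_v$ and evaluates $\accum_{h,\oplus}$ on the whole extended list, which is exactly $v \oplus h(T(\Lambda_i[m_i]),\ell)$ by definition of the operator and is computable in PTIME by assumption. (The extension depends only on $v$, not on the chosen witness, so one witness per value suffices and the table stays polynomial.) Substituting this device for your direct evaluation of $h$ and $\oplus$ closes the gap, and your proof then coincides with the paper's.
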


To show this, as in Section~\ref{sec:fpt2}, we can use
Proposition~\ref{prp:repsys} and Lemma~\ref{lem:lexwidth} to argue that it
suffices to show the following analogue of Lemma~\ref{lem:aggregwinst}. Note
that we compute exactly the (finite) set of all possible accumulation results,
so this allows us to answer both \poss and \cert.

\begin{lemma}\label{lem:aggregwinstacc}
  For any constant $k \in \mathbb{N}$,
  and finite accumulation operator $\accum_{h, \oplus}$, we can compute in PTIME,
  for any input po-relation $\OR$ with width $\leq k$,
  the set $\accum_{h, \oplus}(\OR)$.
\end{lemma}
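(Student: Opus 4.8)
The plan is to adapt the dynamic-programming algorithm from the proof of Lemma~\ref{lem:aggregwinst} so that, instead of computing a single Boolean that tracks whether a given \emph{prefix of a fixed candidate word} can be realized, we track the \emph{set of all accumulation results} achievable by the linear extensions of each order ideal. The state space is the same: we compute a chain partition $\ID = \Lambda_1 \sqcup \cdots \sqcup \Lambda_{k'}$ of the width-$\leq k$ po-relation $\OR = Q(D)$ via Theorem~\ref{thm:dilworth} (where $\OR$ has bounded width by Lemma~\ref{lem:lexwidth}), and we index states by sane vectors $\mathbf{m} = (m_1, \ldots, m_{k'})$, of which there are polynomially many since $k$ is constant. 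For each sane $\mathbf{m}$, I would store the set $\calA(\mathbf{m}) \subseteq \calM$ of all values $\accum_{h,\oplus}(L)$ where $L$ ranges over the linear extensions of the ideal $s(\mathbf{m})$. Crucially, because the monoid is \emph{finite}, each such set $\calA(\mathbf{m})$ has size at most $\card{\calM} = O(1)$, so it can be stored and manipulated in constant space.

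The recurrence mirrors the one in Lemma~\ref{lem:aggregwinst}, with disjunction replaced by set union and with the accumulation operator applied explicitly. The base case is $\calA((0,\ldots,0)) \defeq \{\epsilon\}$. For a sane vector $\mathbf{m}$ with $p \defeq \sum_{i'} m_{i'}$, any linear extension of $s(\mathbf{m})$ ends by enumerating a maximal element $\Lambda_i[m_i]$ for some $i$ with $m_i > 0$ and $\mathbf{m} - e_i$ sane; the preceding part is a linear extension of $s(\mathbf{m} - e_i)$ achieving some value $u \in \calA(\mathbf{m}-e_i)$, and the full accumulation value is $u \oplus h(T(\Lambda_i[m_i]), p)$. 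Hence I would set
\[
  \calA(\mathbf{m}) \defeq \bigcup_{\substack{1 \leq i \leq k'\\ m_i > 0}}
  \left\{\, u \oplus h\!\left(T(\Lambda_i[m_i]),\, \textstyle\sum_{i'=1}^{k'} m_{i'}\right) \;\middle|\; u \in \calA(\mathbf{m}-e_i) \,\right\},
\]
where sets associated to non-sane vectors are taken to be empty. Correctness follows by the same induction as in Lemma~\ref{lem:aggregwinst}: every ideal equals $s(\mathbf{m})$ for a unique sane $\mathbf{m}$, every linear extension of a nonempty ideal decomposes uniquely by its last-enumerated element, and associativity of $\oplus$ lets us peel off the final $h$-value. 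Processing the vectors in lexicographic order and using the PTIME-evaluability of $h$ (so each $h(T(\cdot), p)$ and each $\oplus$ is computed in PTIME), the whole table is filled in polynomial time, and $\accum_{h,\oplus}(\OR) = \calA((n_1, \ldots, n_{k'}))$.

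The step I expect to require the most care is the \emph{finiteness argument for efficiency}: without it, the sets $\calA(\mathbf{m})$ could blow up to exponentially many distinct values (as in the unrestricted concatenation monoid), and the union in the recurrence would no longer be computable in PTIME. Finiteness of $\calM$ bounds $\card{\calA(\mathbf{m})} \leq \card{\calM}$, which caps both the storage per state and the cost of each union step by a constant, and this is precisely where the hypothesis is used; I would state this bound explicitly. Once $\accum_{h,\oplus}(\OR)$ is computed exactly, both \poss (checking $v \in \accum_{h,\oplus}(\OR)$) and \cert (checking $\accum_{h,\oplus}(\OR) = \{v\}$) follow immediately, which yields Theorem~\ref{thm:aggregwa} by combining with Proposition~\ref{prp:repsys} and Lemma~\ref{lem:lexwidth} exactly as in Section~\ref{sec:fpt2}.
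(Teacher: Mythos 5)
Your proposal is correct and follows essentially the same route as the paper's proof: the same chain-partition/sane-vector dynamic program from Lemma~\ref{lem:aggregwinst}, the same recurrence with Booleans replaced by sets of accumulation results, and the same use of finiteness of $\calM$ to bound the per-state set size by a constant. The one place the paper is more careful is the complexity of evaluating the recurrence: PTIME-evaluability (Definition~\ref{def:aggregationb}) only guarantees that $\accum_{h,\oplus}$ can be evaluated on a whole list relation, not that $h$ and $\oplus$ are separately computable, so the paper stores with each value in $t(\mathbf{m})$ a witnessing linear-extension prefix and evaluates the operator on that list extended by the new tuple — a small patch you would need to add, since your appeal to "PTIME-evaluability of $h$ and $\oplus$" assumes slightly more than the definition provides.
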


\begin{proof}
	We extend the proof of Lemma~\ref{lem:aggregwinst} and reuse its
        notation.
        For every sane vector $\mathbf{m}$,
        we now write $t(\mathbf{m}) \defeq \accum_{h,
		\oplus}(T(s(\mathbf{m})))$, where 
                $T(s(\mathbf{m}))$ denotes the sub-po-relation of~$\OR$ with the
                tuples of the order ideal $s(\mathbf{m})$. In other words,
                $t(\mathbf{m})$ is the set of possible accumulation
                results for the sub-po-relation on the order ideal
                $s(\mathbf{m})$: as the accumulation monoid is fixed on finite,
                the set has constant size.
        It is immediate that $t(0, \ldots, 0) = \{\epsilon\}$, i.e., the only
        possible result is the neutral
	element of the accumulation monoid, and that
	$t(n_1, \ldots, n_{k'}) = \accum_{h, \oplus}(\OR)$ is our desired
        answer. Recall that $e_i$ denotes the vector
	consisting of $n-1$ zeros and a $1$ at position $i$, for $1 \leq i
	\leq k'$, and that ``$-$'' denotes the component-wise difference of
        vectors.
	We now observe that, for any sane vector $\mathbf{m}$, we have
        \begin{equation}\label{eq:dynamicdef}t(\mathbf{m}) = \bigcup_{\substack{1 \leq i \leq k'\\m_i > 0}} \left\{
	v \oplus h\left(T(\Lambda_i[m_i]), \sum_{i'} m_{i'}\right)
        \:\middle|\: v \in t(\mathbf{m} - e_i)\right\},\end{equation}
        where we set $t(\mathbf{m}) \colonequals \emptyset$ whenever $\mathbf{m}$ is
        not sane.
        The correctness of Equation~\eqref{eq:dynamicdef} is shown as in the proof of
        Lemma~\ref{lem:aggregwinst}:
	any linear extension of $s(\mathbf{m})$ must end with one of the maximal
	elements of $s(\mathbf{m})$, which must be one of the $\Lambda_i[m_i]$ for $1 \leq
	i \leq m$ such that $m_i > 0$, and the preceding elements must be a linear
	extension of the ideal where this element was removed (which must be an
        ideal, i.e., $\mathbf{m} - e_i$ must be sane). Conversely, any sequence
	constructed in this fashion is indeed a linear extension. Thus, the possible
	accumulation results are computed according to this characterization of the
	linear extensions. We store with each possible accumulation result a
	witnessing totally ordered relation from which it can be computed in PTIME,
	namely, the linear extension prefix considered in the previous reasoning,
	so that we can use the PTIME-evaluability of the underlying monoid to ensure
	that all computations of accumulation results can be performed in PTIME.

        As in the proof of Lemma~\ref{lem:aggregwinst},
        Equation~\eqref{eq:dynamicdef} allows us
        to compute $t(n_1, \ldots, n_{k'})$ in PTIME by a
	dynamic algorithm, which is the set $\accum_{h, \oplus}(\OR)$ that we
        wished to compute. This concludes the proof.
\end{proof}

Second, we can adapt the tractability result for queries without product
(Theorem~\ref{thm:aggregnoprod}) when accumulation is \emph{finite} and
\emph{position-invariant}.
\begin{theorem}
  \label{thm:aggregnoproda}
  For any \Pnoprodacc query with a \emph{finite} and \emph{position-invariant}
  accumulation operator,
  \poss and \cert are in PTIME on po-databases whose relations
  have either bounded width or bounded ia-width.
\end{theorem}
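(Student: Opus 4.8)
The plan is to mirror the proof of Theorem~\ref{thm:aggregnoprod}, replacing the possibility test of Lemma~\ref{lem:aggregwuawb} by a dynamic-programming computation of the \emph{whole} set of possible accumulation results, which answers both \poss and \cert at once (as in Lemma~\ref{lem:aggregwinstacc}, since finiteness of the monoid makes this set of constant size). First I would use Lemma~\ref{lem:rewritenoprod} to rewrite the underlying \Pnoprod query as a union of projections of selections of a constant number of input and constant relations; accumulation is the outermost operator, so it is untouched by this rewriting. Evaluating the selections and projections in PTIME by Proposition~\ref{prp:repsys}, and using that union is associative and commutative, I would group the result into the union $\OR \cupgen \OR'$ of a bounded-width part~$\OR$ (the processed bounded-width inputs and the constant relations) and a bounded-ia-width part~$\OR'$ (the processed bounded-ia-width inputs); the width bound on~$\OR$ is preserved by Lemma~\ref{lem:lexwidth} and the ia-width bound on~$\OR'$ by Lemma~\ref{lem:lexiawidthnoprod}. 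Crucially, since $\cupgen$ is parallel composition, there are no comparabilities between~$\OR$ and~$\OR'$, so the two parts interact only through the interleaving of their tuples and the shared accumulation product. I would then compute a chain partition $\Lambda_1 \sqcup \cdots \sqcup \Lambda_k$ of~$\OR$ by Theorem~\ref{thm:dilworth} and an ia-partition $A_1 \sqcup \cdots \sqcup A_{k'}$ of~$\OR'$ by Proposition~\ref{prp:ptime-ia-partition}.

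The heart of the proof is a dynamic algorithm computing $\accum_{h, \oplus}(\OR \cupgen \OR')$, strengthening both Lemma~\ref{lem:aggregwinstacc} and Lemma~\ref{lem:aggreguawinst}. A state records (i) a position vector $\mathbf{m} = (m_1, \ldots, m_k)$ along the chains of~$\OR$, restricted to \emph{sane} vectors whose associated subset is an order ideal of~$\OR$ (exactly as in Lemma~\ref{lem:aggregwinstacc}); and (ii) for each class $A_i$ of~$\OR'$, a vector of counts indexed by the monoid~$\calM$, recording how many \emph{still-unused} elements of~$A_i$ have each image under~$h$. With each state I store the \emph{set} of partial accumulation results achievable by linear extensions reaching it, initialized to $\{\epsilon\}$ at the empty configuration; since $\calM$ is finite this set has constant size. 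A transition enumerates one further tuple, either advancing one chain of~$\OR$ (legal when the new position vector is sane, contributing $h$ of the fixed chain element), or consuming one unused element of an \emph{open} class of~$\OR'$ (one all of whose ancestor classes are exhausted, which the count vectors determine). Processing states in order of the total number of enumerated tuples, the set at the final, fully-enumerated state is the desired $\accum_{h, \oplus}(\OR \cupgen \OR')$; membership of the candidate answer decides \poss and equality to the singleton decides \cert.

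The two assumptions on accumulation are what make the state space polynomial. \emph{Position-invariance} is the key point for the ia-width part: within an indistinguishable antichain the elements may carry distinct values, yet they are mutually interchangeable both for the order constraints and, by indistinguishability, for the future; tracking the exact remaining multiset of values would blow up the state, whereas position-invariance lets me record a class only by the count of remaining elements per image $h(v) \in \calM$. Since $\calM$ is finite and there are constantly many classes, there are polynomially many such count vectors, and together with the polynomially many sane chain vectors the whole state space is polynomial. (As in Lemma~\ref{lem:aggregwinstacc}, the bounded-\emph{width} part needs no position-invariance, since each chain position determines a unique tuple.) \emph{Finiteness} keeps each stored result set of constant size, so transitions run in PTIME.

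The main obstacle I expect is the correctness of summarizing each class by its per-image remaining counts. I would argue that, because every class is an indistinguishable antichain whose only comparabilities to the outside are all-or-nothing at the class level, (a) a class becomes available exactly when its ancestor classes are exhausted, so availability is read off the counts; and (b) enumerating any two unused elements of a class that share an image under~$h$ yields the same accumulation contribution and the same residual state, so grouping by image loses no achievable result and introduces no spurious one. Checking that the transitions exactly capture the minimal elements of $\OR \cupgen \OR'$ at each step --- chain availability via sanity of~$\mathbf{m}$, class availability via exhausted ancestor classes, and no coupling between~$\OR$ and~$\OR'$ thanks to parallel composition --- then yields a correspondence between DP paths and linear extensions that preserves accumulation results, establishing correctness.
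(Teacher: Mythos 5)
Your proposal is correct and takes essentially the same route as the paper's proof: the paper likewise reduces (via Lemma~\ref{lem:rewritenoprod}, the width/ia-width preservation lemmas, Theorem~\ref{thm:dilworth} and Proposition~\ref{prp:ptime-ia-partition}) to computing the full set $\accum_{h,\oplus}(\OR \cupgen \OR')$ by a dynamic algorithm whose states combine sane chain-position vectors for~$\OR$ with per-class, per-monoid-element counts for~$\OR'$ (the paper counts used rather than unused elements, a trivial complementation), with finiteness bounding the state space and result sets and position-invariance justifying the grouping of class elements by their image under~$h$. The only detail the paper adds is storing a witnessing list relation at each state so that accumulation values can be (re)computed using only PTIME-evaluability of the operator, a technicality your argument glosses over but that does not affect the approach.
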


To do so, again, it suffices to show the following analogue of
Lemma~\ref{lem:aggregwuawb} for finite and position-invariant accumulation.

\begin{lemma}
  \label{lem:aggregwuawbagg}
  For any constant $k \in \mathbb{N}$,
  and finite and position-invariant accumulation operator $\accum_{h, \oplus}$,
  we can compute in PTIME,
  for any input po-relation $\OR$ with width $\leq k$
  and input po-relation $\OR'$ with ia-width $\leq k$,
  the set $\accum_{h, \oplus}(\OR \cupgen \OR')$.
\end{lemma}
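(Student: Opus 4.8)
The plan is to extend the dynamic-programming argument of Lemma~\ref{lem:aggregwuawb}, but --- as in the passage from Lemma~\ref{lem:aggregwinst} to Lemma~\ref{lem:aggregwinstacc} --- to store at each state the finite \emph{set} of accumulation results achievable there, rather than a single Boolean. First I would use Theorem~\ref{thm:dilworth} to compute a chain partition $\ID = \Lambda_1 \sqcup \cdots \sqcup \Lambda_{k'}$ of~$\OR$ (with $k' \leq k$ chains), and Proposition~\ref{prp:ptime-ia-partition} to compute an ia-partition $A_1, \ldots, A_{k''}$ of~$\OR'$ (with $k'' \leq k$ classes). As in Lemma~\ref{lem:aggreguawinst}, the ia-classes form a quotient poset in which any two classes are either uniformly comparable or uniformly incomparable, so a linear extension of~$\OR'$ is exactly an enumeration that consumes an element of a class $A_j$ only once all of its ancestor classes are fully consumed, and is otherwise free (the elements inside a class being mutually incomparable). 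A possible world of $\OR \cupgen \OR'$ is then precisely an arbitrary interleaving of a linear extension of~$\OR$ (tracked by the sane vectors $\mathbf{m}$ of Lemma~\ref{lem:aggregwinst}) with such a linear extension of~$\OR'$.

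The main obstacle is to represent the consumption state of~$\OR'$ compactly: unlike in Lemma~\ref{lem:aggregwuawb}, there is no candidate word~$L$ to pin it down, and distinct elements of a single ia-class may carry \emph{distinct} values, so naively one would have to remember which sub-multiset of each class has been consumed. The key observation that unlocks the proof is that \emph{finiteness} and \emph{position-invariance} of the accumulation operator collapse this: since $h$ ignores positions, each tuple $\id$ has a fixed image $h(T(\id)) \in \calM$, and since $\calM$ is finite these images range over a constant-size alphabet. Hence for each class $A_j$ it suffices to record, for every monoid element $m \in \calM$, how many as-yet-unconsumed elements of $A_j$ map to~$m$; this is a vector in $\{0, \dots, \card{A_j}\}^{\card{\calM}}$, of which there are polynomially many (as $\card{\calM}$ and $k''$ are constant). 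A DP state $\sigma$ is then a pair consisting of a sane vector $\mathbf{m}$ for~$\OR$ together with one such count-vector per class of~$\OR'$, so there are polynomially many states, and to each I attach $t(\sigma) \subseteq \calM$, the set of accumulation results of interleavings consuming exactly the order ideal $s(\mathbf{m})$ of~$\OR$ and the recorded elements of~$\OR'$.

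Starting from $t(\sigma_0) = \{\epsilon\}$ for the empty state, I would fill the table by the recurrence that $t(\sigma)$ is the union, over every legal last step, of $\{v \oplus m \mid v \in t(\sigma')\}$, where the last step either advances one chain $\Lambda_i$ of~$\OR$ (provided the resulting vector is sane), contributing $m = h(T(\Lambda_i[m_i]))$, or consumes from~$\OR'$ one element of image~$m$ taken from a class that is \emph{open} in~$\sigma$ (all its ancestor classes fully consumed) and still has an unconsumed element of image~$m$; here $\sigma'$ is the predecessor state obtained by undoing that step. Correctness follows exactly as in Lemma~\ref{lem:aggregwinstacc}: every linear extension ends with some maximal consumed element, and conversely every sequence built this way is a valid interleaving of linear extensions of~$\OR$ and~$\OR'$, with position-invariance ensuring that the image~$m$ of each step depends only on the tuple and not on when it is consumed. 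The answer $\accum_{h,\oplus}(\OR \cupgen \OR')$ is read off at the state where $\mathbf{m} = (\card{\Lambda_1}, \dots, \card{\Lambda_{k'}})$ and all classes are exhausted. Notably, this direct approach dispenses with the finishing-order enumeration of Lemma~\ref{lem:aggregwuawb}, which was needed there only to match a fixed word. Each $t(\sigma)$ has size $\leq \card{\calM}$ and each transition applies $h$ and~$\oplus$ (computable in PTIME via the PTIME-evaluability of the operator on short witnessing lists), so the whole computation runs in PTIME.
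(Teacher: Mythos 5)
Your proposal is correct and matches the paper's proof essentially step for step: the same chain partition of~$\OR$ plus ia-partition of~$\OR'$, the same DP state (a sane chain-position vector together with per-class counts of used/unused elements per monoid image, made polynomial by finiteness and well-defined by position-invariance), the same transitions via open classes, and the same witnessing-list trick for PTIME-evaluability. Even your remark that the finishing-order enumeration of Lemma~\ref{lem:aggregwuawb} becomes unnecessary is reflected in the paper's argument, which likewise works directly with open classes.
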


\begin{proof}
  We use Dilworth's theorem to compute in PTIME a chain partition of
  $\OR$, and we use Proposition~\ref{prp:ptime-ia-partition} to compute in PTIME
  an ia-partition $A_1 \sqcup \cdots \sqcup A_n$ of minimal cardinality of
  $\OR'$, with $n \leq k$.

  We then apply a dynamic algorithm whose state
  consists of the following:
  \begin{itemize}
  \item for each chain in the partition of $\OR$, the position in
    the chain;
  \item for each class $A$ of the ia-partition of $\OR'$, for each element $m$ of the
    monoid, the number of identifiers $\id$ of~$A$ such that $h(T(\id), 1)
      = m$ that have already been used.
  \end{itemize}

  There are polynomially many possible states; for the second bullet point, this
  uses the fact that the monoid is finite, so its size is constant because it is
  fixed as part of the query. Also note that we use the rank-invariance of~$h$
  in the second bullet point.

  The possible accumulation results for each of the possible states can then be
  computed by a dynamic algorithm. At each state, we can decide to make progress
  either in a chain of $\OR$ (ensuring that the element that we enumerate has
  the right image by~$h$, and that the new vector of positions of the chains is
  still sane, i.e., yields an order ideal of~$\OR$) or in a class of $\OR'$
  (ensuring that this class is open, i.e., it has no ancestors in~$\OR'$ that
  were not enumerated yet, and that it contains an element which has the right
  image by~$h$). This algorithm is correct because there is a
  bijection between the ideals of $\OR \cupgen \OR'$ and the pairs of ideals of
  $\OR$ and of ideals of~$\OR'$. Now, the dynamic algorithm considers all ideals
  of~$\OR$ as in the proof of Lemma~\ref{lem:aggregwinstacc}, and it clearly
  considers all possible ideals of~$\OR'$ except that we identify ideals that
  only differ by elements in the same class which are mapped to the same value
  by~$h$ (but this choice does not matter because the class is an antichain and
  these elements are indistinguishable outside the class).

  As in the proof of Lemma~\ref{lem:aggregwinstacc}, we can ensure that all
  accumulation operations are in PTIME, using PTIME-evaluability of the
  accumulation operator, up to the technicality of storing at each state,
  for each of the possible accumulation results, a witnessing totally ordered
  relation from which to compute it in PTIME.
\end{proof}

We note that the finiteness assumption is important, as the previous result
does not hold otherwise. Specifically, there is an accumulation
operator that is \emph{position-invariant} but not \emph{finite},
for which \poss is NP-hard even on unordered
po-relations and with a trivial query.
\begin{theorem}\label{thm:possgri}
  There is a position-invariant accumulation operator $\accum_{h, \oplus}$
  such that \poss is NP-hard for the \Pnoprodacc query $Q \defeq \accum_{h,
  \oplus}(R)$, even on input po-databases where $R$ is interpreted as an unordered relation.
\end{theorem}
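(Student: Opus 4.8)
The plan is to exhibit a single fixed \emph{infinite} monoid together with a position-invariant accumulation map, and to reduce UNARY-3-PARTITION to \poss for accumulation over an unordered relation. The natural choice is the free monoid: I take $\calM \colonequals \{a,b\}^*$ with concatenation as $\oplus$ and the empty word as~$\epsilon$. This monoid is infinite and non-commutative, so the \emph{order} of accumulation matters, in sharp contrast with the finite monoids required by Theorem~\ref{thm:aggregnoproda}. I define the accumulation map $h$ to ignore its position argument (hence position-invariant), sending a unary tuple whose value encodes a word $u \in \{a,b\}^*$ to $u$ itself; this is clearly PTIME-evaluable. Since $R$ is unordered, its possible worlds are exactly the permutations of its tuples, so $\accum_{h,\oplus}(R)$ is precisely the set of words obtained by concatenating the tuple-words of~$R$ in some order. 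Thus \poss for $Q \defeq \accum_{h,\oplus}(R)$ amounts to an ``ordered concatenation'' question: can a target word be obtained by suitably ordering a given multiset of words?

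For the reduction, I start from an instance of UNARY-3-PARTITION in its standard strongly NP-hard form~\cite{garey-johnson}, i.e., integers $n_1, \dots, n_{3m}$ with $B/4 < n_i < B/2$ and $\sum_i n_i = mB$, asking for a partition into $m$ triples each summing to~$B$. I build the unordered relation $R$ containing, for each $1 \le i \le 3m$, one tuple with value $a^{n_i}$, together with $m$ tuples with value~$b$; the candidate result is $w \defeq (a^{B} b)^{m}$. The construction is polynomial because the integers are written in unary. The claim is that $w \in \accum_{h,\oplus}(R)$ iff the 3-PARTITION instance is positive. The forward direction is routine: given triples $G_1, \dots, G_m$, I order the tuples group by group (each group contributing $a^B$, since its sizes sum to~$B$) with the $b$-tuples as separators, which yields exactly~$w$.

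The crux is the converse, where I expect the only real subtlety to lie. Suppose some ordering of the tuples concatenates to~$w$. The $m$ single-letter $b$-tuples must occupy exactly the $m$ positions of~$b$ in~$w$, which split~$w$ into $m$ maximal blocks of the form~$a^B$. Since every $a$-tuple contains no~$b$, each $a$-tuple lies entirely inside one block, so the $a$-tuples partition into $m$ groups whose sizes sum to~$B$ within each block. Finally, the inequality $B/4 < n_i < B/2$ forces each such group to contain exactly three tuples (two sum to strictly less than~$B$, four to strictly more), so the groups form a valid 3-partition. As the reduction is clearly polynomial-time and UNARY-3-PARTITION is NP-hard~\cite{garey-johnson}, this establishes that \poss is NP-hard for this position-invariant, non-finite accumulation query even when $R$ is a single unordered relation, completing the proof.
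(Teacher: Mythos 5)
Your proof is correct, but it follows a genuinely different route from the paper's. The paper reduces from the NP-hard PARTITION problem (integers in binary, split into two equal-sum halves): accumulation is done in the group of maps $x \mapsto \pm x + a$ under composition, generated by the negation $f : x \mapsto -x$ and the translations $g_a : x \mapsto x+a$; each integer $s$ of the instance yields a tuple mapped to $g_s$, one extra tuple with value $-1$ is mapped to $f$, and the candidate result is $f$ itself. Since $f \circ g_a = g_{-a} \circ f$, any linear order evaluates to $g_{s_1} \circ f \circ g_{s_2} = g_{s_1-s_2} \circ f$, where $s_1, s_2$ are the sums of the tuples enumerated before and after the flip tuple, so $f$ is achievable iff the multiset splits into two equal-sum parts. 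You instead reduce from strongly NP-hard UNARY-3-PARTITION in the free monoid $\{a,b\}^*$: order-sensitivity comes from the atomicity of word-valued tuples, the $m$ single-letter $b$-separators force a decomposition of the target $(a^B b)^m$ into $a$-runs of length exactly $B$, and the Garey--Johnson bounds $B/4 < n_i < B/2$ force each run to consist of exactly three tuples; all these steps are sound. Both monoids are cancellative, so either construction contrasts nicely with the tractability of \cert for cancellative accumulation (Theorem~\ref{thm:certaintyptimec}). What each buys: the paper's version uses plain integers of $\calD$ as tuple values, has constant-size monoid elements, and shows hardness even for accumulation in a \emph{group}; yours is more combinatorial, reduces from a strongly NP-hard problem, and matches the style of the paper's other UNARY-3-PARTITION arguments (Theorems~\ref{thm:posscomp1} and~\ref{thm:posscompextend1}). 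The one detail worth making explicit in your write-up is the encoding of words over $\{a,b\}$ as values of $\calD$, so that the \emph{fixed} map $h$ is well-defined and PTIME-evaluable on arbitrary inputs (e.g., fix an injection of $\{a,b\}^*$ into $\calD$, decode values in its image, and send every other value to the neutral element $\epsilon$).
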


\begin{proof}
  We consider the NP-hard partition problem: given a multiset $S$ of
  integers, decide whether it can be partitioned into two sets
  $S_1$ and $S_2$ that have the same sum.
  Let us reduce an instance of the partition problem with
  this restriction to an instance of the \poss problem, in PTIME.

  Let $\calM$ be the monoid generated by
  the functions $f : x \mapsto -x$ and $g_a : x \mapsto x + a$ for $a \in \ZZ$
  under the function composition operation. We have $g_a \circ g_b = g_{a+b}$
  for all $a, b \in \NN$, $f \circ f = \id$, and $f \circ g_a = g_{-a} \circ f$,
  so we actually have $\calD = \{g_a \mid a \in \ZZ\} \sqcup \{f \circ g_a \mid
  a \in \ZZ\}$. Further, $\calM$ is actually a group, as we can define
  $(g_a)^{-1} = g_{-a}$ and $(f \circ g_a)^{-1} = f \circ g_a$ for all $a \in
  \ZZ$.

  We fix $\calD = \NN \sqcup \{-1\}$.
  We define the position-invariant accumulation map $h$ as mapping $-1$ to $f$ and $a
  \in \NN$ to $g_a$.
  We encode the partition problem instance~$S$ in PTIME
  to an unordered po-relation $\OR_S$ with a single attribute, that contains
  one tuple with value $s$ for each $s \in S$, plus one tuple with value~$-1$.
  Consider the \poss
  instance for the query $\accum_{h,+}(\OR)$, on the po-database $D$ where the
  relation name $R$ is interpreted as the po-relation $\OR_S$, and for the
  candidate result $v \colonequals f \in \calM$.

  We claim that this \poss instance is positive iff the partition problem has a
  solution. Indeed, if $S$ has a partition, 
  let $s = \sum_{i \in S_1} i = \sum_{i \in S_2} i$.
  Consider the total order on
  $\OR_S$ which enumerates the tuples corresponding to the elements of~$S_1$, then
  the tuple $-1$, then the tuples corresponding to the elements of~$S_2$.
  The result of accumulation is then $g_s \circ f \circ g_s$, which is $f$.

  Conversely, assume that the \poss problem has a solution. Consider a witness
  total order of~$\OR_S$; it must a (possibly empty) sequence of tuples
  corresponding to a subset $S_1$ of~$S$, then the tuple $-1$, then a (possibly
  empty) sequence corresponding to $S_2 \subseteq S$.
  Let $s_1$ and $s_2$ respectively be the sums
  of these subsets of~$S$. The result of accumulation is then $g_{s_1} \circ f
  \circ g_{s_2}$, which simplifies to $g_{s_1 -
  s_2} \circ f$. Hence, we have $s_1 = s_2$, so that $S_1$ and
  $S_2$ are a partition witnessing that $S$ is a positive instance of the
  partition problem.

  As the reduction is in PTIME, this concludes the proof.
\end{proof}

Finally, as explained above Example~\ref{exa:accumul}, we can use accumulation 
capture \emph{position-based selection}
($\text{top-}k$, $\text{select-at-}k$) and \emph{tuple-level comparison}
(whether the first occurrence of a tuple precedes all occurrences of another
tuple) for \PosRA queries. Using a direct construction for these problems, we
can show that they are tractable.

\begin{proposition}\label{prop:otherdef}
        For any \PosRA query $Q$, the following problems are in PTIME.
        \begin{itemize}
          \item \textbf{select-at-$\bm{k}$:} Given a po-database $D$, tuple value $t$, and
            position $k \in \NN$, determine whether it is \mbox{possible/certain} that $Q(D)$ has
            value $t$ at position~$k$;
          \item \textbf{top-$\bm{k}$:} For any \emph{fixed} $k \in \NN$, given a po-database
            $D$ and list relation $L$ of length~$k$, determine whether it is
            possible/certain that the top-$k$ values in $Q(D)$ are exactly~$L$;
          \item \textbf{tuple-level comparison:} Given a po-database $D$ and two tuple
            values $t_1$ and $t_2$,
            determine whether it is possible/certain that the first occurrence of~$t_1$
            precedes all occurrences of~$t_2$.
        \end{itemize}
\end{proposition}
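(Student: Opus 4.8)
The plan is to first apply Proposition~\ref{prp:repsys} to compute the po-relation $\OR = (\ID, T, {<}) \colonequals Q(D)$ in PTIME, and then to solve each of the three problems by direct inspection of the underlying poset $P = (\ID, {<})$, never enumerating possible worlds. The main tool is the set of \emph{possible positions} of a single identifier: writing $A_{\id} \colonequals \{z \mid z < \id\}$ and $D_{\id} \colonequals \{z \mid \id < z\}$ for the ancestors and descendants of $\id$, an argument analogous to Lemma~\ref{lem:achieverank} shows that $\id$ can be enumerated at position $p$ in some linear extension iff $p \in [\card{A_{\id}}+1,\, \card{\ID}-\card{D_{\id}}]$, and a witnessing linear extension is computable in PTIME. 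Since $A_{\id}$ and $D_{\id}$ are PTIME-computable, this interval is available for every identifier.

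For \textbf{select-at-$k$}, the value $t$ is \emph{possible} at position $k$ iff some identifier $\id$ with $T(\id) = t$ has $k$ in its interval of possible positions, which is a PTIME scan over the identifiers. Dually, $t$ is \emph{certain} at position $k$ iff $k \leq \card{\ID}$ and no identifier $\id$ with $T(\id) \neq t$ has $k$ in its interval, so that whatever element lands at position $k$ in any linear extension must carry value $t$; this is again PTIME. For \textbf{tuple-level comparison}, observe that, assuming both values occur, the property ``the first occurrence of $t_1$ precedes all occurrences of $t_2$'' holds in a world iff the first $\{t_1,t_2\}$-labelled element of that world carries value $t_1$. A $t_1$-identifier $\id$ can be placed before every $t_2$-identifier in some linear extension iff no $t_2$-identifier is an ancestor of $\id$ (an ancestor would be forced before $\id$, and nothing else forces a $t_2$ before it). Hence \poss holds iff some $t_1$-identifier has no $t_2$-ancestor, and by the symmetric statement the \emph{negation} of the property is possible iff some $t_2$-identifier has no $t_1$-ancestor, so \cert holds iff every $t_2$-identifier has a $t_1$-ancestor. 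Ancestor sets are PTIME-computable, and the degenerate cases where $t_1$ or $t_2$ is absent are handled directly, so both problems are in PTIME.

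The main obstacle is \textbf{top-$k$}. The plan is to prove the following characterization: a list relation $L$ of length $k$ is the top-$k$ of some possible world iff $\card{\ID} \geq k$ and there exist \emph{distinct} identifiers $\id_1, \dots, \id_k$ with $T(\id_j) = L[j]$ such that every ancestor of each $\id_j$ occurs among $\id_1, \dots, \id_{j-1}$. The forward direction takes the first $k$ elements of a witnessing linear extension; conversely, such a sequence is an order ideal forming a valid prefix, whose complement is an up-set that can be extended arbitrarily. The subtlety, and the reason a naive algorithm fails, is that one cannot build this sequence greedily: among several minimal elements sharing the value $L[j]$, choosing the ``wrong'' one can strand an ancestor that is needed to unlock a later value. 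Since $k$ is fixed, I would instead search over the polynomially many ($O(\card{\ID}^k)$) candidate sequences of $k$ identifiers, checking the ancestor condition for each in PTIME; equivalently, a dynamic program over order ideals of size $<k$ works.

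Finally, for the \cert version of top-$k$, note that a world whose top-$k$ differs from $L$ must differ at some first position $j$, hence realizes the prefix $L[1], \dots, L[j-1]$ followed by a value $v \neq L[j]$. Thus certainty fails iff for some $1 \leq j \leq k$ and some domain value $v \neq L[j]$ the length-$j$ prefix $(L[1], \dots, L[j-1], v)$ is achievable, which is exactly the same bounded search restricted to prefixes of length $\leq k$. Enumerating the polynomially many pairs $(j,v)$, running the prefix search for each, and checking $\card{\ID} \geq k$ decides \cert in PTIME. The whole argument hinges on establishing correctness of the order-ideal prefix characterization and on exploiting the fixed bound $k$ to make the prefix search polynomial.
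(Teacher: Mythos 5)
Your proposal is correct and follows essentially the same route as the paper's proof: compute $Q(D)$ in PTIME via Proposition~\ref{prp:repsys}, use earliest/latest-position intervals for select-at-$k$, enumerate the polynomially many length-$k$ identifier sequences and check the prefix-of-linear-extension (order-ideal) condition for top-$k$, and reduce certainty of tuple-level comparison to possibility with the roles of $t_1$ and $t_2$ swapped. The only cosmetic differences are that the paper tests tuple-level possibility by adding edges from a $t_1$-identifier to all $t_2$-identifiers and checking acyclicity of the augmented order (equivalent to your ``no $t_2$-ancestor'' condition), and decides top-$k$ certainty by enumerating all length-$k$ sequences whose value tuple differs from $L$ rather than enumerating the first differing position $j$ and value $v$.
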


\begin{proof}
  To solve each problem, we first compute the po-relation $\OR \defeq Q(D)$ in
  PTIME by Proposition~\ref{prp:repsys}.
  We then address each problem in turn.

  First, we show tractability for \textbf{select-at-$\bm{k}$}.
    Considering the po-relation $\OR = (\ID, T, <)$, we can compute in PTIME,
    for every element $\id \in \ID$, its \emph{earliest
		index} $\ii^-(\id)$, which is the number of ancestors of~$\id$ by $<$ plus one, and its \emph{latest
		index} $\ii^+(\id)$, which is the number of elements of $\OR$ minus the number of
	descendants of~$\id$. It is easily seen that for any element $\id \in \ID$, there is a
        linear extension of~$\OR$ where $\id$ appears at position $\ii^-(\id)$ (by
        enumerating first exactly the ancestors of~$\id$), or at position
        $\ii^+(\id)$ (by enumerating first everything except the descendants
        of~$\id$), or in fact at any position of $[\ii^-(\id), \ii^+(\id)]$, the
        \emph{interval} of~$\id$ (this is by enumerating first the ancestors
        of~$\id$, and then as many elements as needed that are incomparable
        to~$\id$, along a linear extension of these elements).
	Hence, select-at-$k$ possibility for tuple $t$ and position $k$ can be
        decided by checking, for each $\id \in \ID$ such that $T(\id) = t$,
        whether $k \in [\ii^-(\id), \ii^+(\id)]$, and answering YES iff we can find
        such an~$\id$. For select-at-$k$ certainty, we answer NO iff we can find
        an $\id \in \ID$ such that $k \in [\ii^-(\id), \ii^+(\id)]$ but we have
        $T(\id) \neq t$.

  Second, we show tractability for \textbf{top-$\bm{k}$}.
    Considering the po-relation $\OR = (\ID, T, <)$,
  we consider each sequence of~$k$ elements of~$\OR$, of which there are at most
      $\card{\ID}^k$, i.e., polynomially many, as $k$ is fixed. To solve possibility for top-$k$, we
  consider each such sequence $\id_1, \ldots, \id_k$ such that $(T(\id_1),
  \ldots, T(\id_k))$ is equal to the candidate list relation $L$, and we check if this sequence is indeed a prefix of a linear extension
  of~$\OR$, i.e., whether, for each $i \in \{1, \ldots, k\}$, for any $\id \in
  \ID$ such that $\id < \id_i$, if $\id_i \in \{\id_1, \ldots, \id_{i-1}\}$,
  which we can do in PTIME. We answer YES iff we can find such a sequence.

  For certainty, we consider each sequence $\id_1, \ldots, \id_k$ such that we
      have $(T(\id_1), \ldots, T(\id_k)) \neq L$, and we check whether it is a prefix of
  a linear extension in the same way: we answer NO iff we can find such a
  sequence.

  Third, we show tractability for \textbf{tuple-level comparison}.
  We are given the two tuple values $t_1$ and $t_2$, and we assume that both are
  in the image of~$T$, as the tuple-level comparison problem is vacuous
  otherwise.

  For possibility, given the two tuple values $t_1$ and $t_2$, we consider each
  $\id \in \ID$ such that $T(\id) = t_1$, and for each of them, we construct
  $\OR_\id \defeq (\ID, T, {<_\id})$ where ${<_\id}$ is the transitive closure
  of ${<} \cup \{(\id, \id') \mid \id' \in \ID, T(\id') = t_2\}$. We answer YES iff one of the
  $\OR_\id$ is indeed a po-relation, i.e., if $<_\id$ as defined does not
  contain a cycle. This is correct, because it is possible that the first
  occurrence of~$t_1$ precedes all occurrences of~$t_2$ iff there is some
  identifier $\id$ with tuple value $t_1$ that precedes all identifiers with
  tuple value $t_2$, i.e., iff one of the $\OR_\id$ has a linear extension.

  For certainty, given $t_1$ and $t_2$, we answer the negation of possibility
  for $t_2$ and $t_1$. This is correct because certainty is false iff there is a
  linear extension of~$\OR$ where the first occurrence of~$t_1$ does not precede
  all occurrences of~$t_2$, i.e., iff there is a linear extension where the
  first occurrence of $t_2$ is not after an occurrence of~$t_1$, i.e., iff some
  linear extension is such that the first occurrence of~$t_2$ precedes all
  occurrences of~$t_1$, i.e., iff possibility is true for $t_2$
      and~$t_1$.
\end{proof}

\section{Extensions}\label{sec:extensions}
We consider two extensions to our model: group-by and
duplicate elimination. 

\subsection{Group-By}

First, we extend accumulation with a \emph{group-by}
operator, inspired by SQL.
\begin{definition}
    Let $(\calM, \oplus, \epsilon)$ be a monoid and $h :
    \calD^k\times\mathbb N_{>0} \to \calM$ be
    an accumulation map, and
    let 
    $\mathbf{A} = A_1,...,A_n$ be a
    sequence of attributes: we call $\accumgby_{h, \oplus, \mathbf{A}}$ an \emph{accumulation
        operator with group-by}.
    Letting $L$~be a list relation with compatible schema,
    we define $\accumgby_{h, \oplus, \mathbf{A}}(L)$ as
    an \emph{unordered} relation
    that has, for each tuple value $t
    \in \Pi_{\mathbf{A}}(L)$, one tuple $\langle t, v_t \rangle$, where $v_t$ is $\accum_{h,
        \oplus}(\sigma_{A_1 = t.A_1\land\dots\land A_n=t.A_n}(L))$ with $\Pi$ and $\sigma$ on
    the list relation~$L$ having the expected semantics. The result on a
    po-relation $\OR$ is the set of unordered relations $\{\accumgby_{h, \oplus, \mathbf{A}}(L) \mid L \in
    \pw(\OR)\}$.
\end{definition}

In other words, the operator ``groups by'' the values of $A_1,...,A_n$, and
performs accumulation within each group, forgetting the order
across groups. As for standard accumulation, we only allow 
group-by as an outermost operation,
calling \PosRAaccgby the language of \PosRA queries followed by one accumulation
operator with group-by. Note that the set of possible results is generally not 
a po-relation,
    because the underlying bag relation is not certain.

We next study the complexity of \poss and \cert for \PosRAaccgby queries. Of course,
whenever \poss and \cert are hard for some \PosRAacc query $Q$ on some kind of
input po-relations, then there is a corresponding \PosRAaccgby query
for which hardness also holds (with empty $\mathbf{A}$). The main point of this section is to show that
the converse is not true: the addition of group-by increases
complexity. Specifically, we show that the \poss problem for \PosRAaccgby is
hard even on totally ordered po-relations and without the $\times_\dir$
operator.
This result contrasts with the tractability of \poss for \Plex queries
(Theorem~\ref{thm:aggregw})
 and for \Plexacc queries with finite
accumulation (Theorem~\ref{thm:aggregwa})
on totally ordered po-relations.

\begin{theorem}\label{thm:hardpossgby}
    There is a \PosRAaccgby query $Q$ 
    with finite and position-invariant
    accumulation, not using $\times_\dir$, such that \poss
    for $Q$ is NP-hard even on totally ordered po-relations.
\end{theorem}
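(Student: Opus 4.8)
The plan is to reduce from an NP-hard problem (I would use \textsc{Unary-3-Partition}, as in Theorems~\ref{thm:posscomp1} and~\ref{thm:posscompextend1}, or alternatively a SAT variant) to \poss for a fixed \PosRAaccgby query. First I would build, from \emph{totally ordered} input relations and using only union and $\times_\lex$ (never $\times_\dir$), a \PosRA query $Q'$ whose output po-relation $Q'(D)$ has bounded width, as guaranteed by Lemma~\ref{lem:lexwidth}. I would then compose $Q'$ with a group-by accumulation whose monoid is the \emph{transition monoid} of a small deterministic automaton, exactly as in the machinery around Lemma~\ref{lem:transmon}: the accumulation is finite and position-invariant, and on each group it computes whether that group's value-sequence is \emph{balanced} (matches a fixed regular expression). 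The candidate output is the unordered relation that assigns to each group key the ``balanced'' monoid value $f_1$. Thus possibility asks for a single linear extension of $Q'(D)$ under which \emph{every} group is simultaneously balanced.

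The conceptual heart of the argument is to explain why group-by is genuinely the source of hardness. By Theorem~\ref{thm:aggregwa}, without group-by \poss for \Plexacc queries with finite accumulation on bounded-width inputs is in PTIME, and indeed each individual group, being a bounded-width sub-po-relation, is individually tractable by the same dynamic program. The extra power of group-by is that the number of groups is unbounded while a \emph{single} global order must satisfy all per-group targets at once, and the cross-group comparability relations already present in $Q'(D)$ couple the within-group orderings. This recovers, through the poset structure rather than through a product operator, the coupling that $\times_\dir$ provided in Theorem~\ref{thm:posscompextend1} but that is unavailable to \Plex on totally ordered data. Concretely, I would design gadgets—realizable as parallel compositions of short totally ordered chains (produced by union) refined by $\times_\lex$—in which the free interleaving choice between two incomparable chain-elements behaves like an assignment variable, while a comparability running \emph{across} chains forbids undesirable joint configurations (e.g.\ a single cross-edge already forbids the pair of simultaneously ``reversed'' orderings in two groups). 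The group keys, carried in the tuple values of the totally ordered inputs, are assigned so that a group's balancedness target is achievable exactly when its encoded local constraint holds, and because each variable is one physical ordering choice observed consistently, the global order enforces consistency for free.

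The remaining work is the correctness proof, in both directions. For soundness, from a solution of the source instance I would exhibit an explicit linear extension and check, group by group, that each induced subsequence accumulates (via the automaton) to $f_1$, reusing the construction style of Lemma~\ref{lem:balred}. The delicate direction, and the main obstacle, is completeness: showing that \emph{any} linear extension that balances all groups must correspond to a genuine solution, i.e.\ that no ``cheating'' global order can satisfy every group's regular constraint without respecting the intended coupling. I expect this to require a position-by-position invariant, in the spirit of the matching/peeling invariants used in the proofs of Theorems~\ref{thm:posscomp1} and~\ref{thm:posscompextend1}, tracking for each consumed tuple which group it feeds and certifying that every group's automaton stays on its accepting track; the crux is proving that the conjunction of the per-group balancedness conditions under one shared linear extension is logically equivalent to the NP-hard instance being positive.
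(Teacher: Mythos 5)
Your high-level architecture coincides with the paper's: a \Plex query over totally ordered inputs composed with a finite, position-invariant group-by accumulation in an automaton-derived monoid (the paper uses the syntactic monoid of $\s(\l_+\l_-\mid\l_-\l_+)^*\e$ and reduces from 3-SAT), with the candidate answer assigning the accepting monoid element to every group, so that \poss asks for one global linear extension balancing all groups simultaneously; you also correctly observe that group-by must be the source of hardness because of Theorem~\ref{thm:aggregwa}. But the proposal stops exactly where the proof begins, and it contains one concrete flaw beyond incompleteness. Your gadgets depend on ``a comparability running across chains'' (even ``a single cross-edge'') to forbid bad joint configurations, yet you never say how such an edge is created: a union of totally ordered relations is a parallel composition with \emph{no} cross-chain comparabilities, and no \Plex operator can insert a targeted edge between two specific elements of different chains (concatenation, via Lemma~\ref{lem:lexconcat}, orders \emph{everything} of one relation before everything of the other, which is too coarse). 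The missing device is the paper's clock trick: take $R \defeq \ordern{m+3}$ and evaluate $\Pi_{3,4}(\sigma_{.1=.2}(R \times_\lex (S_1 \cup S_2 \cup S_3)))$, which forces every surviving tuple at level $i$ to precede every surviving tuple at level $j>i$ in all three chains. The only coupling realizable in this fragment is this coarse phase synchronization, and the entire 3-SAT encoding (opening phase, one variable-choice phase per variable containing two incomparable elements whose enumeration order encodes the truth value, two-level clause-check phases, closing phase) must be engineered around that constraint rather than around arbitrary cross-edges.

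Second, you explicitly defer the completeness direction and call it the crux; it is, and it is not a generic peeling invariant of the kind used in Theorems~\ref{thm:posscomp1} and~\ref{thm:posscompextend1}. In the paper it rests on the phase structure: between clause phases, each group's accumulated word must remain of the form $\s(\l_-\l_+)^*$ or $\s(\l_+\l_-)^*$ according to the truth value committed in the variable-choice phase (this uses the preprocessing that no variable occurs twice in the same clause), so any order deviating from the committed valuation drives some group's word permanently off the language, and the first relation touched in each clause phase pins down a literal that the valuation must satisfy. Without the coupling mechanism and without this argument, the proposal does not yet establish the theorem.
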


\begin{proof}
  Let $Q$ be the query $\accumgby_{\oplus, h, \{1\}}(Q')$, where we define
  \[
    Q' \defeq \Pi_{3,4}(\sigma_{.1=.2}(R \times_\lex
    (S_1 \cup S_2 \cup S_3))).
  \]
  In the accumulation operator, the accumulation map $h$ maps each tuple $t$ to its
  second component. Further, we define the finite monoid $\calM$ to be the
  \emph{syntactic monoid} \cite{pin1997syntactic} of the language defined by the regular expression $\s (\l_+\l_- | \l_-
  \l_+)^* \e$, where $\s$ (for ``start''), $\l_-$ and $\l_+$, and $\e$ (for
  ``end'') are fresh values from $\calD$: this monoid 
  ensures that, 
  for any non-empty word $w$ over the alphabet $\{\s, \l_-, \l_+, \e\}$ that
  starts with $\s$ and ends with $\e$, the word $w$
  evaluates to $\epsilon$ in $\calM$ iff $w$ matches this regular expression.

  We reduce from the NP-hard 3-SAT problem: we are given a conjunction of
  clauses $C_1, \ldots, C_n$, with each clause being a disjunction of three
  literals, namely, a variable or negated variable among $x_1, \ldots, x_m$, and
  we ask whether there is a valuation of the variables such that the clause is
  true. We fix an instance of this problem. We assume without loss of generality
  that the instance has been preprocessed to ensure that no clause contained two
  occurrences of the same variable, i.e., we remove duplicate literals in
  clauses, and we remove any clause that
  contains two occurrences of the same variable with different polarities (as
  the clause is then vacuous). We further assume that the instance has been
  preprocessed to ensure that each clause contains exactly 3 variables: we do so
  by introducing three fresh variables $d_1$, $d_2$, and $d_3$,
  by adding all possible clauses $\pm d_1 \lor \pm d_2 \lor \pm d_3$ on these
  variables except $\neg d_1 \lor \neg d_2 \lor \neg d_3$ (i.e., seven clauses), and by padding
  the other clauses to three literals by adding distinct disjuncts chosen from
  the $\neg d_i$. It is clear that
  this does not change the semantics of the instance: any satisfying assignment
  of the original instance yields a satisfying assignment of the rewritten instance by setting $d_1$,
  $d_2$, and $d_3$ to true, and conversely any satisfying assignment to the
  rewritten instance must set $d_1$, $d_2$, and $d_3$ to
  true (any other assignment will violate the clause
  where each $d_i$ has the polarity which is the opposite of its value in
  the assignment), so the padding literals are never used to make a clause true.

  We define the relation $R$ to be $\ordern{m+3}$. The totally ordered relations
  $S_1$, $S_2$, and $S_3$ consist of $3m+2n$ tuple values defined as follows.

  \begin{itemize}
    \item First, for the tuples with positions from $1$ to $m$ (the ``opening gadget''):
      \begin{itemize}
        \item The first component is $1$ for all tuples in $S_1$ and $0$ for
          all tuples in $S_2$ and $S_3$ (so they do not join with $R$);
        \item The second component is $i$ for the $i$-th tuple in $S_1$ (and
          irrelevant for tuples in $S_2$ and $S_3$);
        \item The third component is $\s$ for all these tuples.
      \end{itemize}
      The intuition for the opening gadget is that it ensures that accumulation in
      each of the $m$ groups will start with the start value $\s$, used to
      disambiguate the possible monoid values and ensure that there is exactly
      one correct value.
    \item For the tuples with positions from $m+1$ to $2m$ (the ``variable
      choice'' gadget):
      \begin{itemize}
        \item The first component is $2$ for all tuples in~$S_1$ and~$S_2$ and
          $0$ for all tuples in~$S_3$ (so they do not join with $R$);
        \item The second component is $i$ for the $(m+i)$-th tuple in $S_1$ and in
          $S_2$ (and irrelevant for~$S_3$);
        \item The third component is $\l_-$ for all tuples in~$S_1$ and $\l_+$
          for all tuples in~$S_2$ (and irrelevant for~$S_3$).
      \end{itemize}
      The intuition for the variable choice gadget is that, for each group, we
      have two incomparable elements, one labeled $\l_-$ and one labeled $\l_+$.
      Hence, any linear extension must choose to enumerate one after the other,
      committing to a valuation of the variables in the 3-SAT instance;
      to achieve the candidate possible world, the linear extension will then
      have to continue enumerating the elements of this group in the correct
      order.
    \item For the tuples with positions from $2m+1$ to $2m + 2n$ (the ``clause
      check'' gadget), for each $1 \leq j \leq n$, letting $j' \defeq 2n + j +
      1$, we describe tuples $j'$ and $j'+1$ in $S_1$, $S_2$, $S_3$:
      \begin{itemize}
        \item The first component is $j+2$;
        \item The second component carries values in
          $\{a, b, c\}$, where we write clause $C_j$ as $\pm x_a \vee \pm x_b
          \vee \pm x_c$. Specifically, the tuple $j'+1$ in relations $S_1$,
          $S_2$, and $S_3$ have values $a$, $b$, and $c$ respectively; and the tuple $j'$ in
          relations $S_1$, $S_2$, and $S_3$ have values $c$, $a$, $b$
          respectively.
        \item The third
          component carries values in $\{\l_-, \l_+\}$. In relation $S_1$, we give value
          $\l_+$ to tuple $j'+1$ and value $\l_-$ to tuple~$j'$ if the first
          variable of~$C_j$ is positive, and we do the reverse if it is
          negative. We do the same in relations $S_2$ and $S_3$ depending on the
          polarity of the second and third variables of~$C_j$, respectively.
      \end{itemize}
      The intuition for the clause check gadget is that, for each $1 \leq j \leq
      n$, the tuples at levels $j'$ and $j'+1$ check that clause $C_j$ is
      satisfied by the valuation chosen in the variable choice gadget.
      Specifically, if we consider the order constraints on the two elements
      from the same group (i.e., second component) which are implied by the
      order chosen for this variable in the variable choice gadget, the
      construction ensures that these order constraints plus the comparability
      relations of the chains imply a cycle (that is, an impossibility) iff the
      clause is violated by the chosen valuation.
    \item For the tuples with positions from $2m + 2n + 1$ to $3m + 2n$ (the
      ``closing gadget''), the definition is like the opening gadget but
      replacing $\e$ by $\s$, namely:
      \begin{itemize}
        \item The first component is $n+3$ for all tuples in $S_1$ and $0$ for
          all tuples in $S_2$ and $S_3$ (which again do not join with $R$);
        \item The second component is $i$ for the $i$-th tuple in $S_1$;
        \item The third component is $\e$ for all these tuples.
      \end{itemize}
      The intuition for the closing gadget is that it ensures that accumulation
      in each group ends with value $\e$.
  \end{itemize}

  We define the candidate possible world to consist of a list relation of $n$
  tuples; the $i$-th tuple carries value $i$ as its first component (group
  identifier) and the
  acceptation value from the monoid $\calM$ as its second component (accumulation value).
  The reduction that we described is
  clearly in PTIME, so all that remains is to show correctness of the reduction.
  
  To do so, we first describe the result of evaluating $\OR \defeq Q'(R, S_1, S_2, S_3)$ on
  the relations described above. Intuitively, it is just like $\Pi_{2, 3}
  (\sigma_{.2 \neq
  \text{``0''}}(S_1 \cup S_2 \cup S_3))))$, but with the following additional
  comparability relations: all tuples in all chains whose first component
  carried a value $i$ are less than all tuples in all chains whose first
  component carried a value $j > i$. In other words, we add comparability
  relations across chains as we move from one ``first component'' value to the
  next. The point of this is that it forces us to enumerate the tuples of the
  chains in a way that ``synchronizes'' across all chains whenever we change the
  first component value. Observe that, in keeping with
  Lemma~\ref{lem:lexwidth}, the width of $\OR$ has a constant bound,
  namely,~$3$.

  Let us now show the correctness of the reduction. For the forward direction,
  consider a valuation $\nu$ that satisfies the 3-SAT instance. Construct the
  linear extension of~$\OR$ as follows.

  \begin{itemize}
    \item For the opening gadget, enumerate all tuples of $S_1$ in the prescribed
      order. Hence, the current accumulation result in all $m$ groups is $\s$.
    \item For the variable choice gadget, for all $i$, enumerate the $i$-th
      tuples of $S_1$ and $S_2$ of the gadget in an order depending on
      $\nu(x_i)$: if $\nu(x_i)$ is $1$, enumerate first the tuple of~$S_1$ and
      then the tuple of~$S_2$, and do the converse if $\nu(x_i) = 0$. Hence, for
      all $1 \leq i \leq m$, the
      current accumulation result in group~$i$ is $\s \l_- \l_+$ if $\nu(x_i)$
      is $1$ and $\s \l_+ \l_-$ otherwise.
    \item For the clause check gadget, we consider each clause in order, for $1
      \leq j \leq n$, maintaining the property that, for each group $1 \leq i
      \leq n$, the current accumulation result in group $i$ is of the form $\s
      (\l_- \l_+)^*$ if $\nu(x_i) = 1$ and $\s (\l_+ \l_-)^*$ otherwise.
      
      Fix
      a clause $C_j$, let $j' \defeq 2n + j + 1$ as before, and study the tuples
      $j'$ and $j'+1$ of~$S_1, S_2, S_3$. As $C_j$ is satisfied under $\nu$, let
      $x_d$ be the witnessing literal (with $d \in \{a, b, c\}$), and let $d'$
      be the index (in $\{1, 2, 3\}$) of variable~$d$. Assume that $x_d$
      occurs positively; the argument is symmetric if it occurs negatively. By
      definition, $\nu(x_d) = 1$, and by construction tuple $j'$ in relation
      $S_{1 + (d' + 1 \text{ mod } 3)}$ carries value $\l_-$ and it is in
      group~$d$. Hence, we can enumerate it and group $d$ now carries a value of
      the form $\s (\l_- \l_+)^* \l_-$. Now, letting $x_e$ be the $1 + (d' + 1
      \text{ mod } 3)$-th variable of $\{x_a, x_b, x_c\}$, the two elements of
      group $e$ (tuple $j'+1$ of
      $S_{1 + (d' + 1 \text{ mod } 3)}$ and tuple $j'$ of $S_{1 + (d' + 1 \text{
        mod } 3)}$) both had all their predecessors enumerated; so we can
      enumerate them in the order that we prefer to satisfy the condition on the
      accumulation values; then we enumerate likewise the two elements in the
      remaining group in the order that we prefer, and last we enumerate the
      second element of group~$d$; so we have satisfied the invariants.
    \item Last, for the closing gadget, we enumerate all tuples of~$S_1$ and we have
      indeed obtained the desired accumulation result.
  \end{itemize}

  This concludes the proof of the forward direction.

  \medskip

  For the backward direction, consider any linear extension of~$\OR$. Thanks to
  the order constraints of $\OR$, the linear extension must enumerate tuples in
  the following order.

  \begin{itemize}
    \item First, all tuples of the opening gadget.
    \item Then, all tuples of the variable choice gadget. We use this to define
      a valuation $\nu$: for each variable $x_i$, we set $\nu(x_i) = 1$ if the
      tuple of $S_1$ in group $i$ was enumerated before the one in group~$S_2$,
      and we set $\nu(x_i) = 0$ otherwise.
    \item Then, for each $1 \leq j \leq n$, in order, tuples $2n + j + 1$ of
      $S_1$, $S_2$, $S_3$.

      Observe that, for each value of~$j$, just before we enumerate these tuples,
      it must be the case that the current accumulation value for every variable $x_i$ is
      of the form $\s (\l_- \l_+)^*$ if $\nu(x_i) = 1$, and $\s (\l_+ \l_-)^*$
      otherwise. Indeed, fixing $1 \leq i \leq n$, assume the case where $\nu(x_i) = 1$
      (the case where $\nu(x_i) = 0$ is symmetric). In this case, the accumulation
      state for $x_i$ after the variable choice gadget was $\s \l_- \l_+$, and
      each pair of levels in the clause check gadget made us enumerate either
      $\epsilon$ (variable $x_i$ did not occur in the clause) or one of
      $\l_-\l_+$ or $\l_+\l_-$ (variable $x_i$ occurred in the clause); as the
      3-SAT instance was preprocessed to ensure that each variable occurred only
      at most once in each clause, this case enumeration is exhaustive. Hence,
      the only way to obtain the correct accumulation result is to always
      enumerate $\l_- \l_+$, as if we ever do the contrary the accumulation
      result can never satisfy the regular expression that it should satisfy.

    \item Last, all tuples of the closing gadget.
  \end{itemize}

  What we have to show is that the valuation $\nu$ thus defined indeed satisfies
  the formula of the 3-SAT instance. Indeed, fix $1 \leq j \leq n$ and consider
  clause $C_j$. Let $S_i$ be the first relation where the linear extension
  enumerated a tuple for the clause check gadget of~$C_j$, and let $x_d$ be its
  variable (where $d$ is its group index). If $\nu(x_d) = 1$, then the
  observation above implies that the label of the enumerated element must be
  $\l_-$, as otherwise the accumulation result cannot be correct. Hence, by
  construction, it means that variable $x_d$ must occur positively in $C_j$, so
  $x_d$ witnesses that $\nu$ satisfies $C_j$. If $\nu(x_d) = 0$, the reasoning is
  symmetric. This concludes the proof in the backwards direction, so we have
  established correctness of the reduction, which concludes the proof.
\end{proof}

By contrast, it is not hard to see that the \cert problem for \PosRAaccgby
reduces to \cert for the same query without group-by, so it is no harder than
the latter problem, and all \cert tractability results from Section \ref{sec:fpt} extend.

\begin{theorem}\label{thm:easycertgby}
    Theorems~\ref{thm:certaintyptimec}, \ref{thm:aggregwa}, and \ref{thm:aggregnoproda} 
    extend to the \PosRAaccgby problem
    when imposing the same restrictions on query operators, accumulation, and input
    po-relations. Specifically:
  \begin{itemize}
    \item     \cert is in PTIME for any fixed \PosRAaccgby{} query
          that performs accumulation in a cancellative monoid.

        \item
          For any \PosRAaccgby{} query not using the $\times_\dir$ operator and with a \emph{finite}
  accumulation operator,
  \poss and \cert are in PTIME on po-databases of bounded width.

        \item
          For any \PosRAaccgby{} query not using any product operator and with a \emph{finite} and \emph{position-invariant}
  accumulation operator,
  \poss and \cert are in PTIME on po-databases whose relations
  have either bounded width or bounded ia-width.
  \end{itemize}

\end{theorem}

To prove this, we show the following auxiliary result.

  \begin{lemma}
    \label{lem:certreduction}
    For any \PosRAaccgby query $Q \defeq \accumgby_{h, \oplus, P} (Q')$ and family
    $\calD$ of po-databases, the \cert problem for $Q$ on input po-databases
    from $\calD$
    reduces in PTIME to the \cert problem for $\accum_{h, \oplus} (R)$ (where
    $R$ is a relation name), on the family $\calD'$ of po-databases mapping
    the name $R$ to a subset of a po-relation of $\{Q'(D) \mid D \in \calD\}$.
  \end{lemma}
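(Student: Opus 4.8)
The plan is to reduce group-by certainty to a polynomially-sized conjunction of ordinary-accumulation certainty instances, one per group. First I would evaluate $\OR \defeq Q'(D)$ in PTIME using Proposition~\ref{prp:repsys}. Since the underlying bag relation of $\OR$ is certain (the order does not affect which tuples are present), the set of group keys $K$ — the distinct values of the $P$-attributes among the tuples of $\OR$ — is the same in every possible world, and every unordered relation in $Q(D)$ has exactly one tuple $\langle t, v_t\rangle$ for each $t \in K$. For each $t \in K$, I would form the sub-po-relation $\OR_t$ of $\OR$ induced by the identifiers whose projection on $P$ equals $t$ (restricting $T$ and $<$ accordingly); note that $\OR_t$ is a subset of $\OR = Q'(D)$, hence a po-database in $\calD'$, and it is computable in PTIME.

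The core of the argument is a decomposition: $Q(D) = \{v\}$ if and only if (i) the set of group keys occurring in $v$ equals $K$, and (ii) for every $t \in K$, the value $v_t$ recorded in $v$ satisfies $\accum_{h, \oplus}(\OR_t) = \{v_t\}$. Granting this, the reduction is immediate: I check (i) directly in PTIME, and for each of the (polynomially many) keys $t \in K$ I query the \cert oracle for $\accum_{h, \oplus}(R)$ on the po-database mapping $R$ to $\OR_t$ with candidate result $v_t$, answering YES iff (i) holds and every oracle call returns YES. This is a polynomial-time Turing reduction, which is exactly what is needed to transfer tractability (as used in Theorem~\ref{thm:easycertgby}).

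To establish the decomposition I would prove that, for each fixed $t$, the set of within-group values realized across all possible worlds of $\OR$ is precisely $\accum_{h, \oplus}(\OR_t)$. For a possible world $L \in \pw(\OR)$, the group-by semantics yields $v_t(L) = \accum_{h, \oplus}(\sigma_{P = t}(L))$, and the selected sublist $\sigma_{P = t}(L)$ — with positions re-indexed from $1$ — is exactly the restriction of $L$ to the identifiers of $\OR_t$, i.e. a possible world of $\OR_t$; this gives $\{v_t(L) \mid L \in \pw(\OR)\} \subseteq \accum_{h, \oplus}(\OR_t)$, and also shows that positions need not be assumed invariant, since both computations re-index locally. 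The reverse inclusion is the crux, and I expect it to be the main obstacle: given any $L_t \in \pw(\OR_t)$, I must exhibit $L \in \pw(\OR)$ whose restriction to group $t$ is exactly $L_t$. This is a suborder-extension fact — a linear extension of the order induced on a subset always extends to a linear extension of the whole poset. I would prove it by taking the transitive closure $R^+$ of ${<} \cup {<_{L_t}}$ (viewing $<_{L_t}$ as relations on the group-$t$ identifiers) and showing $R^+$ is acyclic: any minimal cycle, after collapsing maximal runs of like edges, alternates $<$-edges and $<_{L_t}$-edges, and each $<$-edge then connects two group-$t$ elements (the endpoints of adjacent $<_{L_t}$-edges), so $u < v$ with $u,v$ in group $t$ forces $u <_{L_t} v$; replacing every $<$-edge this way collapses the cycle to a cycle in the strict total order $<_{L_t}$, a contradiction. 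Any linear extension of $R^+$ then witnesses the claim.

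Finally, with the equality $\{v_t(L) \mid L \in \pw(\OR)\} = \accum_{h, \oplus}(\OR_t)$ in hand, $Q(D)$ collapses to a single unordered relation exactly when each group's realized value set is the singleton $\{v_t\}$ and the keys match, which is precisely conditions (i)--(ii). This yields the correctness of the reduction and completes the proof.
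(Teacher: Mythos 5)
Your proposal is correct and follows essentially the same route as the paper's proof: evaluate $Q'(D)$ in PTIME, split it into the per-group sub-po-relations $\OR_t$ (which lie in $\calD'$), invoke the \cert oracle for $\accum_{h,\oplus}(R)$ once per group, and argue correctness via the two facts that every possible world of $\OR$ restricts to a possible world of each $\OR_t$ and, conversely, every possible world of $\OR_t$ extends to a possible world of $\OR$. The only differences are presentational: you make the group-key check explicit and you prove the suborder-extension fact in detail (acyclicity of the transitive closure of ${<}\cup{<_{L_t}}$), whereas the paper asserts it informally by saying the remaining tuples can be enumerated ``in an indifferent way.''
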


\begin{proof}
  To prove that, consider an instance of \cert for $Q$, defined by an input
  po-database $D$ of $\calD$
  and candidate possible world $L$. We first evaluate $\OR' \defeq Q'(D)$ in
  PTIME. Now, for each tuple value $t$ in $\Pi_P(\OR')$, let $\OR_t$ be the
  restriction of~$\OR'$ to the elements matching this value; note that the
  po-database mapping $R$ to $\OR_t$ is indeed in the family $\calD'$. We solve \cert for
  $\accum_{h, \oplus}(R)$ on each $R\mapsto\OR_t$ in PTIME with the candidate possible world
  obtained from $L$ by extracting the accumulation value for that group, and
  answer YES to the original \cert instance iff all these invocations answer
  YES. As this process is clearly in PTIME, it just remains to show correctness of the
  reduction.

  For the forward direction, assume that each of the invocations answers YES,
  but the initial instance to \cert was negative. Consider two linear extensions
  of $\OR'$ that achieve different accumulation results and witness that the
  initial instance was negative, and consider a group $t$ where these
  accumulation results for these two linear extensions differ. Considering the
  restriction of these linear extensions to that group, we obtain the two
  different accumulation values for that group, so that the \cert invocation for
  $\OR_t$ should not have answered YES.

  For the backward direction, assume that the invocation for tuple~$t$ does not answer YES, then
  considering two witnessing linear extensions for that invocation, and
  extending them two linear extensions of~$\OR'$ by enumerating other tuples in
  an indifferent way, we obtain two different accumulation results for~$Q$ which
  differ in their result for~$t$. This concludes the proof.
\end{proof}

This allows us to show Theorem~\ref{thm:easycertgby}.
  
  \begin{proof}
    We consider all tractability results
of Section~\ref{sec:fpt} in turn, and show that they extend to \PosRAaccgby
queries, under the same restrictions on operators, accumulation, and input
po-relations.

First, we consider the tractability of \cert for accumulation in a cancellative
    monoid (Theorem~\ref{thm:certaintyptimec}). As this result holds for any
    input po-database, tractability 
    for \PosRAaccgby follows directly from Lemma~\ref{lem:certreduction}.

Second, we consider the tractability of \cert for \Plexacc queries with a finite
accumulation operator on po-databases of bounded width
    (Theorem~\ref{thm:aggregwa}).
    The result extends because, for any
family $\calD$ of po-databases whose po-relations have width at most $k$ for some
$k \in \Nat$, we know by 
Lemma~\ref{lem:lexwidth} that the result $Q'(D)$ for $D \in \calD$ also
has width depending only on~$Q'$ and on~$k$, and we know that restricting to a
subset of $Q'(D)$ (namely, each group) does not increase the width (this is like
the case of selection in the proof of Lemma~\ref{lem:lexwidth}). Hence, the
    family $\calD'$ also has bounded width, and we can concludes using
    Lemma~\ref{lem:certreduction}.

Third, we consider 
the tractability of \cert for \Pnoprodacc queries with a finite and
    position-invariant accumulation operator on po-databases whose relations
    have either bounded width or bounded ia-width
    (Theorem~\ref{thm:aggregnoproda}).
    The result extends because, by
Lemma~\ref{lem:rewritenoprod} and subsequent observations, the result
$Q'(D)$ for $D \in \calD$ is a union of a po-relation of bounded width and of a
po-relation with bounded ia-width. Restricting to a subset (i.e., a group), this
    property is preserved (as in the case of selection in the proof of
    Lemma~\ref{lem:lexwidth} and of Lemma~\ref{lem:lexiawidthnoprod}),
    which allows us to
    conclude using Lemma~\ref{lem:certreduction}.
\end{proof}

\subsection{Duplicate Elimination}

We last study the problem of consolidating tuples with
\emph{duplicate values}. To this end, we define a new operator, $\dupelim$, and introduce a
semantics for it. The main problem is that tuples with the same
values may be ordered differently relative to other
tuples. To mitigate this, we introduce the notion
of \emph{id-sets}.

\begin{definition}
    \label{def:idset}
    Given a totally ordered po-relation $(\ID, T, <)$, a subset $\ID'$ of~$\ID$
    is an \emph{indistinguishable duplicate set} (or
    \deft{id-set}) if for every $\id_1, \id_2 \in \ID'$,
    we have $T(\id_1) = T(\id_2)$,
    and, for every $\id \in \ID \backslash \ID'$,
    we have $\id < \id_1$ iff $\id < \id_2$, and $\id_1 < \id$ iff $\id_2 < \id$.
\end{definition}

\begin{example}
    \label{exa:dup1}
    Consider the totally ordered relation
    $\OR_1 \defeq \Pi_{\mathit{hotelname}}(\mathit{Hotel})$, with
    $\mathit{Hotel}$ as in Figure~\ref{fig:examplerels}. The two
    ``Mercure''
    tuples are not an id-set: they disagree on their
    ordering with ``Balzac''.
    Consider now the totally ordered relation
    $\OR_2 \defeq \Pi_{\mathit{hotelname}}(\mathit{Hotel}_2)$: the two
    ``Mercure'' tuples are an id-set.
    Note that a singleton is always an id-set.
\end{example}
We define a semantics for $\dupelim$ on a totally ordered po-relation $\OR =
(\ID, T, <)$
via id-sets.
First, check that for every
tuple value $t$ in the image of~$T$, the set $\{\id \in \ID \mid T(\id) = t\}$ is an id-set
in~$\OR$. If this
holds, then we call $\OR$ \emph{safe}, and set $\dupelim(\OR)$ to be the singleton
$\{L\}$ of the only possible world of
the restriction of~$\OR$ obtained by picking one representative
element per id-set (clearly $L$ does not depend on the chosen
representatives).
Otherwise, we call $\OR$ \emph{unsafe} and say that
duplicate consolidation has \emph{failed}; we then set $\dupelim(\OR)$ to
be an empty set of possible worlds. Intuitively, duplicate
consolidation tries to reconcile (or ``synchronize'') order
constraints for tuples with the same values, and fails when it
cannot be done.

\begin{example}
    In Example~\ref{exa:dup1}, we have $\dupelim(\OR_1)=\emptyset$
    but $\dupelim(\OR_2) = (\textup{Balzac}, \textup{Mercure})$.
\end{example}
We then extend 
$\dupelim$ to po-relations by
considering all possible
results of duplicate elimination on the possible worlds,
ignoring the unsafe possible worlds. If no possible worlds are safe, then
we \emph{completely fail}.

\begin{definition}
    For any list relation $L$, we let $\OR_L$ be a po-relation such that
    $\pw(\OR_L) = \{L\}$.
    For $\OR$ a po-relation, let $\dupelim(\OR) \defeq
    \bigcup_{L \in pw(\OR)}\dupelim(\OR_L)$. We say that $\dupelim(\OR)$
    \deft{completely fails} if we have $\dupelim(\OR) = \emptyset$, i.e.,
    $\dupelim(\OR_L) = \emptyset$ for every $L\in
    pw(\OR)$.
\end{definition}

\begin{example}
    Consider the totally ordered po-relation
    $\mathit{Restaurant}$ from
    Figure~\ref{fig:examplerels}, and a totally ordered po-relation
    $\mathit{Restaurant}_2$ whose only possible
    world is $(\langlem
    \textup{Tsukizi} \ranglem,$ $\langlem \textup{Gagnaire} \ranglem)$.
    Let 
    $Q \defeq
    \dupelim(\Pi_{\mathit{restname}}(\mathit{Restaurant}) \cupgen
    \mathit{Restaurant}_2)$.
    Intuitively, $Q$ combines restaurant rankings,
    using duplicate consolidation to collapse two occurrences of the
    same 
    name to a single tuple.
    The only possible world of $Q$ is
    (\textup{Tsukizi}, \textup{Gagnaire}, \textup{TourArgent}), since
    duplicate elimination fails in the other possible worlds: 
    indeed, this is the only possible way to combine the rankings.
\end{example}
We next show that the result of $\dupelim$ can still be represented as a
po-relation,
up to complete failure (which may be efficiently identified).

 We first define the notion of \emph{quotient} of a
    po-relation by \emph{value equality}.

    \begin{definition}
        For a po-relation $\OR=(\ID,T,{<})$, we define the \emph{value-equality
            quotient of~$\OR$} as
        the directed graph $\mathrm{G}_\OR=(\ID', E)$, where
        \begin{itemize}
            \item $\ID'$ is the quotient of
            $\ID$ by the equivalence relation $\id_1\sim\id_2\Leftrightarrow
            T(id_1)=T(id_2)$, i.e., it is a set of equivalence classes that are subsets of~$\ID$;
          \item The edge set $E$ is defined by setting $(\id'_1, \id'_2) \in E$
            for $\id_1', \id_2' \in \ID'$ iff
            $\id_1' \neq
            \id_2'$ and there are $\id_1\in\id_1'$ and $\id_2\in\id_2'$ such
            that $\id_1<\id_2$.
        \end{itemize}
    \end{definition}

    We claim that cycles in the value-equality quotient of~$\OR$ precisely
    characterize complete failure of $\dupelim$.

    \begin{proposition}\label{prp:pocycle}
        For any po-relation $\OR$, $\dupelim(\OR)$ completely fails iff
        $\mathrm{G}_\OR$ has a cycle.
    \end{proposition}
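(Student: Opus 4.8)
The plan is to reduce the statement to a purely order-theoretic question about linear extensions and then recognize that question as the acyclicity of $\mathrm{G}_\OR$. First I would pin down what safety means for a single possible world. In a totally ordered relation, a set $\ID'$ of identifiers all carrying the same value $t$ is an id-set precisely when those identifiers occupy a \emph{contiguous} block of the total order: if some identifier $\id$ with a different value lay strictly between two members $\id_1 < \id < \id_2$ of $\ID'$, then $\id_1 < \id$ would hold while $\id_2 < \id$ would fail, violating indistinguishability; conversely, contiguity forces every outside identifier to lie entirely before or entirely after the block. Hence a possible world $L$ is safe iff all identifiers sharing a value appear consecutively in $L$. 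Consequently $\dupelim(\OR)$ completely fails iff \emph{no} linear extension of $<$ groups equal-valued identifiers into contiguous blocks, and the proposition becomes: such a grouping linear extension exists iff $\mathrm{G}_\OR$ is acyclic.

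For the direction \enquote{acyclic $\Rightarrow$ grouping extension exists}, I would take a topological ordering $c_1, \dots, c_m$ of the value-equality classes (the vertices of $\mathrm{G}_\OR$) and build a total order on $\ID$ by enumerating the classes in this order, using within each class $c_i$ an arbitrary linear extension of the restriction of $<$ to $c_i$. This obviously groups equal values. To check it extends $<$: a comparability $\id < \id'$ inside one class is respected by the chosen within-class linear extension, while a cross-class comparability $c_i \ni \id < \id' \in c_j$ induces the edge $c_i \to c_j$, so $c_i$ precedes $c_j$ in the topological order and $\id$ precedes $\id'$. Here I would record the key consistency fact that $\mathrm{G}_\OR$ has no $2$-cycles: if some pair also compared in the opposite direction we would get both edges $c_i \to c_j$ and $c_j \to c_i$, contradicting acyclicity; thus all cross-class comparabilities between two given classes point the same way, making the block order coherent.

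For the converse I would argue the contrapositive, that a cycle rules out any grouping extension. Suppose $\mathrm{G}_\OR$ contains a cycle $c_1 \to c_2 \to \cdots \to c_k \to c_1$ of distinct classes, and suppose for contradiction that some linear extension $<'$ groups equal values contiguously. Since distinct value classes then occupy disjoint contiguous blocks partitioning $\ID$, the relation $<'$ induces a strict total order $\prec$ on the classes by comparing their blocks. Each edge $c_i \to c_{i+1}$ is witnessed by identifiers $\id \in c_i$, $\id' \in c_{i+1}$ with $\id <' \id'$; as the two blocks are disjoint and contiguous, the block of $c_i$ must lie entirely before that of $c_{i+1}$ (otherwise every element of $c_i$, including $\id$, would follow every element of $c_{i+1}$, including $\id'$), so $c_i \prec c_{i+1}$. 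Chaining around the cycle yields $c_1 \prec c_1$, contradicting irreflexivity of $\prec$; hence no grouping extension exists and $\dupelim(\OR)$ completely fails.

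The main obstacle is not any single step but getting the contiguous-block characterization exactly right and, in the forward construction, justifying that a single witnessing edge determines the block direction: contiguity is doing the real work there, and the no-$2$-cycle observation is precisely what upgrades the class-by-class enumeration from a mere grouping into a legitimate linear extension of $<$. Cycle detection being in PTIME then yields the accompanying remark that complete failure is efficiently recognizable.
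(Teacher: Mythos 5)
Your proof is correct, and it follows the same overall decomposition as the paper's: one direction via a topological sort of $\mathrm{G}_\OR$ with within-class linear extensions (identical to the paper's construction), and the other via chaining order constraints around the cycle to reach a contradiction. Where you genuinely differ is in how the cycle direction is executed. The paper works at the level of individual identifiers: it picks witnesses $\id_{1i}, \id_{2i}$ in each class along the cycle and proves by induction that $\id_{21} \leq_L \id_{2j}$ for all $j$, contradicting the id-set property directly, without ever isolating contiguity as a separate fact. You instead first prove the clean intermediate characterization that a possible world is safe iff each value class occupies a contiguous block, and then run the contradiction at the quotient level: the blocks induce a strict total order $\prec$ on classes, each edge of the cycle forces $c_i \prec c_{i+1}$, and chaining yields $c_1 \prec c_1$. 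Your version is more modular and arguably more transparent --- the contiguity lemma is exactly the abstraction the paper only makes explicit later (in the proof of Theorem~\ref{thm:duelim-por}, where possible worlds of $\dupelim(\OR)$ are identified with topological sorts of $\mathrm{G}_\OR$) --- while the paper's element-level induction avoids having to formalize the block order on classes. Both are complete; your observation that $\prec$ is a strict total order on disjoint contiguous blocks is precisely what replaces the paper's inductive bookkeeping.
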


    \begin{proof}
      Fix an input po-relation $\OR = (\ID, T, <)$.
        We first show that the existence of a cycle implies complete failure of
        $\dupelim$. Let $\id'_1,\dots,\id'_n,\id'_1$ be a simple cycle of
        $\mathrm{G}_\OR$.
        For all $1\leq i\leq n$, there exist $\id_{1i},\id_{2i}\in\id'_1$ such
        that $\id_{2i}<\id_{1(i+1)}$
        (with the convention $\id_{1(n+1)}=\id_{11}$)
        and the $T(\id_{2i})$ are pairwise distinct.

        Let $L$ be a possible world of $\OR$ and let us show that $\dupelim$
        fails on any po-relation $\OR_L$ that \emph{represents}~$L$, i.e.,
        $\OR_L = (\ID_L, T_L, {<_L})$ is totally ordered and $\pw(\OR_L) = \{L\}$.
        Assume by contradiction that
        for all $1\leq i\leq n$, $\id'_i$ forms an id-set of $\OR_L$. Let us show
        by induction on~$j$ that for all $1\leq j\leq n$,
        $\id_{21}\leq_{L}\id_{2j}$, where $\leq_L$ denotes the non-strict order
        defined from $<_L$ in the expected fashion. The base case is trivial. Assume this holds
        for $j$ and let us show it for $j+1$. Since $\id_{2j}<\id_{1(j+1)}$, we
        have $\id_{21}\leq\id_{2j}<_L\id_{1(j+1)}$. Now, if $\id_{2(j+1)}<_{L}\id_{21}$,
        then $\id_{2(j+1)}<_{L}\id_{21}<_L\id_{1(j+1)}$ with
        $T(\id_{2(j+1)})=T(\id_{1(j+1)})\neq T(\id_{21})$, so this contradicts the fact that
        $\id'_{j+1}$ is an id-set. Hence, as $L$ is a total order, we must have
        $\id_{21} \leq_L \id_{2(j+1)}$, which
        proves the induction case. Now the claim proved by induction implies that
        $\id_{21}\leq_{L}\id_{2n}$, and we had $\id_{2n}<\id_{11}$ in~$\OR$ and
        therefore $\id_{2n}<_L\id_{11}$, so this contradicts the fact that
        $\id'_1$ is an id-set. Thus, $\dupelim$ fails in $\OR_L$. We have thus shown that
        $\dupelim$ fails in every possible world of~$\OR$, so that it completely fails.

        \medskip

        Conversely, let us assume that $\mathrm{G}_\OR$ is acyclic. Consider a
        topological sort of $\mathrm{G}_\OR$ as $\id'_1,\dots,\id'_n$. For $1\leq
        j\leq n$, let $L_j$ be a linear extension of the poset
        $(\id'_j,\restr{<}{\id'_j})$. Let $L$ be the concatenation of $L_1,\dots L_n$.
        We claim $L$ is a linear extension of $\OR$ such that $\dupelim$ does not
        fail in~$\OR_L = (\ID_L, T_L, {<_L})$; this latter fact is clear by construction of $L$, so we must only show
        that $L$ obeys the comparability relations of~$\OR$. Now, let
        $\id_1<\id_2$ in $\OR$. Either for some $1\leq j\leq n$ we have
        $\id_1, \id_2\in\id'_j$, and then the tuple for $\id_1$ precedes the one
        for $\id_2$ in~$L_j$ by construction, so
        we have $t_1<_L t_2$; or they are in different classes $\id'_{j_1}$ and
        $\id'_{j_2}$
        and this is
        reflected in $\mathrm{G}_\OR$, which means that $j_1<j_2$ and
        $\id_1<_L \id_2$. Hence, $L$ is a linear extension, which concludes the proof.
    \end{proof}
    We can now state and prove the result.

\begin{theorem}\label{thm:duelim-por}
    For any po-relation $\OR$, we can test in PTIME if
    $\dupelim(\OR)$ completely fails; if it does not, then
    we can compute in PTIME a po-relation $\OR'$ such that
    $pw(\OR')=\dupelim(\OR)$.
\end{theorem}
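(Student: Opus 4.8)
The plan is to leverage Proposition~\ref{prp:pocycle} for the first assertion and to build $\OR'$ directly from the value-equality quotient $\mathrm{G}_\OR$ for the second. Detecting complete failure is then immediate: we compute $\mathrm{G}_\OR = (\ID', E)$ in PTIME (group the identifiers of $\OR$ by their value under $T$ to form $\ID'$, then inspect each pair of classes for a witnessing comparability to build $E$), and test whether $\mathrm{G}_\OR$ has a cycle, which is in PTIME; by Proposition~\ref{prp:pocycle} this answers exactly whether $\dupelim(\OR)$ completely fails.

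Now assume $\mathrm{G}_\OR$ is acyclic. I would define the candidate po-relation $\OR' \defeq (\ID', T', <')$, where $\ID'$ is the set of value-classes, $T'$ maps each class $\id'$ to the common value $T(\id)$ of its members (well-defined by construction), and $<'$ is the transitive closure of the edge relation $E$ of $\mathrm{G}_\OR$. Since $\mathrm{G}_\OR$ is a DAG, $<'$ is irreflexive and transitive, so $\OR'$ is a genuine po-relation, and it is clearly computable in PTIME. The whole content of the theorem is then the claim $\pw(\OR') = \dupelim(\OR)$.

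The key observation driving both inclusions is that, in a \emph{totally ordered} po-relation, a value-class is an id-set iff its occurrences form a contiguous block; consequently a possible world $L$ of $\OR$ is safe iff every value-class occupies a contiguous interval of $L$, in which case collapsing it yields an ordering $L'$ of~$\ID'$. For the inclusion $\dupelim(\OR) \subseteq \pw(\OR')$, I take any $L' \in \dupelim(\OR)$, coming from a safe world $L$, and check each generating edge $\id_a' \to \id_b'$ of $\mathrm{G}_\OR$: it is witnessed by some $\id_a < \id_b$ in $\OR$, hence $\id_a <_L \id_b$; as the two blocks are disjoint intervals of $L$, this forces the block of $\id_a'$ entirely before that of $\id_b'$, so $L'$ respects every edge and thus $<'$. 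For the converse $\pw(\OR') \subseteq \dupelim(\OR)$, I take any $L' \in \pw(\OR')$ and reuse the block-concatenation construction from the converse direction of the proof of Proposition~\ref{prp:pocycle}: enumerate the classes in the order $L'$ (which, being a linear extension of $<'$, is a topological sort of $\mathrm{G}_\OR$), and within each class pick an arbitrary linear extension of $<$ restricted to it. This yields a linear extension $L$ of $\OR$ (comparabilities inside a class are handled internally, comparabilities across classes are reflected by an edge of $\mathrm{G}_\OR$ and hence by the class order in $L'$), in which every class is a contiguous block and therefore an id-set, so $L$ is safe and its collapse is exactly $L'$.

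I expect the main obstacle to be the careful verification of the contiguity observation and its use in the first inclusion: one must argue that two distinct contiguous blocks of a total order are comparable as intervals, so that a single witnessing comparability $\id_a <_L \id_b$ pins down the relative order of the whole blocks — and, dually, that every safe world indeed produces exactly one representative per value, so that all elements of $\dupelim(\OR)$ are orderings of the common identifier set $\ID'$ and the comparison with $\pw(\OR')$ is meaningful. Everything else is routine and directly mirrors the constructions already used for Proposition~\ref{prp:pocycle}.
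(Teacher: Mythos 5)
Your proposal is correct and follows essentially the same route as the paper: detect complete failure via acyclicity of $\mathrm{G}_\OR$ (Proposition~\ref{prp:pocycle}), build $\OR'$ on the value-classes with $<'$ the transitive closure of $E$, and prove both inclusions of $\pw(\OR')=\dupelim(\OR)$ by relating safe worlds of $\OR$ to topological sorts of $\mathrm{G}_\OR$, reusing the block-concatenation construction from Proposition~\ref{prp:pocycle}. Your explicit contiguity observation (a value-class in a total order is an id-set iff it is a contiguous block) is exactly the fact the paper uses implicitly when it says the occurrences of each value are contiguous, so the two arguments coincide.
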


\begin{proof}
    We first observe that $\mathrm{G}_\OR$ can be constructed in PTIME,
    and that testing that $\mathrm{G}_\OR$ is acyclic is also done in
    PTIME. Thus, using Proposition~\ref{prp:pocycle}, we can determine
    in PTIME whether $\dupelim(\OR)$ fails.

    If $\dupelim(\OR)$ does not fail, then we let $\mathrm{G}_\OR=(\ID',E)$ and construct the
    relation $\OR'$ that will stand for $\dupelim(\OR)$ as
    $(\ID',T',<')$, where  $T'(\id')$ is the unique $T'(\id)$ for
    $\id\in\id'$ and $<'$ is the transitive closure of $E$, which is
    antisymmetric because $\mathrm{G}_\OR$ is acyclic. Observe that
    the underlying bag relation of $\OR'$ has one identifier for each distinct
    tuple value in $\OR$, but has no duplicates.

    Now, it is easy to check that $\pw(\OR')=\dupelim(\OR)$. Indeed, any
    possible world $L$ of~$\OR'$ can be achieved in $\dupelim(\OR)$ by
    considering, as in the proof of Proposition~\ref{prp:pocycle}, some
    possible world of $\OR$ obtained following the topological sort of
    $\mathrm{G}_\OR$ defined by $L$. This implies that $\pw(\OR')
    \subseteq \dupelim(\OR)$.

    Conversely, for any possible world $L$ of $\OR$, $\dupelim(\OR_L)$ (for
    $\OR_L$ a po-relation that represents $L$) fails
    unless, for each tuple value, the occurrences of that tuple value in
    $\OR_L$ is an id-set. Now, in such an $L$, as the occurrences of each
    value are contiguous and the order relations reflected in
    $\mathrm{G}_\OR$ must be respected, $L$ is defined by a topological
    sort of $\mathrm{G}_\OR$ (and some topological sort of each id-set
    within each set of duplicates), so that $\dupelim(\OR_L)$ can also be
    obtained as the corresponding linear extension of $\OR'$. Hence, we
    have $\dupelim(\OR) \subseteq \pw(\OR')$, proving their equality and
    concluding the proof.
\end{proof}

Last, we observe that $\dupelim$ can indeed be used to undo some of the effects
of bag semantics.

\begin{proposition}\label{prp:dupunion}
    For any po-relation $\OR$, we have $\dupelim(\OR \cup \OR) = \dupelim(\OR)$:
    in particular, one completely fails iff the other does.
\end{proposition}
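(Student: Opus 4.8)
The plan is to reduce everything to the value-equality quotient graph $\mathrm{G}_\OR$ introduced just before Proposition~\ref{prp:pocycle}, and to show that forming the self-union leaves this graph unchanged. Concretely, write $\OR = (\ID, T, {<})$ and recall that $\OR \cup \OR = (\ID \sqcup \ID_2, T'', {<} \parallel {<_2})$ is the parallel composition of two disjoint isomorphic copies of $\OR$: the second copy $(\ID_2, T_2, {<_2})$ is an isomorphic renaming of $\OR$ carrying exactly the same tuple values, and the parallel composition adds \emph{no} comparability relation between the two copies.

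First I would describe the vertex sets. Since $\OR$ and $\OR \cup \OR$ have precisely the same set of distinct tuple values, the quotient of $\ID \sqcup \ID_2$ by value-equality is in natural bijection with the quotient of $\ID$ by value-equality, the bijection matching two value-classes iff they carry the same tuple value. Under this identification, the vertex set of $\mathrm{G}_{\OR \cup \OR}$ coincides with that of $\mathrm{G}_\OR$. Next I would check that the edge sets coincide. An edge $(v_1, v_2)$ with $v_1 \neq v_2$ belongs to $\mathrm{G}_{\OR \cup \OR}$ iff some $\id_1, \id_2 \in \ID \sqcup \ID_2$ with values $v_1, v_2$ satisfy $\id_1 <'' \id_2$; by the definition of parallel composition such a comparability can only hold when both identifiers lie in the same copy, and within either copy (each isomorphic to $\OR$) this happens exactly when the corresponding edge is present in $\mathrm{G}_\OR$. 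Hence the two copies contribute exactly the edges of $\mathrm{G}_\OR$ and nothing more, so $\mathrm{G}_{\OR \cup \OR} = \mathrm{G}_\OR$ under the vertex identification above.

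This step---verifying that no new edges appear across copies and that the duplicated vertices collapse back to the same value-classes---is the only point requiring care, and it is exactly where the parallel-composition semantics of union is used; everything else is bookkeeping.

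From this identity the proposition follows at once. By Proposition~\ref{prp:pocycle}, $\dupelim(\OR \cup \OR)$ completely fails iff $\mathrm{G}_{\OR \cup \OR}$ has a cycle iff $\mathrm{G}_\OR$ has a cycle iff $\dupelim(\OR)$ completely fails, which settles the ``in particular'' clause. When neither completely fails, the po-relation $\OR'$ constructed in the proof of Theorem~\ref{thm:duelim-por} depends only on the quotient graph (its identifiers are the value-classes, its tuple map sends each class to its value, and its order is the transitive closure of the edges). Since $\mathrm{G}_{\OR \cup \OR} = \mathrm{G}_\OR$, the same po-relation $\OR'$ is produced in both cases, so $\dupelim(\OR \cup \OR) = \pw(\OR') = \dupelim(\OR)$, concluding the proof.
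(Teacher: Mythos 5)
Your proof is correct and follows essentially the same route as the paper's: both arguments show that the value-equality quotient graphs $\mathrm{G}_\OR$ and $\mathrm{G}_{\OR \cup \OR}$ coincide (using that the union's parallel composition adds no cross-copy comparabilities), then invoke Proposition~\ref{prp:pocycle} for the failure claim and the construction of Theorem~\ref{thm:duelim-por} for the equality of outputs. Your write-up is somewhat more explicit about the vertex identification and the cross-copy edge analysis, but the substance is identical.
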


\begin{proof}
    Let $G_\OR$ be the value-equality quotient of $\OR$ and $G'_\OR$ be the
    value-equality quotient of $\OR \cup \OR$. It is easy to see that these two
    graphs are identical: any edge of~$G_\OR$ witnesses the existence of the same
    edge in~$G'_\OR$, and conversely any edge in $G'_\OR$ must correspond to a
    comparability relation between two tuples of one of the copies of~$\OR$ (and
    also in the other copy), so
    that it also witnesses the existence of the same edge in~$\OR$. Hence,
    by Proposition~\ref{prp:pocycle}, 
    one
    duplicate elimination operation completely fails iff the other does.
    Further, by Theorem~\ref{thm:duelim-por}, we have
    indeed the equality that we claimed.
\end{proof}

We can also show that most of our previous tractability results 
Sections~\ref{sec:posscert}--\ref{sec:fpt}
still apply when
the duplicate elimination operator is added. We first clarify the semantics of query evaluation when complete failure
    occurs: given a query~$Q$ in \PosRA{} extended with $\dupelim$, and given a po-database $D$,
    if complete failure occurs at any
    occurrence of the $\dupelim$ operator when evaluating $Q(D)$, then we set
    $\pw(Q(D)) \defeq \emptyset$,
    pursuant to our choice of defining query evaluation on po-relations as
    yielding all possible results on all possible worlds.
    If $Q$ is
    a \posRAagg{} query extended with $\dupelim$, we likewise say that its
    possible accumulation results are $\emptyset$.

    This implies that for any \PosRA query $Q$ extended with $\dupelim$,
    for any input po-database $D$, and
    for any candidate possible world $v$, the \poss and \cert problems for $Q$ are
    vacuously false on instance $(D, v)$
    if complete failure occurs at any stage when evaluating $Q(D)$. The same holds
    for \PosRAacc queries.

\begin{theorem}\label{thm:easypossdupelim}
  Theorems~\ref{thm:aggregw}, \ref{thm:certaintyptimec}, \ref{thm:aggregwa} and
  Proposition~\ref{prop:otherdef}
  extend to \PosRA and \PosRAagg where we allow $\dupelim$ (but impose
  the same restrictions on query operators, accumulation, and input
  po-relations). Specifically:
  \begin{itemize}
    \item For any fixed $k\in\NN$ and fixed \Plex query~$Q$ which may
      additionally use $\dupelim$, the \poss problem for~$Q$ in in PTIME on
      po-databases of bounded width.
    \item For any \PosRAagg query~$Q$ which may additionally use $\dupelim$ and
      where accumulation is performed in a cancellative monoid, the \cert
      problem for~$Q$ is in PTIME.
    \item For any \Plexacc query~$Q$ which may additionally use $\dupelim$ and
      where the accumulation operator is finite, the \poss and \cert problems
      are in PTIME on po-databases of bounded width.
    \item For any \PosRA query which may additionally use the $\dupelim$
      operator, the problems \textbf{select-at-k}, \textbf{top-k}, and
      \textbf{tuple-level comparison} are in PTIME.
  \end{itemize}
\end{theorem}

To prove this result, observe that these four results are proved by first
    evaluating the query result in PTIME using Proposition~\ref{prp:repsys}. So we
    can still evaluate the query in PTIME, using in addition
    Theorem~\ref{thm:duelim-por}. Either complete failure occurs at some point in
    the evaluation, and we can
    immediately solve \poss and \cert by our initial remark above, or no complete failure occurs and we obtain
    in PTIME a po-relation $\OR$ on which to solve \poss and \cert. Hence, in what
    follows, we can assume that no complete failure occurs at any stage.

    It is then immediate that
    Theorem~\ref{thm:certaintyptimec} and Proposition~\ref{prop:otherdef} still
    apply, because they did not make any assumptions on the po-relation~$\OR$ on
    which they applied. As for Theorems~\ref{thm:aggregw}
    and~\ref{thm:aggregwa}, the only assumption that they made on~$\OR$ is that
    its width was constant. Hence, we can conclude the proof of Theorem~\ref{thm:easypossdupelim}
    from the following width preservation result.

    \begin{lemma}
        \label{prp:wdupe}
        For any constant $k \in \NN$ and po-relation $\OR$ of width $\leq k$, if
        $\dupelim(\OR)$ does not completely fail, then it has width $\leq k$.
    \end{lemma}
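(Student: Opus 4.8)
The plan is to work with the explicit po-relation representation of $\dupelim(\OR)$ furnished by Theorem~\ref{thm:duelim-por}, and then bound its width directly by exhibiting, for every antichain of that representation, an antichain of the same cardinality inside the original po-relation $\OR$. Concretely, since $\dupelim(\OR)$ does not completely fail, Proposition~\ref{prp:pocycle} ensures that the value-equality quotient $\mathrm{G}_\OR = (\ID', E)$ is acyclic, and Theorem~\ref{thm:duelim-por} then yields a po-relation $\OR' = (\ID', T', {<'})$ with $\pw(\OR') = \dupelim(\OR)$, where $<'$ is the transitive closure of $E$. Saying that $\dupelim(\OR)$ has width $\leq k$ means exactly that $\OR'$ has width $\leq k$ (Definition~\ref{def:width}), so it suffices to bound the size of antichains of $\OR'$.

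First I would fix an arbitrary antichain $A'$ of $\OR'$ and choose, for each class $\id' \in A'$, an arbitrary representative $\rho(\id') \in \id' \subseteq \ID$. Because distinct classes of the quotient are disjoint subsets of~$\ID$, the map $\rho$ is injective on $A'$, so its image $\{\rho(\id') \mid \id' \in A'\}$ has cardinality exactly $\card{A'}$.

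The key step is to show that this image is an antichain of $\OR$. I would take two distinct classes $\id'_1, \id'_2 \in A'$ with representatives $\id_1 = \rho(\id'_1)$ and $\id_2 = \rho(\id'_2)$, and suppose toward a contradiction that $\id_1 < \id_2$ in $\OR$ (the reverse case being symmetric). Since $\id'_1 \neq \id'_2$ and the edge relation $E$ is defined existentially over representatives, the pair $(\id_1, \id_2)$ already witnesses an edge $(\id'_1, \id'_2) \in E$; hence $\id'_1 <' \id'_2$ in the transitive closure, contradicting the incomparability of $\id'_1$ and $\id'_2$ in the antichain $A'$. Thus $\id_1$ and $\id_2$ are incomparable in $\OR$, and the image of $\rho$ is an antichain of $\OR$ of size $\card{A'}$. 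As $\OR$ has width $\leq k$ by hypothesis, $\card{A'} \leq k$, and since $A'$ was arbitrary, $\OR'$ has width $\leq k$.

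I do not expect a serious obstacle here; the only point demanding care is the lifting from incomparability in $\OR'$ to incomparability of the chosen representatives in $\OR$. This lifting hinges on the \emph{existential} definition of $E$: a single comparable pair among the chosen representatives already forces a quotient edge, so the argument is independent of which representatives are selected. Acyclicity of $\mathrm{G}_\OR$ is used only upstream, to guarantee via Theorem~\ref{thm:duelim-por} that $\OR'$ is a genuine po-relation (i.e.\ $<'$ is a strict partial order) to which the notion of width applies.
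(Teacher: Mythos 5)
Your proposal is correct and follows essentially the same approach as the paper's proof: pick one representative per class of a given antichain of $\dupelim(\OR)$, and show by contradiction that these representatives form an antichain of~$\OR$, since any comparability among representatives of distinct classes forces an edge in the value-equality quotient $\mathrm{G}_\OR$ and hence a comparability $<'$ between the classes. Your justification via the explicit construction of $<'$ as the transitive closure of~$E$ in Theorem~\ref{thm:duelim-por} is, if anything, slightly more direct than the paper's appeal to the topological-sort characterization of the possible worlds.
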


    \begin{proof}
        It suffices to show that to every antichain $A$ of
        $\dupelim(\OR)$, there is an
        antichain $A'$ of the same cardinality in $\OR$. Construct $A'$ by picking a
        member of each of the classes of~$A$. Assume by contradiction that $A'$ is
        not an antichain, hence, there are two tuples $t_1 < t_2$ in~$A'$, and
        consider the corresponding classes $\id_1$ and $\id_2$ in $A$. By our
        characterization of the possible worlds of $\dupelim(\OR)$
        in the proof of Theorem~\ref{thm:duelim-por}
        as obtained from the topological sorts
        of the value-equality quotient $\mathrm{G}_\OR$ of $\OR$, as $t_1 < t_2$ implies
        that $(\id_1, \id_2)$ is an edge of $\mathrm{G}_\OR$, we conclude that we have $\id_1
        < \id_2$ in $A$, contradicting the fact that it is an antichain.
    \end{proof}

We have just shown in Theorem~\ref{thm:easypossdupelim} that our tractability
results still apply when we allow the duplicate elimination operator.
Furthermore, if in a set-semantics spirit we {\em require} that the
query output has no duplicates, \poss and \cert are always tractable (as this
avoids the technical difficulty of Example~\ref{exa:notposet}).

\begin{theorem}\label{thm:posscertnodupes}
    For any \PosRA query $Q$, \poss and \cert for $\dupelim(Q)$
    are in PTIME.
\end{theorem}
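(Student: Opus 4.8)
The plan is to reduce everything to the duplicate-free po-relation produced by Theorem~\ref{thm:duelim-por}, on which matching a candidate list relation becomes unambiguous. First I would evaluate $\OR \defeq Q(D)$ in PTIME using Proposition~\ref{prp:repsys}, and then invoke Theorem~\ref{thm:duelim-por}: in PTIME we can test whether $\dupelim(\OR)$ completely fails, and if so we answer immediately, since then $\pw(\dupelim(Q)(D)) = \emptyset$ and, pursuant to our convention on complete failure, both \poss and \cert are vacuously false. Otherwise Theorem~\ref{thm:duelim-por} hands us in PTIME a po-relation $\OR' = (\ID', T', {<'})$ with $\pw(\OR') = \dupelim(\OR)$; crucially, $T'$ is \emph{injective}, since by construction $\OR'$ has exactly one identifier per distinct tuple value of~$\OR$.

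The key point is that injectivity of~$T'$ removes the technical difficulty of Example~\ref{exa:notposet}: each value occurs at most once, so there is at most one way to align a candidate list against the identifiers. For \poss on a candidate list relation $L = (v_1, \ldots, v_m)$, I would first check that $L$ lists each value in the image of~$T'$ exactly once and contains no other value; this is necessary because $T'$ is injective. If it passes, each position~$i$ determines the unique identifier $\id_i \defeq (T')^{-1}(v_i)$, yielding a candidate enumeration $\id_1, \ldots, \id_m$ of~$\ID'$. It then remains to verify in PTIME that this total order is a linear extension of~$<'$, i.e.\ that $\id_a <' \id_b$ implies $a < b$ for every comparable pair; we answer YES iff it does. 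Correctness is immediate: by injectivity, any linear extension achieving~$L$ must enumerate $\id_i$ at position~$i$, so $L \in \pw(\OR')$ iff the unique candidate order is a valid linear extension.

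For \cert I would combine the above with the observation that, because $T'$ is injective, two distinct linear extensions of~$<'$ necessarily yield two distinct list relations (they disagree on the relative order of two identifiers carrying distinct values). Hence $\card{\pw(\OR')} = 1$ iff $<'$ is already a total order, which is checkable in PTIME, e.g.\ by verifying that every pair of identifiers in~$\ID'$ is comparable. Thus $\pw(\OR') = \{L\}$ iff both $L \in \pw(\OR')$, decided as above, and $<'$ is total; answering accordingly solves \cert in PTIME.

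The main work here is conceptual rather than computational. The essential obstacle in the general case is the potentially exponential number of ways to align a candidate list with a po-relation having repeated values, which is exactly what makes \poss NP-hard in Theorem~\ref{thm:posscomp1}. The crux of this proof is simply that $\dupelim$, via Theorem~\ref{thm:duelim-por}, delivers a po-relation whose labeling is injective, collapsing all those alignments into a single one; once that representation is in hand, every remaining step is a routine PTIME check.
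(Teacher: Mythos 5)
Your proposal is correct, and its overall skeleton matches the paper's proof: evaluate $Q(D)$ via Proposition~\ref{prp:repsys}, apply Theorem~\ref{thm:duelim-por} to detect complete failure (rejecting vacuously) or obtain a duplicate-free po-relation, then exploit injectivity of the labeling to kill the alignment ambiguity of Example~\ref{exa:notposet}. For \poss{} the two arguments are essentially the same: the paper materializes the unique matching by adding the order constraints of~$L$ to the po-relation and testing antisymmetry of the transitive closure via strongly connected components, while you directly check that the uniquely induced enumeration $\id_1, \ldots, \id_m$ respects every constraint of~$<'$; these are equivalent PTIME tests. Where you genuinely diverge is \cert: the paper simply invokes Theorem~\ref{thm:certptime}, the general PTIME certainty result for \PosRA{} queries, which itself rests on the cancellative-monoid and safe-swaps machinery of Theorem~\ref{thm:certaintyptimec}. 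You instead give an elementary, self-contained characterization --- with an injective labeling, two distinct linear extensions always produce distinct list relations, so $\card{\pw(\OR')} = 1$ iff $<'$ is a total order --- and then decide \cert{} as ``$<'$ is total and $L \in \pw(\OR')$.'' Your route buys independence from the accumulation machinery and yields a crisp structural criterion (certainty $=$ totality) specific to the duplicate-free setting; the paper's route buys brevity by reusing an already-established general theorem. Both are valid and run in polynomial time.
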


\begin{proof}
    Let $D$ be an input po-relation, and $L$ be the candidate possible world
    (a list relation).
    We compute the po-relation $\OR'$ such that $\pw(\OR')= Q(D)$ in PTIME using
    Proposition~\ref{prp:repsys}
    and the po-relation $\OR \defeq \dupelim(\OR')$ in PTIME using
    Theorem~\ref{thm:duelim-por}. If duplicate elimination fails, then we vacuously
    reject for \poss and \cert.
    Otherwise, by the definition of $\dupelim$, the resulting
    po-relation~$\OR$ is such that each tuple
    value is realized exactly once. Note that we can
    reject immediately if $L$ contains multiple occurrences of the same tuple, or
    does not have the same underlying set of tuples as~$\OR$; so we assume that $L$ has the
    same underlying set of tuples as~$\OR$ and no duplicate tuples.

    The \cert problem is in PTIME on $\OR$ by
    Theorem~\ref{thm:certaintyptimec}, so
    we need only study the case of \poss, namely, decide whether $L \in
    \pw(\OR)$.
    Let $\OR_L$ be a po-relation that represents~$L$.
    As $\OR_L$ and $\OR$ have no duplicate tuples, there is only one way to
    match each identifier of $\OR_L$ to an identifier of $\OR$.
    Build $\OR''$ from $\OR$ by adding, for each pair $\id_i <_L \id_{i+1}$ of consecutive
    tuples of~$\OR_L$, the order constraint $\id_i'' {<''} \id''_{i+1}$ on the
    corresponding identifiers in $\OR''$. We claim that $L \in \pw(\OR)$ iff the
    resulting $\OR''$
    is a po-relation, i.e., its transitive closure is still
    antisymmetric, which can be tested in PTIME by computing the strongly connected
    components of $\OR''$ and checking that they are all trivial.

    To see why this
    works, observe that, if the result $\OR''$ is a po-relation, it is a total
    order, and so it describes a way to achieve $L$ as a linear extension of
    $\OR$ because it does not contradict any of the comparability relations
    of $\OR$.
    Conversely, if $L \in \pw(\OR)$, assuming to the contrary the existence of a
    cycle in $\OR''$, we observe that such a cycle must consist of order relations
    of $\OR$ and $\OR_L$, and the order relations of $\OR$ are reflected in
    $\OR_L$ as it is a
    linear extension of $\OR$, so we deduce the existence of a cycle in $\OR_L$,
    which is impossible by construction. Hence, we have reached a contradiction,
    and we deduce the desired result.
\end{proof}

\subparagraph*{Discussion.}  The introduced group-by and duplicate elimination operators have some  shortcomings: the result of group-by is in general not representable by po-relations, and duplicate elimination may fail. These are both consequences of our design choices, where we capture only uncertainty on order (but not on tuple values) and design each operator so that its result corresponds to the result of applying it to each individual world of the input (see further discussion in Section \ref{sec:compare}). Avoiding these shortcomings is left for future work.

\section{Comparison With Other Formalisms}\label{sec:compare}

We next compare our formalism to previously proposed formalisms:
query languages over bags (with no order); a query language for
partially ordered multisets; and other related work. 
To our knowledge, however, none of these works studied the possibility or
certainty problems for partially ordered data, so that our technical results do not follow from them.

\subparagraph*{Standard bag semantics.}
A natural desideratum for our semantics on
(partially) ordered relations is that it should be a faithful extension
of the bag semantics for relational algebra. We first consider the $\BALG^{1}$ language on bags
\cite{GrumbachMiloBags} (the ``flat fragment'' of their language $\BALG$ on
nested relations).
We denote by $\BALG_{+}^{1}$ the fragment of~$\BALG^{1}$
that includes the standard extension of positive relational algebra
operations to bags: additive union, cross product, selection,
and projection.
We observe that, indeed, our semantics
faithfully extends $\BALG_{+}^{1}$:
\emph{query
evaluation commutes with ``forgetting'' the order}. Formally, for a po-relation $\OR$, we
denote by $\bag(\OR)$ its underlying bag relation, and define likewise 
$\bag(D)$ for a po-database~$D$ as the database of the underlying bag relations.
For the
following comparison, we identify both $\times_{\dir}$ and
$\times_{\lex}$ with the $\times$ of~\cite{GrumbachMiloBags} 
(as both our product operations yield the same bag as output, for
    any input), 
and we identify our
union with the additive union of~\cite{GrumbachMiloBags}. The following then trivially holds.
\begin{proposition}\label{prop:faithful}
    For any \PosRA query $Q$ and a po-relation $D$, $\bag(Q(D))=Q(\bag(D))$, where
$Q(D)$ is defined according to our semantics and $Q(\bag(D))$ is defined by
$\BALG_{+}^{1}$.
\end{proposition}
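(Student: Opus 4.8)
The plan is to proceed by structural induction on the query $Q$, showing at each step that the \emph{underlying bag relation} of the output of a \PosRA operator depends only on the underlying bag relations of its inputs, and coincides with the result of the corresponding $\BALG_{+}^{1}$ operator. The core observation is that every \PosRA operator defined in Section~\ref{sec:posra} produces a po-relation $(\ID, T, <)$ whose identifier set~$\ID$ and value map~$T$ are computed in exactly the same way as in bag semantics, while the partial order~$<$ influences only the third component, which $\bag(\cdot)$ discards. Thus it suffices to verify, operator by operator, that this identifier-and-value part is literally the bag construction.

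For the base cases, I would check that a relation name~$R$ gives $\bag(Q(D)) = \bag(D(R)) = \bag(D)(R) = Q(\bag(D))$; that the singleton $\singleton{t}$ yields a one-tuple bag on both sides; and that $\ordern{n}$ yields the bag $\{1,\dots,n\}$ on both sides, since forgetting the total order of $\ordern{n}$ leaves exactly this bag. For the inductive step I would treat the operators in turn. For selection, the identifier set $\ID' = \{\id \mid \psi(T(\id))\}$ and the restriction $T_{|\ID'}$ match bag-selection applied to $(\ID, T)$, so $\bag(\sigma_\psi(\OR)) = \sigma_\psi(\bag(\OR))$; projection is analogous, since it only rewrites tuple values through~$T$. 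For union, the result carries underlying set $\ID \cup \ID'$ and value map~$T''$, which is precisely the additive union of the two input bags, the parallel composition ${<} \mathbin{\parallel} {<'}$ being irrelevant after applying $\bag$. Finally, both $\times_{\dir}$ and $\times_{\lex}$ share the same underlying bag, namely $\ID \times \ID'$ with $T''$ mapping each pair to the concatenation of the two tuple values; this is exactly the $\BALG_{+}^{1}$ cross product of the input bags, and the two operators differ only in their order components ${<} \times_{\dir} {<'}$ versus ${<} \times_{\lex} {<'}$, which $\bag$ erases. This is precisely why we identify both products with the single $\times$ of~\cite{GrumbachMiloBags}. Combining these operator-level commutations with the induction hypothesis $\bag(Q_i(D)) = Q_i(\bag(D))$ for the immediate subqueries yields $\bag(Q(D)) = Q(\bag(D))$, completing the induction.

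There is essentially no technical obstacle here, which is why the statement is asserted to hold trivially: it follows precisely because our operators were designed so that their multiplicity behaviour is cleanly separated from their order behaviour. The only point requiring minor care is to confirm, for each operator, that the identifier-and-value part of the po-relation construction coincides exactly with the bag construction, which is immediate from the definitions.
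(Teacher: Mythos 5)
Your proposal is correct and matches the paper's own proof: both proceed by structural induction on the query, checking operator by operator that the identifier-and-value part of each po-relation construction coincides with the corresponding $\BALG_{+}^{1}$ bag operation while the order component is discarded by $\bag(\cdot)$. The only detail the paper adds is a parenthetical note that $\BALG$'s projection is single-attribute and must encode multi-attribute projection, which is cosmetic and does not affect your argument.
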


\begin{proof}
    There is an exact correspondence in terms of the output bags between
    additive union and our union; between cross product and $\times_{\dir}$ and
    $\times_{\lex}$; between our selection and that of $\BALG_{+}^{1}$, and similarly
    for projection (as noted before the statement of
    Proposition~\ref{prop:faithful} in the main text, a technical subtlety is that the projection
    of $\BALG$ can only project on a single attribute, but one can encode
    ``standard'' projection on multiple attributes).
     The proposition follows by induction on the query structure.
\end{proof}

The full $\BALG^{1}$ language includes additional operators such as bag
intersection and subtraction, which are non-monotone and as such may not be
expressed in our language: it is also unclear how they could be extended to our
setting (see further discussion in ``Algebra on pomsets'' below).
On the other hand, $\BALG^{1}$ does not include
aggregation, and so \posRAagg and $\BALG^{1}$ are incomparable in terms of
expressive power.

A better yardstick to compare against for accumulation could be the work of~\cite{libkin1997query}:
they show that their basic language $\BQL$ is equivalent
to $\BALG$, and then further extend the language with aggregate operators, to
define a language called $\NRLaggr$ on nested relations.
On flat relations,
$\NRLaggr$
captures
functions that cannot be captured in our
language: in particular the average function \textsf{AVG} is non-associative and thus cannot be captured by
our accumulation function (which anyway focuses on order-dependent functions, as
\poss/\cert are trivial otherwise).
On the other hand, $\NRLaggr$ cannot test 
parity 
(Corollary~5.7 in \cite{libkin1997query}) whereas this is easily captured by our accumulation operator.
We conclude that $\NRLaggr$ and \posRAagg are incomparable in terms of captured transformations on bags, even when restricted to flat relations.

\subparagraph*{Algebra on pomsets.} We now compare our work to algebras defined
on \emph{pomsets}~\cite{grumbach1995algebra,grumbach1999algebra}, which also 
attempt to bridge partial order theory and data management (although, again,
these works do not
study possibility and certainty). 
\emph{Pomsets} are labeled posets quotiented by
isomorphism (i.e., renaming of identifiers), like po-relations.
Beyond similarities in the language design, a
major
conceptual difference between our formalism and that
of~\cite{grumbach1995algebra,grumbach1999algebra} is that their work focuses on
processing {\em connected components} of the partial order graph,
and their operators are tailored for that semantics. As a
consequence, their semantics is {\em not} a faithful extension of bag
semantics, i.e., their language would not satisfy the counterpart of
Proposition~\ref{prop:faithful} (see,
for instance, the semantics of duplicate elimination in \cite{grumbach1995algebra}). 
By contrast, we manipulate po-relations that stand for sets of possible list
relations, and our operators are designed accordingly, unlike those of
\cite{grumbach1995algebra}, where transformations take into account the structure
(connected components) of the entire poset graph.
Because of this choice, \cite{grumbach1995algebra} introduces
non-monotone operators that we cannot express, and 
can design a duplicate elimination operator that cannot fail. Indeed, the possible failure of our duplicate elimination operator is a direct consequence of its semantics of operating on each possible world, possibly leading to contradictions.

If we consequently disallow duplicate elimination in both languages for the sake
of comparison, then the resulting fragment $\PomAlgEps$ of the language
of \cite{grumbach1995algebra} can yield only series-parallel outputs
(Proposition~4.1 of~\cite{grumbach1995algebra}), unlike \PosRA
queries whose output order may be arbitrary. To formalize this, we need the
notion of a \emph{realizer}~\cite{schroder2003ordered} of a poset
$P = (V, <)$: this is a set of total orders
$(V, {<_1}), \ldots, (V, {<_n})$ such that, for every $x, y \in V$, we have $x
< y$ iff $x <_i y$ for all~$i$. We can use realizers to express arbitrary
po-relations using the $\times_\dir$-product, as is shown by rephrasing in our
context an existing result on partial orders (Theorem~9.6 of~\cite{hiraguchi1955dimension},
see also \cite{ore1962theory}).

\begin{lemma}
  \label{lem:dimprodoneway}
  Let $n \in \NN$, and let
  $(P, <_P)$ be a poset that has a realizer $(L_1, \ldots, L_n)$ of size~$n$.
  Then $P$ is isomorphic to a subset $\OR'$ of $\OR = \ordern{l} \times_{\gen}
  \cdots \times_{\gen} \ordern{l}$, with $n$ factors in the product, for some integer $l \in \mathbb{N}$ (the
  order on $\OR'$ being the restriction on that of $\OR$).
\end{lemma}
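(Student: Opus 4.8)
The plan is to embed $P$ into the product of $n$ chains using the \emph{rank vector} of each element with respect to the orders of the realizer, which is the standard way of witnessing that a poset of dimension $\leq n$ embeds into a product of $n$ chains. First I would set $l \defeq \card{P}$, the number of elements of $P$, so that the position of any element in any total order on $P$ lies in $\{1, \dots, l\}$. For each $1 \leq i \leq n$, let $r_i$ map an element $x \in P$ to its rank (position) in the total order $L_i$, which is an order isomorphism between $(P, {<_i})$ and $\{1, \dots, l\}$ with the usual order. I would then define $\phi \colon P \to \OR$, where $\OR \defeq \ordern{l} \times_\gen \cdots \times_\gen \ordern{l}$ (with $n$ factors), by $\phi(x) \defeq (r_1(x), \dots, r_n(x))$, identifying up to isomorphism the underlying poset of $\OR$ with $\{1, \dots, l\}^n$ under the product order. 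Injectivity of $\phi$ is immediate: if $x \neq y$ then, since $L_1$ is a total order on $P$, we have $r_1(x) \neq r_1(y)$, so $\phi(x) \neq \phi(y)$.

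The core of the argument is to verify that $\phi$ both preserves and reflects the order. Here I would recall that, by Definition~\ref{def:realizer}, for all $x, y \in P$ we have $x <_P y$ iff $x <_i y$ for every $1 \leq i \leq n$; in particular, the forward implication of this equivalence shows that each $L_i$ is a linear extension of $P$. On the other side, by the definition of the direct product $\times_\gen$, unfolding it over the $n$ factors shows that $\phi(x) < \phi(y)$ in $\OR$ holds iff $r_i(x) < r_i(y)$ for every $i$, which (since each $r_i$ is the rank function of $L_i$) holds iff $x <_i y$ for every $i$. Combining these two facts, $\phi(x) < \phi(y)$ iff $x <_i y$ for all $i$, iff $x <_P y$, which is exactly the realizer condition.

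I would then conclude by taking $\OR' \defeq \phi(P)$, the image of $\phi$ equipped with the restriction of the order of $\OR$: since $\phi$ is an injection from $P$ onto $\OR'$ that preserves and reflects $<$, it is a poset isomorphism from $P$ to $\OR'$, as required, and $l = \card P$ is a valid choice of integer. Honestly, there is no deep obstacle here, as this is classical order-dimension theory; the only point requiring care is to check that the realizer condition is used in both directions at once (its forward part guaranteeing that each $L_i$ is a linear extension, needed for order preservation, and its converse part giving order reflection), and to match the componentwise-strict definition of $\times_\gen$ correctly when unfolding the $n$-fold product.
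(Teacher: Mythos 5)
Your proof is correct and follows essentially the same route as the paper's: both embed $P$ into the $n$-fold direct product via the rank-vector map $x \mapsto (r_1(x), \dots, r_n(x))$ with $l = \card{P}$, using the realizer condition for order preservation/reflection and the totality of the $L_i$ (distinctness of ranks) for injectivity. Your explicit separation of the preservation and reflection directions is just a more spelled-out version of the paper's chain of equivalences.
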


\begin{proof}
  We define $\OR$ by taking $l \defeq \card{P}$,
  and we
  identify each
  element $x$ of $P$ to $f(x) \defeq (n_1^x, \ldots, n_n^x)$, where
$n^x_i$ is
  the position where $x$ occurs in $L_i$. Now, for any $x, y \in P$, we have $x <_P y$ iff $n_i^x < n_i^y$ for
  all $1 \leq i \leq n$ (that is, $x <_{L_i} y$), hence iff
  $f(x) <_\OR f(y)$: this is because there are no two elements $x \neq y$
  and $1 \leq i \leq n$
  such that the $i$-th components of $f(x)$ and of $f(y)$ are the same.
  Hence, taking $\OR'$ to be the image of $f$ (which is
  injective), $\OR'$ is indeed isomorphic to $P$.
\end{proof}

This implies that \PosRA queries can yield arbitrary po-relations as output.

\begin{proposition}\label{prp:gen}
  For any po-relation $\OR$, there is a \PosRA query $Q$ with no inputs
  such that $Q() = \OR$.
\end{proposition}

\begin{proof}
  We first show that for any
  poset $(P, <)$, there exists a \Pgen query $Q$
  such that the tuples of $\OR' \defeq Q()$ all have unique
  values and the underlying poset of $\OR'$ is $(P, <)$.
  Indeed, we can take $d$ to be the \emph{order dimension} of
  $P$, which is necessarily finite~\cite{schroder2003ordered}, and then, by
  definition, $P$ has a realizer of size~$d$. By
  Lemma~\ref{lem:dimprodoneway}, there is an integer
  $l \in \mathbb{N}$ such that $\OR'' \defeq \ordern{l} \times_\gen \cdots
  \times_\gen \ordern{l}$ (with $n$ factors in the product)
  has a subset $S$ isomorphic to $(P, <)$. Hence, letting $\psi$ be a tuple predicate
  such that $\sigma_\psi(\OR'') = S$ (which can clearly be constructed by
  enumerating the elements of $S$), the query $Q' \defeq \sigma_\psi(\OR'')$ proves the
  claim, with $\OR''$
  expressed as above.

  Now, to prove the desired result from this claim, build $Q$ from $Q'$ by
  taking its join (i.e., $\times_\lex$-product, selection,
projection) with a union of singleton
  constant expressions that map each unique tuple value of $Q'()$ to the desired
  value of the corresponding tuple in the desired po-relation~$\OR$. This
  concludes the proof.
\end{proof}

We conclude that $\PomAlgEps$ does not subsume \PosRA.

\label{sec:related}
\subparagraph*{Incompleteness in databases.}
Our work is inspired by the field of incomplete information management, which
has been studied for
various models~\cite{Incompletexml,Libkin06}, in particular
relational databases~\cite{IL84}. This field inspires our design
of po-relations
and our study of
possibility and certainty~\cite{KO06,lipski1979semantic}. However, uncertainty in
these settings typically focuses on \emph{whether} tuples exist
or on what their \emph{values} are (e.g., with nulls~\cite{codd1979extending},
including the novel approach of~\cite{libkin2014incompleteness,libkin2015icdt};
with c-tables~\cite{IL84}, probabilistic
databases~\cite{probdbbook} or fuzzy numerical values as
in~\cite{soliman2009ranking}). To our knowledge, though, our work is the first to study possible and
certain answers in the context of 
{\em
order}-incomplete data. Combining order incompleteness with standard
tuple-level uncertainty is left as a challenge for future work.
Note that some works~\cite{BunemanJO91,libkin1998semantics,libkin2015icdt} use
partial orders on \emph{relations} to compare the
informativeness of representations. This is
unrelated to our partial orders on \emph{tuples}.

\subparagraph*{Ordered domains.} Another line of work has studied
relational data management where the \emph{domain elements} are
(partially) ordered \cite{Immermanptime,Ng,van1997complexity,BonnerM98,CoburnW93}. This is in particular the case in works aiming to query sequences or support iteration (see e.g. \cite{BonnerM98,CoburnW93}, which however do not consider uncertainty and consequently neither partial order). However, our goal, setting and perspective
are different: we see order on tuples as part of the relations, and
as being constructed by applying our operators; these works see
order as being given \emph{outside} of the query, hence do not study the propagation of uncertainty through queries.
Also, queries in such works can often directly access the order
relation ~\cite{van1997complexity,BenS2009,BonnerM98}. Some works also study uncertainty on totally ordered
\emph{numerical}
domains~\cite{soliman2009ranking,soliman2010supporting}, while
we look at general order relations.

\subparagraph*{Temporal databases.}
\emph{Temporal
databases}~\cite{chomicki2005time,snodgrass2000developing} consider
order on facts, but it is usually induced by timestamps, hence
total. A notable exception is~\cite{fan2012determining} which
considers that some facts may be \emph{more current} than others,
with constraints leading to a partial order. In particular, they
study the complexity of retrieving query answers that are certainly
current, for a rich query class. In contrast, we can
{\em manipulate} the order via queries, and we can also ask
about aspects beyond currency, as shown throughout the paper
(e.g., via accumulation).

\subparagraph*{Using preference information.} Order theory has been also used to handle \emph{preference information} in
database systems~\cite{jacob2014system,arvanitis2014preferences,
kiessling2002foundations,alexe2014preference, stefanidis2011survey},
with some operators being the same as ours, and for \emph{rank
aggregation}~\cite{Fagin,jacob2014system,dwork2001rank}, i.e., retrieving top-$k$ query answers given multiple
rankings. However, such works typically try to \emph{resolve}
uncertainty by reconciling many conflicting representations (e.g.,
via knowledge on the individual scores given by different sources
and a function to aggregate them \cite{Fagin}, or a preference
function~\cite{alexe2014preference}). In contrast, we focus on maintaining a faithful model of
\emph{all} possible worlds without reconciling them, studying possible and certain answers in this respect.

\subparagraph*{Computational social choice.}
The notion of preferences has been studied in the domain of computational social
choice to determine the possible outcomes of an election given partial preference
information expressed by voters~\cite{konczak2005voting}. In this setting, the notions of \emph{possible
winners} and \emph{necessary winners} have been introduced to summarize the
possible outcomes, and they have been connected to the notion of possible and
necessary answers of database queries~\cite{kimelfeld2018computational}. The
complexity of these problems has been studied, with a dichotomy result
that classifies its complexity depending on the aggregation used~\cite{konczak2005voting,xia2011determining,betzler2010towards,baumeister2012taking}.
However, the expressiveness of this computational social choice framework is
incomparable to that of our
framework.
Specifically, their framework
only studies possible and necessary answers in terms of achieving a maximal
score computed as a sum of numerical values following some positional scoring
rule, whereas our framework can perform accumulation in arbitrary monoids and on
top of positive relational algebra queries. Conversely, there is no apparent way in our
framework to encode accumulation following positional scoring rules, as we would need
to apply accumulation to sum the candidate scores, and then look at the top
answers according to a different order on the result.

\section{Conclusion}
\label{sec:conclusion}
This paper introduced an algebra for order-incomplete data. We have studied the complexity of possible and certain
answers for this algebra, have shown the problems to be generally intractable, and identified
several tractable cases. 

A prime motivation for our work is to provide a semantics for a fragment of SQL
(namely, SPJU+aggregates) in presence of partially ordered data. We see our work as a first step in this respect, and our choice of operators for the algebra is by no means the only possible one. In future work we plan to study the incorporation of additional operators, including in particular constructors of the (partial) order based on 
the tuple values. We will also investigate how to combine order-uncertainty with
uncertainty on values, and study additional semantics for $\dupelim$ (to avoid
the pitfalls of the proposed semantics which we discussed above). 

In connection with the choice of operators, a natural question is  
whether one may achieve a completeness result. We have shown
(Proposition~\ref{prp:gen}) that our language is complete in
terms of ``individual outputs'', i.e., that PosRA can be used to construct any
po-relation using only the built-in constant relations and operators. A more challenging goal is to design a language that is complete in terms of transformations, i.e. that may capture all functions over po-relations in some class. This is another intriguing topic for further investigation.

Last, many open questions remain about the complexity of \poss, e.g., we do not
know whether \poss is tractable when the accumulation monoid is a finite group.
Ideally, we would want to establish a dichotomy
result for the complexity of \poss, and a
complete syntactic characterization of cases where \poss is tractable: this is
investigated further in a follow-up work involving the first author
\cite{amarilli2018topological}.

\subparagraph*{Acknowledgments.} We are grateful to Marzio De
Biasi, to P\'alv\"olgyi D\"om\"ot\"or, and to Mikhail Rudoy, from
\url{cstheory.stackexchange.com}, for helpful suggestions. We are also
grateful to the anonymous reviewers for their feedback that helped
improve this paper.
This research was partially supported by the Israeli Science Foundation
(grant 1636/13), the Blavatnik~ICRC, and Intel.

\bibliographystyle{abbrv}
\bibliography{main}

\begin{thebibliography}{10}

\bibitem{AHV-1995}
S.~Abiteboul, R.~Hull, and V.~Vianu.
\newblock {\em \href{http://webdam.inria.fr/Alice/pdfs/all.pdf}{Foundations of
  databases}}.
\newblock Addison-Wesley, 1995.

\bibitem{alexe2014preference}
B.~Alexe, M.~Roth, and W.-C. Tan.
\newblock \href{http://www.vldb.org/pvldb/vol8/p365-roth.pdf}{Preference-aware
  integration of temporal data}.
\newblock {\em PVLDB}, 8(4), 2014.

\bibitem{amarilli2017possible}
A.~Amarilli, M.~L. Ba, D.~Deutch, and P.~Senellart.
\newblock \href{https://arxiv.org/abs/1707.07222}{Possible and certain answers
  for queries over order-incomplete data}.
\newblock In {\em Proc.\ {TIME}}, Oct. 2017.

\bibitem{amarilli2018topological}
A.~Amarilli and C.~Paperman.
\newblock \href{https://arxiv.org/abs/1707.04310} {Topological sorting under
  regular constraints}.
\newblock In {\em \mbox{\href{https://iuuk.mff.cuni.cz/~icalp2018/}{ICALP}}},
  2018.

\bibitem{KO06}
L.~Antova, C.~Koch, and D.~Olteanu.
\newblock \href{https://arxiv.org/abs/0705.4442}{World-set decompositions:
  {E}xpressiveness and efficient algorithms}.
\newblock In {\em ICDT}. 2007.

\bibitem{arvanitis2014preferences}
A.~Arvanitis and G.~Koutrika.
\newblock
  \href{http://www.dblab.ntua.gr/~tasosarvanitis/pubs/TKDE13.pdf}{{PrefDB}:
  {S}upporting preferences as first-class citizens in relational databases}.
\newblock {\em IEEE TKDE}, 26(6), 2014.

\bibitem{Incompletexml}
P.~Barcel\'{o}, L.~Libkin, A.~Poggi, and C.~Sirangelo.
\newblock
  \href{http://homepages.inf.ed.ac.uk/libkin/papers/jacm-pods09.pdf}{{XML} with
  incomplete information}.
\newblock {\em J.~ACM}, 58(1), 2010.

\bibitem{baumeister2012taking}
D.~Baumeister and J.~Rothe.
\newblock \href{https://arxiv.org/abs/1108.4436}{Taking the final step to a
  full dichotomy of the possible winner problem in pure scoring rules}.
\newblock {\em Information Processing Letters}, 112(5), 2012.

\bibitem{BenS2009}
M.~Benedikt and L.~Segoufin.
\newblock
  \href{http://www.lsv.ens-cachan.fr/~segoufin/Papers/Mypapers/invfo.pdf}{Towards
  a characterization of order-invariant queries over tame graphs}.
\newblock {\em Journal of Symbolic Logic}, 74, 2009.

\bibitem{betzler2010towards}
N.~Betzler and B.~Dorn.
\newblock
  \href{https://www.sciencedirect.com/science/article/pii/S0022000010000449}{Towards
  a dichotomy for the possible winner problem in elections based on scoring
  rules}.
\newblock {\em J. Comput. Syst. Sci.}, 76(8), 2010.

\bibitem{BonnerM98}
A.~J. Bonner and G.~Mecca.
\newblock
  \href{https://www.sciencedirect.com/science/article/pii/S0022000098915628}{Sequences,
  datalog, and transducers}.
\newblock {\em J. Comput. Syst. Sci.}, 57(3):234--259, 1998.

\bibitem{posets}
A.~Brandst\"adt, V.~B. Le, and J.~P. Spinrad.
\newblock Posets.
\newblock In {\em Graph Classes. A Survey}, chapter~6. SIAM, 1987.

\bibitem{BunemanJO91}
P.~Buneman, A.~Jung, and A.~Ohori.
\newblock
  \href{http://www.sciencedirect.com/science/article/pii/0304397591902665}{Using
  powerdomains to generalize relational databases}.
\newblock {\em TCS}, 91(1), 1991.

\bibitem{chomicki2005time}
J.~Chomicki and D.~Toman.
\newblock Time in database systems.
\newblock In {\em Handbook of Temporal Reasoning in Artificial Intelligence}.
  Elsevier, 2005.

\bibitem{CoburnW93}
N.~Coburn and G.~E. Weddell.
\newblock A logic for rule-based query optimization in graph-based data models.
\newblock In {\em {DOOD}}, pages 120--145, 1993.

\bibitem{codd1979extending}
E.~F. Codd.
\newblock
  \href{http://citeseerx.ist.psu.edu/viewdoc/summary?doi=10.1.1.84.9469}{Extending
  the database relational model to capture more meaning}.
\newblock {\em TODS}, 4(4), 1979.

\bibitem{colby1994query}
L.~S. Colby, E.~L. Robertson, L.~V. Saxton, and D.~V. Gucht.
\newblock \href{https://www.cs.indiana.edu/~vgucht/p179-colby.pdf}{A query
  language for list-based complex objects}.
\newblock In {\em PODS}, 1994.

\bibitem{colby1994concepts}
L.~S. Colby, L.~V. Saxton, and D.~V. Gucht.
\newblock
  \href{http://www.sciencedirect.com/science/article/pii/0306457394900787}{Concepts
  for modeling and querying list-structured data}.
\newblock {\em Information Processing \& Management}, 30(5), 1994.

\bibitem{dilworth1950decomposition}
R.~P. Dilworth.
\newblock A decomposition theorem for partially ordered sets.
\newblock {\em Annals of Mathematics}, 1950.

\bibitem{dwork2001rank}
C.~Dwork, R.~Kumar, M.~Naor, and D.~Sivakumar.
\newblock
  \href{http://citeseerx.ist.psu.edu/viewdoc/summary?doi=10.1.1.28.8702}{Rank
  aggregation methods for the {W}eb}.
\newblock In {\em {WWW}}, 2001.

\bibitem{Fagin}
R.~Fagin, A.~Lotem, and M.~Naor.
\newblock
  \href{http://researcher.watson.ibm.com/researcher/files/us-fagin/jcss03.pdf}{Optimal
  aggregation algorithms for middleware}.
\newblock In {\em PODS}, 2001.

\bibitem{fan2012determining}
W.~Fan, F.~Geerts, and J.~Wijsen.
\newblock
  \href{http://homepages.inf.ed.ac.uk/fgeerts/pdf/currency.pdf}{Determining the
  currency of data}.
\newblock {\em TODS}, 37(4), 2012.

\bibitem{fraisse1984intervalle}
R.~Fra{\^i}ss{\'e}.
\newblock L'intervalle en th\'eorie des relations; ses gen\'eralisations,
  filtre intervallaire et cl\^oture d'une relation.
\newblock {\em North-Holland Math. Stud.}, 99, 1984.

\bibitem{fulkerson1955note}
D.~R. Fulkerson.
\newblock
  \href{http://citeseerx.ist.psu.edu/viewdoc/summary?doi=10.1.1.226.652}{Note
  on {D}ilworth's decomposition theorem for partially ordered sets}.
\newblock In {\em Proc. Amer. Math. Soc}, 1955.

\bibitem{garey-johnson}
M.~R. Garey and D.~S. Johnson.
\newblock {\em Computers And Intractability. {A} Guide to the Theory of
  {NP}-completeness}.
\newblock W. H. Freeman, 1979.

\bibitem{grumbach1995algebra}
S.~Grumbach and T.~Milo.
\newblock
  \href{http://www.cs.tau.ac.il/~milo/projects/query_languages/papers/icdt95.ps}{An
  algebra for pomsets}.
\newblock In {\em ICDT}, 1995.

\bibitem{GrumbachMiloBags}
S.~Grumbach and T.~Milo.
\newblock
  \href{http://www.sciencedirect.com/science/article/pii/S0022000096900422}{Towards
  tractable algebras for bags}.
\newblock {\em J. Comput. Syst. Sci.}, 52(3), 1996.

\bibitem{grumbach1999algebra}
S.~Grumbach and T.~Milo.
\newblock
  \href{http://www.sciencedirect.com/science/article/pii/S0890540198927778}{An
  algebra for pomsets}.
\newblock {\em Inf. Comput.}, 150(2), 1999.

\bibitem{hiraguchi1955dimension}
T.~Hiraguchi.
\newblock
  \href{http://dspace.lib.kanazawa-u.ac.jp/dspace/bitstream/2297/33759/1/AA00835991-4-01-001.pdf}{On
  the dimension of orders}.
\newblock {\em Sci. rep. Kanazawa Univ.}, 4(01), 1955.

\bibitem{howie1995fundamentals}
J.~M. {Howie}.
\newblock {\em {Fundamentals of semigroup theory}}.
\newblock Oxford: Clarendon Press, 1995.

\bibitem{IL84}
T.~Imieli\'nski and W.~Lipski.
\newblock Incomplete information in relational databases.
\newblock {\em J.\ ACM}, 31(4), 1984.

\bibitem{Immermanptime}
N.~Immerman.
\newblock
  \href{http://www.sciencedirect.com/science/article/pii/S0019995886800298}{Relational
  queries computable in polynomial time}.
\newblock {\em Inf. Control}, 68(1-3), 1986.

\bibitem{jacob2014system}
M.~Jacob, B.~Kimelfeld, and J.~Stoyanovich.
\newblock \href{http://www.vldb.org/pvldb/vol7/p1255-jacob.pdf}{A system for
  management and analysis of preference data}.
\newblock {\em VLDB Endow.}, 7(12), 2014.

\bibitem{kiessling2002foundations}
W.~Kiessling.
\newblock \href{http://www.vldb.org/conf/2002/S09P04.pdf}{Foundations of
  preferences in database systems}.
\newblock In {\em VLDB}, 2002.

\bibitem{kimelfeld2018computational}
B.~Kimelfeld, P.~G. Kolaitis, and J.~Stoyanovich.
\newblock \href{https://www.ijcai.org/proceedings/2018/44}{Computational social
  choice meets databases}.
\newblock In {\em Proc.\ {IJCAI}}, 2018.

\bibitem{konczak2005voting}
K.~Konczak and J.~Lang.
\newblock
  \href{https://pdfs.semanticscholar.org/8e2f/daac891637ab2e4498b3b0e9dcca96e9a7fa.pdf}{Voting
  procedures with incomplete preferences}.
\newblock In {\em IJCAI-05 Workshop on Advances in Preference Handling}, 2005.

\bibitem{DBLP:conf/pods/Lenzerini02}
M.~Lenzerini.
\newblock
  \href{http://citeseerx.ist.psu.edu/viewdoc/summary?doi=10.1.1.91.9907}{Data
  integration: {A} theoretical perspective}.
\newblock In {\em PODS}, 2002.

\bibitem{libkin1998semantics}
L.~Libkin.
\newblock \href{http://homepages.inf.ed.ac.uk/libkin/papers/th-wsh.pdf}{A
  semantics-based approach to design of query languages for partial
  information}.
\newblock In {\em Semantics in Databases}, 1998.

\bibitem{Libkin06}
L.~Libkin.
\newblock \href{http://homepages.inf.ed.ac.uk/libkin/papers/pods06a.pdf}{Data
  exchange and incomplete information}.
\newblock In {\em PODS}, 2006.

\bibitem{libkin2014incompleteness}
L.~Libkin.
\newblock
  \href{http://homepages.inf.ed.ac.uk/libkin/papers/pods14.pdf}{Incomplete
  data: {W}hat went wrong, and how to fix it}.
\newblock In {\em PODS}, 2014.

\bibitem{libkin2015icdt}
L.~Libkin.
\newblock \href{http://drops.dagstuhl.de/opus/volltexte/2015/4979/}{{SQL}'s
  three-valued logic and certain answers}.
\newblock In {\em ICDT}, 2015.

\bibitem{libkin1997query}
L.~Libkin and L.~Wong.
\newblock
  \href{http://www.sciencedirect.com/science/article/pii/S0022000097915233}{Query
  languages for bags and aggregate functions}.
\newblock {\em J. Comput. Syst. Sci.}, 55(2), 1997.

\bibitem{lipski1979semantic}
W.~Lipski, Jr.
\newblock On semantic issues connected with incomplete information databases.
\newblock {\em TODS}, 4(3), 1979.

\bibitem{Ng}
W.~Ng.
\newblock \href{https://www.cse.ust.hk/faculty/wilfred/paper/tods01.pdf}{An
  extension of the relational data model to incorporate ordered domains}.
\newblock {\em TODS}, 26(3), 2001.

\bibitem{ore1962theory}
O.~{\O}re.
\newblock Partial order.
\newblock In {\em Theory of Graphs}, chapter~10. AMS, 1962.

\bibitem{pin1997syntactic}
J.-E. Pin.
\newblock \href{https://www.irif.fr/~jep/PDF/HandBook.pdf}{Syntactic
  semigroups}.
\newblock In {\em Handbook of Formal Languages}, chapter~10. Springer, 1997.

\bibitem{schroder2003ordered}
B.~Schr\"oder.
\newblock {\em Ordered Sets: An Introduction}.
\newblock Birkh\"auser, 2003.

\bibitem{snodgrass2000developing}
R.~T. Snodgrass, J.~Gray, and J.~Melton.
\newblock {\em Developing time-oriented database applications in SQL}.
\newblock Morgan Kaufmann, 2000.

\bibitem{soliman2009ranking}
M.~A. Soliman and I.~F. Ilyas.
\newblock
  \href{https://cs.uwaterloo.ca/~ilyas/papers/SolimanICDE09.pdf}{Ranking with
  uncertain scores}.
\newblock In {\em ICDE}, 2009.

\bibitem{soliman2010supporting}
M.~A. Soliman, I.~F. Ilyas, and S.~Ben-David.
\newblock
  \href{https://cs.uwaterloo.ca/~ilyas/papers/SolimanVLDBJ2010.pdf}{Supporting
  ranking queries on uncertain and incomplete data}.
\newblock {\em VLDBJ}, 19(4), 2010.

\bibitem{stanley1986enumerative}
R.~P. Stanley.
\newblock {\em Enumerative Combinatorics}.
\newblock Cambridge University Press, 1986.

\bibitem{stefanidis2011survey}
K.~Stefanidis, G.~Koutrika, and E.~Pitoura.
\newblock \href{http://people.uta.fi/~kostas.stefanidis/docs/tods11.pdf}{A
  survey on representation, composition and application of preferences in
  database systems}.
\newblock {\em TODS}, 36(3), 2011.

\bibitem{probdbbook}
D.~Suciu, D.~Olteanu, C.~R{\'e}, and C.~Koch.
\newblock {\em Probabilistic Databases}.
\newblock Synthesis Lectures on Data Management. Morgan {\&} Claypool
  Publishers, 2011.

\bibitem{van1997complexity}
R.~van~der Meyden.
\newblock
  \href{http://www.sciencedirect.com/science/article/pii/S0022000097914550}{The
  complexity of querying indefinite data about linearly ordered domains}.
\newblock {\em J. Comput. Syst. Sci.}, 54(1), 1997.

\bibitem{warmuth1984complexity}
M.~K. Warmuth and D.~Haussler.
\newblock
  \href{http://www.sciencedirect.com/science/article/pii/0022000084900187}{On
  the complexity of iterated shuffle}.
\newblock {\em J. Comput. Syst. Sci.}, 28(3), 1984.

\bibitem{xia2011determining}
L.~Xia and V.~Conitzer.
\newblock
  \href{https://www.jair.org/index.php/jair/article/view/10705}{Determining
  possible and necessary winners given partial orders}.
\newblock {\em JAIR}, 41, 2011.

\end{thebibliography}

\end{document}